\title{Counting Small Induced Subgraphs with~Edge-monotone~Properties}
\author{Simon Döring}{Max Planck Institute for Informatics and\\Saarbrücken Graduate School
of Computer Science, SIC, Saarbrücken, Germany}{sdoering@mpi-inf.mpg.de}{}{}
\author{Dániel Marx}{CISPA Helmholtz Center for Information Security, Saarbrücken, Germany}
{marx@cispa.de}{}{}
\author{Philip Wellnitz}{Max Planck Institute for Informatics, SIC,\\
Saarbrücken, Germany}{wellnitz@mpi-inf.mpg.de}{https://orcid.org/0000-0002-6482-8478}{}
\authorrunning{S. Döring, D. Marx, and P. Wellnitz}
\tikzset{dotmark/.style={circle,fill,inner sep=1.5pt}}
\tikzset{emptymark/.style={circle,draw,fill=white,inner sep=1.5pt}}
\tikzset{crossmark/.style={thick,inner sep=1.5pt}}
\newcommand{\newhat}{\scalebox{1.5}[.75]{\trimbox{0pt 1.1ex}{\textasciicircum}}}
\newcommand{\stretchedhat}[1]{\accentset{\newhat}{#1}}
\def\fragmentco#1#2{{[}#1\,{.\,.}\,#2{)}}
\def\fragment#1#2{{[}#1\,{.\,.}\,#2{]}}
\def\position#1{{[}#1{]}}
\def\setn#1{{[}#1{]}}
\renewenvironment{cases}{%
  \matrix@check\cases\env@cases
}{%
  \endarray\right.%
}
\def\env@cases{%
  \let\@ifnextchar\new@ifnextchar
  \left\lbrace
  \def\arraystretch{1.1}%
  \array{@{\;}c@{\quad}l@{}}%
}
\def\mid{\ensuremath :}
\def\emptyset{\varnothing}
\newcommand{\nat}{\mathbb{N}}
\newcommand{\Z}{\mathbb{Z}}
\newcommand{\field}[1]{\smash{\mathbb{F}}_{\smash{#1}\vphantom{q}}}
\def\NUM{\text{\#}}
\def\W#1{\ensuremath {\sf W{\bm{[}\,#1\,\bm{]}}}}
\def\W#1{\ensuremath {\sf W{\bm{[}#1\bm{]}}}}
\def\w{\NUM\W1}
\newcommand{\sylow}{\textup{Syl}}
\newcommand{\push}{\overrightarrow{\varphi}}
\newcommand{\wrGraph}{\circ}
\newcommand{\prodGroup}{\times}
\newcommand{\BigProdGroup}{\bigtimes}
\def\bigtimes{%
  \DOTSB\mathop{\mathpalette\mattos@bigtimes\relax}\slimits@
}
\newcommand\mattos@bigtimes[2]{%
  \vcenter{\hbox{%
    \sbox\z@{$#1\sum$}%
    \resizebox{!}{0.9\dimexpr\ht\z@+\dp\z@}{\raisebox{\depth}{$\m@th#1\times$}}%
  }}%
  \vphantom{\sum}%
}
\def\bigbowtie{%
  \DOTSB\mathop{\mathpalette\mattos@bigbowtie\relax}\slimits@
}
\newcommand\mattos@bigbowtie[2]{%
  \vcenter{\hbox{%
    \sbox\z@{$#1\sum$}%
    \resizebox{!}{0.9\dimexpr\ht\z@+\dp\z@}{\raisebox{\depth}{$\m@th#1\triangledown$}}%
  }}%
  \vphantom{\sum}%
}
\newcommand{\JoinGraph}{\triangledown}
\DeclareMathOperator*{\BigJoinGraph}{\bigbowtie}
\newcommand{\UnionGraph}{\uplus}
\newcommand{\unionSet}{\uplus}
\DeclareMathOperator{\hw}{hw}
\DeclareMathOperator{\diam}{diam}
\DeclareMathOperator{\IS}{IS}
\DeclareMathOperator{\tw}{tw}
\DeclareMathOperator{\id}{id}
\DeclareMathOperator{\diag}{diag}
\DeclareMathOperator{\aut}{Aut}
\DeclareMathOperator{\fp}{FP}
\DeclareMathOperator{\edgesub}{\mathcal{E}} % used to denote the set of all edge-subgraphs
\DeclareMathOperator{\sylelm}{\overline{\varphi}}
\newcommand{\homs}[2]{\mbox{\ensuremath{\mathrm{Hom}(#1 \to #2)}}}
\newcommand{\indsubs}[2]{\mbox{\ensuremath{\mathrm{IndSub}(#1 \to #2)}}}
\newcommand{\auts}[1]{\ensuremath{\mathrm{Aut}(#1)}}
\newcommand{\cphoms}[2]{\ensuremath{\mathrm{cp}\text{-}\mathrm{Hom}}(#1 \to #2)}
\newcommand{\cpindsubs}[2]{\ensuremath{\mathrm{cp}\text{-}\mathrm{IndSub}}(#1 \to #2)}
\newcommand{\clique}{\ensuremath{\textsc{Clique}}}
\newcommand{\homsprob}{\ensuremath{\textsc{Hom}}}
\newcommand{\cphomsprob}{\ensuremath{\textsc{cp-Hom}}}
\newcommand{\indsubsprob}{\ensuremath{\textsc{IndSub}}}
\newcommand{\cpindsubsprob}{\ensuremath{\textsc{cp-IndSub}}}
\DeclareMathOperator{\wrLevel}{\varepsilon}
\DeclareMathOperator{\Hasselevel}{\ell}
\def\fps#1{\ensuremath\fp(\Gamma, #1)}
\def\fpb#1#2{\ensuremath\fp(#1, #2)}
\def\hl#1{\ensuremath \Hasselevel(#1)}
\def\nt#1{\ensuremath M_{#1}}
\def\scat#1{\ensuremath \text{Sc}_{#1}}
\def\con#1{\ensuremath \text{Co}_{#1}}
\def\aename#1{\ensuremath \stretchedhat{#1}}
\def\ae#1#2{\ensuremath \aename{#1}(#2)}
\def\ess#1#2{\ensuremath #1{\{}#2{\}}}
\newcommand{\fpt}{\leq_{\mathrm{T}}^{\mathrm{fpt}}}
\def\od#1{\ensuremath\mathbb{O}(#1)}
\def\phisum#1{\ensuremath w_{#1}}
\def\aesum#1{\ensuremath \widehat{w}_{#1}}
\def\phisumvec{\ensuremath \mathbf{w}}
\def\aesumvec{\ensuremath \widehat{\mathbf{w}}}
\def\phisumvecr#1{\ensuremath \mathbf{w}^{#1}}
\def\aesumvecr#1{\ensuremath \widehat{\mathbf{w}}^{#1}}
\def\sunderset#1#2{\substack{#2\\#1}}
\def\sym#1{\ensuremath \mathfrak{S}_{#1}}
\def\rotgr#1{\ensuremath \smash{\bm{\circlearrowright}}{}_{#1}}
\def\cgr#1#2{\ensuremath \smash{C_{#1}^{#2}}\vphantom{{}_q^d}}
\def\scalesim{\ensuremath \sim}
\newcommand{\graphs}[1]{\mathcal{G}_{#1}} % set of graphs with #1 vertices
\newcommand{\metaGraph}[2]{#1\big\langle#2\big\rangle} % meta Graph
\newcommand{\oneEdge}[2]{{\{#1, #2\}}}
\declaretheorem[numbered=no,name=Exponential Time Hypothesis (ETH),style=plainqstyle]{ethy}
\begin{document}
\maketitle

\begin{abstract}
    We study the parameterized complexity of \NUM$\indsubsprob(\Phi)$, where given a graph
    $G$ and an integer $k$, the task is to count the number of induced subgraphs on $k$
    vertices that satisfy the graph property $\Phi$.
    Focke and Roth [STOC~2022]
    completely characterized the complexity for each $\Phi$ that is a \emph{hereditary
    property} (that is, closed under vertex deletions): \NUM$\indsubsprob(\Phi)$ is \w-hard
    except in the degenerate cases when every graph satisfies $\Phi$ or only finitely many
    graphs satisfy $\Phi$.
    We complement this result with a classification for each $\Phi$ that is \emph{edge
    monotone} (that is, closed under edge deletions): \NUM$\indsubsprob(\Phi)$ is \w-hard
    except in the degenerate case when there are only finitely many integers $k$ such that
    $\Phi$ is nontrivial on $k$-vertex graphs.
    Our result generalizes earlier results for specific properties $\Phi$ that are related
    to the connectivity or density of the graph.

    Further, we extend the \w-hardness result by a lower bound which shows that
    \NUM$\indsubsprob(\Phi)$ cannot be solved in time $f(k) \cdot \smash{|V(G)|^{o(\sqrt{\log k} /
    \log\log k)}}$ for any function $f$, unless the Exponential-Time Hypothesis (ETH)
    fails.
    For many natural properties, we obtain even a tight bound $f(k) \cdot |V(G)|^{o(k)}$;
    for example, this is the case for every property $\Phi$ that is nontrivial on
    $k$-vertex graphs for each $k$ greater than some $k_0$.
\end{abstract}

\clearpage

\thispagestyle{plain}
\tableofcontents

\clearpage
\addtocounter{page}{-1}

\section{Introduction}

Searching and counting patterns is one of the oldest algorithmic tasks in computer
science and has many applications in other scientific fields. The theoretical notion of
graphs give a useful and widely used way to model various types of data.
In this work, we focus on the parameterized complexity of counting small-size patterns in graphs, which is
motivated by related applications in the study of database systems
\cite{10.1145/380752.380867}, neural and social networks
\cite{doi:10.1126/science.298.5594.824, 10.1007/978-3-319-21233-3_5}, biology
\cite{10.1007/11599128_7}, and many other fields.

When counting patterns (such as cycles or connected graphs) of some small size $k$ in an
$n$-vertex graph, one can use an obvious brute-force approach that enumerates all $O(n^k)$
subsets of $k$ vertices in the graph. While this is polynomial for fixed $k$, it would be
desirable to have a running time less sensitive to the size $k$ of the pattern. Flum and
Grohe~\cite{path_cycle} initiated the study of the parameterized complexity of counting
problems, with the goal of determining which counting problems are {\em fixed-parameter
tractable (FPT)}, that is, can be solved in time $f(k)n^{O(1)}$ for some computable
function $f$. It is widely believed that many basic counting problems, such as
\textsc{\NUM{}Clique} are {\em not} FPT. The notion of \w-hardness can be used to give
evidence that a counting problem is unlikely to be FPT by showing that it is at least as
hard as \textsc{\NUM{}Clique}. For example, it is known that counting paths of length $k$,
cycles of length $k$, matchings of size $k$, and many other types of subgraphs are \w-hard
\cite{path_cycle,10.1007/978-3-642-39206-1_30,6978997,hom:basis}.

While the parameterized complexity of counting different types of subgraphs is well
understood, the complexity of more general properties of patterns is far from clear.
Formally, a \emph{graph property} is a computable function $\Phi$ from the set of graphs
to $\{0, 1\}$ that is invariant under relabeling. Common examples are \emph{is bipartite},
\emph{is clique}, \emph{is independent set}, \emph{is connected}, or \emph{is planar}, to
name but a few.
Given a property $\Phi$, we would like to count the number of subsets of vertices of a
specified size that induce a graph with this property.
That is, for a fixed graph property $\Phi$, Jerrum and Meeks \cite{connected,
hard_families} introduced the $\NUM{}\indsubsprob(\Phi)$ problem, where given a graph $G$ and
a nonnegative integer $k$, the task is to compute the number of induced subgraphs of $G$
of size $k$ that satisfy $\Phi$. We denote this number by $\NUM{}\indsubs{(\Phi, k)}{G}$ and
write $\NUM{}\indsubs{(\Phi, k)}{\star}$ for the function that maps $G$ to $\NUM{}\indsubs{(\Phi,
k)}{G}$.
This problem for specific types of properties $\Phi$ has been in the focus of a large
amount of research in recent years \cite{even_odd, hom:basis, topo, hamming, alge,
hereditary}.

It is obvious that $\NUM{}\indsubsprob(\Phi)$ is \w-hard if $\Phi$ is the graph property
\emph{is a complete graph} since $\NUM{}\indsubs{(\Phi, k)}{G}$ is the number of $k$-cliques.
However, it turns out that $\NUM{}\indsubsprob(\Phi)$ is also \w-hard for many other graph
properties. Early works in that area focused on showing \w-hardness for specific graph
properties \cite{connected, even_odd}, and it looks like $\NUM{}\indsubsprob(\Phi)$ is \w-hard
for all possible graph properties $\Phi$ except trivial ones. In this setting, we say that
{\em $\Phi$ is trivial on $k$} if it is constant on $k$-vertex graphs, hence
$\NUM{}\indsubs{(\Phi, k)}{G}$ is either 0 or $\binom{|V(G)|}{k}$. We say that $\Phi$ is {\em
trivial} if there is an $N$ such that $\Phi$ is trivial on all $k \geq N$. It is easy to
verify that $\NUM{}\indsubsprob(\Phi)$ is FPT whenever $\Phi$ is trivial.

\begin{conjecture}[\cite{hereditary, hamming}]\label{con:w1:hard}
    For all nontrivial, computable graph properties $\Phi$ the $\NUM{}\indsubsprob(\Phi)$ problem
    is \w-hard. Otherwise, $\NUM{}\indsubsprob(\Phi)$ is FPT.
\end{conjecture}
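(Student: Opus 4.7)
The plan is to handle the two directions separately. The FPT direction for trivial $\Phi$ is immediate: by definition there is some $N$ such that $\Phi$ is constant on $k$-vertex graphs for every $k \geq N$, so $\NUM\indsubs{(\Phi,k)}{G}$ equals either $0$ or $\binom{|V(G)|}{k}$, and which of the two can be decided by evaluating $\Phi$ on a single $k$-vertex graph in time depending only on $k$. For the finitely many $k < N$, brute force enumeration of $k$-subsets runs in time $n^{O(N)}$, which is of the form $f(k)\cdot n^{O(1)}$ since $N$ depends only on $\Phi$.

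The hard direction I would attack through the graph motif parameters framework of Curticapean, Dell, and Marx. The map $G \mapsto \NUM\indsubs{(\Phi,k)}{G}$ admits a finite expansion $\sum_H a_H(\Phi,k)\cdot \NUM\homs{H}{G}$ indexed by $k$-vertex graphs $H$, with coefficients obtained from a Möbius-type inversion on the poset of $k$-vertex graphs ordered by edge inclusion. By the complexity-monotonicity principle, it suffices to exhibit, for infinitely many $k$, a graph $H_k$ on $k$ vertices with treewidth growing in $k$ and $a_{H_k}(\Phi,k) \neq 0$; the desired \w-hardness then follows via the standard reduction from $\NUM\clique$ through $\NUM\homsprob$ to unbounded-treewidth targets, and a careful packing of $H_k$'s yields the sharper $f(k)\cdot |V(G)|^{o(\sqrt{\log k}/\log\log k)}$ lower bound under ETH.

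The central obstacle, and the reason this is phrased as a conjecture rather than a theorem, is controlling the alternating sums defining $a_H(\Phi,k)$ using only the nontriviality of $\Phi$. For structured subclasses such as hereditary or edge-monotone properties (the latter being the subject of the present paper) one exploits the closure axioms to exhibit a specific high-treewidth witness $H_k$ whose coefficient is forced to be nonzero, typically by reducing the Möbius computation to a sum over a single coset or a quotient poset that can be evaluated in closed form. No such structural lever is available in the fully general setting: one must instead argue contrapositively that if $a_H(\Phi,k) = 0$ for every $H$ of sufficiently large treewidth on $k$ vertices, then the finitely many remaining low-treewidth monomials impose linear constraints on the values of $\Phi$ across isomorphism classes of $k$-vertex graphs, and these constraints must be shown to force triviality.

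This last step — a \emph{rigidity lemma} showing that no nontrivial computable property can be expressed as a linear combination of homomorphism counts from graphs of bounded treewidth — is the main obstacle, and I expect it to require genuinely new techniques beyond the property-specific arguments currently in the literature. A plausible line of attack would combine the algebraic independence of homomorphism counts from pairwise non-isomorphic targets with a counting argument: the number of $k$-vertex isomorphism classes grows much faster than the dimension of the span of bounded-treewidth homomorphism vectors on $k$-vertex inputs, so generically the projection onto this span cannot separate all nontrivial $\Phi$. Making this dimension gap quantitative and robust under restriction to isomorphism-invariant target functions is where the real work lies. As a partial path, I would first try to extend the dichotomy to broader structural fragments (properties preserved under disjoint unions, or definable by first-order sentences of bounded quantifier depth) as stepping stones before attempting the full conjecture.
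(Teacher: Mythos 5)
The statement you were asked about is a \emph{conjecture} in this paper, not a theorem: the authors do not prove it, they only prove the special case of edge-monotone properties (and cite Focke--Roth for the hereditary case). So there is no ``paper proof'' to match, and your proposal, by its own admission, is a research program rather than a proof. Your treatment of the easy direction is fine, and your framing of the hard direction is essentially the known framework: expand $\NUM{}\indsubs{(\Phi,k)}{\star}$ in the homomorphism basis, invoke complexity monotonicity, and note that the coefficient of $\homs{H}{\star}$ for a $k$-vertex $H$ is (up to the automorphism normalization) exactly the alternating enumerator $\ae{\Phi}{H}$ that this paper works with. Up to that point you are reconstructing Curticapean--Dell--Marx and the reduction machinery the paper also uses (its \cref{lem:alpha:treewidth}), which already yields that the problem is always FPT or \w-hard; the conjecture is precisely the claim that the hard side is hit by every nontrivial $\Phi$.

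The genuine gap is the step you call the rigidity lemma, and your proposed attack on it does not work as stated. A dimension-counting argument (``there are more $k$-vertex isomorphism classes than the span of bounded-treewidth homomorphism vectors'') can at best show that a \emph{generic} or \emph{most} properties have some high-treewidth graph with nonzero coefficient; the conjecture requires this for \emph{every} nontrivial computable $\Phi$, and a specific adversarial property could in principle sit exactly in the low-treewidth span on every $k$ where it is nontrivial. Nothing in your outline excludes that, and indeed the entire technical content of this paper (duality of the highest nonzero level, the avalanche effect for difference graphs, the reduction from $k$ to its largest prime-power divisor) exists precisely because one must rule out such cancellation property-by-property using structural hypotheses like edge-monotonicity; the authors also stress that nontriviality may hold only on a sparse set of $k$, which defeats arguments that need infinitely many ``nice'' values of $k$ of a prescribed form. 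Likewise, the quantitative $f(k)\cdot|V(G)|^{o(\sqrt{\log k}/\log\log k)}$ ETH bound you assert would follow ``by a careful packing'' is not a generic consequence of the framework: in the paper it comes from the fact that the largest prime-power divisor of $k$ is at least $c\log k$, an ingredient tied to the edge-monotone construction, and no analogue is available for arbitrary $\Phi$. So the proposal correctly identifies where the difficulty lies but supplies no mechanism for overcoming it; as a proof of the stated conjecture it is incomplete, and the concrete suggestion it does make (the dimension gap) is insufficient in principle, not merely unquantified.
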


Curticapean, Dell, and Marx showed in \cite{hom:basis} that $\NUM{}\indsubsprob(\Phi)$ is
always FPT or \w-hard. However, the proof does not give an easy way to determine which
case holds for a given property $\Phi$. As we review it in \cref{chap:prior},
\cref{con:w1:hard} is known to hold for many classes of graph properties. In particular,
Focke and Roth~\cite{hereditary} showed that \cref{con:w1:hard} holds if $\Phi$ is a {\em
hereditary} property, that is, closed under deletion of vertices.  In this paper, we show
a complementary result for graph properties that are \emph{edge monotone} (closed under
deletion of edges). We also provide quantitative lower bounds on the exponent of $n$ under
the {\em Exponential-Time Hypothesis (ETH)} \cite{IMPAGLIAZZO2001512}.

\begin{restatable*}{mtheorem}{thmedgemon}\label{theo:edge_mono}\label{theo:general:lower:bound}
    Let \(\Phi\) denote a nontrivial edge-monotone graph property.
    \begin{itemize}
        \item The problem \(\NUM{}\indsubsprob(\Phi)\) is \(\w\)-hard.
        \item Further, assuming ETH, there is a universal constant $\gamma > 0$ (independent of \(\Phi\)) such that
            for any integer $k \geq 3$ on which $\Phi$ is nontrivial,
            no algorithm (that reads the whole input)
            computes for every graph \(G\) the number $\NUM{}\indsubs{(\Phi, k)}{G}$ in time
            $\smash{O(|V(G)|^{\gamma \sqrt{\log k} / \log\log k})}$.
            \qedhere
    \end{itemize}
\end{restatable*}
\medskip

As an example, the following nontrivial graph properties are all edge-monotone and not
covered by any method in \cref{chap:prior}, but \cref{theo:edge_mono} shows their
\w-hardness.

\begin{example}\label{example:monotone:properties}
    \begin{itemize}
        \item $\Phi_1^c(G) = 1$ if and only if $G$ is disconnected or $\diam(G) \geq c
            |V(G)|$ for a fixed constant $c \in (0, 1)$.
        \item $\Phi_2^c(G) = 1$ if and only if $G$ is bipartite or contains an independent
            set of size at least $c$ for a fixed constant $c \geq 3$.
        \item $\Phi_3^c(G) = 1$ if and only if the maximum degree of $G$ is at most
            $c|V(G)|$ for a fixed constant $c \in (\nicefrac{1}{2}, 1)$.
            \qedhere
    \end{itemize}
\end{example}

A notable difference between hereditary and edge-monotone properties is that an
edge-monotone property might be nontrivial only on a sparse set of integers: for example,
one can define $\Phi(G)$ to be 1 if and only if $G$ is a clique with $|V(G)|$ being a
power of two (or a prime, or the product of the first $i$ primes, or $\ldots$). On the
other hand, if $\Phi$ is a nontrivial hereditary property, then it is an easy exercise to show that
it has to be nontrivial for every $k$ larger than some $N$.
Our algebraic proof techniques for \cref{theo:edge_mono} are very sensitive to the
number-theoretic properties of the size $k$, hence it is a significant challenge to make
it work when $\Phi$ can be nontrivial only on certain integers.

In \cite{topo, hamming, alge, hereditary}, it was proven that for specific classes of
nontrivial graph properties~$\Phi$, there is a function $g$ such that the problem
$\NUM{}\indsubsprob(\Phi)$ cannot be solved in time $f(k) \cdot |V(G)|^{o(g(k))}$ for any
computable function $f$, unless ETH fails.
The second part of \cref{theo:edge_mono} shows a similar lower bound on the exponent of
the running time for edge-monotone properties.
However, readers familiar with the way ETH-based lower bounds are stated in the
parameterized complexity literature should notice that the second part of
\cref{theo:edge_mono} uses a very different formulation (a similar formulation was given
by Cohen-Addad et al.~\cite{cohenaddad2021tight} for various problems). We use a stronger
approach by showing that, assuming ETH, there is a constant $\gamma > 0$ such that for
all edge-monotone graph properties $\Phi$ and {\em any} fixed nontrivial $k \geq 3$ the
function $\NUM{}\indsubs{(\Phi, k)}{G}$ cannot be computed for all graphs \(G\) in time $O(|V(G)|^{\gamma
g(k)})$ unless ETH fails. That is, our stronger statement applies not only to algorithms
solving the problem in general, but also gives a meaningful statement for algorithms
solving the problem for a fixed $k$. It is easy to observe that this approach implies that
no algorithm solves $\NUM{}\indsubsprob(\Phi)$ in time $f(k) \cdot |V(G)|^{o(g(k))}$ for any
computable function $f$, unless ETH fails.

\begin{corollary}\label{cor:logsquare:nontight}
    Let \(\Phi\) denote a nontrivial edge-monotone graph property.
    \begin{itemize}
        \item The problem \(\NUM{}\indsubsprob(\Phi)\) is \(\w\)-hard.
        \item Further, assuming ETH, no algorithm
            computes for every graph \(G\) and every positive integer \(k\) the number
            $\NUM{}\indsubs{(\Phi,k)}{G}$ in time $f(k) \cdot |V(G)|^{o(\sqrt{\log k} /
            \log \log k)}$ for any computable function~$f$.
    \end{itemize}
\end{corollary}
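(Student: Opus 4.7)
The plan is to derive \cref{cor:logsquare:nontight} directly from \cref{theo:edge_mono}; no new reduction is required. The first bullet of the corollary is literally the first bullet of the Main Theorem, so I would invoke it verbatim.

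For the second bullet I would argue by contradiction. Suppose that, for some computable $f$, there is an algorithm $A$ computing $\NUM{}\indsubs{(\Phi,k)}{G}$ on every graph $G$ and every positive integer $k$ in time $f(k)\cdot|V(G)|^{h(k)}$, where $h(k)=o\bigl(\sqrt{\log k}/\log\log k\bigr)$ in the standard parameterized sense. Let $\gamma>0$ be the universal constant from the second bullet of \cref{theo:edge_mono}. By definition of little-$o$, there is a threshold $k_0$ with $h(k)\le(\gamma/2)\sqrt{\log k}/\log\log k$ for all $k\ge k_0$. Because $\Phi$ is nontrivial, the set of integers on which $\Phi$ is nontrivial is unbounded, so I can fix some $k^\star\ge\max(k_0,3)$ lying in this set. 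With $k^\star$ pinned down, $f(k^\star)$ becomes a mere constant, and hence $A$ computes $\NUM{}\indsubs{(\Phi,k^\star)}{G}$ for every graph $G$ in time
\[
    f(k^\star)\cdot|V(G)|^{h(k^\star)} \;=\; O\bigl(|V(G)|^{(\gamma/2)\sqrt{\log k^\star}/\log\log k^\star}\bigr) \;=\; O\bigl(|V(G)|^{\gamma\sqrt{\log k^\star}/\log\log k^\star}\bigr),
\]
which contradicts the lower bound in the second bullet of \cref{theo:edge_mono} applied at $k^\star$.

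The only point that needs a sentence of justification is the formal reading of the $o(\cdot)$ inside the exponent. I adopt the standard parameterized-complexity convention that $|V(G)|^{o(g(k))}$ abbreviates the existence of some function $h$ with $h(k)/g(k)\to 0$ and true running time at most $|V(G)|^{h(k)}$; this is precisely the formulation anticipated in the paragraph immediately preceding the corollary. I do not foresee any genuine obstacle: the very point of the stronger ``for every fixed $k\ge 3$'' formulation of \cref{theo:edge_mono} is to allow $f(k^\star)$ to be absorbed into a $k^\star$-dependent constant, so that the two notions of lower bound collapse neatly into one another.
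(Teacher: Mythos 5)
Your proof takes essentially the same route as the paper's: argue by contradiction, use the definition of little-$o$ to obtain a threshold past which the exponent drops below the $\gamma$-bound, invoke the nontriviality of $\Phi$ to choose a suitable $k^\star$ past the threshold, absorb $f(k^\star)$ into the constant, and apply the second bullet of \cref{theo:edge_mono} at $k^\star$. Your use of $\gamma/2$ rather than $\gamma$ in the threshold step is an unnecessary but harmless extra margin; otherwise the argument matches the paper word for word.
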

\begin{proof}
    Suppose that $\NUM{}\indsubsprob(\Phi)$ can be solved in $O(f(k) |V(G)|^{g(k)})$ for $g(k)
    \in o(\sqrt{\log k} / \log \log k)$.
    Then, there is an $N$ with $g(k') \leq \gamma
    \sqrt{\log k'} / \log \log k' $ for all $k' \geq N \geq 3$. Since $\Phi$ is
    nontrivial, there is a $k \geq N$ such that $\Phi$ is nontrivial on $k$.
    \Cref{theo:edge_mono} shows that no algorithm solves $\NUM{}\indsubsprob(\Phi)$ in
    $O(f(k) |V(G)|^{g(k)})$.
\end{proof}
Thus our formulation of the second part of \cref{theo:edge_mono} implies the usual
formulation. Further, even though the statement is stronger, it turns out that it is
somewhat easier to work with this formulation: technicalities involving the little-$o$
notation and the function $f(k)$ disappear, which results in more streamlined proofs.

Observe that quantitative lower bounds for $ \NUM{}\indsubsprob(\Phi)$ in
\cref{theo:edge_mono} are fairly weak: we say that the exponent cannot be much better than
$\sqrt{\log k}/\log\log k$. Naturally, we would like to show lower bounds for
$\NUM{}\indsubsprob{(\Phi)}$ of the form $O(|V(G)|^{\gamma k})$, that is, we would like to
show bounds that are tight in the sense that we can indeed solve $\NUM{}\indsubsprob(\Phi)$ in time $O(|V(G)|^k)$
using a brute force approach. While we cannot prove such tight lower bounds in
full generality, we are able to obtain tight lower bounds for specific cases that cover most
properties of interest.

\begin{restatable*}{mtheorem}{thetightlowerbound}\label{theo:tight:lower:bound}
    For each prime $p$, there is a constant $\gamma_p > 0$ such that for
    each integer \(m\) with $p^m \geq 3$ and each edge-monotone graph property $\Phi$ that is nontrivial
    on $p^m$, no algorithm (that reads the whole input) computes for every graph \(G\) the
    number $\NUM{}\indsubs{(\Phi, k)}{G}$ in time $O(|V(G)|^{\gamma_p p^m})$, unless ETH
    fails.
\end{restatable*}
As before, we also restate \cref{theo:tight:lower:bound} in terms of ruling out $o(k)$ in the exponent.
\begin{corollary}\label{cor:tight:lower:bound}
    For all edge-monotone $\Phi$ that for fixed prime number $p$ are nontrivial on
    infinitely many numbers of the form $p^m$, no algorithm
    computes for every graph \(G\) end every positive integer \(k\) the number
    $\NUM{}\indsubs{(\Phi, k)}{G}$ in time $f(k) \cdot |V(G)|^{o(k)}$ for any computable
    function $f$, unless ETH fails.
\end{corollary}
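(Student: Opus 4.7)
The plan is to mimic the contradictory argument used in the proof of \cref{cor:logsquare:nontight}, with \cref{theo:tight:lower:bound} taking the place of the second bullet of \cref{theo:edge_mono}. Specifically, I would assume for contradiction that there is a computable function $f$ and an algorithm computing $\NUM{}\indsubs{(\Phi, k)}{G}$ in time $f(k) \cdot |V(G)|^{g(k)}$ for some $g(k) \in o(k)$, and then produce a lower-bound violation at a single carefully chosen value of $k$.

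Let $\gamma_p$ denote the constant provided by \cref{theo:tight:lower:bound} for the prime $p$. By the definition of the little-$o$ notation, there exists a threshold $N$ such that $g(k') \leq \gamma_p k'$ for every $k' \geq N$. The hypothesis that $\Phi$ is nontrivial on infinitely many powers of $p$ lets me select an integer $m$ with $k = p^m \geq \max(N, 3)$ and $\Phi$ nontrivial on $k$. For this \emph{fixed} $k$ the prefactor $f(k)$ is a constant, so the hypothetical algorithm runs in time $O(|V(G)|^{g(k)}) = O(|V(G)|^{\gamma_p p^m})$, directly contradicting \cref{theo:tight:lower:bound}.

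There is no real technical obstacle to speak of; the argument is purely a translation between the fixed-$k$ formulation of \cref{theo:tight:lower:bound} and the more familiar parameterized formulation appearing in the corollary. The role of the hypothesis is precisely that the set of admissible $k$ is cofinal in $\nat$, which is what allows one to push $k$ past the threshold $N$ where the asymptotic bound on $g$ becomes active. Without the ``infinitely many $p^m$'' assumption, $\Phi$ could be nontrivial on only finitely many powers of $p$, and the argument would be unable to produce a large enough $k$ to invoke the theorem; all genuine difficulty has already been absorbed into the proof of \cref{theo:tight:lower:bound} itself.
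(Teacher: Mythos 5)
Your argument is correct and is essentially identical to the paper's own proof: fix a sufficiently large prime power $k = p^m$ on which $\Phi$ is nontrivial and beyond the little-$o$ threshold, note that $f(k)$ is then a constant, and invoke \cref{theo:tight:lower:bound} at that $k$. If anything your write-up is slightly more careful than the paper's (which only says ``there is a $k\geq N$ on which $\Phi$ is nontrivial'' without explicitly noting that $k$ must be chosen of the form $p^m$), but the underlying reasoning matches the paper's proof exactly.
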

\begin{proof}
    Suppose that $\NUM{}\indsubsprob(\Phi)$ can be solved in $O(f(k) |V(G)|^{g(k)})$ for $g(k)
    \in o(k)$. Then, there is an $N$ with $g(k') \leq \gamma k$ for all $k' \geq N \geq 3$.
    Since $\Phi$ is nontrivial, there is a $k \geq N$ with $\Phi$ is nontrivial on $k$.
    \cref{theo:tight:lower:bound} shows that no algorithm solves $\NUM{}\indsubsprob(\Phi)$ in
    $O(f(k) |V(G)|^{g(k)})$.
\end{proof}
Observe that \cref{cor:tight:lower:bound} holds whenever there is a constant $N$ such that $\Phi$ is
nontrivial on all $k \geq N$. It is easy to check that this condition holds for all $\Phi_i$ in
\cref{example:monotone:properties}. Thus, we obtain tight lower bounds for all $\NUM{}\indsubsprob(\Phi_i)$
using \cref{cor:tight:lower:bound}.

\subsection{Prior Work}\label{chap:prior}
In the following, we summarize recent results for the parameterized complexity of the
$\NUM{}\indsubsprob(\Phi)$ problem. The results are ordered by their date of publication.

\begin{enumerate}[(a)]
    \item $\NUM{}\indsubsprob(\Phi)$ is \w-hard for
    \begin{enumerate}[(i)]
        \item $\Phi(G) = 1$ if and only if $G$ is connected \cite{connected}
        \item $\Phi(G) = 1$ if and only if $|E(G)|$ is even,  $\Phi(G) = 1$ if and only if
            $|E(G)|$ is odd (both in \cite{even_odd})
    \end{enumerate}
    \medskip

    \item In \cite{hard_families}, Jerrum and Meeks proved \w-hardness if the number of
        distinct edge densities of graphs that satisfy $\Phi$ is low. This also shows that
        $\NUM{}\indsubsprob(\Phi)$ is \w-hard whenever $\Phi$ is minor-closed.
    \medskip

\item In \cite{MEEKS2016170}, Meeks proved that $\NUM{}\indsubsprob(\Phi)$ is \w-hard if
        $\Phi$ is closed under the addition of edges and the edge-minimal graphs of $\Phi$
        have unbounded treewidth. A Graph is edge-minimal if $\Phi(G) = 1$ and $\Phi(G') =
        0$ for all proper edge-subgraphs of $G$.
    \medskip

\item In \cite{topo}, Roth and Schmitt proved that $\NUM{}\indsubsprob(\Phi)$ is \w-hard
        if $\Phi$ is nontrivial, edge-monotone and satisfies at least one of the following
        conditions.
    \begin{enumerate}[(i)]
        \item $\Phi$ is false for odd cycles. A cycle has
            vertex set $\{0, \dots, n-1\}$ and $\{a, b\}$ is an edge if and only if $a - b \equiv_n 1$.
        \item $\Phi$ is true for odd anti-holes. An anti-hole is the complement graph of a cycle.

        \item There is a $c \in \nat$ such that $\Phi(H) =
            1$ if and only if $H$ is not c-edge-connected

        \item There is a graph $F$ such that $\Phi(H) = 1$ if and only if there is no homomorphism from $F$ to $H$.
    \end{enumerate}
    \medskip

    \item In \cite{alge},  Dörfler, Roth, Schmitt, and Wellnitz proved that
        $\NUM{}\indsubsprob(\Phi)$ is \w-hard if there are infinitely many prime powers $t$
        such that $\Phi(K_{t, t}) \neq \Phi(\IS_{2t})$.
    \medskip

    \item In \cite{hamming}, Roth, Schmitt, and Wellnitz proved the following criteria for
        checking \w-hardness. Let $f^{\Phi, k}_i \coloneqq \NUM{} \{A  \subseteq E(K_k) :
            \NUM{}A = i
        \land \Phi(K_k[A]) = 1\}$ denote a vector and $\hw(f^{\Phi, k}) \coloneqq \NUM{}\{i : f^{\Phi,
        k}_i \neq 0 \}$ the hamming weight of $f^{\Phi, k}$. We define the function $\beta
        \colon \mathcal{K}(\Phi) \to \Z_{\geq 0}; k \mapsto \binom{k}{2} - \hw(f^{\Phi,
        k})$, where $\mathcal{K}(\Phi)$ is the set of $n \in \nat$ with $\Phi$ is
        nontrivial on $n$. The problem $\NUM{}\indsubsprob(\Phi)$ is \w-hard if $\beta(k) \in \omega(k)$.

        They also proved that $\NUM{}\indsubsprob(\Phi)$ is \w-hard if $\Phi$ is
        \emph{monotone}, meaning closed under taking subgraphs.
    \medskip

\item In \cite{hereditary}, Focke and Roth proved that $\NUM{}\indsubsprob(\Phi)$ is
        \w-hard if $\Phi$ is nontrivial and hereditary. A~property is called
        \emph{hereditary} if it is closed under vertex-deletion, meaning that if $G$
        satisfies $\Phi$, then each induced subgraph of $G$ also satisfies $\Phi$.
    \medskip

\item Lastly, we observe that the counting problem $\NUM{}\indsubsprob(\Phi)$ is \w-hard
    if and only if $\NUM{}\indsubsprob(\neg \Phi)$ is \w-hard; and
    $\NUM{}\indsubsprob(\Phi)$ is \w-hard if and only if
    $\NUM{}\indsubsprob(\overline{\Phi})$ is \w-hard. Here, $\neg \Phi(G) \coloneqq 1 - \Phi(G)$;
        and $\overline{\Phi}(G) \coloneqq \Phi(\overline{G})$, where $\overline{G}$ is the
        complement of $G$ (see \cite[see Fact 2.3]{hamming}). This means that we can prove
        \w-hardness of $\NUM{}\indsubsprob(\Phi)$ by analyzing $\NUM{}\indsubsprob(\neg \Phi)$
        or $\NUM{}\indsubsprob(\overline{\Phi})$.
\end{enumerate}

\begin{figure}
    \centering
    \begin{tikzpicture}[yscale=-1]
        \draw[rounded corners=5pt,draw=black!30,ultra thick, fill=black!2]
            (-7,-3) rectangle (7,3);
            \node[anchor=north west, outer sep=2pt,black!80] at (-7,-3) {{\bf nontrivial}};

        \draw[rounded corners=5pt,draw=purple!50!red!30,ultra thick, fill=purple!2]
            (-6.75,-2.25) rectangle (2,2.5);
        \node[anchor=north west, outer sep=2pt,purple!50!red!80,align=left] at (-6.75,-2.25)
        {{\bf edge-monotone}\\{\footnotesize (closed under}\\{\footnotesize removal of edges)}\\
        \w-hard\\{\bf [this work]}};

        \draw[rounded corners=5pt,draw=blue!30,ultra thick, fill=blue!2]
            (-2,-2) rectangle (6.75,2.75);
        \node[anchor=north east, outer sep=2pt,blue!80,align=right] at (6.75,-2)
        {{\bf hereditary}\\{\footnotesize (closed under}\\{\footnotesize removal of
            vertices)}\\
        \w-hard\\{\cite{hereditary}}};

        \draw[rounded corners=5pt,draw=purple!50!red!50!blue!30,ultra thick, fill=purple!50!red!50!blue!2]
            (-2,2.5) -- (-2,-2) -- (2,-2)
            (-2,2.5) -- (2,2.5) -- (2,-2);

        \node[anchor=north, outer sep=2pt,purple!50!red!50!blue!80,align=center] at (0,-2)
        {{\bf monotone}\\{\footnotesize (closed under removal}\\{\footnotesize of
            vertices and edges)}\\
        \w-hard\\{\cite{hamming}}};

        \draw[rounded corners=5pt,draw=black!20]
            (-6.5,.5) rectangle (-1,2);
        \draw[rounded corners=5pt,draw=black!20]
            (-1.75,.75) rectangle (1.75,2.25);
        \node[anchor=north west, outer sep=2pt,black!50,align=left] at (-6.5,.55) {{other results}\\
        \cite{MEEKS2016170,connected,topo}};
        \node[anchor=north east, outer sep=2pt,black!50,align=right] at (1.75,.8)
        {{minor-closed}\\
        \cite{hard_families}};

    \end{tikzpicture}

    \caption{Hierarchy of classes of graph properties \(\Phi\), together  with the results that
        show \w-hardness for the corresponding problem \(\NUM{}\indsubsprob(\Phi)\) (if such
        results exist).}
    \label{fig:venn:diagram}
\end{figure}
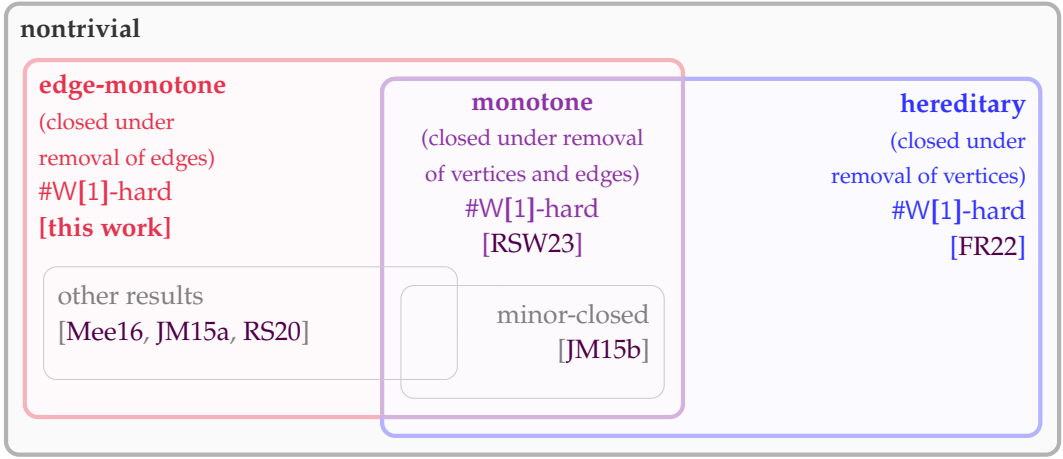

\subsection{High-level Ideas}

Our results build on concepts introduced in earlier work, but we need to develop
substantial new technical ideas to be able to deploy them in our setting. We briefly
review these concepts here and highlight our main new technical contributions; a more
detailed technical overview follows in \cref{sec:techoverview}.

Given a graph property $\Phi$, we define the \emph{alternating
    enumerator} \(\aename{\Phi}\) as
    \[
        \ae{\Phi}{H} \coloneqq \sum_{S \subseteq E(H)} \Phi(\ess{H}{S}) (-1)^{\NUM{}S}.
    \]
Here, $\ess{H}{S}$ denotes the subgraph of $H$ that contains the same set of vertices but only the edge set $S$.
Dörfler~et~al.~\cite{alge} proved that $\NUM{}\indsubsprob(\Phi)$ is \w-hard if there are
graphs with arbitrary large treewidth and nonzero alternating enumerator.
Thus, \w-hardness can be established by showing that such graphs exist.
However, the complicated definition of
$\ae{\Phi}{H}$ does not make this task easy, even for a specific property $\Phi$.
Dörfler~et~al.~\cite{alge} made the following observation that can help in arguing that
$\ae{\Phi}{H}$ is nonzero. Write $\Gamma$ for a group that consists in automorphisms of $V(H)$
and suppose that the order of $\Gamma$ is a power of $p$. We say that a subset $S\subseteq
E(H)$ is a fixed point with respect to $\Gamma$ if every automorphism in $\Gamma$ moves
every edge in $S$ to an edge in $S$. Clearly, we can show that $\ae{\Phi}{H}$ is nonzero
by showing that it is nonzero modulo $p$. The main observation is that if we want to
compute the sum in $\ae{\Phi}{H}$ modulo $p$, then we do not need to sum over all sets
$S\subseteq E(H)$: it is sufficient to sum over the fixed points with respect to $\Gamma$,
as the other subgraphs somehow cancel out. Thus, \w-hardness can be established by finding
graphs~$H$ with large treewidth and appropriate groups $\Gamma$ to show that
$\ae{\Phi}{H}$ is nonzero modulo some prime $p$.

It is not difficult to show that every fixed point with respect to $\Gamma$ is
the disjoint union of orbits of edges. This means that the fixed points have a
natural lattice structure: we can imagine the fixed points that are the disjoint union of
$\ell$ orbits as the $\ell$-th level of the lattice.
In broad terms, our approach is to define some group $\Gamma$, consider the fixed points
of the complete graph on $k$ vertices with respect to $\Gamma$, and then try to find a
fixed point of sufficiently high level whose alternating enumerator is nonzero. Fixed
points on higher levels have more edges and hence larger treewidth. Now, we invoke (adaptations of)
the earlier results of Dörfler et al.~\cite{alge} to show hardness with
such graphs. In more detail, our proofs are based on the following four main technical
ideas.

\begin{enumerate}
    \item \textbf{Duality of the highest nonzero level.} We use linear algebra arguments to
        show that if $\Phi$ is 0 on every fixed point on the topmost $c$ levels, then there is
        a fixed point on level $\ge c$ whose alternating enumerator is nonzero, and the
        level implies that treewidth is at least $c$.  Thus, we may
        assume that there is a fixed point of fairly high level that is nonzero in $\Phi$.
        Note that this statement is true for every property $\Phi$, even if $\Phi$ is not
        edge-monotone.
        \medskip

    \item \textbf{Avalanche effect for difference graphs on $\field{p^m}$}. Write $A$ for a
        subset of $\field{p^m}$. Then we can define the difference graph on the vertex set
        $\field{p^m}$, where there is an edge between $x$ and $y$ if and only if $x-y\in A$.
        Write $\Gamma$ for the additive group of $\field{p^m}$. One can observe that the fixed
        points with respect to $\Gamma$ are exactly the difference graphs. If $\Phi$ is
        edge-monotone and $\Phi$ is nonzero on a fixed point $S$, then this implies that $\Phi$
        should be nonzero on all the other fixed points that are subsets of $S'$. In
        particular, a single nonzero fixed point on one of the top $c$ levels starts an
        ``avalanche'' that forces every fixed point on a level of at most roughly $p^m/c$
        to be nonzero in $\Phi$.
        The proof is based on the fact that $\field{p^m}$ multiplication is an isomorphism of the
        difference graph. Thus, if we pick the fixed point with the lowest level that does not
        satisfy $\Phi$, then it has a level of at least $p^m/c$ and hence fairly large treewidth. A
        simple calculation using the binomial theorem shows that the any fixed point of
        the lowest level that do not satisfy $\Phi$ always has a nonzero alternating enumerator.
\end{enumerate}

We combine the previous two ideas to obtain \w-hardness if $\Phi$ is nontrivial on infinitely
many prime powers. If $k=p^m$, then we use the following win/win approach: if $\Phi$
is 0 on the topmost $c=\sqrt{p^m}$ levels, then (1) gives a fixed point of treewidth at
least $\sqrt{p^m}$ with nonzero alternating enumerator; otherwise, (2) gives such a fixed
point. To obtain the \w-hardness result in \cref{theo:edge_mono}, we
extend our proof to the case when $k=d\cdot p^m$ with the following idea.
We also prove the ETH-based quantitative part of \cref{theo:edge_mono}
by observing that the largest prime
power divisor of $k$ is always at least logarithmic in $k$.

\begin{enumerate}
  \setcounter{enumi}{2}
    \item \textbf{Product construction and reduction.} We extend the lower bound for
        properties that are nontrivial on some prime powers to the general case in the following way.
        Write $\Gamma$ for a group of permutations on $p^m$ vertices. There is a natural way
        to raise \(\Gamma\) to a group $\Gamma^d$ of permutations on $d\cdot p^m$ vertices. We
        observe that the fixed points of $\Gamma^d$ can be described as the disjoint
        unions of the fixed points of $\Gamma$, plus additionally fully connecting some
        pairs of these fixed points. Let us take a look at the fixed point \(F\) with the lowest level
        that is zero in $\Phi$ (which we know to have nonzero alternating enumerator). If
        \(F\) contains one of the aforementioned full connections, then \(F\) has high treewidth, which is
        what we wanted. Otherwise, \(F\) is the disjoint union of fixed points of $\Gamma$;
        write $F'$ for one of them and write $H$ for the union of the remaining $d-1$
        fixed points. Then by
        ``pinning'' $H$ we can define a nontrivial property $\Phi'$ on $p^m$: set
        $\Phi'(G) \coloneqq \Phi(H\unionSet G)$, which is zero on the $p^m$-vertex graph
        $F'$. Now a
        standard reduction based on the Inclusion-Exclusion principle shows how to reduce
        $\NUM{}\indsubsprob(\Phi')$ to $\NUM{}\indsubsprob(\Phi)$, hence the lower bounds for
        prime powers can be used.
\end{enumerate}

Unfortunately, a square root loss between $k$ and the treewidth of the identified fixed
point is inevitable when using (1) and (2), hence they cannot lead to tight bounds. The
last idea allows us to obtain tight bounds in at least some cases.

\begin{enumerate}
  \setcounter{enumi}{3}
    \item\textbf{Avalanche effect for lexicographic product of graphs.}  For a prime $p$
        and integer $m\ge 2$, we show another way of defining fixed points on $p^m$
        vertices that have better avalanche properties. We define a group $\Gamma$ (which
        is in fact the Sylow $p$-group of the automorphism group of $K_{p^m}$) such that
        the fixed points with respect to $\Gamma$
        are exactly the so-called $m$-dimensional lexicographic products of difference
        graphs on $\field{p}$. Let us consider the lowest level $\ell$ that contains a
        fixed point where $\Phi$ is 0. We observe that every level $\ell$ has fixed points
        that contain the complete bipartite graph $K_{p^{m-1},p^{m-1}}$. If $\Phi$ is 0 on
        one such fixed point \(F\) on level $\ell$, then it is easy to show that the alternating
        enumerator of \(F\) is nonzero. Now, earlier work shows how to reduce the counting of
        $p^{m-1}$-cliques to $\NUM{}\indsubsprob(\Phi)$ with $k=p^m$. Otherwise, if $\Phi$ is
        nonzero on every such fixed point on level $\ell$, then the avalanche effect shows
        that $\Phi$ is nonzero also on every other fixed point on the same level $\ell$, a
        contradiction.
\end{enumerate}

Specifically, if $k=p^m$ for some constant prime $p$, then we obtain the tight bound that
$\NUM{}\indsubsprob(\Phi)$ for $k$ is at least as hard as counting $p^{m-1}$-cliques, which proves
\cref{theo:tight:lower:bound}.

\section{Technical Overview}\label{sec:techoverview}

In this section, we present an overview of the most important techniques and ideas that
we use to show \w-hardness of $\NUM{}\indsubsprob(\Phi)$ for each nontrivial edge-monotone
graph property $\Phi$, as well as the ETH-based quantitative lower bounds. We start with a
review of the techniques that we use from previous work and then elaborate on our novel
technical ideas.

\paragraph*{Alternating Enumerator, \w-hardness, and Lower Bounds}

A problem instance of $\NUM{}\homsprob(\mathcal{H})$ is a pair of a graph $H \in \mathcal{H}$ and a graph
$G \in \mathcal{G}$ and the output is the number of homomorphisms from $H$ to $G$ (that
is, $\NUM{}\homs{H}{G}$); we parameterize by $\kappa(H, G) \coloneqq |V(H)|$.
Dalmau and Jonsson~\cite{DALMAU2004315} proved that
$\NUM{}\homsprob(\mathcal{H})$ is \w-hard if and only if the treewidth of
the set $\mathcal{H}$ is unbounded (that is, there is no constant $c$ such that the
treewidth of all elements in $\mathcal{H}$ is below $c$).

Further results follow from the fact that a certain colored version of
$\homsprob(\mathcal{H})$ can be reduced to $\NUM{}\homsprob(\mathcal{H})$
\cite{cohenaddad2021tight,DBLP:journals/toc/Marx10}. We prove our \w-hardness results by
using a parameterized Turing reduction from
$\NUM{}\homsprob(\mathcal{H})$ to $\NUM{}\indsubsprob(\Phi)$, which was first developed by
D\"orfler et al.~\cite{alge}.

For this reduction to work, the methods of \cite{alge}
require that the \emph{alternating enumerator} \(\ae{\Phi}{H}\) is nonvanishing for each graph $H \in
\mathcal{H}$.\footnote{In \cite{topo, alge}, the authors use $\hat{\chi}(\Phi, H)$ to
denote the alternating enumerator.}

\begin{restatable}{definition}{defae}\label{def:alternating_enumerator}
    For a graph property \(\Phi\) and a graph \(H\), we define the \emph{alternating
    enumerator} \(\ae{\Phi}{H}\)~as
    \[
        \ae{\Phi}{H} \coloneqq \sum_{S \subseteq E(H)} \Phi(\ess{H}{S}) (-1)^{\NUM{}S}.
    \]
    We say that a graph \(H\) is \emph{nonvanishing for a graph property \(\Phi\)} if the
    alternating enumerator of \(\Phi\) and \(H\) is nonzero.
    We say that a sequence of graphs \(H_k\) is nonvanishing for a graph property
    \(\Phi\) if every \(H_k\) is nonvanishing for \(\Phi\),
    that is, if for all \(k\), we have \(\ae{\Phi}{H_k} \ne 0\).
\end{restatable}

This means that we can show \w-hardness by finding a nonvanishing sequence $H_k$ that
has unbounded treewidth.
Further, assuming ETH, the reduction of \cite{alge} yields a
lower bound for $\NUM{}\indsubs{(\Phi, k)}{G}$ for a fixed $k$. To be more precise, we show that
if \(H\) is nonvanishing for \(\Phi\), we can use the reduction of \cite{alge}
to solve $\homsprob(\{H\})$ using an oracle for $\NUM{}\indsubs{(\Phi,
|V(H)|)}{\star}$.

From the work of Cohen-Addad et al.~\cite{cohenaddad2021tight}, we obtain that $\NUM{}\homsprob(\{H\})$ cannot be solved in
time $O(n^{\alpha_{\homsprob} \cdot \tw(H) / \log \tw(H)})$, where $n$ is the number of vertices of the
input graph. Hence, if we could compute $\NUM{}\indsubs{(\Phi, |V(H)|)}{\star}$ fast enough, then our reduction
shows that we can solve $\homsprob(\{H\})$ in time $O(n^{\alpha_{\homsprob} \cdot \tw(H) / \log
\tw(H)})$, which contradicts ETH.
Summarizing the previous discussion, we obtain the following lemma, which essentially
follows from previous work.
For completeness, we include a proof in \cref{sec:appendix:A}.

\begin{restatable*}[\cite{alge, cohenaddad2021tight}]{lemma}{lemchitw}
    \label{lem:alpha:treewidth}
    Let \(\Phi\) denote a nontrivial graph property.
    \begin{itemize}
        \item If there is a sequence of graphs with unbounded treewidth where each graph
        has an alternating enumerator that is nonvanishing for \(\Phi\), then
        \(\NUM{}\indsubsprob(\Phi)\) is $\w$-hard.

        \item Assuming ETH, there is a universal constant
        $\alpha_{\indsubsprob} > 0$ (that is independent of \(\Phi\))
        such that for any positive integer $k$ for which
        there is a graph $H_k$ with $k$ vertices, $\ae{\Phi}{H_k} \neq 0$, and $\tw(H_k)
        \geq 2$, no algorithm (that reads the whole input) computes for every graph \(G\)
        the number $\NUM{}\indsubs{(\Phi, k)}{G}$ in time
        $O(|V(G)|^{\alpha_{\text{\indsubsprob}} \tw(H_k) / \log \tw(H_k)})$.
        \qedhere
    \end{itemize}
\end{restatable*}
\medskip

For edge-monotone graph properties $\Phi$, the alternating enumerator of the $k$-clique
$\ae{\Phi}{K_k}$ is equal to the reduced Euler characteristic of the simplicial graph
complex $\hat{\chi}(\Delta(\Phi_k))$, where $\Phi_k$ is $\Phi$ restricted on $k$-vertex
graphs (see \cite[Lemma 14]{alge}). The reduced Euler characteristic is in turn closely
related to Karp’s famous evasiveness conjecture (see \cite{evasiveness}), which
conjectures that each nontrivial edge-monotone $\Phi_k$ is evasive. This conjecture holds if the
reduced Euler characteristic $\hat{\chi}(\Delta(\Phi_k))$ is nonvanishing (see
\cite[Theorem 4]{alge} and \cite{evasiveness}).

However, this means that the computation of \(\ae{\Phi}{H}\) is highly nontrivial, which
makes it hard to apply \cref{lem:alpha:treewidth}.
Fortunately for us, it suffices to show that $\ae{\Phi}{H} \not \equiv_p 0$ for a prime
number $p$---which turns out to be easier. In particular,
as observed by D\"orfler et al.~\cite{alge}, for a prime \(p\),
we can compute $\ae{\Phi}{H} \bmod p$
in an elegant way using the fixed points of a \(p\)-subgroup
\(\Gamma\) of \(\aut(H)\) when acting on edge-subgraphs of \(H\).
Thus, we heavily rely on the following lemma (whose proof is presented in
\cref{sec:appendix:A} for completeness).

\begin{restatable*}[\cite{alge}]{lemma}{lemchicomp}\label{lem:chi:comp}\label{equ:lem:chi:comp}
    Let $H$ denote a graph and let $\Gamma \subseteq \aut(H)$ denote a $p$-group, then
    \begin{align*}
        \ae{\Phi}{H} \equiv_p \sum_{ A \in \fp(\Gamma, H) } \Phi(A) (-1)^{\NUM{}E(A)}.
        \tag*{\qedhere}
    \end{align*}
\end{restatable*}

Combining \cref{lem:alpha:treewidth,lem:chi:comp}, we immediately obtain the following tool to show hardness.
\begin{restatable}{corollaryq}{corseqhard}\label{cor:chi:hard}
    Let \(\Phi\) denote a graph property and
    let $(H_k)$ denote a sequence of graphs such that
    \begin{itemize}
        \item \((H_{k})\) has unbounded treewidth and
        \item for each graph \(H_{k}\), there is a prime \(p_k\) and a \(p_k\)-group
            \(\Gamma_k \subseteq \auts{H_{k}}\) such that
            \begin{align*}
                \ae{\Phi}{H_k} \equiv_{p_k}
                \sum_{A \in \fpb{\Gamma_k}{H_k}} \Phi(A) (-1)^{\NUM{}E(A)}
                 \quad\text{is nonzero modulo \(p_k\).}
            \end{align*}
    \end{itemize}
    Then, $\NUM{}\indsubsprob(\Phi)$ is \w-hard.
\end{restatable}

A fixed point $A$ of $\Gamma$ in $H$ is an edge-subgraph of $H$ such that $gA = A$ for all
$g \in \Gamma$. We use $\fp(\Gamma, H)$ to denote the set of all fixed points. The
advantage of this approach is
that the set of fixed points $\fp(\Gamma, H)$ is usually much smaller than the set of all
edge-subgraphs
of $H$. As it turns out, the set $\fp(\Gamma, H)$ itself has a natural lattice
structure, which we exploit to find nonvanishing graphs with large treewidth.

\paragraph*{Fixed Points as a Union of Orbits}

Our goal is to find, for a given graph property $\Phi$ and value $k$, a nonvanishing
graph $H$ with $k$ vertices that has \emph{large} treewidth. To find these graphs, we
analyze the fixed point structure of a certain graph $H$ under a certain $p$-group $\Gamma
\subseteq \aut(H)$.

In \cref{chap:prop:chi}, we introduce a systematic way to analyze and describe the fixed
points of a group~\(\Gamma\) and a graph \(H\).
Write $E(H) / \Gamma \coloneqq \{O_1, \dots, O_s\}$ for the orbits of the group
action $\cdot \colon \Gamma \times E(H)$ that maps $g \in \Gamma$ and $\{u, v\} \in E(H)$ to $\{g(u), g(v)\}$.

Our first observation is that every fixed point \(F \in \fp(\Gamma, H)\) decomposes into a set of orbits
of \(E(H)/ \Gamma\); that is, we have
$V(F) = V(H)$ and $E(F) = \cup_{i \in A} O_i$ for some $A \subseteq \setn{s}$.
This means that the orbits $E(H) / \Gamma$ are the {basic building blocks} of the fixed
points $\fp(\Gamma, H)$.

\begin{restatable*}{lemma}{remarkfixunion}\label{remark:fixed:point:union}
    Let \(H\) denote a graph and let \(\Gamma \subseteq \aut(H)\) denote a group.
    Further, let \(\Gamma\) act on \(E(H)\) and write \(E(H)/\Gamma\) for the set of all
    resulting orbits.
    Finally, let \(\Gamma\) act on edge-subgraphs of \(H\) and write \(\fp(\Gamma, H)\) for
    the set of all resulting fixed points of \(\Gamma\) in \(H\).

    Then, the edge set of each fixed point $F \in \fp(\Gamma, H)$ is the (possible empty)
    disjoint union of orbits $O_1, \dots, O_s$, where $O_i \in E(H)/\Gamma$
    and each disjoint union of orbits yields a fixed point.
    The partition into orbits is unique.
\end{restatable*}

The \emph{level} of a fixed point \(F\), denoted by \(\hl{F}\), is the number of orbits
that \(F\) is made up of.
The level allows us to classify the fixed points into \emph{low} fixed points
(consisting of a few orbits) and \emph{high} fixed points (consisting of many
orbits). Usually, fixed points with a high level also have a high treewidth since
they contain more edges.
Additionally, if $\Gamma$ is a $p$-group, we can use the Orbit-stabilizer Theorem (which implies
that the size of the orbit divides the size of the group) to show that
$(-1)^{\NUM{}E(F)} \equiv_p (-1)^{\Hasselevel(F)}$ for all $F \in \fp(\Gamma, H)$.

Our second observation is that fixed points of fixed points lie within each other,
that is, for all $A \in \fp(\Gamma, H)$, we have
\[\fp(\Gamma, A) = \{B \in \fp(\Gamma, H) \mid E(B) \subseteq E(A)\}.\]
We use $B \subseteq A$ to denote that $B$ is a fixed point that lies in
$A$, and say that $A$ is a sub-point $B$.

Combining our two observations with \cref{lem:chi:comp}, we obtain
\begin{align}\label{eq:intro:ae}
    \ae{\Phi}{A} \equiv_p \sum_{\substack{ B \in \fp(\Gamma, H) \\ B \subseteq A}} \Phi(B) (-1)^{\Hasselevel(A)}.
\end{align}
Observe that we are summing only over fixed points $B \in \fp(\Gamma, H)$ that lie in $A$.

Hence, for computing the alternating enumerator of all $A \in \fp(\Gamma, H)$
it suffices to analyze the fixed point structure $\fp(\Gamma, H)$ for a single $H$.
\Cref{eq:intro:ae} helps significantly in understanding the alternating enumerator.
For instance, \cref{eq:intro:ae} plays a central role in proving the following key lemma.

\begin{restatable*}{lemma}{aesubpointstrue}\label{remark:chi:all:children:true}
    Let \(H\) denote a graph, let \(\Gamma \subseteq \auts{H}\) denote a \(p\)-group, and
    let \(\Phi\) denote a graph property.
    Further, let \(A \in \fps{H}\) denote a fixed point without property \(\Phi\),
    such that all of the proper sub-points of \(A\) do have property \(\Phi\);
    that is, we have \(\Phi(A) = 0\) and \(\Phi(B) = 1\) for every \(B \subsetneq A\).
    Then $A$ is nonvanishing.
\end{restatable*}

We employ the following strategy. Instead of directly finding one specific nonvanishing
graph $H$ with $k$ vertices and \emph{large} treewidth, we use \cref{eq:intro:ae} to find
some nonvanishing fixed point $A \in \fp(\Gamma, H)$. The advantage of this approach is
that we can use \cref{eq:intro:ae} to analyze the alternating enumerator of many
different fixed points simultaneously to find a nonvanishing fixed point.
Naturally, we still have to ensure that our fixed point has large treewidth.
To that end, we prove that for
certain graphs $H$ and groups $\Gamma$, the level of $A$ is lower bound for the treewidth.

\paragraph*{The Prime Power Case: Difference Graphs}

For a prime \(p\) and a positive integer \(m\), we write \(\field{p^m}\) for the finite
field with \(p^m\) elements. For $m = 1$, we write $\field{p} = \fragmentco{0}{p}$. The
elements that are invertible are denoted with
$\field{p^m}^\ast = \field{p^m} \setminus \{0\}$.

For a prime \(p\), we set $\field{p}^+ \coloneqq \{1, \dots, \lceil (p-1)/2 \rceil\}$,
which is a subset of \(\field{p}^*\) that contains exactly one of $x$ and $-x$ for every $x\in \field{p}^\ast$.
We generalize this notion to finite fields with a prime power number of elements and write
\(\field{p^m}^+\) for a set of elements that we obtain by including into it exactly one of
$x$ and $-x$ for every $x\in \field{p^m}^*$ (observe that if $p=2$, then $x=-x$ and hence
\(\field{p^m}^+=\field{p^m}^*\)).  We use \(\field{p^m}^+\) only in
situations where the specific choice of elements does not matter.
It is instructive to make explicit the following easy observation.

\begin{lemma}\label{ft:10.25-1}
    Let $p$ denote a prime and let $m > 0$ denote an integer.
    Then, \(|\field{p^m}^{\ast}| = p^m-1\) and  \(|\field{p^m}^{+}| \ge (p^m - 1)
    / 2 \).
\end{lemma}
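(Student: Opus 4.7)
The plan is to dispatch both claims directly from the definitions, splitting into the cases $p=2$ and $p$ odd for the second bound.

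First I would observe that $\field{p^m}^\ast$ is defined as $\field{p^m} \setminus \{0\}$, and since $|\field{p^m}| = p^m$, we immediately get $|\field{p^m}^\ast| = p^m - 1$. This handles the first claim.

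For the second claim, I would consider the involution $x \mapsto -x$ on $\field{p^m}^\ast$ and argue based on whether this map has fixed points. If $p = 2$, then the characteristic of $\field{p^m}$ is $2$, so $-x = x$ for every $x \in \field{p^m}$, and the set $\field{p^m}^+$ coincides with $\field{p^m}^\ast$ (choosing one of $x$ and $-x = x$ just picks $x$). Hence $|\field{p^m}^+| = p^m - 1 \geq (p^m-1)/2$. If $p$ is odd, then $-x = x$ would force $2x = 0$, and since $2$ is invertible in $\field{p^m}$ this implies $x = 0$, contradicting $x \in \field{p^m}^\ast$. Therefore $x \neq -x$ for all $x \in \field{p^m}^\ast$, so the involution $x \mapsto -x$ partitions $\field{p^m}^\ast$ into unordered pairs $\{x, -x\}$. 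Selecting exactly one representative from each pair yields $|\field{p^m}^+| = (p^m-1)/2$, giving equality in the bound.

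There is no real obstacle here: the only subtlety is remembering to treat the $p=2$ case separately, because there the definition ``include exactly one of $x$ and $-x$'' collapses (since both choices coincide), and the naive count of $(p^m-1)/2$ would fail to be an integer. The combined bound $|\field{p^m}^+| \geq (p^m-1)/2$ handles both cases uniformly.
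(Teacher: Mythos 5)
Your proof is correct and follows essentially the same approach as the paper: handle the first claim directly from the definition, and for the second split on $p=2$ (where $x=-x$ forces $\field{2^m}^+ = \field{2^m}^\ast$) versus $p$ odd (where pairing $\{x,-x\}$ gives exactly $(p^m-1)/2$). The paper's version is terser, spelling out only the $p=2$ case and treating the odd case as immediate from the definition, but the underlying reasoning is identical.
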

\begin{proof}
    For the bound \(|\field{p^m}^{+}| \ge (p^m - 1) / 2\), we observe that in the special
    case \(p = 2\), we have \(x = -x\) and thus  \(|\field{2^m}^{+}| = |\field{2^m}^{\ast}|
    = (2^m - 1)\).
\end{proof}

Suppose that $\Phi$ is nontrivial on $k = p^m$. Let us consider the graph $K_{p^m}$ whose vertex
set is the finite field $\field{p^m}$ with $p^m$ elements. The {\em rotation subgroup}
$\rotgr{p^m}\subseteq \aut(K_{p^m})$ contains those permutations of $K_{p^m}$ that are
described by addition in $F_{p^m}$, that is,
\[
    \rotgr{p^m} \coloneqq \{ \varphi_c \in \auts{K_{p^m}} \mid c \in \field{p^m} \text{
    and }  \varphi_c( v ) = v + c \text{ for all \(v \in K_{p^m}\)}  \}.
\]

We observe that fixed points of the group $\rotgr{p^m}$ acting on $K_{p^m}$ are exactly
the difference graphs, as defined below.
\begin{restatable}{definition}{defcircgraph}
    For a prime \(p\), an integer $m > 0$, and a set \(A \subseteq \field{p^m}^+\),
    we define the \emph{difference graph} \(\cgr{p^m}{A}\) via
    \begin{align*}
        V( \cgr{p^m}{A} ) &\coloneqq \field{p^m} \quad\text{and}\quad
        E( \cgr{p^m}{A} ) \coloneqq \{ \{u, v\} \mid u, v \in \field{p^m},  (u - v) \in A \cup (-A) \},
    \end{align*}
    where $-A = \{-x \mid x \in A\}$. Observe that $\cgr{p^m}{A} = K_{p^m}$ whenever \(A = \field{p^m}^+\).

    The \emph{level} of a difference graph $\cgr{p^m}{A}$ is the cardinality of $A$; we write
    \(\hl{\cgr{p^m}{A}}\) for the level of \(\cgr{p^m}{A}\).
\end{restatable}

As $\cgr{p^m}{A}$ is $2|A|$-regular if $p \neq 2$ and $|A|$-regular if $p = 2$, it has
treewidth at least $|A|$. This means that it is sufficient to find a nonvanishing
difference graph (fixed point) with a high level.
To that end, we consider two different cases.

First, let us consider the case when $\Phi(\cgr{p^m}{A}) = 0$ for all fixed points
$\cgr{p^m}{A}$ with $\Hasselevel(A) \geq p^m - \sqrt{p^m}$.
We~introduce two vectors: the
vector showing the total value of $\Phi$ on the fixed points on each level, and the
analogous vector for $\aename{\Phi}$. The two vectors are related by an invertible linear
transform. A careful inspection of the matrix of the transformation shows that if $\Phi$ is
0 on fixed points of a level of at least $p^m - \sqrt{p^m}$, then fixed points of a level
of at least $\sqrt{p^m}$ cannot all have zero alternating enumerator.

\begin{restatable*}{lemma}{lemchibasic}\label{lem:chi:basic}
    Let \(H\) denote a graph, let \(\Gamma \subseteq \auts{H}\) denote a \(p\)-group, and
    let \(\Phi\) denote a graph property. Let \(c < \hl{H}\) denote a nonnegative integer.
    Suppose that we have \(\Phi(\emptyset) = 1\) and
    \(\Phi(F) = 0\) for every
    \(F \in \fps{H}\) with level \(\hl{F} > \hl{H} - c\).
    Then, there is a fixed point \(S \in \fps{H}\) with
    \(\ae{\Phi}{S} \not\equiv_p 0\) and \(\hl{S} \ge c\).
\end{restatable*}

Second, let us consider the case when $\Phi(\cgr{p^m}{A}) \neq 0$ for some fixed point
$\cgr{p^m}{A}$ with $\Hasselevel(A) \geq p^m - \sqrt{p^m}$.
Since $\cgr{p^m}{A}$ is true in $\Phi$, all edge-subgraphs of $\cgr{p^m}{A}$ are also true (here is the point
where we use that $\Phi$ is edge-monotonicity).
Further, each fixed point $\cgr{p^m}{B}$
that is isomorphic to an edge-subgraph of $\cgr{p^m}{A}$ (that is, $\cgr{p^m}{B}$ is
isomorphic to a graph $\cgr{p^m}{A^\ast}$ with $A^\ast \subseteq A$) satisfies $\Phi$ as
well. Since $\cgr{p^m}{A}$ has a high level, this ``starts an avalanche'' and at some
point, all fixed points below a certain level satisfy $\Phi$. We show that this
happens at level roughly $\sqrt{p^m}$. Now let us look at the  fixed point
$\cgr{p^m}{A}$ with $\Phi(\cgr{p^m}{A}) = 0$ such that all proper
edge-subgraphs satisfy $\Phi$.
We have that \(\hl{\cgr{p^m}{A}}\) is at least roughly $\sqrt{p^m}$ and
\cref{remark:chi:all:children:true} yields that $\cgr{p^m}{A}$ is nonvanishing.

\begin{restatable*}{lemma}{lemlevelhighprime}\label{lem:level:high:prime_case}
    Let $p$ denote a prime, let $m > 0$ denote an integer,
    and let \(\Phi\) denote an edge-monotone graph property that is nontrivial on \(p^m\).
    Further, write \(c\) and \(d\) for positive integers with \(c d \le |\field{p^m}^+|\).
    Suppose that there is a fixed point \(\cgr{p^m}{A} \in \fpb{\rotgr{p^m}}{K_{p^m}}\)
    with \(\hl{\cgr{p^m}{A}}
    \ge |\field{p^m}^+| - d\) and \(\Phi(\cgr{p^m}{A}) =
    1\).

    Then, there is a fixed point \(\cgr{p^m}{B} \subseteq \cgr{p^m}{A}\) with
    \(\hl{\cgr{p^m}{B}}
    \ge c\) and
    \(\ae{\Phi}{\cgr{p^m}{B}} \not\equiv_{p} 0\).
\end{restatable*}

Both cases together allow us to find a nonvanishing fixed point $\cgr{p^m}{A}$ with a
level of roughly $\sqrt{p^m}$, thus $\cgr{p^m}{A}$ has treewidth roughly $\sqrt{p^m}$.
Thus, if $\Phi$ is nontrivial on infinitely many prime powers, then we can use this insight to
find a sequence of nonvanishing graphs with unbounded treewidth.

\begin{restatable*}{theorem}{thmedgemonprimecase}\label{theo:edge_mono:prime}\label{theo:sqrt:lower:bound}
    Let $\Phi$ denote an edge-monotone graph property.
    \begin{itemize}
        \item  If \(\Phi\) is
             nontrivial on infinitely
             many prime powers, then $\NUM{}\indsubsprob(\Phi)$ is \w-hard.
        \item Further, assuming ETH, there is a universal constant $\alpha > 0$ (independent of \(\Phi\)) such that
            for any prime power $k \geq 3$ on which $\Phi$ is nontrivial,
            no algorithm (that reads the whole input)
            computes for every graph \(G\) the number
            $\NUM{}\indsubs{(\Phi, k)}{G}$ in time
            $O(|V(G)|^{\alpha \sqrt{k} / \log k})$.
            \qedhere
    \end{itemize}
\end{restatable*}

\paragraph*{General Case: Reduction to Prime Powers}

After showing \w-hardness for edge-monotone graph properties $\Phi$ that are  nontrivial
on infinitely many prime powers, our next goal is
to find a reduction from the general case (that is $\Phi$ is nontrivial on infinitely many
numbers) to this prime power case.
If $\Phi$ is nontrivial on a $k$, then we
analyze the largest prime power factor $q(k)$ of $k$. Set
$d \coloneqq k / q(k)$ and $p^m \coloneqq q(k)$. We \emph{join} $d$ copies of the complete
graph $K_{q(k)}$ into a graph $K_{q(k)}^d \coloneqq K_{q(k)} \JoinGraph \cdots \JoinGraph
K_{q(k)}$ which is isomorphic to $K_{d \cdot q(k)}$.
Further, we take the group-theoretical product of $d$ copies of $\rotgr{q(k)}$ to obtain the $p$-group
$\rotgr{q(k)}^d$ that acts on the vertices of~$K_{q(k)}^d$.

The advantage of this construction is that we can understand
the fixed points $\fpb{\rotgr{q(k)}^d}{K_{q(k)}^d}$ in terms of the fixed points
$\fpb{\rotgr{q(k)}}{K_{q(k)}}$.
Specifically, each fixed point $F \in \fpb{\rotgr{q(k)}^d}{K_{q(k)}^d}$
is made up of
\begin{itemize}
    \item a graph \(C\) with \(E(C) \subseteq \setn{d}\) and
        $d$ fixed points $\cgr{q(k)}{A_1}, \dots, \cgr{q(k)}{A_d} \in
        \fpb{\rotgr{q(k)}}{K_{q(k)}}$,
    \item where each pair of fixed points \(\cgr{q(k)}{A_i}\) and  \(\cgr{q(k)}{A_j}\) is
        either fully connected or not connected at all, depending on whether the edge
        \(\{ i, j \} \) is present in \(E(C)\).
\end{itemize}
Consult \cref{fig:example:fixed:points:prod} for visualizations of examples.

We use
$\metaGraph{C}{\cgr{q(k)}{A_1}, \dots, \cgr{q(k)}{A_d}}$ to denote said fixed points; we
have
\begin{align*}
    V(\metaGraph{C}{\cgr{q(k)}{A_1}, \dots, \cgr{q(k)}{A_d}}) &\coloneqq \setn{d} \times \field{q(k)} \\
    E(\metaGraph{C}{\cgr{q(k)}{A_1}, \dots, \cgr{q(k)}{A_d}}) &\coloneqq \{\{(i, v_i), (j, u_j)\} \mid \{i,
    j\} \in E(C) \text{ or } (i = j \text{ and } v_i - u_i \in A_i) \}.
\end{align*}
We show that $\fp(\rotgr{q(k)}^d, K^d_{q(k)}) = \{\metaGraph{C}{\cgr{q(k)}{A^1}, \dots,
\cgr{q(k)}{A^d}} \mid \text{$C$ is a $d$-vertex graph, }  A^i \subseteq \field{q(k)}^+ \}$ (see
\cref{theo:prod:fixed:points}). A very important observation is that if $C$ is not the
empty graph, then the fixed point $\metaGraph{C}{\cdots}$ contains $K_{q(k), q(k)}$ as a
subgraph, and has thus a treewidth of at lest $q(k)$.
Our discussion motivations the following notation.

\begin{restatable*}{definition}{defconsca}
    Let $\Phi$ denote an edge-monotone graph property and write \(\nt{\Phi}\) for the set
    of numbers on which \(\Phi\) is nontrivial.
    We say that \(\Phi\) is \emph{concentrated} on an integer \(k \in \nt{\Phi}\)
    if there is a graph $H$ on $k$ vertices with $\ae{\Phi}{H} \neq 0$
    and $H$ contains $K_{q(k), q(k)}$ as a subgraph.

    We say that \(\Phi\) is \emph{scattered} on an integer \(k \in \nt{\Phi}\)
    if it is not concentrated for \(\Phi\).
\end{restatable*}

If $\Phi$ is nontrivial, then we consider the following case distinction.
First, we assume that $\Phi$ is (nontrivial and) concentrated on infinitely many values~$k$.
For each concentrated $k$, by definition, we have a nonvanishing graph with a treewidth of at
least $q(k)$. We observe that $q(k) \geq c \log(k)$ for some constant $c > 0$ (see
\cref{lem:upper:bound:q}). This means that if $\Phi$ is (nontrivial and) concentrated
on infinitely many $k$, then we can use these values to construct a nonvanishing sequence with
unbounded treewidth. Now, \cref{lem:alpha:treewidth} shows that $\NUM{}\indsubsprob(\Phi)$ is
\w-hard.

Otherwise, $\Phi$ is (nontrivial and) scattered on infinitely many values $k$. For any such
$k$, let us consider a fixed point $\metaGraph{C}{\cgr{q(k)}{A_1}, \dots,
\cgr{q(k)}{A_d}}$ of minimum level that is zero in $\Phi$; as we have seen (\cref{remark:chi:all:children:true}), the
alternating enumerator is nonzero for this graph. As $k$ is scattered, we have that $C$ is
the empty graph, hence the fixed point is the disjoint union of $C^{A_1}_{p^m}$, $\dots$,
$C^{A_d}_{p^m}$. Assume without loss of generality that $A_d\neq \emptyset$ and let $H$
denote the disjoint union of $C^{A_1}_{p^m}$, $\dots$, $C^{A_{d-1}}_{p^m}$. Let us define
a graph property whose value is $\Phi(G \UnionGraph H)$ on $G$. Then, this property is
nontrivial on $q(k)$-vertex graphs: it is zero on $C^{A_d}_{p^m}$ and nonzero on
$\IS_{p^m}$. Thus for each $k$ on which $\Phi$ is scattered, we can construct a graph
property that is nontrivial on the prime power $q(k)$.

\begin{restatable*}{lemma}{lemmareduction}
\label{lem:reduction}
    Let $\Phi$ denote an edge-monotone graph property and write \(\nt{\Phi}\) for the set
    of numbers on which \(\Phi\) is nontrivial.
    For any number \(k \in \nt{\Phi}\) on which \(\Phi\) is scattered, there is a graph
    $H$ on $k-q(k)$ vertices such that the
    property \((\Phi - {H}) \coloneqq \{ G  \mid G \UnionGraph H \in \Phi \} \)
    is edge-monotone and nontrivial on $q(k)$.
\end{restatable*}

Now, the idea is to combine the infinitely many graph properties from \cref{lem:reduction}
into a single graph property $\Phi'$ that is nontrivial on infinitely many prime powers.
The problem $\NUM{}\indsubsprob(\Phi')$ is now \w-hard due to \cref{theo:edge_mono:prime}.
Further, we show how to compute $\NUM{}\indsubs{(\Phi', p^m)}{\star}$ with an oracle for
$\NUM{}\indsubs{(\Phi, k)}{\star}$ using the Inclusion-Exclusion principle (see
\cref{lem:inc:exc}).
Thereby we obtain a parameterized Turing reduction from $\NUM{}\indsubsprob(\Phi')$ to
$\NUM{}\indsubsprob(\Phi)$.
Since the problem $\NUM{}\indsubsprob(\Phi')$ is \w-hard, we thus obtain that
$\NUM{}\indsubsprob(\Phi)$ is \w-hard.
Combining both cases leads to \cref{theo:edge_mono}.

\thmedgemon*

\paragraph*{Tight Lower Bounds via Large Bicliques}

Lastly, in \cref{sec:lower:bounds}, we show stronger lower bounds assuming ETH. We write
$n$ for the number of vertices of the input graph. So far, we proved our lower bounds by
showing that we can solve $\homsprob(\{H\})$ using an oracle for $\NUM{}\indsubs{(\Phi,
|V(H)|)}{\star}$ whenever $\ae{\Phi}{H} \neq 0$. Further, we used that $\homsprob(\{H\})$
cannot be solved in time $O(n^{\alpha_{\homsprob} \cdot \tw(H) / \log \tw(H)})$ which
yields an $O(n^{\alpha_{\indsubsprob} \cdot \tw(H) / \log \tw(H)})$ lower bound for
$\NUM{}\indsubs{(\Phi, |V(H)|)}{\star}$. However, we would like to prove that
$\NUM{}\indsubs{(\Phi, k)}{\star}$ cannot be solved in time $O(n^{\gamma k})$ for a global
constant $\gamma > 0$ that does not depend on $k$. This is not possible with our current
method since we cannot get rid of the $\log$ factor in the denominator of the exponent. A
lower bound of $O(n^{\gamma k})$ for $\NUM{}\indsubs{(\Phi, k)}{\star}$ would also prove that
we cannot solve $\NUM{}\indsubsprob(\Phi)$ in time $f(k) n^{o(k)}$ for any computable function
unless ETH fails. This is tight in the sense that there is a brute-force algorithm that
solves $\NUM{}\indsubsprob(\Phi)$ in time $O(f(k) n^k)$, which is achieved by simply iterating
through induced subgraphs of size $k$.

To achieve this goal, we use a reduction from \cite[Theorem 1]{alge} that uses
nonvanishing graphs that contain large bicliques $K_{k,k}$. Moreover, instead of starting
the reduction from $\NUM{}\homsprob(\{H\})$, we start from the $k$-{\clique} problem (which is
the problem of deciding whether an input graph $G$ contains a $k$-clique). It is known
that \clique\ has no algorithm with running time $f(k)n^{o(k)}$ for any computable
function $f$, unless ETH fails \cite{param_algo}. However, we need a stronger form of this
statement saying that there exists a constant $\alpha$ such that $k$-{\clique} cannot be
solved in time $O(n^{\alpha k})$ unless ETH fails; we prove this statement in \cref{sec:tight:bounds:bicliques}.
The idea is to find a reduction from $k$-{\clique} to $\NUM{}\indsubs{(\Phi, k)}{\star}$ such
that an algorithm computing $\NUM{}\indsubs{(\Phi, k)}{\star}$ in time $O(n^{\gamma k})$ could
be used to compute $k$-{\clique} in time $O(n^{\alpha k})$  which is not possible unless
ETH fails. We reprove the reduction of \cite[Theorem 1]{alge} in \cref{app:use:biclique}
in the form that we need.

\begin{restatable*}[Modification of \cite{alge}]{theorem}{thmlowerboundbicliques}\label{theo:lower:bound:bicliques}
    There is a global constant $\beta > 0$ and a positive integer $N$ such that for all
    graph properties $\Phi$, functions $h$, numbers $k$ with
    \begin{itemize}
        \item $h(k) \geq N$
        \item there is a graph $F$ with $k$ vertices and $\ae{\Phi}{F} \neq 0$,
        \item and $F$ contains $K_{h(k), h(k)}$ as a subgraph
    \end{itemize}
    there is no algorithm (that reads the whole input) that for every \(G\) computes
    $\NUM{}\indsubs{(\Phi, k)}{G}$ in time $O(|V(G)|^{\beta h(k)})$ unless ETH fails.
\end{restatable*}

This means that we can prove stronger lower bounds for $\NUM{}\indsubs{(\Phi, k)}{\star}$ by
finding nonvanishing graphs that contain large bicliques. As discussed below, for a prime
$p$ and $m\ge 1$,  we can construct a $p$-group $\sylow_{p^m}$ such that we can find a
fixed point of $\sylow_{p^m}$ in $K_{p^m}$ that is nonvanishing and contains $K_{p^{m-1},
p^{m-1}}$ as a subgraph. If $p$ is small (constant) compared to $p^m$, then the size of
this biclique is approximately the same as the number of vertices
 (however, this means that we cannot use these fixed points if $\Phi$ is nontrivial only on prime numbers).
Therefore, \cref{theo:lower:bound:bicliques} allow us to prove tight lower bounds whenever
$\Phi$ is nontrivial on a prime power $k = p^m$.

\thetightlowerbound*

\paragraph*{Finding Nonvanishing Graphs with Large Bicliques}

To find nonvanishing graphs with large bicliques, we use a new $p$-group $\sylow_{p^m}$,
defined as follows. We still use the complete graph $K_{p^m}$, but with vertex set $\{0, 1,
\dots, p-1\}^m$.  For all $m$-tuples $\sylelm = (\varphi_0, \dots,
\varphi_{m-1})$ of functions with $\varphi_j \colon \fragmentco{0}{p}^{j} \to \fragmentco{0}{p} $, we
define the following function on  $V(K_{p^m})$.\footnote{We write $\varphi_0$ for
$\varphi_0(\emptyset)$ since $\varphi_0$ is a function that is defined on a single element}
\[\sylelm(x_1, \dots, x_m) = (x_1 + \varphi_0, x_2 + \varphi_1(x_1), x_3 + \varphi_2(x_1,
x_2), \dots, x_m + \varphi_{m-1}(x_1, \dots, x_{m-1})) \] where all computations are done
modulo $p$. It is easy to see that each $\sylelm$ is a bijection on the vertex set and
therefore in $\aut(K_{p^m}) \cong \sym{p^m}$, where $\sym{p^m}$ is the symmetric group on
$p^m$ elements. We denote by
$\sylow_{p^m}$ the set of all $m$-tuples of functions $\sylelm = (\varphi_0, \dots,
\varphi_{m-1})$ with $\varphi_j \colon \fragmentco{0}{p}^{j} \to \fragmentco{0}{p}$.

It is easy to check that $|\sylow_{p^m}|$ is a $p$-power. To describe the fixed points of
$\sylow_{p^m}$ in $K_{p^m}$ we need a concept that is known as the \emph{lexicographic
product} of the graphs. We use the following standard definition~\cite[Page 22]{graph_theory_harary}.

\begin{restatable*}{definition}{deflexgraph}
    For graphs $G_1, \dots, G_m$,
    we define their \emph{lexicographic product} $G_1 \wrGraph \cdots \wrGraph G_m$
    via
    \begin{align*}
        V(G_1 \wrGraph \cdots \wrGraph G_m) &\coloneqq
        V(G_1) \times \cdots \times V(G_m) \quad\text{and}\\
        E(G_1 \wrGraph \cdots \wrGraph G_m) &\coloneqq
        \{ \{(u_1, \dots, u_m), (v_1, \dots, v_m) \}\\&\hskip3em
        \mid \text{there is an $i \in \setn{m}$ with  $u_j = v_j$ for all $j < i$ and $\{u_i,
        v_i\} \in E(G_i)$}\}.
        \tag*{\qedhere}
    \end{align*}
\end{restatable*}

\begin{restatable*}{lemma}{lemfixedpointslexprod}\label{lemma:fixed:points:p^m}
    For any prime \(p\) and any positive integer \(m\), we have
    \begin{align*}
        \fpb{\sylow_{p^m}}{K_{p^m}}
        = \{\cgr{p}{A_1} \wrGraph \cdots \wrGraph \cgr{p}{A_m} \mid A_i \subseteq \field{p}^+\}.
        \tag*{\qedhere}
    \end{align*}
\end{restatable*}

One important observation is that a fixed point  $\cgr{p}{A_1} \wrGraph \dots \wrGraph
\cgr{p}{A_m}$ has a large biclique if there is a small number $i$ with $A_i
\neq \emptyset$.
We capture this observation by introducing the \emph{empty-prefix} of a graph.
The {empty-prefix} of $H = \cgr{p}{A_1} \wrGraph \dots \wrGraph \cgr{p}{A_m}$
is the smallest index \(i\) with $A_i \neq \emptyset$, minus one;
that is, $\wrLevel(A_1, \dots, A_m) \coloneqq i-1$, where $i$ is the smallest
index with $A_i \neq \emptyset$. We observe that a graph with a low empty-prefix contains
a large biclique as a subgraph.

\begin{restatable*}{lemma}{lemwreathlevel}\label{lem:treewidth:wreath:level}
    Let $p$ denote a prime number and let $m$ denote a positive integer.
    For each \(i \in \setn{m}\), let $A_i \subseteq \field{p}^+$ denote a subset and set
    $A \coloneqq (A_1, \dots, A_m)$.
    Then, $\cgr{p}{A_1} \wrGraph \cdots \wrGraph \cgr{p}{A_{m}}$ contains
    $K_{p^{m - 1 - \wrLevel(A)}, p^{m - 1 - \wrLevel(A)}}$ as a subgraph.
\end{restatable*}

Suppose that our graph property is nontrivial on $p^m$ for $m \geq 2$. If
we find a fixed point $H = \cgr{p}{A_1} \wrGraph \dots \wrGraph
\cgr{p}{A_m}$ with a minimal empty-prefix of $\wrLevel(A_1, \dots, A_m) = 0$,
then we know that the treewidth of $H$ is at least $p^{m-1}$. Thus,
our goal is to find a fixed point $H$ with $\wrLevel(A_1, \dots, A_m) = 0$
and $\ae{\Phi}{H} \not \equiv_p 0$.

To that end, we prove that a fixed point with a high empty-prefix is
always isomorphic to an edge-subgraph of a fixed point with a low
empty-prefix, which allows us to always consider fixed points with the low
empty-prefix. To be more precise, we show that each fixed point
$\cgr{p}{\emptyset} \wrGraph \dots \cgr{p}{\emptyset} \wrGraph
\cgr{p}{A_1} \wrGraph \dots \wrGraph \cgr{p}{A_{m-j}}$ is isomorphic to an
edge-subgraph of $\cgr{p}{A_1} \wrGraph \dots \wrGraph \cgr{p}{A_{m}}$. Intuitively, we
achieve thus by defining an isomorphism that pushes the edges of each $\cgr{p}{A_i}$ \emph{one level down}.

\begin{restatable*}{lemma}{lemwreathsubiso}\label{lem:wreath:sub:iso}
    Let $p$ denote a prime number and let $m$ denote a positive integer.
    For each \(i \in \setn{m}\), let $A_i \subseteq \field{p}^+$ denote a subset.
    Then, for all $j \in \setn{m}$, the graph
    $\cgr{p}{\emptyset} \wrGraph \cdots\cgr{p}{\emptyset} \wrGraph
    \cgr{p}{A_1} \wrGraph \cdots \wrGraph \cgr{p}{A_{m-j}}$ is isomorphic to an
    edge-subgraph of $\cgr{p}{A_1} \wrGraph \cdots \wrGraph \cgr{p}{A_{m}}$.
\end{restatable*}

Now, we are ready to show that there is always a nonvanishing fixed point of
$\sylow_{p^m}$ in $K_{p^m}$ that contains $K_{p^{m-1}, p^{m-1}}$ as a subgraph. We show
this by finding a fixed point $H$ with $\Phi(H) = 0$, an empty-prefix of $0$, and
$\Phi(\Tilde{H}) = 1$ for all $\Tilde{H} \subseteq H$ (that is, fixed points $\Tilde{H}$ that
lie in $H$). To do this, we consider the first level $i$ such that there is a fixed point
of level $i$ that does not satisfy $\Phi$. Observe that the only fixed point of level $0$
is the independent set and $\Phi(\IS_{p^m}) = 1$ thus $i \geq 1$. Further, if we assume
that all fixed points $\cgr{p}{A_1} \wrGraph \dots \cgr{p}{A_m}$ of level $i$ with
$\wrLevel(A_1, \dots, A_m) = 0$ satisfy $\Phi$, then we use \cref{lem:wreath:sub:iso}
to show that each fixed point of level $i$ satisfy $\Phi$, a contradiction. Thus, there is
a fixed point $H$ with $\Phi(H) = 0$, an empty-prefix of $0$, and $\Phi(\Tilde{H}) = 1$
for all $\Tilde{H} \subseteq H$. Lastly, we use
\cref{remark:chi:all:children:true,lem:treewidth:wreath:level} and to show that
$\ae{\Phi}{H} \neq 0$ and that $H$ contains $K_{p^{m-1}, p^{m-1}}$ as a subgraph.

\begin{restatable*}{theorem}{thmedgemonprimepowerbicliques}\label{theo:edge_mono:prime:power:bicliques}
    Let $\Phi$ denote an edge-monotone graph property that is nontrivial on a prime power
    $p^m$, then there is a nonvanishing fixed point of $\sylow_{p^m}$ in $K_{p^m}$ that
    contains $K_{p^{m-1}, p^{m-1}}$ as a subgraph.
\end{restatable*}

\section{Additional Preliminaries}

\paragraph*{Numbers and Sets}

For a natural number $n$,
we write $\setn{n}$ for the set $\{1, \dots, n\}$.
For natural numbers \(a\) and \(b\), we write $\fragmentco{a}{b}$ for the set
$\{a, \dots, b - 1\}$.

Next, we write $\binom{n}{k}$ for the binomial coefficient
and we set $\binom{n}{k} \coloneqq 0$ whenever $k < 0$ or $n < k$.
If $A$ is a set then $\binom{A}{k}$ is the set of all subsets $B \subseteq A$ with size $k$.

For a subset \(B \subseteq R\) of a field $R$, we set \(-B \coloneqq \{-b \mid b \in B\}\);
for a set \(B \subseteq R\) and an element \(\lambda \in R\) we set
\(\lambda B \coloneqq \{ \lambda b \mid b \in B\}\).

We use $a \equiv_m b$ as a shorthand for $a \equiv b \bmod m$.

\paragraph*{Graphs}

We consider only {simple} graphs, that is undirected graphs that have
neither weights, loops, nor parallel edges.
We write $\mathcal{G}$ for the set of all (simple) graphs and \(\graphs{n}\) for the
set of all (simple) graphs with the vertex set \(\setn{n}\).

For a graph \(G\), we write \(V(G)\) for its vertices and we write \(E(G)\) for its edges.
Given a subset of edges $A \subseteq E(G)$, we write $\ess{G}{A}$ for the graph
with vertex set $V(\ess{G}{A}) \coloneqq V(G)$ and edge set $E(\ess{G}{A}) \coloneqq A$.
We say that \(\ess{G}{A}\) is an \emph{edge-subgraph} of $G$ and
we write $\edgesub(G)$ for the set of all edge-subgraphs of $G$.
For a set $X \subseteq V(G)$, we write $G \setminus X$ for the graph that is obtained from
\(G\) by removing the vertices in \(X\) and all edges with at least one endpoint in \(X\).

For two graphs \(G\) and  \(H\) with a common vertex set \(V\), we write \(G \cup H\)
for the graph on \(V\) with edges \(E(G) \cup E(H)\).

We write $\IS_{k}$ for the independent set with $k$ vertices.
Further, we write \(K_n\) for the complete graph on \(n\) vertices
and we write \(K_{n, m}\) for the complete bipartite graph on \(n + m\) vertices.

We write $\tw(G)$ for the \emph{treewidth} of a graph $G$.
As we use the treewidth of a graph in a black-box manner, we refer an interested reader to
\cite[Chapter 7.2]{param_algo},  for a formal definition.
Intuitively, the {treewidth} of a graph measures how far away a given graph is from being a tree.
For example, the {treewidth} of a tree is $1$; the treewidth of a complete graph on \(n\)
vertices is \(n-1\).

\paragraph*{Group Theory and Morphisms between Graphs}
\label{sec:group:theory}

For a finite set \(X\) of size \(n\), the set of all bijections \(X \to X\) and the
function composition together form a group which we call the symmetric group \(\sym{n}\).
We also write \(\sym{X}\) if we wish to emphasize the set \(X\).

A group $\Gamma$ is a \emph{permutation group} if \(\Gamma\) is a subgroup of the
symmetric group \(\sym{X}\) for some finite set \(X\).
A permutation group \(\Gamma \subseteq \sym{X}\) is
\emph{transitive} if every \(x \in X\) can be mapped to any other element \(y \in X\) via
some \(g \in \Gamma\).
Finally, a \emph{$p$-group} is a finite group \(\Gamma\) that has an order that
is a power of $p$ (that is, the number of elements of \(\Gamma\) is a power of \(p\)).

A graph \emph{homomorphism} $h : V(H) \to V(G)$ from $H$ to $G$ is a function between the
vertex sets of two graphs that preserves adjacencies (but not necessarily
non-adjacencies), that is, \(h\) maps the vertices of every edge $\{u, v\} \in E(H)$
to vertices $\{h(u), h(v)\} \in E(G)$.
We write $\homs{H}{G}$ for the set of all homomorphisms from $H$ to $G$.

We say a {homomorphism} $h : V(H) \to V(G)$ is an
\emph{isomorphism} if $h$ is a bijection on the vertex sets and $h^{-1} : V(H) \to V(G)$
also defines a homomorphism
(that is, $\{u, v\} \in E(H)$ if and only if $\{h(u), h(v)\} \in E(G)$).
We say that two graphs \(G\) and \(H\) are \emph{isomorphic}, denoted by \(G \cong H\)
if there is an {isomorphism} between them.
An \emph{automorphism} of $G$ is an {isomorphism} from $G$ to $G$.
We write \(\auts{G}\) for the set of all \emph{automorphism} of $G$.

For each graph $G$, the set of automorphisms $\auts{G}$ forms a group with composition as
the group operation. Observe that the automorphism group of a clique \(\auts{K_n}\) is just
the symmetric group \(\sym{n}\).

For primes \(p\) and prime powers \(p^m\), the group \(\auts{K_{p^m}}\) contains a useful
subgroup with order \(p^m\):
in particular, \(\auts{K_{p^m}}\) contains automorphisms \(\varphi_c\) of \(G\) that
``rotate'' the vertices of \(G\), that is, automorphisms that send every vertex \(i\)
to vertex \({i + c}\) for
some \(c \in \field{p^m}\) (where we identify the vertices of \(K_{p^m}\) with the elements of
\(\field{p^m}\)); we
say that \(\varphi_c\) is a \emph{rotation (by \(c\))} of \(G\).
In particular, we write $\rotgr{p^m}$ for the subgroup of all rotations of
\(\auts{K_{p^m}}\), that is
\[
    \rotgr{p^m} \coloneqq \{ \varphi_c \in \auts{K_{p^m}} \mid c \in \field{p^m} \text{
    and }  \varphi_c( v ) = v + c \text{ for all \(v \in K_{p^m}\)}  \}.
\]
Observe that \(\rotgr{p^m}\) is a group with \(p^m\) elements.

Let \(G\) denote a graph and consider a subgroup \(\Gamma \subseteq \auts{G}\).
Any \(g \in G\) is a bijection on \(V(G)\), which we may interpret as \(g\) permuting the
vertices of \(G\).
Thus, we may also interpret $g$ as an operation on the edges of \(G\).
However, as we wish to use results from the literature (such as
the Orbit-stabilizer Theorem), we need to describe this operation on the edges of \(G\)
using the language of group actions.

Formally, we can turn the operation on vertices into a {\em
group action} $\cdot : \Gamma\times E(G)\to E(G)$ that tells us how each member of
$\Gamma$ moves the edges of $G$.
Specifically, we define $g\cdot \{u,v\} \coloneqq \{g(u),g(v)\}$.
We interpret said group action \(\cdot\) multiplicatively and typically write just \(g\{u, v\}\).

Extending the previous group action, \(\Gamma\) also acts on
edge-subgraphs of \(G\) via
\begin{align*}
    V(g\cdot A) &\coloneqq V(A),\; \text{and}\\
    E(g\cdot A) &\coloneqq \{\{g(u), g(v)\} \mid \{u, v\} \in E(A)\},
\end{align*}
for each \(g \in \Gamma\) and each \(A \in \edgesub(G)\).
Again, we interpret said group actions \(\cdot\) multiplicatively and typically write just \(gA\).

For each $\{u, v\} \in E(G)$ the set $\Gamma \cdot \{u, v\} \coloneqq \{g \cdot \{u, v\}
\mid g \in \Gamma\}$ is the \emph{orbit} of $\{u, v\}$.
Two different edges either have disjoint orbits or equal orbits.
We write \(E(G)/\Gamma\) to denote the set of all orbits of \(\cdot\).
Recall that \(E(G)/\Gamma\) forms a partition of \(E(G)\).
In a slight abuse of notation, for an orbit \(O\), we also write \(O\) for the
edge-subgraph \(\ess{G}{O}\); similarly, we use \(E(H)/\Gamma\) to denote the set of all
such edge-subgraphs.

We say that an edge-subgraph $A \in \edgesub(G)$ is a \emph{fixed point} of $\Gamma$ in $G$
if $gA = A$ for all $g \in \Gamma$.
We write $\fps{G}$ for the set of all fixed points of $\Gamma$ in $G$.

As it turns out, a graph \(H\) being a fixed point of \(\Gamma\) in \(G\) is a strong property that is
very useful to us. For now and as a first useful observation, we see that \(H\) inherits key
properties from \(G\).

\begin{lemma}\label{remark:subgroup:fixpoint}
    Let \(G\) denote a graph and let $\Gamma \subseteq \auts{G}$ denote a group.
    For any $H \in \fps{G}$ all of the following hold.
    \begin{enumerate}[(1)]
        \item We have $\Gamma \subseteq \auts{H}$.
        \item Any fixed point of \(\Gamma\) in \(H\) is also a fixed point of \(\Gamma\) in
            \(G\) and we have
            $\fps{H} = \{A \in \fps{G} \mid E(A) \subseteq E(H)\}$.
    \end{enumerate}
\end{lemma}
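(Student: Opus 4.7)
The proof plan is essentially to unpack the definitions of fixed point, edge-subgraph, and automorphism, and observe that the fixed-point property of $H$ transports the action of $\Gamma$ from $G$ down to $H$ cleanly. There are no real obstacles: both parts follow by tracing through what each group element does to vertices and edges. The only place one has to be a little careful is in using the group structure of $\Gamma$ (closure under inverses) to turn a one-sided edge-preservation statement into a two-sided one.

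For part (1), I would fix an arbitrary $g \in \Gamma$ and argue that $g$ is an automorphism of $H$. Since $g \in \auts{G}$, it is a bijection on $V(G) = V(H)$. The assumption $H \in \fps{G}$ means $gH = H$, which by the definition of the action on edge-subgraphs says that $\{g(u), g(v)\} \in E(H)$ for every $\{u,v\} \in E(H)$. Applying the same to $g^{-1} \in \Gamma$ yields the reverse implication, so $\{u,v\} \in E(H) \iff \{g(u), g(v)\} \in E(H)$, giving $g \in \auts{H}$. Thus $\Gamma \subseteq \auts{H}$, which in particular makes it meaningful to talk about $\fps{H}$.

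For part (2), I would prove the stated set equality directly, from which the first sentence is immediate. For the inclusion $\fps{H} \subseteq \{A \in \fps{G} : E(A) \subseteq E(H)\}$, take $A \in \fps{H}$; then $V(A) = V(H) = V(G)$ and $E(A) \subseteq E(H) \subseteq E(G)$, so $A$ is a legitimate edge-subgraph of $G$, and the identity $gA = A$ for all $g \in \Gamma$ holds on the nose regardless of whether we view $A$ as sitting inside $H$ or inside $G$, giving $A \in \fps{G}$. For the reverse inclusion, take $A \in \fps{G}$ with $E(A) \subseteq E(H)$; then $V(A) = V(G) = V(H)$ and $E(A) \subseteq E(H)$, so $A$ is an edge-subgraph of $H$, and again $gA = A$ for all $g \in \Gamma$ by hypothesis, so $A \in \fps{H}$. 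Both inclusions are thus just a matter of observing that the defining equation $gA = A$ depends only on $V(A)$ and $E(A)$, and not on which ambient graph we regard $A$ as living in.
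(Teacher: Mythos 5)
Your proposal is correct and follows essentially the same route as the paper's proof: for (1) you observe that $gH=H$ makes every $g\in\Gamma$ an automorphism of $H$, and for (2) you verify both inclusions by noting that the condition $gA=A$ depends only on $V(A)$ and $E(A)$ and not on the ambient graph. The only minor cosmetic difference is that you explicitly invoke $g^{-1}$ to get the reverse edge-preservation direction in (1), whereas the paper treats the implication $gH=H\Rightarrow g\in\auts{H}$ as immediate (which it is, since $g$ bijectively permutes $E(H)$); both phrasings are fine.
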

\begin{proof}
    For (1), we first recall that as \(H\) is a fixed point of \(\Gamma\) in \(G\), we
    have \(gH = H\) for every \(g \in \Gamma\).
    In particular, this means that every \(G\)-automorphism \(g \in \Gamma\) is also an
    automorphism of \(H\).
    This in turn yields the claim.

    For (2), we first see that by (1), \(\fps{H}\) is indeed well-defined.
    Now, consider a fixed point \(A\in\fps{H}\). By definition, we have \(E(A)
    \subseteq E(H) \subseteq E(G)\) and \(gA = A\) for every \(g\in\Gamma\), so \(A\) is
    indeed also a fixed point of \(\Gamma\) in \(G\).

    For the other direction, observe that a fixed point \(A \in \fps{G}\) with
    \(E(A) \subseteq E(H)\) is just a fixed point of \(\Gamma\) in \(H\), which completes
    the proof.
\end{proof}

Next, we observe that the notion of fixed points is compatible with set operations (on the
edges of the underlying graph).

\begin{lemma}\label{lem:setopfp}
    Let \(G\) denote a graph and let $\Gamma \subseteq \auts{G}$ denote a group.
    For any $H_1, H_2 \in \fps{G}$ all of the following hold.
    \begin{enumerate}[(1)]
        \item We have \(H_1 \cup H_2 \in \fps{G}\).\footnote{Slightly abusing
            notation, we use set operation for fixed points to mean the same operation on
        the edge sets of the corresponding edge-subgraphs.}
        \item If \(H_2 \subseteq H_1\), then we have \(H_1 \setminus H_2 \in \fps{G}\).
    \end{enumerate}
\end{lemma}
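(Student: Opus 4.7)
The plan is to reduce both parts to the fact that each $g \in \Gamma$, viewed as an action on $E(G)$, is a bijection, together with the defining property $gH_i = H_i$ of a fixed point.

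First I would recall from the definitions that an edge-subgraph $A$ is determined by its edge set $E(A) \subseteq E(G)$ (the vertex set is always $V(G)$), and that the action of $g \in \Gamma$ on $A$ is given on edges by $E(gA) = \{ge \mid e \in E(A)\}$. So $A \in \fps{G}$ is equivalent to $E(A)$ being invariant (as a set) under the action of each $g \in \Gamma$ on $E(G)$. Since each $g$ is a graph automorphism, the induced map on $E(G)$ is a bijection, and bijections commute with the set-theoretic operations of union and set difference.

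For (1), fix an arbitrary $g \in \Gamma$. Using the commutation of $g$ with union together with $gH_1 = H_1$ and $gH_2 = H_2$, I would compute $E(g(H_1 \cup H_2)) = g(E(H_1) \cup E(H_2)) = g(E(H_1)) \cup g(E(H_2)) = E(H_1) \cup E(H_2)$, which shows $H_1 \cup H_2$ is fixed by $g$; since $g$ was arbitrary, this gives $H_1 \cup H_2 \in \fps{G}$. Part (2) is essentially identical: under the assumption $H_2 \subseteq H_1$, I compute $g(E(H_1) \setminus E(H_2)) = g(E(H_1)) \setminus g(E(H_2)) = E(H_1) \setminus E(H_2)$, again using that $g$ acts as a bijection on $E(G)$.

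There is no real obstacle here; the only mild subtlety is being explicit that the action of $g$ on $E(G)$ is a bijection (so that the identities $g(X \cup Y) = gX \cup gY$ and $g(X \setminus Y) = gX \setminus gY$ hold for arbitrary subsets $X, Y \subseteq E(G)$), and being careful that the vertex sets of $H_1 \cup H_2$ and $H_1 \setminus H_2$ agree with $V(G)$ by convention, so vertex-side invariance under $\Gamma$ is automatic. With those observations, both claims follow in a couple of lines.
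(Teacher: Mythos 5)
Your proof is correct and follows essentially the same approach as the paper: fix an arbitrary $g \in \Gamma$, use $gH_i = H_i$, and observe that the action of $g$ on edge sets preserves union and set difference. Your phrasing in terms of a bijection on $E(G)$ commuting with $\cup$ and $\setminus$ is a slightly more explicit rendering of the paper's "unions of automorphisms are automorphisms" / "$g$ decomposes into automorphisms on $H_2$ and $H_1 \setminus H_2$" argument, but the underlying idea is the same.
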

\begin{proof}
    For (1), we wish to show that for any \(g \in \Gamma\), we have \(g(H_1 \cup H_2) = H_1 \cup H_2\).
    To that end, observe that we have \(gH_1 = H_1\) and \(gH_2 = H_2\); that is, \(g\) is
    an automorphism when restricted to either \(H_1\) or \(H_2\).
    As unions of automorphisms are automorphisms, we obtain the claim.

    For (2), as in the proof of (1), observe that any automorphism \(g  \in \Gamma\)
    (which is also an automorphism on~\(H_1\) as \(H_1\) is a fixed point)
    stays an automorphism when restricted to \(H_2\).
    In particular, this means that any such automorphism decomposes
    into an automorphism on \(H_2\) and an automorphism on \(H_1\setminus H_2\).
\end{proof}

\subparagraph*{Counting Problems and Parameterized Complexity}
A \emph{parameterized (counting) problem} consists of a function $P \colon \Sigma^\ast \to \nat$ and
a computable parameterization $\kappa \colon \Sigma^\ast \to \nat$.
A parameterized problem \((P, \kappa)\) is called \emph{fixed-parameter
tractable} (FPT) if there is a computable function $f$ and a deterministic algorithm
\(\mathbb{A}\) such that \(\mathbb{A}\) computes $P(x)$ in time  $f(\kappa(x)) |x|^{O(1)}$
for all $x \in \Sigma^\ast$.

A parameterized Turing reduction from $(P, \kappa)$ to $(P', \kappa')$ is a deterministic
FPT algorithm with oracle access to $P'$ that computes $P(x)$ such that there is a
computable function $g$ with the property that $\kappa'(y) \leq g(\kappa(y))$ holds for
each oracle access to $P'$. We write $A \fpt B$ to denote that there is a parameterized Turing
reduction from $A$ to $B$.

In the counting version $\NUM{}P$ of a decision problem $P$, the task is to compute  the
number of valid solutions for a given input $x$.
Of special importance is the counting problem \NUM{}\clique,
which gets as input a graph $G$ and a natural number $k$.
The output is the number of induced
subgraphs of $G$ of size $k$ which form a $k$-clique.
We parameterize \(\NUM{}\clique\) by $\kappa(G, k) \coloneqq k$.
We are interested in barriers to obtain fast algorithms for \(\NUM{}\clique\) and for
parameterized counting problems in general.

A problem $(P',\kappa')$ is \w-hard if there is parameterized Turing reduction
from $(\NUM{}\clique, \kappa)$ to $(P', \kappa')$.
It is widely believed that \(\NUM{}\clique\) has no FPT algorithm \cite{path_cycle, CHEN2005216}, ; hence
\w-hardness rules out FPT algorithms based on said belief.

For more fine-grained lower bounds, we also rely on the \emph{Exponential Time Hypothesis}
(ETH).

\begin{ethy}[{\cite[Conjecture 14.1]{param_algo} \cite{IMPAGLIAZZO2001512}}]
    There is a positive real value \(\varepsilon > 0\) such that
    the problem 3-SAT cannot be solved in time
    $O^\ast(2^{\varepsilon n})$,  where $n$ is the number of
    variables used in the formula.
\end{ethy}

The Exponential Time Hypothesis is a stronger hypothesis and in fact implies
there are no FPT algorithms for \w-hard problems. For example, it can be used
to show that no algorithm solves ${\clique}$ in time $f(k)n^{o(k)}$,
where $n$ is the number of vertices. Observe, that this implies that there
is no FPT algorithm for the \w-hard ${\clique}$ (see \cref{theo:k-clique:ETH}).

\section{Fixed Points and the Alternating Enumerator}
\label{sec:alt:en:hasse}
\label{chap:prop:chi}

In this section, we introduce the framework of how we can use fixed points of groups that act on graphs to show that the
alternating enumerator is nonvanishing on certain graphs. In particular, the key technical
result of this section is \cref{lem:chi:basic}, which states a duality between the highest
nonzero levels in $\Phi$ and $\aename{\Phi}$. That is, it is not possible that both levels
are low: their sum must be at least $n$.

\subsection{Fixed Points as Unions of Orbits and Sub-points of Fixed Points}

As a first step, we discuss how to construct any fixed point from a small set of basic
building blocks.

\remarkfixunion
\begin{proof}
    We readily confirm that the empty set is indeed a fixed point; we may obtain the empty
    set as the empty union of orbits.

    Next, we turn to single orbits and verify that,
    indeed, they are fixed points as well.

    \begin{claim}\label{cl:9.29-1}
        We have \(E(H)/\Gamma \subseteq \fps{H}\).\footnote{Recall that we use \(O\in
        E(H)/\Gamma\) to denote both the edge set as well as the edge-subgraph.}
    \end{claim}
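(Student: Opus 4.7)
The plan is to unfold the definition of an orbit and then use that left multiplication by a group element is a bijection of the group onto itself. Fix an arbitrary orbit $O \in E(H)/\Gamma$; by definition, there is some edge $e \in E(H)$ with $O = \Gamma \cdot e = \{g' e \mid g' \in \Gamma\}$. Recall that when we view $O$ as an edge-subgraph, we really mean $\ess{H}{O}$, which has vertex set $V(H)$ and edge set $O$. Since the action of $\Gamma$ on edge-subgraphs preserves the vertex set (by definition, $V(g \cdot A) = V(A)$), verifying that $\ess{H}{O}$ is a fixed point reduces to verifying the set-theoretic identity $g \cdot O = O$ on edges for every $g \in \Gamma$.

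For the edge-set equality, I would compute directly:
\[
    g \cdot O \;=\; \{g \cdot (g' e) \mid g' \in \Gamma\} \;=\; \{(g g') \cdot e \mid g' \in \Gamma\}.
\]
Because $\Gamma$ is a group containing $g$, the map $g' \mapsto g g'$ is a bijection of $\Gamma$ onto itself, so $\{g g' \mid g' \in \Gamma\} = \Gamma$. Substituting back yields $g \cdot O = \{g'' \cdot e \mid g'' \in \Gamma\} = \Gamma \cdot e = O$, which is exactly the fixed-point property.

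There is no real obstacle here; the proof is a direct unfolding of the definitions of the group action on edges and on edge-subgraphs, together with the group closure of $\Gamma$. The only thing worth being explicit about is the distinction between $O$ as an orbit of edges and $O$ as the edge-subgraph $\ess{H}{O}$, since the latter is what lives in $\fps{H}$; the vertex set is carried along trivially by the definition of the action on edge-subgraphs, so all the content of the claim is in the edge-set identity proved above.
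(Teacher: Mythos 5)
Your proof is correct and takes essentially the same approach as the paper: both show that each orbit $O$ satisfies $gO = O$ for all $g \in \Gamma$, which is the defining property of a fixed point. The only cosmetic difference is that you obtain the equality directly by observing that $g' \mapsto gg'$ is a bijection of $\Gamma$, whereas the paper first notes $gO \subseteq O$ and then upgrades the containment to equality using injectivity of $g$ on edges; both routes rest on the same underlying group-theoretic fact.
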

    \begin{claimproof}
        Consider an orbit \(O \in E(H)/\Gamma\) and an automorphism \(g \in \Gamma\).
        We intend to show that \(gO = O\), which suffices to prove the claim.

        To that end, by definition of an orbit, we have \(ge \in O\) for every edge \(e \in
        O\). As \(g\) is an automorphism of \(G\), no two different edges
        from \(O\) are mapped to the same edge; this yields the claim.
    \end{claimproof}

    Observe that \cref{cl:9.29-1,lem:setopfp} together yield that the union of orbits is
    indeed a fixed point.

    Finally, we show that we can always split off some orbit from a (non-empty) fixed point.

    \begin{claim}\label{cl:9.29-2}
        (The edge set of) any fixed point \(F \in \fps{H}\)
        can be obtained as the (disjoint) union of and orbit \(O \in
        E(H)/\Gamma\) and a fixed point from \(\fps{H}\).
    \end{claim}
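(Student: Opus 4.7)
The plan is to prove Claim 9.29-2 by a direct argument: pick any edge $e \in E(F)$, verify that its whole orbit $O = \Gamma \cdot e$ sits inside $E(F)$, and then use the preceding lemmas to conclude that both $O$ and $F \setminus O$ are fixed points. The main lemma will then follow by iterating Claim 9.29-2 (or by induction on $|E(F)|$), with the empty union taking care of the base case $F = \varnothing$.

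More concretely, the first step is to fix an arbitrary edge $e \in E(F)$ and set $O \coloneqq \Gamma \cdot e$. For every $g \in \Gamma$ we have $g \cdot e \in g \cdot E(F) = E(gF) = E(F)$, where the last equality uses that $F$ is a fixed point. Consequently $O \subseteq E(F)$. By \cref{cl:9.29-1}, $O$ itself is a fixed point of $\Gamma$ in $H$, and it is contained in $F$ in the sense of edge sets. Now \cref{lem:setopfp}(2) applies and yields that $F \setminus O \in \fps{H}$, completing the proof of \cref{cl:9.29-2}.

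To establish the lemma itself, I would iterate \cref{cl:9.29-2}: starting with the given fixed point $F$, split off an orbit $O_1$ to obtain a fixed point $F_1 = F \setminus O_1$ with strictly fewer edges; repeat on $F_1$ to obtain $F_2 = F_1 \setminus O_2$, and so on. Since $|E(F)|$ is finite and each step strictly decreases the number of edges, the process terminates at the empty fixed point after finitely many steps, exhibiting $E(F)$ as a disjoint union $O_1 \unionSet \cdots \unionSet O_s$ of orbits. Combined with the earlier observation that any such disjoint union of orbits is a fixed point (by \cref{cl:9.29-1} together with \cref{lem:setopfp}(1)), we obtain the characterization claimed in the lemma.

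Uniqueness of the partition is essentially automatic and I would just note it briefly: since $E(H)/\Gamma$ is a partition of $E(H)$, any edge $e \in E(F)$ belongs to exactly one orbit, and \cref{cl:9.29-2} forces that entire orbit to appear in any decomposition of $F$; so the set $\{O_1, \dots, O_s\}$ of orbits used is determined by $E(F)$. I do not foresee a substantial obstacle here — the only thing to be careful about is to keep the distinction between an orbit as a set of edges and as the associated edge-subgraph consistent, which the paper has already fixed by convention.
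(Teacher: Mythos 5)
Your proof of Claim 9.29-2 is correct and follows exactly the same route as the paper: pick an edge $e\in E(F)$, observe that its orbit $O$ sits inside $E(F)$ because $F$ is a fixed point, invoke Claim 9.29-1 to see $O\in\fps{H}$, and then apply Lemma \ref{lem:setopfp}(2) to conclude $F\setminus O\in\fps{H}$. The surrounding discussion (iterating to get the full decomposition, and uniqueness from $E(H)/\Gamma$ being a partition) also matches the paper's argument for the lemma.
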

    \begin{claimproof}
        Consider an arbitrary edge \(e \in E(F)\) and let \(O\) denote the corresponding
        orbit in \(E(H)/\Gamma\). By definition of \(O\) and \(F\), we have \(O\subseteq
        E(F)\).
        Combining \cref{cl:9.29-1,lem:setopfp} yields the claim.
    \end{claimproof}

    Iterating \cref{cl:9.29-2} and recalling that the orbits \(E(H)/\Gamma\) partition
    \(E(H)\) completes the proof.
\end{proof}

\begin{remark}
    Observe that \cref{remark:fixed:point:union} allows us to strengthen
    \cref{lem:setopfp} to all set operations:
    the orbits of different edges are either equal or
    disjoint, which for any $H_1, H_2 \in \fps{G}$ directly yields all of the following.
    \begin{enumerate}[(1)]
        \item We have \(H_1 \cup H_2 \in \fps{G}\).
        \item We have \(H_1 \cap H_2 \in \fps{G}\).
        \item We have \(H_1 \setminus H_2 \in \fps{G}\).
            \qedhere
    \end{enumerate}
\end{remark}

\cref{remark:fixed:point:union} and in particular \cref{cl:9.29-2} induce an order of the fixed points in
\(\fps{H}\): we say a fixed point \(F\) is a \emph{sub-point} of another fixed point
\(G\) if we can obtain \(G\) from the union of \(F\) and (potentially multiple) orbits in
\(E(H)/\Gamma\).

\begin{definition}\label{def:10.4-1}
    Let \(H\) denote a graph and let \(\Gamma \subseteq \auts{H}\) denote a group.
    Further, let \(\Gamma\) act on \(E(H)\) and write \(E(H)/\Gamma\) for the set of all
    resulting orbits.
    Finally, let \(\Gamma\) act on edge-subgraphs of \(H\) and write \(\fps{H}\) for
    the set of all resulting fixed points of \(\Gamma\) in \(H\).

    For a fixed point \(F\in \fps{H}\), its \emph{orbit factorization} \(\od{F}\) is the
    unique subset of \(E(H)/\Gamma\) whose union is~\(F\):
    \[
    F = \bigcup \od{F}.
    \]
    The \emph{level} of \(F\), denoted by \(\hl{F}\), is the size of the orbit
    factorization of \(F\)
    \[
    \hl{F} \coloneqq |\od{F}|.
    \]
    Finally, for two fixed points \(F_1, F_2  \in \fps{H}\), we say that \(F_2\) is a \emph{sub-point}
    of \(F_1\), denoted by \(F_2 \subseteq F_1\), if the orbit factorization of \(F_2\) is
    a subset of the orbit factorization of \(F_1\). If the inclusion is strict, we say
    that \(F_2\) is a \emph{proper sub-point} of \(F_1\).
\end{definition}

Observe that for two fixed points \(F_1\) and \(F_2\) of some group \(\Gamma\) with \(F_2
\subseteq F_1\), we may equivalently write \(F_2 \in \fps{F_1}\).

The next lemma makes it possible to group the sub-points of a fixed point according to their level.

\begin{lemma}\label{lem:10.4-2}
    Let \(H\) denote a graph and let \(\Gamma \subseteq \auts{H}\) denote a \(p\)-group.

    For any fixed point \(F \in \fps{H}\), we have \[
        (-1)^{\NUM{}E(F)} \equiv_p (-1)^{\hl{F}}.
    \]
\end{lemma}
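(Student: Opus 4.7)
The plan is to reduce the exponent $|E(F)|$ to a sum over orbits using the orbit factorization, and then control the parity of each orbit size via the fact that $\Gamma$ is a $p$-group.

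First I would invoke \cref{def:10.4-1} (or equivalently \cref{remark:fixed:point:union}) to write
\[
E(F) = \bigsqcup_{O \in \od{F}} E(O),
\]
so that $|E(F)| = \sum_{O \in \od{F}} |E(O)|$ and hence
\[
(-1)^{|E(F)|} = \prod_{O \in \od{F}} (-1)^{|E(O)|}.
\]
Because $|\od{F}| = \hl{F}$, it is enough to show that every factor in this product is congruent to $-1$ modulo $p$.

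Next I would apply the Orbit-stabilizer Theorem to the action of $\Gamma$ on $E(H)$. Since $\Gamma$ is a $p$-group, $|\Gamma|$ is a power of $p$, and every orbit size $|E(O)|$ must divide $|\Gamma|$. Therefore $|E(O)| = p^{a_O}$ for some nonnegative integer $a_O$.

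The only potential obstacle is the parity behavior of $p^{a_O}$, which forces a case split. For odd $p$, every $p^{a_O}$ is odd, so $(-1)^{|E(O)|} = -1$ for each $O \in \od{F}$, and multiplying over the $\hl{F}$ orbits gives $(-1)^{|E(F)|} = (-1)^{\hl{F}}$, establishing the claim as an equality (not just modulo $p$). For $p = 2$, the congruence is essentially vacuous: both $(-1)^{|E(F)|}$ and $(-1)^{\hl{F}}$ are odd integers, and any two odd integers are congruent modulo $2$. I would note this case briefly and conclude. Altogether the lemma follows from orbit factorization plus Orbit-stabilizer, with the odd-$p$ case doing the real work and the $p = 2$ case being trivial.
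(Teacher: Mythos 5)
Your proposal is correct and follows essentially the same argument as the paper: factor $F$ into orbits via the orbit factorization, apply Orbit-stabilizer to conclude each orbit has $p$-power size, observe each such size is odd when $p$ is odd, and note the $p=2$ case is trivial because all integers are congruent to $1$ or $-1$ modulo $2$. No gaps; your write-up is just a bit more explicit than the paper's version.
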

\begin{proof}
    Write \(\od{F} = \{ O_1, \dots, O_{\hl{F}} \} \) for the orbit factorization
    of \(F\).
    By the Orbit-stabilizer Theorem, the size of each $O_i$ is a divider of the group
    order of $\Gamma$.
    Thus, $|O_i|$ is always odd if $p$ is an odd prime number, which implies the claim.

    For $p = 2$, we have $(-1) \equiv_2 1$, which immediately yields the claim.
\end{proof}

Using \cref{lem:10.4-2}, we can obtain yet another expression for the alternating
enumerator.

\begin{corollary}\label{cor:10.7-1}
    Let \(H\) denote a graph, let \(\Gamma \subseteq \auts{H}\) denote a \(p\)-group, and
    let \(\Phi\) denote a graph property.

    We have \[
        \ae{\Phi}{H} \equiv_{p}
        \sum_{A \in \fps{H}} \Phi(A) (-1)^{\hl{A}}.
    \]

    Further, for any fixed point \(A\in\fps{H}\), we have \[
        \ae{\Phi}{A} \equiv_{p}
        \sum_{B \subseteq A} \Phi(B) (-1)^{\hl{B}}.
        \tag*{\lipicsEnd}
    \]
\end{corollary}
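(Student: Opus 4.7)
The plan is to derive both displayed congruences by stitching together three results already established in this section, namely \cref{lem:chi:comp}, \cref{remark:subgroup:fixpoint}, and \cref{lem:10.4-2}. No new combinatorial work is needed; the only thing to check is that the hypotheses of each ingredient are met in the claimed application.

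For the first statement, I would start with \cref{lem:chi:comp} applied directly to the pair \((H,\Gamma)\) (which satisfies its hypotheses since \(\Gamma \subseteq \auts{H}\) is a \(p\)-group). This gives
\[
\ae{\Phi}{H} \equiv_{p} \sum_{A \in \fps{H}} \Phi(A)\,(-1)^{\NUM{}E(A)}.
\]
Now I replace the edge-count sign by the level sign: \cref{lem:10.4-2} asserts exactly that \((-1)^{\NUM{}E(A)} \equiv_{p} (-1)^{\hl{A}}\) for every \(A \in \fps{H}\). Substituting into each summand yields the claimed congruence modulo \(p\).

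For the second statement, the extra step is to reinterpret the summation domain. Fix an arbitrary \(A \in \fps{H}\). By \cref{remark:subgroup:fixpoint}(1), \(\Gamma \subseteq \auts{A}\), so \(\Gamma\) is still a \(p\)-group inside \(\auts{A}\) and the first part of the corollary may be applied with \(H\) replaced by \(A\). This yields
\[
\ae{\Phi}{A} \equiv_{p} \sum_{B \in \fps{A}} \Phi(B)\,(-1)^{\hl{B}}.
\]
Finally, \cref{remark:subgroup:fixpoint}(2) identifies \(\fps{A}\) with \(\{B \in \fps{H} \mid E(B) \subseteq E(A)\}\); by \cref{def:10.4-1}, this last set is precisely the set of sub-points \(B \subseteq A\), so the sum rewrites in the desired form.

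There is essentially no obstacle: both parts are one-line consequences of previously proven lemmas, and the only point to be slightly careful about is that the level \(\hl{B}\) used in the second display is computed with respect to \(\Gamma\) acting on \(H\) rather than on \(A\). Fortunately, since the orbit factorization of \(B\) is determined by the orbits of \(\Gamma\) on \(E(H)\) intersected with \(E(B)\), and every such orbit lies entirely inside \(E(A)\) whenever \(B \subseteq A\), the two notions of level coincide, so no ambiguity arises.
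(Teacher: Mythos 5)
Your proof is correct and matches the paper's (implicit) derivation: the first display is \cref{lem:chi:comp} with the sign of the summand replaced via \cref{lem:10.4-2}, and the second display is obtained by applying that congruence to the fixed point $A$ and then invoking \cref{remark:subgroup:fixpoint} to identify $\fps{A}$ with the sub-points of $A$. Your extra remark that the level of a sub-point $B$ is unambiguous (since orbits of $\Gamma$ on $E(H)$ that meet $E(A)$ lie entirely inside $E(A)$) is a worthwhile sanity check the paper leaves implicit.
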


\subsection{Nonvanishing of the Alternating Enumerator via Sub-points}

In the next step, we use sub-points to analyze the alternating enumerator.
As a first result, we use \cref{cor:10.7-1} to obtain that the alternating enumerator is
nonvanishing on fixed
points that are ``minimally vanishing'' for the corresponding graph property.

\aesubpointstrue
\begin{proof}
    By definition, the fixed point \(A\) has exactly \(\binom{\hl{A}}{i}\)
    sub-points that have a level of exactly \(i\).
    Starting from \cref{cor:10.7-1}, we rewrite the alternating enumerator and obtain
    \begin{align*}
        \ae{\Phi}{A}
        &\equiv_p \sum_{ B \subseteq A } \Phi(B) (-1)^{\hl{B}} \\
        &\equiv_p -(-1)^{\hl{A}} + \sum_{ B \subseteq A } (-1)^{\hl{B}}
        & \text{(\(\Phi(A) = 0\) and \(\Phi(B) = 1\) for all \(B \subsetneq A\))}\\
        &\equiv_p  -(-1)^{\hl{A}} + \sum_{i = 0}^{\hl{A}} \binom{{\hl{A}}}{i} (-1)^i
        & \text{(group by level)}\\
        &\equiv_p -(-1)^{\hl{A}} + (-1 + 1)^{\hl{A}}
        & \text{(Binomial Theorem)}.
    \end{align*}
    In total, this yields the claim.
\end{proof}

For our next steps toward the proof \cref{lem:chi:basic}, it is instructive to group sub-points by their level.
In particular, we are interested in how many sub-points have a given graph property and we
are interested in the sum of the alternating enumerators of said sub-points.

\begin{definition}\label{def:10.8-1}
    Let \(H\) denote a graph, let \(\Gamma \subseteq \auts{H}\) denote a \(p\)-group, and
    let \(\Phi\) denote a graph property.

    For every level \(i \in \fragment{0}{\hl{H}}\), we write \(\phisum{i}\) for
    the number (modulo \(p\)) of sub-points of \(H\) with level \(i\) that satisfy
    \(\Phi\); that is,
    \[
        \phisum{i} \coloneqq \sum_{\sunderset{\hl{A} = i}{A\in \fps{H}}} \Phi(A)
        \bmod{p}.
    \]
    We write \(\phisumvec \coloneqq (\phisum{i})\) for the \((\hl{H} + 1)\)-dimensional vector
    that consists of the values \(\phisum{i}\).

    Further, for every level \(i \in \fragment{0}{\hl{H}}\), we write
    \(\aesum{i}\) for the sum (modulo \(p\)) of the alternating enumerators of the sub-points of \(H\)
    with level \(i\); that is,
    \[
        \aesum{i} \coloneqq \sum_{\sunderset{\hl{A} = i}{A\in \fps{H}}} \ae{\Phi}{A}
        \bmod{p}.
    \]
    We write \(\aesumvec \coloneqq (\aesum{i})\) for the \((\hl{H} + 1)\)-dimensional vector
    that consists of the values \(\aesum{i}\).
\end{definition}

Let us take a step back and discuss \cref{def:10.8-1} for a bit.
First, observe that \(\aesum{i} \not\equiv_p 0 \) implies \(\ae{\Phi}{A}
\not\equiv_p 0\) for some sub-point \(A\) with level \(i\).
Hence, for our purposes, it suffices to understand when \(\aesum{i}\) is nonzero.

Second, observe that understanding \(\aesum{i}\) directly is thus not much easier compared
to understanding a single alternating enumerator.
However, this is where \(\phisum{i}\)---which is very easy to understand and compute---turns out to be useful.
As we show next, we can express \(\aesum{i}\)-values as a linear combination of
\(\phisum{i}\)-values and---more importantly---vice versa.
In particular, this then allows us to show that some \(\aesum{i}\)-value has to be nonzero
when enough \(\phisum{j}\)-values are nonzero.

Let us start by defining the transformation matrix that we use in the following.

\begin{definition}\label{lem:10.12-2}
    For a positive integer \(n\),
    we define the \(\aename{\Phi}\)-transformation matrix \(C_n \in \Z^{(n+1) \times (n+1)}\) as
    \[
        (C_n)_{i, j} \coloneqq (-1)^j\; \binom{n - j}{i - j}, \quad\text{where \(i,j \in
        \fragment{0}{n+1}\)}.
    \]
    Further, for each \(0 \le c \le n\), we define the \(c\)-restricted
    \(\Phi\)-\(\aename{\Phi}\)-transformation matrix \(C_{n;c} \in \Z^{(n-c+1) \times (n-c+1)}\) as the
    square submatrix of \(C_n\) that consists in the first \(n-c+1\) columns and the last
    \(n-c+1\) rows of \(C_n \); that is,
    \[
        (C_{n;c})_{i, j} \coloneqq (C_n)_{i+c, j} =  (-1)^j\; \binom{n - j}{i + c - j},
    \]
    where \(i,j \in \fragment{0}{n-c}\).
\end{definition}

Recall that we set \(\smash{\tbinom{a}{b}} = 0\) whenever \(b < 0\) or \(b > a\).
In particular, this means that we have \((C_n)_{i, j} = 0\) for all \(j > i\), that is,
\(C_n\) is a lower triangular matrix.

We proceed to show that \cref{lem:10.12-2} indeed relates the \(\aesumvec\)-vector with
the \(\phisumvec\)-vector.

\begin{lemma}
    \label{lem:10.12-1}
    Let \(H\) denote a graph, let \(\Gamma \subseteq \auts{H}\) denote a \(p\)-group, and
    let \(\Phi\) denote a graph property. Define $n \coloneqq {\hl{H}}$.

    Then, we have \(\aesumvec \equiv_p  C_{} \; \phisumvec\); that is,
    for every level \(i \in \fragment{0}{n}\), we have
    \[
        \aesum{i} \equiv_p \sum_{j = 0}^{n} (-1)^j \binom{n -
        j}{i - j} \; \phisum{j}.
    \]
\end{lemma}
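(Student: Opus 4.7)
The plan is to compute $\aesum{i}$ by unrolling the definition of $\ae{\Phi}{A}$ via \cref{cor:10.7-1}, swapping the order of summation, and then counting, for each fixed point $B$ of a given level $j$, how many fixed points $A$ of level $i$ contain $B$ as a sub-point.

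First I would start from
\[
\aesum{i} \equiv_p \sum_{\sunderset{\hl{A}=i}{A \in \fps{H}}} \ae{\Phi}{A}
\equiv_p \sum_{\sunderset{\hl{A}=i}{A \in \fps{H}}} \sum_{B \subseteq A} \Phi(B)(-1)^{\hl{B}},
\]
where the inner equality is the second part of \cref{cor:10.7-1}. I would then swap the two sums to obtain
\[
\aesum{i} \equiv_p \sum_{B \in \fps{H}} \Phi(B)(-1)^{\hl{B}} \cdot \bigl|\{A \in \fps{H} : \hl{A}=i,\ B \subseteq A\}\bigr|.
\]

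The combinatorial heart of the argument is the count of the extensions. By \cref{remark:fixed:point:union} and \cref{def:10.4-1}, a fixed point $A$ is uniquely described by its orbit factorization $\od{A} \subseteq E(H)/\Gamma$, and $B \subseteq A$ translates to $\od{B} \subseteq \od{A}$. Hence, given a fixed point $B$ of level $j$, extending it to a fixed point $A$ of level $i$ that contains $B$ amounts to choosing $i-j$ orbits out of the $n-j$ orbits of $E(H)/\Gamma$ that are not already in $\od{B}$. The number of such extensions is therefore $\binom{n-j}{i-j}$, using our convention that $\binom{n-j}{i-j}=0$ for $i<j$, so that fixed points of level $j>i$ contribute nothing.

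Finally, I would group the remaining sum by the level $j$ of $B$ and apply the definition of $\phisum{j}$ to obtain
\[
\aesum{i} \equiv_p \sum_{j=0}^{n} (-1)^j \binom{n-j}{i-j} \sum_{\sunderset{\hl{B}=j}{B \in \fps{H}}} \Phi(B)
\equiv_p \sum_{j=0}^{n} (-1)^j \binom{n-j}{i-j}\, \phisum{j},
\]
which is exactly the claimed identity, and it matches the $i$-th row of $C_n\,\phisumvec$ by \cref{lem:10.12-2}. I do not anticipate any significant obstacle: the whole argument is a double-counting swap, and the only technical subtlety is the bookkeeping for the case $j>i$, which is handled automatically by the convention $\binom{a}{b}=0$ for $b<0$.
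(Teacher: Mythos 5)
Your proof is correct and follows essentially the same route as the paper: unfold $\aesum{i}$ via \cref{cor:10.7-1}, swap the order of summation, count the number of level-$i$ fixed points containing a given level-$j$ sub-point as $\binom{n-j}{i-j}$, and regroup by level. The paper interleaves the level-grouping before the swap, but the combinatorial content and the handling of the $j>i$ case via the $\binom{a}{b}=0$ convention are identical.
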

\begin{proof}
    Fix a level \(i \in \fragment{0}{n}\).
    First, we use \cref{cor:10.7-1} to rewrite \(\aesum{i}\). We obtain
    \begin{align*}
        \nonumber
        \aesum{i}
        &\equiv_p \sum_{ \sunderset{ \hl{A} = i }{A \in \fps{H}}} \ae{\Phi}{A}
        \equiv_p \sum_{\sunderset{ \hl{A} = i }{A \in \fps{H}}} \sum_{B \subseteq A}
        \Phi(B) (-1)^{\hl{B}}.
        \intertext{Next, we group the terms of $\sum_{B \subseteq A} \Phi(B)$ according to their
        level. We obtain}
        \nonumber
        \aesum{i}
        &\equiv_p \sum_{{\sunderset{ \hl{A} = i }{A \in \fps{H}}}}
        \sum_{j = 0}^i (-1)^j \sum_{\sunderset{ \hl{B} = j }{B \subseteq A}} \Phi(B)
        \equiv_p  \sum_{j = 0}^i (-1)^j \sum_{{\sunderset{ \hl{A} = i }{A \in \fps{H}}}}
        \sum_{\sunderset{ \hl{B} = j }{B \subseteq A}} \Phi(B).
        \intertext{Next, we reorder the summation and group single fixed points \(B\) to obtain}
        \aesum{i}
        &\equiv_p  \sum_{j = 0}^i (-1)^j
        \sum_{{\sunderset{ \hl{B} = j }{B \in \fps{H}}}}
        \sum_{\substack{ A \in \fps{H} \\ A \supseteq B \\ \hl{A} = i }} \Phi(B).
        \intertext{Finally, we compute the cardinality of the set \(\{ A \in \fps{H} \mid A
        \supseteq B \;\text{and}\; \hl{A} = i \}\) as the number of possible ways to
        choose \(i - j\) elements from \(n - j\) elements when disregarding the order
        of the chosen elements and allowing each element to be selected at most once. We obtain
        the claimed equality}
        \aesum{i}
        &\equiv_p \sum_{j = 0}^i (-1)^j \binom{n - j}{i - j}
        \sum_{\substack{ B \in \fps{H} \\ \hl{B} = j}} \Phi(B)
        \equiv_p \sum_{j = 0}^{n} (-1)^j \binom{n - j}{i - j} \phisum{j},
    \end{align*}
    where for the last step, we use the definition of $\phisum{j}$ and we exploit that
    \(\binom{a}{b} = 0\) whenever \(b < 0\).
\end{proof}

The key insight in the proof of \cref{lem:chi:basic} is the following. Assuming that the
last $c$ levels of \(\phisumvec\)-vector are 0 (that is, $\phisum{i} = 0$ for $i > n -
c$), then the last $n-c+1$ levels of the $\aesumvec$-vector are determined from
\(\phisumvec\) by the lower left $(n- c + 1)\times (n- c + 1)$ submatrix of $C_n$ (consult
\cref{fig:10.12-3}). We show next that this submatrix \(C_{n;c}\) is invertible, hence we
can revert the operation. This allows us to prove that the last $n-c+1$ levels of
$\aesumvec$ do not vanish.

\begin{figure}[t]
    \centering
    \begin{tikzpicture}
        \fill[blue!50!red!10] (0,3) rectangle (.5,0);
        \draw[thick] (0,0) rectangle (.5,5);
        \draw (0,3) -- (.5,3);

        \draw[thick] (1.5,0) rectangle (6.5,5);
        \draw[fill=blue!50!red!10] (1.5,3) rectangle (4.5,0);

        \fill[blue!50!red!10] (7.5,2) rectangle (8,5);
        \draw[thick] (7.5,0) rectangle (8,5);
        \draw (7.5,2) -- (8,2);

        \draw (1.5,5) -- (6.5,0);

        \node[scale=2, transform shape] at (5, 3.5) { 0 };

        \node[blue!50!red] at (3, 1.5) { \(C_{n;c}\)  };
        \node[blue!50!red] at (.25, 1.5) { \(\aesumvecr{c}\)  };
        \node[blue!50!red] at (7.75, 3.5) { \(\phisumvecr{c}\)  };

        \node at (4, -.5) { \(C_{n}\)  };
        \node at (1, -.5) { \(=\)  };
        \node at (7, -.5) { \(\cdot\)  };
        \node at (.25, -.5) { \(\aesumvec\)  };
        \node at (7.75, -.5) { \(\phisumvec\)  };

        \node[scale=2, transform shape] at (1, 2.5) { \(=\) };
        \node[scale=2, transform shape] at (7, 2.5) { \(\cdot\) };

    \end{tikzpicture}
    \caption{\Cref{lem:10.12-1} visualized. Assume that the last $c$ levels of the
        \(\phisumvec\)-vector are 0 (that is, $\phisum{i} = 0$ for $i > n - c$).
        Then the last $n-c+1$ levels of the $\aesumvec$-vector can be obtained by
        multiplying the first $n-c+1$ levels of the \(\phisumvec\)-vector by the lower left
        $(n-c+1)\times (n-c+1)$ submatrix $C_{n;c}$ of $C$. As \(C_{n;c}\) is invertible, this
        transformation is a bijection (\cref{lem:10.13-1}).}
    \label{fig:10.12-3}
\end{figure}

\begin{lemma}\label{lem:10.13-1}
    For a prime \(p\) and integers \(0 \le c \le n\), the matrix
    \(C_{n;c}\) is invertible in \(\field{p}^{(n-c+1)\times (n-c+1)}\).
\end{lemma}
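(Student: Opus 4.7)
The plan is to show that $\det(C_{n;c}) = \pm 1$ as an integer, which immediately yields invertibility in $\field{p}^{(n-c+1)\times(n-c+1)}$ for every prime $p$. First, I would peel off the alternating sign: writing $C_{n;c} = P' D'$ with $D'$ the $(n-c+1)\times (n-c+1)$ diagonal matrix with $D'_{j,j} = (-1)^j$ and $P'_{i,j} \coloneqq \binom{n-j}{i+c-j}$, we have $\det(D') = \pm 1$, so it suffices to show $\det(P') = \pm 1$.

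Next, I would reverse both row and column indices via $a \coloneqq (n-c) - i$ and $b \coloneqq (n-c) - j$. Reversing rows and columns simultaneously leaves the determinant unchanged, since the row-reversal and column-reversal permutations have the same sign and their combined effect squares to $1$. Using the symmetry $\binom{n-j}{i+c-j} = \binom{n-j}{n-i-c}$ together with $n - j = c + b$ and $n - i - c = a$, the reversal transforms $P'$ into the matrix $Q$ with $Q_{a,b} \coloneqq \binom{c+b}{a}$ for $a, b \in \{0, 1, \ldots, n-c\}$, so $\det(P') = \det(Q)$.

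Finally, I would recognize $Q$ as a polynomial-evaluation matrix: the function $x \mapsto \binom{c+x}{a}$ is a polynomial in $x$ of degree exactly $a$ with leading coefficient $1/a!$. Factoring $Q = L V$, where $L$ is lower triangular with diagonal entries $1/a!$ (collecting the coefficients of each polynomial) and $V_{k,b} \coloneqq b^k$ is the standard Vandermonde matrix at the points $0, 1, \ldots, n-c$, the Vandermonde identity yields
\[
    \det(Q) = \prod_{a=0}^{n-c} \frac{1}{a!} \cdot \prod_{0 \le a < b \le n-c}(b - a) = \prod_{a=0}^{n-c} \frac{1}{a!} \cdot \prod_{b=0}^{n-c} b! = 1.
\]
Since $Q$ has integer entries, this identity holds over $\Z$, so $\det(C_{n;c}) = \pm 1$ and the matrix is invertible modulo any prime. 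The only delicate point will be the bookkeeping in the index reversal — checking that the binomial arithmetic really does collapse $\binom{n-j}{i+c-j}$ to the clean form $\binom{c+b}{a}$ (in particular also in the boundary cases where one of the binomials is zero); once that is in place, the Vandermonde step is routine and the integrality of $\det(Q)$ immediately survives reduction mod $p$.
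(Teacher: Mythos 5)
Your argument is correct, and its first two steps coincide exactly with the paper's: you strip off the $(-1)^j$ column signs (the paper's matrix $D_{n;c}$), then reverse both row and column indices to land on $Q_{a,b} = \binom{c+b}{a}$ (the paper's matrix $E_{n;c}$). The divergence is in the decisive last step, where the two proofs invoke two genuinely different results that happen to share Vandermonde's name. The paper uses the Chu--Vandermonde convolution identity $\binom{c+j}{i} = \sum_a \binom{c}{i-a}\binom{j}{a}$ to produce an $LU$-factorization $E_{n;c} = L_{n;c}U_{n;c}$ with $(L_{n;c})_{i,j} = \binom{c}{i-j}$ and $(U_{n;c})_{i,j} = \binom{j}{i}$ — both integer triangular with unit diagonal, so $\det(E_{n;c}) = 1$ entirely over $\mathbb{Z}$ (or already mod $p$). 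You instead view $Q$ as a polynomial-evaluation matrix, so the relevant triangular factor is the change-of-basis from monomials to the binomial polynomials $p_a(x) = \binom{c+x}{a}$, which has diagonal entries $1/a!$, and the other factor is the Vandermonde matrix at $0,\ldots,n-c$; the product $\prod_a \frac{1}{a!} \cdot \prod_b b!$ collapses to $1$. Both computations are correct, and your detour through $\mathbb{Q}$ is harmless since the determinant of the integer matrix $Q$ is forced to be an integer regardless of how it is evaluated; the paper's factorization is marginally cleaner in that every intermediate matrix is already integral. Your instinct to double-check the index shuffle, especially where binomials vanish, is sound but unproblematic here: since $j \le n - c$, the top index $n - j \ge c \ge 0$ is always nonnegative, so the symmetry $\binom{m}{k} = \binom{m}{m-k}$ and the substitution go through uniformly, zeros included.
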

\begin{proof}
    We compute the determinant of \(C_{n;c}\) and in particular show that
    \(\det(C_{n;c}) \not\equiv_p 0\), which suffices to prove the claim.

    To that end, first consider the matrix \(D_{n;c}\) with \[
        (D_{n;c})_{i,j} \coloneqq |(C_{n;c})_{i,j}| = \binom{n - j}{c + i - j},
        \quad\text{for \(i,j \in \fragment{0}{n-c }\)}.
    \]
    As we have \(C_{n;c} = D_{n;c} \cdot \diag(1,-1,1,-1, \dots, (-1)^{n - c + 1})\), we also
    have \[|\!\det(C_{n;c})| = |\!\det(D_{n;c}) \cdot \det(\diag(1,-1, \dots, (-1)^{n-c +
        1}))| =
    |\!\det(D_{n;c})|.\]

    Next, write $E_{n;c}$ for the matrix that is obtained from \(D_{n;c}\) by reversing its
    rows and columns, that is,
    \begin{align*}
        (E_{n;c})_{i, j} &\coloneqq (D_{n;c})_{ (n - c - i),  (n - c - j)} \\
        &= \binom{n - (n - c - j)}{c + (n - c - i) - (n - c - j)}
                         = \binom{c + j}{c + j - i} = \binom{c + j}{i},
        \quad\text{for \(i,j \in \fragment{0}{n-c }\)};
    \end{align*}
    where the last equality uses $\binom{a}{a - b} = \binom{a}{b}$.
    As permuting rows and columns of a matrix does not change (up to the sign) the
    determinant of a matrix, we have \(|\!\det(D_{n;c})| = |\!\det(E_{n;c})|\).

    Now, we use Vandermonde's Identity to obtain
    \begin{align}\label{eq:10.12-a}
        (E_{n;c})_{i, j} = \binom{c + j}{i} = \sum_{a = 0}^i \binom{c}{i- a} \binom{j}{a}.
    \end{align}
    \Cref{eq:10.12-a} in turn shows that we have $(E_{n;c}) = (L_{n;c})(U_{n;c}) $ for the matrices
    \[(L_{n;c})_{i, j} \coloneqq \binom{c}{i-j} \quad\text{and}\quad (U_{n;c})_{i,j} \coloneqq \binom{j}{i},
    \quad\text{for \(i,j \in \fragment{0}{n-c }\)}.\]
    Hence, we have $\det(E_{n;c}) = \det(L_{n;c}) \det(U_{n;c})$.

    Finally, observe that $L_{n;c}$ is a lower-triangular matrix (as
    \(\smash{\tbinom{c}{i-j}} =
    0\) whenever \(i<j\)) and observe that
    $U_{n;c}$ is an upper-triangular matrix (as \(\smash{\tbinom{j}i} = 0\) whenever \(i >
    j\)). Hence, we obtain
    \[\det(L_{n;c}) = \prod_{a = 0}^{n - c} (L_{n;c})_{a, a} = \prod_{a = 0}^{n - c}
    \binom{c}{0} = 1 \quad\text{and}\quad
    \det(U_{n;c}) = \prod_{a = 0}^{n - c} (U_{n;c})_{a, a}
    = \prod_{a = 0}^{n - c} \binom{a}{a} = 1;\]
    which yields $\det(E_{n;c}) = 1$ and thus $\det(C_{n;c}) \in \{-1, 1\}$.
    Hence, we have $\det(C_{n;c}) \not\equiv_p 0$, which implies that $C_{n;c}$ is regular in
    $\field{p}^{(n-c+1)\times(n-c+1)}$; completing the proof.
\end{proof}

We combine \cref{lem:10.12-1,lem:10.13-1} to obtain the main result of the section.

\lemchibasic
\begin{proof}
Let us define $n=\hl{H}$.    Write \(\phisumvecr{c}\) for the vector that consists
    in the \(n-c+1\)  first entries of the vector \(\phisumvec\) and
    write \(\aesumvecr{c}\) for the vector that consists
    in the \(n-c+1\) last entries of the vector \(\aesumvec\). First, we show
    \(\aesumvecr{c} \equiv_p C_{n;c} \; \phisumvecr{c}\).
    Observe that we have \(\phisum{i} = 0\) for all \( i > n - c\).
    Hence, \cref{lem:10.12-1} yields \[
        \aesum{i}
        \equiv_p \sum_{j = 0}^{n} (-1)^j \binom{n - j}{i - j} \; \phisum{j}
        \equiv_p \sum_{j = 0}^{n - c} (-1)^j \binom{n - j}{i - j} \; \phisum{j};
    \]
    This can be rewritten into \(\aesumvecr{c} \equiv_p C_{n;c} \; \phisumvecr{c}\) by shifting the indices.

    From \(\Phi(\emptyset) = 1\), we conclude \(\phisum{0} = 1 \not\equiv_p 0\); hence,
    \(\phisumvecr{c}\) is not the zero vector.
    Now, by \cref{lem:10.13-1}, the matrix \(C_{n,c}\) is regular; hence we obtain that \(\aesumvecr{c} \equiv_p C_{n;c} \;
    \phisumvecr{c}\) cannot be the zero vector, either.
    As \( \aesumvecr{c}\) contains the  \(n - c + 1\) last entries of \(\aesumvec\),
    there is thus a fixed point \(S \in \fps{H}\) with \(\ae{\Phi}{S} \not\equiv_p 0\) and
    \(n \ge c\); completing the proof.
\end{proof}

\section{Prime Powers and Difference Graphs}\label{sec:difference:graphs}

In this section, we obtain a first application of the techniques from
\cref{sec:alt:en:hasse}. In particular, our goal for this section is to prove the
following theorem.

\thmedgemonprimecase*
\medskip

For our proof of \cref{theo:edge_mono:prime}, we consider the complete graph \(K_{p^m}\)
for a prime \(p\) and a nonnegative integer \(m\).
In particular, we identify the vertices of \(K_{p^m}\) with elements of the finite field
\(\field{p^m}\) and consider the rotation subgroup \(\rotgr{p^m}\subseteq
\aut(K_{p^m})\)---recall that the rotation group \(\rotgr{p^m}\) consists in all
automorphisms of \(K_{p^m}\) that map every vertex \(v\) to the vertex \(v + b\) (for some
\(b \in \field{p^m}\), where addition is in \(\field{p^m}\)).

In a first step, we use \cref{remark:fixed:point:union} to understand the fixed points \(\fpb{\rotgr{p^m}}{K_{p^m}}\) by
computing the orbits \(E(K_{p^m})/\rotgr{p^m}\).
In a second step, we then use \cref{lem:chi:basic} to obtain useful fixed points that have
both nonvanishing alternating enumerator and high level. Finally, we show how to construct a
suitable sequence of useful fixed points to obtain the desired hardness via
\cref{lem:alpha:treewidth}.

\subsection{The Fixed Points of Rotations of \texorpdfstring{$K_{p^m}$}{K p m}}
\label{chap:fixed:point}

As described, we wish to understand the orbits \(E(K_{p^m})/\rotgr{p^m}\).
To that end, it is instructive to recall the definition of difference graphs and of the level
of a difference graph.

\defcircgraph*

For \(m = 1\), the graphs  \(\cgr{p}{A}\) are also called \emph{circulant graphs} in the
literature \cite{Automorphism_Groups_Ciculant_semigroup_theory, circulant_graph2}.

Next, let us quickly confirm the intuitive statement that different sets
\(A, B \subseteq \field{p^m}^+\) give rise to different difference graphs \(\cgr{p^m}{A}\) and
\(\cgr{p^m}{B}\).

\begin{lemma}\label{lem:10.15-1}
    For any sets \(A \neq B \subseteq \field{p^m}^+\), we have
    \(\cgr{p^m}{A}\neq\cgr{p^m}{B}\).
\end{lemma}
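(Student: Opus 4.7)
The plan is to exhibit a concrete edge that distinguishes \(\cgr{p^m}{A}\) from \(\cgr{p^m}{B}\) whenever \(A \ne B\). Without loss of generality pick some \(a \in A \setminus B\) (the case \(a \in B \setminus A\) is symmetric), and consider the edge \(\{a, 0\}\). By definition of the edge set of a difference graph, we have \(\{a,0\} \in E(\cgr{p^m}{A})\) because \(a - 0 = a \in A \subseteq A \cup (-A)\). It then remains to show \(\{a,0\} \notin E(\cgr{p^m}{B})\), that is, \(a \notin B \cup (-B)\).

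For the case \(p\) odd, the set \(\field{p^m}^{+}\) was chosen to contain exactly one of \(x\) and \(-x\) for each \(x \in \field{p^m}^\ast\). Since \(a \in A \subseteq \field{p^m}^{+}\), we have \(-a \notin \field{p^m}^{+}\), and hence \(-a \notin B \subseteq \field{p^m}^{+}\). Combined with the assumption \(a \notin B\), this gives \(a \notin B \cup (-B)\), as required. For \(p=2\), we have \(x=-x\) so \(-B = B\), and then \(a \notin B\) immediately yields \(a \notin B \cup (-B)\).

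The only step that requires any attention is that the argument uses the defining property of \(\field{p^m}^{+}\)—namely that it picks exactly one of each pair \(\{x,-x\}\)—so the proof should cite \cref{ft:10.25-1} (or the surrounding definition of \(\field{p^m}^{+}\)). There is no real obstacle here; the statement is essentially a sanity check that the parametrisation \(A \mapsto \cgr{p^m}{A}\) is injective on subsets of \(\field{p^m}^{+}\).
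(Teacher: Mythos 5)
Your proof is correct and takes the same approach as the paper: pick an element in the symmetric difference of \(A\) and \(B\) and exhibit a single edge present in one difference graph but not the other. The paper glosses over with ``clearly'' the step you spell out—that \(a \notin B\) together with \(a \in \field{p^m}^+\) (and \(B \subseteq \field{p^m}^+\)) forces \(a \notin B \cup (-B)\)—so your version is slightly more careful, but it is the same argument.
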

\begin{proof}
    Assume without loss of generality that \(B \setminus A \neq \emptyset\); otherwise
    swap \(A\) and \(B\).
    Now, consider an element \(b  \in B \setminus A\) and an \(x \in \field{p^m}\). Clearly, we
    have \(\{x, x + b\} \in E(\cgr{p^m}{B})\) and \(\{x, x + b\} \not\in
    E(\cgr{p^m}{A})\).
\end{proof}

Now, we obtain that the orbits \(E(K_{p^m})/\rotgr{p^m}\) are the difference graphs
of the singleton subsets of~\(\field{p^m}^+\).

\begin{lemma}\label{lem:orbits:Zpa}
    Let $p$ denote a prime and let $m > 0$ denote an integer.
    Further, let \(\rotgr{p^m}\) act on \(E(K_{p^m})\).

    Then, we have \(E(K_{p^m})/\rotgr{p^m} = \{ \cgr{p^m}{\{x\}}  \mid x \in \field{p^m}^+
    \}\).\footnote{Again, we abuse notation and identify subsets of edges with the
    corresponding edge-subgraphs.}
\end{lemma}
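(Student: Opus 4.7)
The plan is to directly compute the orbits by analyzing how rotations act on edges, using the fact that rotations preserve the (unsigned) difference of the endpoints. First I would observe that for any rotation $\varphi_c \in \rotgr{p^m}$ and any edge $\{u, v\} \in E(K_{p^m})$ we have $\varphi_c \{u,v\} = \{u+c, v+c\}$, so the set $\{u-v, v-u\} \subseteq \field{p^m}^*$ is invariant along the orbit. In particular, for each $x \in \field{p^m}^+$, all edges appearing in $E(\cgr{p^m}{\{x\}})$ share the same invariant $\{x, -x\}$, and edges in $E(\cgr{p^m}{\{x\}})$ cannot belong to the same orbit as edges in $E(\cgr{p^m}{\{y\}})$ for $y \in \field{p^m}^+$ with $y \neq x$ (this is essentially \cref{lem:10.15-1}).

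Next I would verify that each $\cgr{p^m}{\{x\}}$ really forms a single orbit, that is, that any two edges sharing the invariant $\{x,-x\}$ are related by some rotation. Given edges $\{u_1, v_1\}$ and $\{u_2, v_2\}$ with $u_1 - v_1 = u_2 - v_2$, the rotation $\varphi_{u_2 - u_1}$ sends the first to the second; if instead $u_1 - v_1 = -(u_2 - v_2) = v_2 - u_2$, the rotation $\varphi_{v_2 - u_1}$ does the job (since then $u_1 + (v_2 - u_1) = v_2$ and $v_1 + (v_2 - u_1) = v_2 + (v_1 - u_1) = v_2 - (u_2 - v_2) = u_2$). In the case $p = 2$ the two subcases coincide because $x = -x$ and $\field{2^m}^+ = \field{2^m}^*$, but the same argument still applies uniformly.

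Finally, I would assemble these two observations. For each $x \in \field{p^m}^+$, the edge-subgraph $\cgr{p^m}{\{x\}}$ is a single orbit; and as $x$ ranges over $\field{p^m}^+$, by definition of $\field{p^m}^+$ the resulting unsigned differences $\{x,-x\}$ cover every nonzero difference exactly once, so together these orbits partition $E(K_{p^m})$. Thus $E(K_{p^m})/\rotgr{p^m} = \{\cgr{p^m}{\{x\}} : x \in \field{p^m}^+\}$, as claimed.

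I do not expect any real obstacle here; the only subtle point is keeping the $p = 2$ case consistent with the general $p$ case, which is handled cleanly by working throughout with the unsigned-difference invariant $\{x,-x\}$ rather than treating the sign of the difference as a meaningful quantity.
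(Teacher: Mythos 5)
Your proof is correct and takes essentially the same approach as the paper's: both identify the unsigned difference $\{u-v,\,v-u\}$ as the invariant of the $\rotgr{p^m}$-action, show each orbit coincides with $E(\cgr{p^m}{\{x\}})$ for some $x\in\field{p^m}^+$, and observe that these partition $E(K_{p^m})$. The only cosmetic difference is that the paper computes the orbit of the canonical representative $\{0,x\}$, while you verify directly that any two edges with the same invariant are connected by a rotation.
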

\begin{proof}
    First, we show that each edge set $E(\cgr{p^m}{\{x\}})$ defines a unique orbit.

    \begin{claim}\label{cl:10.14-1}
        For every \(x \in \field{p^m}^+\),
        we have \( \rotgr{p^m} \cdot\; \{0, x\} = E(\cgr{p^m}{\{x\}}) \).
    \end{claim}
    \begin{claimproof}
        For a \(b \in \field{p^m}\), write \(\varphi_b \in \rotgr{p^m}\) for the
        rotation \(x \mapsto (x + b)\).
        Fix an \(x \in \field{p^m}^+\).
        We compute the orbit of the edge $\{0, x\} \in E(K_{p^m})$ under
        \(\rotgr{p^m}\) as
        \begin{align*}
            \rotgr{p^m} \cdot\; \{0, x\}
            &= \{ \{\varphi_b(0), \varphi_b(x) \}  \mid \varphi_b \in \rotgr{p^m}\} \\
            &= \{\{b, x + b\}  \mid b \in \field{p^m} \} \\
            &= \{\{u, v\} : u, v \in \field{p^m}, u - v = x \}\\
            &= E(\cgr{p^m}{\{x\}}).
            \qedhere
        \end{align*}
    \end{claimproof}
    Now, \cref{cl:10.14-1,lem:10.15-1} indeed yield that the edge sets
    $E(\cgr{p^m}{\{x\}})$ form disjoint orbits.

    Lastly, we check that $\{\cgr{p^m}{\{x\}}  \mid x \in \field{p^m}^+\}$ are all orbits.

    \begin{claim}\label{cl:10.15-2}
        We have \(E(K_{p^m}) = \bigcup_{x \in \field{p^m}^+}  \rotgr{p^m} \cdot\; \{0, x\}\).
    \end{claim}
    \begin{claimproof}
        Consider an arbitrary edge \(\{u, v\} \in E(K_{p^m})\).
        Via the rotation \(\varphi_u\), we have
        \(\{u, v\} \in\; \rotgr{p^m} \cdot\; \{0, v - u\}\);
        via the rotation \(\varphi_v\), we have
        \(\{u, v\} \in\; \rotgr{p^m} \cdot\; \{0, u - v\}.\)
        Finally, we observe that \(\field{p^m}^+\) contains exactly one out of \(u-v\) and
        \(v-u\); completing the proof.
    \end{claimproof}

    Taken together, \cref{cl:10.14-1,cl:10.15-2} yield the desired
    \(E(K_{p^m})/\rotgr{p^m} = \{ \cgr{p^m}{\{x\}}  \mid x \in \field{p^m}^+ \} \).
\end{proof}

Combining \cref{remark:fixed:point:union,lem:orbits:Zpa}, we obtain a classification of
the fixed points \(\fpb{\rotgr{p^m}}{K_{p^m}}\).

\begin{lemma}\label{cor:fix:Hs}
    Let $p$ denote a prime and let $m > 0$ denote an integer.
    Further, let \(\rotgr{p^m}\) act on \(\edgesub(K_{p^m})\).

    Then, we have
    \[\fpb{\rotgr{p^m}}{K_{p^m}} = \{ \cgr{p^m}{A}  \mid A \subseteq \field{p^m}^+\}.\]
\end{lemma}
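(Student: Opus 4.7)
The plan is to combine \cref{remark:fixed:point:union} with \cref{lem:orbits:Zpa}, observing that the bijective correspondence between sub-collections of orbits and fixed points translates, under the parameterization by $A \subseteq \field{p^m}^+$, into the claimed bijective correspondence with difference graphs.

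First, I would invoke \cref{remark:fixed:point:union} to conclude that every fixed point $F \in \fpb{\rotgr{p^m}}{K_{p^m}}$ can be written uniquely as a disjoint union $E(F) = \bigsqcup_{O \in \mathcal{O}} O$ for some sub-collection $\mathcal{O} \subseteq E(K_{p^m})/\rotgr{p^m}$, and conversely every such union yields a fixed point. Combined with \cref{lem:orbits:Zpa}, which identifies $E(K_{p^m})/\rotgr{p^m}$ with $\{\cgr{p^m}{\{x\}} \mid x \in \field{p^m}^+\}$, this means that the fixed points are in natural bijection with subsets $A \subseteq \field{p^m}^+$ via $A \mapsto \bigsqcup_{x \in A} E(\cgr{p^m}{\{x\}})$.

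Next, I would verify that this union coincides exactly with $E(\cgr{p^m}{A})$. By the definition of difference graphs, an edge $\{u, v\}$ lies in $E(\cgr{p^m}{\{x\}})$ if and only if $u - v \in \{x, -x\}$. Therefore $\{u, v\} \in \bigcup_{x \in A} E(\cgr{p^m}{\{x\}})$ if and only if $u - v \in A \cup (-A)$, which is exactly the defining condition for $\{u, v\} \in E(\cgr{p^m}{A})$. Hence the image of the bijection is precisely the set $\{\cgr{p^m}{A} \mid A \subseteq \field{p^m}^+\}$.

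Finally, to ensure the parameterization by $A$ is in bijective agreement with the enumeration of fixed points (that is, to rule out collisions), I would appeal to \cref{lem:10.15-1}, which already establishes that distinct $A$'s yield distinct difference graphs. Combining these three observations gives the claimed equality, and the argument is essentially a bookkeeping exercise: there is no substantial obstacle, since the technical content has been handled in \cref{remark:fixed:point:union,lem:orbits:Zpa,lem:10.15-1}. The only mild care required is keeping the identification between edge sets and edge-subgraphs straight, which the statement's footnote in \cref{lem:orbits:Zpa} already addresses.
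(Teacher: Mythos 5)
Your proof is correct and follows essentially the same route as the paper: combine \cref{remark:fixed:point:union} with \cref{lem:orbits:Zpa} and verify by unfolding the definition that unions of singleton difference graphs $\cgr{p^m}{\{x\}}$ over $x \in A$ give exactly $\cgr{p^m}{A}$. The appeal to \cref{lem:10.15-1} for injectivity is a harmless extra step, though strictly speaking unnecessary for the set-equality claim.
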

\begin{proof}
    From \cref{lem:orbits:Zpa}, we obtain
    \(E(K_{p^m})/\rotgr{p^m} = \{ \cgr{p^m}{\{x\}}  \mid x \in \field{p^m}^+ \}\).

    From \cref{remark:fixed:point:union}, we obtain that each fixed point in
    \(\fpb{\rotgr{p^m}}{K_{p^m}}\) is the (disjoint) union of orbits from
    \(E(K_{p^m})/\rotgr{p^m}\).

    Finally, we readily convince ourselves that for sets \(A, B \subseteq \field{p^m}^+\)
    we have \(\cgr{p^m}{A \cup B} = \cgr{p^m}{A} \cup \cgr{p^m}{B}\).
\end{proof}

\begin{remark}
    Recall that the level of $\cgr{p^m}A$ is the cardinality of $A$. From \cref{cor:fix:Hs} we
    see that, indeed, \(\cgr{p^m}{A}\) (as a fixed point) consists in \(|A|\) orbits and thus
    also has a level of \(|A|\) as a fixed point.

    Also consult \cref{fig:hasse:11} for a visualization of an example.
\end{remark}

Finally, we use that $\cgr{p^m}{A}$ is regular to show a lower bound for the treewidth.
\begin{corollary}\label{lm:10.25-3}
    Let $p$ denote a prime and let $m > 0$ denote an integer.

    Then, every fixed point $\cgr{p^m}{A} \in \fpb{\rotgr{p^m}}{K_{p^m}}$ has
    treewidth of at least \(\hl{A}\).
\end{corollary}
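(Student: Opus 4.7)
I would prove \cref{lm:10.25-3} in three short steps. First, I would establish that $\cgr{p^m}{A}$ is vertex-transitive and regular, and compute the common degree exactly. This follows immediately from the definition of $\cgr{p^m}{A}$: for a fixed vertex $v \in \field{p^m}$, the neighbors of $v$ are precisely the vertices $v + x$ with $x \in A \cup (-A)$. If $p$ is odd, then for every $a \in A \subseteq \field{p^m}^+$ we have $a \neq -a$ and moreover $A \cap (-A) = \emptyset$ by the choice of $\field{p^m}^+$, so the neighborhood of $v$ has size exactly $2|A|$. If $p = 2$, then $-a = a$ holds in $\field{2^m}$ and $A = -A$, so the neighborhood of $v$ has size exactly $|A|$. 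In either case, $\cgr{p^m}{A}$ is regular of degree at least $|A| = \hl{\cgr{p^m}{A}}$.

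Second, I would invoke the standard fact that any graph of treewidth at most $t$ is $t$-degenerate, and hence has a vertex of degree at most $t$; in particular, the minimum degree $\delta(G)$ is a lower bound on the treewidth, i.e.\ $\delta(G) \leq \tw(G)$ for every graph $G$ (see, e.g., \cite[Chapter 7.2]{param_algo}).

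Chaining the two observations, we obtain
\[
    \tw(\cgr{p^m}{A}) \;\geq\; \delta(\cgr{p^m}{A}) \;\geq\; |A| \;=\; \hl{\cgr{p^m}{A}},
\]
which is the claim. I do not expect any real obstacle: once regularity is made explicit and the $\delta \leq \tw$ inequality is quoted, the result is a one-line combination. The only minor subtlety is the case split between $p = 2$ and odd $p$, which is handled by the definition of $\field{p^m}^+$.
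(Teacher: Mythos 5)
Your proof is correct and takes essentially the same approach as the paper's: observe that $\cgr{p^m}{A}$ is regular of degree $2|A|$ (for $p$ odd) or $|A|$ (for $p=2$), then apply a degree-based lower bound on treewidth. The paper quotes this bound as ``a $d$-regular graph has treewidth at least $d$'' while you phrase it as $\delta(G) \le \tw(G)$ via degeneracy, but these are the same standard fact.
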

\begin{proof}
    First, for all $x \in V(\cgr{p^m}{A}) = \field{p^m}$, we obtain that $x$ is adjacent
    to all vertices of the form $x + c$ for $c \in A \cup (-A)$.

    If $p \neq 2$, then $A$ and $-A$ are disjoint by construction of $\field{p^m}^+$,
    hence the graph is $2\ell(A)$-regular. Otherwise, $p = 2$ and $A = -A$, thus the graph
    is $\ell(A)$-regular. According to \cite[Lemma 4]{treewidth} a $d$-regular graph has
    treewidth at least $d$, which proves the claim.
\end{proof}

\begin{figure}[p]
    \centering
    \includegraphics[width=\textwidth]{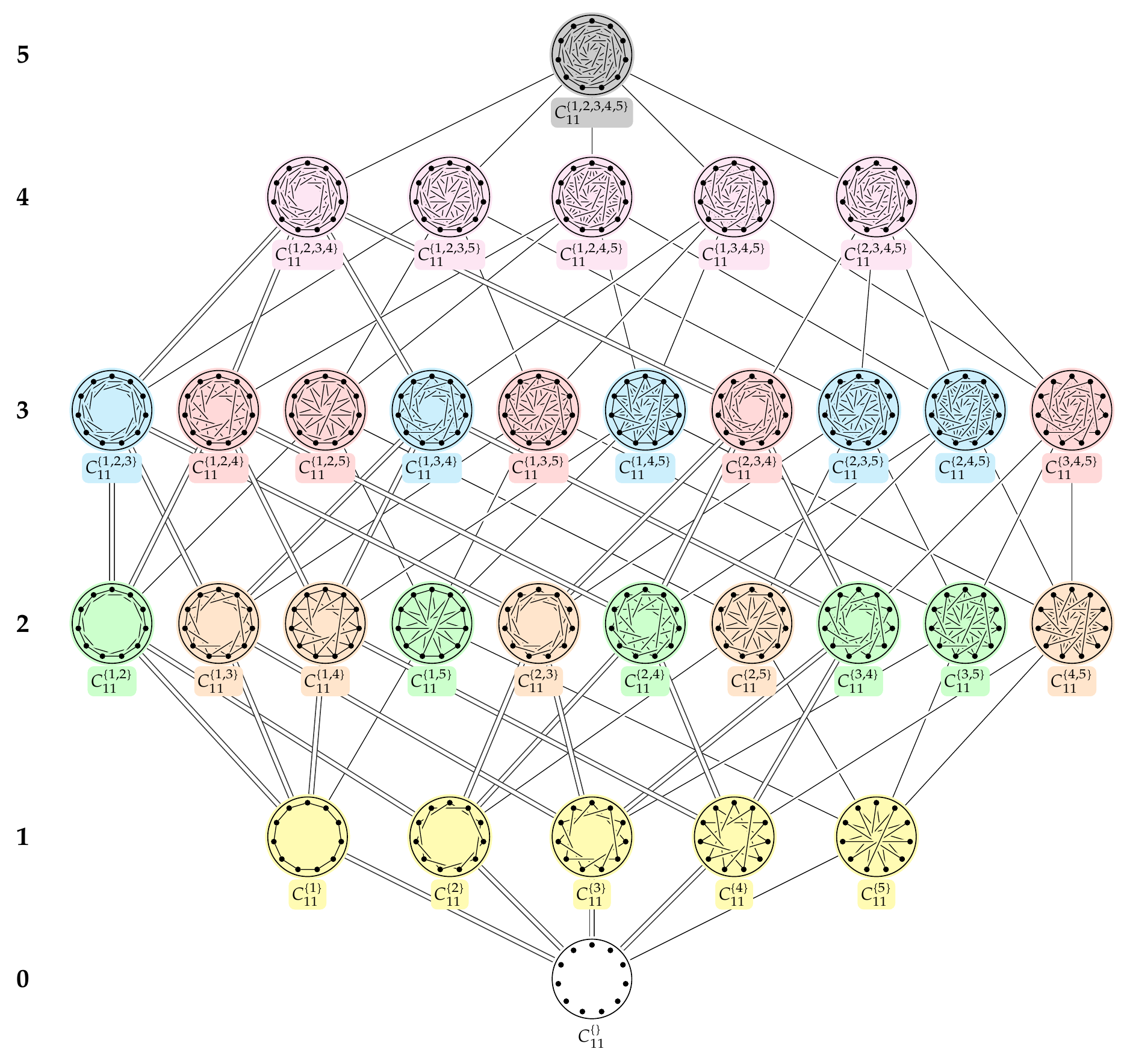}
    \caption{The fixed points \(\fpb{\rotgr{11}}{K_{11}}\) form the same lattice as
        subsets of a 5-element universe. We group the fixed points according to their
        level (which is denoted on the left side).
        Isomorphic graphs are colored with the same color.
        Two fixed points of adjacent levels are connected with an edge if one of them is a
        sub-point of the other. Double edges highlight the sub-points of
        \(\cgr{11}{\{1,2,3,4\}}\). Observe that if this fixed point satisfies an edge-monotone
        property $\Phi$, then every fixed point on the third level (every fixed point in
        red or blue) satisfies $\Phi$ as well.}
     \label{fig:hasse:11}
\end{figure}

\subsection{Nonvanishing Alternating Enumerators via Avalanches}
\label{sec:proof:prime:number}

Next, we wish to exploit our understanding of the fixed points \( \fpb{\rotgr{p^m}}{K_{p^m}}\).
In particular, we intend to show that for an edge-monotone property \(\Phi\), a single high-level
fixed point \(A \in \fpb{\rotgr{p^m}}{K_{p^m}}\) that satisfies \(\Phi\) already implies
that all fixed points of a small level also satisfy \(\Phi\).
We can exploit this ``avalanche'' effect to obtain another criterion for \w-hardness of \(\NUM{}\indsubsprob(\Phi)\).
To that end, we first need to understand when fixed points of
\(\fpb{\rotgr{p^m}}{K_{p^m}}\) are isomorphic.

\begin{definition}
    Let $p$ denote a prime and let $m > 0$ denote an integer.
    Two sets $A, B \subseteq \field{p^m}^+$ are \emph{isomorphic}, denoted by \(A
    \scalesim
    B\), if there is a \(\lambda \in \field{p^m}^\ast\) such that
    \(\lambda \cdot (A \cup (-A)) = B \cup (-B)\).
\end{definition}
\begin{lemma}\label{lem:circ_iso}
    Let $p$ denote a prime and let $m > 0$ denote an integer.
    For any two sets $A, B \subseteq \field{p^m}^+$ with \(A \scalesim B\), we have
    \(\cgr{p^m}{A} \cong \cgr{p^m}{B}\).
    \end{lemma}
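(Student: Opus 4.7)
The plan is to exhibit an explicit isomorphism given by multiplication by the scaling factor $\lambda$. By assumption, $A \scalesim B$ means there exists $\lambda \in \field{p^m}^\ast$ with $\lambda \cdot (A \cup (-A)) = B \cup (-B)$. I would define the map $\varphi \colon V(\cgr{p^m}{A}) \to V(\cgr{p^m}{B})$ on the common vertex set $\field{p^m}$ by $\varphi(v) \coloneqq \lambda v$, where the multiplication is in the field $\field{p^m}$.

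First I would verify that $\varphi$ is a bijection: since $\lambda \in \field{p^m}^\ast$ is invertible, the map $v \mapsto \lambda v$ has the inverse $v \mapsto \lambda^{-1} v$, so $\varphi$ is a bijection on $\field{p^m}$. Next I would verify the edge condition. By definition of the difference graph, $\{u, v\} \in E(\cgr{p^m}{A})$ iff $u - v \in A \cup (-A)$, and $\{\varphi(u), \varphi(v)\} = \{\lambda u, \lambda v\} \in E(\cgr{p^m}{B})$ iff $\lambda u - \lambda v = \lambda(u-v) \in B \cup (-B)$. Using the assumption $\lambda \cdot (A \cup (-A)) = B \cup (-B)$, the condition $u - v \in A \cup (-A)$ is equivalent to $\lambda (u-v) \in B \cup (-B)$, so $\varphi$ preserves adjacency in both directions.

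This shows $\varphi$ is an isomorphism, and thus $\cgr{p^m}{A} \cong \cgr{p^m}{B}$. I do not expect any real obstacle here: the result is essentially the statement that multiplication by a field unit is a graph automorphism between two difference graphs whose (symmetrized) connection sets are related by this scalar. The only mild subtlety to check is that $\lambda(-A) = -\lambda A$, which is immediate from the field axioms and makes the rewriting $\lambda(A \cup (-A)) = (\lambda A) \cup (-\lambda A)$ valid; this is automatically built into the hypothesis.
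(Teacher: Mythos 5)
Your proposal is correct and follows essentially the same approach as the paper: both define the map $v \mapsto \lambda v$ and verify it is a bijection that respects the edge relation via $\lambda(u-v) \in \lambda(A \cup (-A)) = B \cup (-B)$. If anything, your version is marginally cleaner because you argue the biconditional on edges directly, whereas the paper shows the map is a homomorphism and then separately argues surjectivity on edges to conclude it is an isomorphism.
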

\begin{proof}
    By definition, there is a \(\lambda \in \field{p^m}^+\) such that
    \(\lambda \cdot (A \cup (-A)) = B \cup (-B)\).

    Now, consider the function
    $\varphi_{\lambda} \colon V(K_{p^m}) \to V(K_{p^m})$ with $x \mapsto  \lambda x$.
    As \(\lambda\) is invertible,
    \(\varphi_{\lambda}\) is an automorphism of \(K_{p^m}\) and thus bijective.

    Next, for any \(u, v  \in \field{p^m}\) with \(u-v \in A \cup (-A)\),
    we have $\lambda u - \lambda v = \lambda(u - v) \in \lambda (A \cup (-A)) =  B \cup
    (-B)$.
    Hence, $\varphi_{\lambda}$ maps each edge $\{u, v\}$ of $\cgr{p^m}{A}$ to the
    edge $\{\lambda u, \lambda v\}$ of $\cgr{p^m}{B}$.
    Thus, \(\varphi_{\lambda}\) is also a homomorphism from \(\cgr{p^m}{A}\) to
    \(\cgr{p^m}{B}\);
    which is also surjective as \(|\lambda \cdot (A \cup (-A))| = |B \cup (-B)|\).

    In total, \(\varphi_{\lambda}\) is a homomorphism that is both surjective and
    injective; thus \(\varphi_{\lambda}\) is an isomorphism.
\end{proof}

Next, we show the aforementioned ``avalanche'' effect.
Our key statement shows that every sufficiently small $B$ set is isomorphic to a subset of $A$.
\begin{lemma}\label{lm:10.22-3}
    Let $p$ denote a prime, let $m > 0$ denote an integer, and
    let \(A \subsetneq \field{p^m}^+\) denote a set.

    For every \(B \subseteq \field{p^m}^+\) with
    \(|B| < |\field{p^m}^+|/ (|\field{p^m}^+| - |A|)\),
    there is an isomorphic set \(B' \scalesim B\) with \(B' \subseteq A\).
\end{lemma}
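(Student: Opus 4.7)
Recall that $B' \scalesim B$ means there is a scalar $\lambda \in \field{p^m}^\ast$ with $\lambda(B \cup (-B)) = B' \cup (-B')$. I will reduce the claim to finding $\lambda \in \field{p^m}^\ast$ such that $\lambda (B \cup (-B)) \subseteq A \cup (-A)$, and then produce such a $\lambda$ by a straightforward union-bound counting argument. The main (and only real) obstacle is to get the constants right so that the threshold $|\field{p^m}^+|/(|\field{p^m}^+|-|A|)$ on $|B|$ is tight, since a naive union bound loses a factor of two.

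\textbf{Step 1 (reduction).} Suppose we have $\lambda \in \field{p^m}^\ast$ with $\lambda(B \cup (-B)) \subseteq A \cup (-A)$. Set
\[
    B' \coloneqq A \cap \lambda(B \cup (-B)).
\]
Since both $A \cup (-A)$ and $\lambda(B \cup (-B))$ are symmetric under negation, and $\field{p^m}^+$ contains exactly one of $x, -x$ for every $x \in \field{p^m}^\ast$, the set $B'$ satisfies $B' \subseteq A \subseteq \field{p^m}^+$ and $B' \cup (-B') = \lambda(B \cup (-B))$. This gives $B' \scalesim B$ with $B' \subseteq A$ as desired. (If $p = 2$, the negation is the identity and the argument is identical but simpler.)

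\textbf{Step 2 (producing $\lambda$).} Call $\lambda \in \field{p^m}^\ast$ \emph{bad} if some element of $\lambda(B \cup (-B))$ lies outside $A \cup (-A)$. For each edge-difference $b \in B \cup (-B)$, the number of $\lambda \in \field{p^m}^\ast$ with $\lambda b \notin A \cup (-A)$ is exactly $|\field{p^m}^\ast| - |A \cup (-A)|$, since multiplication by $b^{-1}$ is a bijection of $\field{p^m}^\ast$. Crucially, for $p \neq 2$ the constraint coming from $b$ coincides with the one coming from $-b$ because $A \cup (-A)$ is symmetric; hence the $|B \cup (-B)| = 2|B|$ elements yield only $|B|$ distinct constraints. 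For $p = 2$ there is no such doubling, but then $B = -B$ and $A = -A$, so again the effective number of constraints is $|B|$.

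\textbf{Step 3 (counting).} By the union bound, the number of bad $\lambda$ is at most
\[
    |B| \cdot \bigl(|\field{p^m}^\ast| - |A \cup (-A)|\bigr).
\]
Using \cref{ft:10.25-1}, for $p \neq 2$ we have $|\field{p^m}^\ast| = 2|\field{p^m}^+|$ and $|A \cup (-A)| = 2|A|$, while for $p = 2$ both factors of $2$ disappear; in either case the right-hand side equals
\[
    |B| \cdot \frac{|\field{p^m}^\ast|}{|\field{p^m}^+|} \cdot \bigl(|\field{p^m}^+| - |A|\bigr).
\]
Under the hypothesis $|B| < |\field{p^m}^+|/(|\field{p^m}^+| - |A|)$ this is strictly less than $|\field{p^m}^\ast|$, so a good $\lambda$ exists. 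Plugging it into Step 1 finishes the proof.
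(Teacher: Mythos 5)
Your proof is correct and takes essentially the same approach as the paper: both use a union-bound/double-counting argument over scalars \(\lambda \in \field{p^m}^{\ast}\) to show that a ``good'' multiplier exists, with the factor-of-two coming from \(\pm\)-symmetry canceling on both sides. Your bookkeeping (working directly with \(\lambda\), noting that \(b\) and \(-b\) impose the same constraint, and extracting \(B' = A \cap \lambda(B \cup (-B))\)) is marginally more streamlined than the paper's ``sullied vectors'' framework with the representative map \((\star)^+\), but the underlying idea and the constant accounting are identical.
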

\begin{proof}
    Let \(B \subseteq \field{p^m}^+\) denote a set with
    \(|B| < |\field{p^m}^+| / (|\field{p^m}^+| - |A|)\).
    We order \(B\) arbitrarily and define a vector \(\vec{b}  \in
    (\field{p^m}^{\ast})^{|B|}\) via
    \[
    \vec{b} \coloneqq (B\position{0}, \dots, B\position{|B|-1}).
    \]
    It is easy but instructive to confirm that multiplying \(\vec{b}\) with different
    nonzero elements yield vectors that differ at every position.
    \begin{claim}\label{cl:10.22-1}
        Let \(\lambda, \mu \in \field{p^m}^{\ast}\) and
        let \(i \in \fragmentco{0}{|B|}\) denote a position.

        If we have \((\lambda \vec{b})_i = (\mu \vec{b})_i\), then \(\lambda = \mu\).
    \end{claim}
    \begin{claimproof}
        By construction, each element \(\vec{b}_i \in \field{p^m}^{\ast}\) is invertible;
        which yields the claim.
    \end{claimproof}

    Next, write \((\star)^+\) for the function that maps each element from
    \(\field{p^m}^{\ast}\) to its corresponding representative in~\(\field{p^m}^+\).
    Now, write
    \(\field{p^m}^{\ast} \vec{b} \coloneqq \{ \lambda \vec{b} \mid \lambda \in \field{p^m}^{\ast} \}\)
    for the set of multiples of \(\vec{b}\)
    and write
    \((\field{p^m}^{\ast} \vec{b})^+ \coloneqq \{ (\lambda \vec{b})^+ \mid \lambda \in \field{p^m}^{\ast} \}\)
    for the set of all multiples of \(\vec{b}\) when ``ignoring the signs'' of the entries
    of the multiples.
    In particular, each vector in \((\field{p^m}^{\ast} \vec{b})^+\) corresponds to some
    \(B' \subseteq \field{p^m}^+\) with \(B \scalesim B'\) (which need not be different
    for each vector).

    \begin{claim}\label{cl:10.22-2}
        We have
        \(|\field{p^m}^{\ast} \vec{b}| = |\field{p^m}^{\ast}|\) and \(|(\field{p^m}^{\ast}
        \vec{b})^+| \ge |\field{p^m}^{+}|\).
    \end{claim}
    \begin{claimproof}
        The first equality is immediate from \cref{cl:10.22-1} (applied to the first
        elements of the vectors in \(\field{p^m}^{\ast} \vec{b}\)).

        For the second equality, first consider the case \(p = 2\). Now, we have
        \(\field{2^m}^{\ast} = \field{2^m}^+\) and \((\star)^+\) is the identity; which
        yields the claim.

        Now, for \(p \neq 2\),  we have \(|\field{p^m}^{\ast}| = 2|\field{p^m}^+|\).
        Further, \((\star)^+\) identifies at most 2 different elements from
        \(\field{p^m}^{\ast}\) to the same element from \(\field{p^m}^{+}\); which in
        particular holds for the first elements of the vectors in \(\field{p^m}^{\ast}
        \vec{b}\).
        Taken together, we obtain the claim.
    \end{claimproof}

    Next, we say that an element \(s \in \field{p^m}^+ \setminus A \) \emph{sullies} a vector \(\lambda
    \vec{b} \in (\field{p^m}^{\ast} \vec{b})^+\) if
    there is a position \(i \in \fragmentco{0}{|B|}\) with
    \((\lambda \vec{b})_i = s\).
    Similarly, we say that an element \(s \in \field{p^m}^+ \setminus A\) \emph{sullies} a position
    \(i \in \fragmentco{0}{|B|}\) if there is a
    vector \(\lambda \vec{b} \in (\field{p^m}^{\ast} \vec{b})^+\) with \((\lambda \vec{b})_i = s\).

    We observe that we can prove the lemma by showing that there is a vector in \(
    (\field{p^m}^{\ast} \vec{b})^+\) that is not sullied by any  \(s \in \field{p^m}^+
    \setminus A \).

    To that end, we observe that each element \(s \in \field{p^m}^+\) sullies at most one position (by
    construction, all entries of \(\vec{b}\) are pairwise different) and at most one
    vector (by \cref{cl:10.22-1}).
    In particular, this means that each element \(s \in \field{p^m}^+\) sullies at most
    \(|B|\) vectors of \((\field{p^m}^{\ast} \vec{b})^+\).
    Hence, in total, at most  \(|B| \cdot |\field{p^m}^+ \setminus A|\) vectors of \(
    (\field{p^m}^{\ast} \vec{b})^+\) are sullied.
    Finally, we plug in
    \(|B| < |\field{p^m}^+| / (|\field{p^m}^+| - |A|)\) to obtain
    \[
        |B| \cdot |\field{p^m}^+ \setminus A| < |\field{p^m}^+| \le |(\field{p^m}^{\ast}
        \vec{b})^+|;
    \] from which we conclude that, indeed, there is a vector in \((\field{p^m}^{\ast}
    \vec{b})^+\) that is not sullied. This in turn completes the proof.
\end{proof}

We combine \cref{lem:circ_iso,lm:10.22-3}, to readily obtain that any (nontrivial) edge-monotone
graph property has to be true for all difference graphs with a small level.

\begin{corollary}\label{high_difference}
    Let $p$ denote a prime, let $m > 0$ denote an integer, and
    let \(A \subsetneq \field{p^m}^+\) denote a set.
    Further, let \(\Phi\) denote an edge-monotone graph property with \(\Phi(\cgr{p^m}{A})
    = 1\).

    For every \(B \subseteq \field{p^m}^+\) with
    \(|B| < |\field{p^m}^+|/ (|\field{p^m}^+| - |A|)\),
    we have \(\Phi(\cgr{p^m}{B}) = 1\).
\end{corollary}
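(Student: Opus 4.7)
The proof plan is a short, clean combination of the two preceding lemmas. The hypothesis on the size of $B$ is exactly the hypothesis needed to invoke \cref{lm:10.22-3}, so the first step is to apply that lemma to obtain a set $B' \subseteq \field{p^m}^+$ with $B' \scalesim B$ and $B' \subseteq A$. This is the whole content of the argument; everything else is bookkeeping.

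Next, I would translate the set-level statements into graph-level statements. From $B' \scalesim B$ and \cref{lem:circ_iso}, we have $\cgr{p^m}{B'} \cong \cgr{p^m}{B}$. From $B' \subseteq A$, the definition of difference graphs immediately gives $E(\cgr{p^m}{B'}) \subseteq E(\cgr{p^m}{A})$ (the two graphs share the vertex set $\field{p^m}$), so $\cgr{p^m}{B'}$ is an edge-subgraph of $\cgr{p^m}{A}$.

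Now I would close the argument by invoking edge-monotonicity of $\Phi$ twice. Since $\Phi(\cgr{p^m}{A}) = 1$ and $\cgr{p^m}{B'}$ is obtained from $\cgr{p^m}{A}$ by deleting edges, edge-monotonicity gives $\Phi(\cgr{p^m}{B'}) = 1$. Since $\Phi$ is a graph property (hence invariant under isomorphism) and $\cgr{p^m}{B'} \cong \cgr{p^m}{B}$, we conclude $\Phi(\cgr{p^m}{B}) = 1$, as required.

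There is no real obstacle here: the technical work was done in \cref{lm:10.22-3}, which establishes exactly the right existence statement, and in \cref{lem:circ_iso}, which ensures that $\scalesim$-equivalence respects the graph structure. The only thing to double-check is the edge case that \cref{lm:10.22-3} only applies when $A \subsetneq \field{p^m}^+$; but if $A = \field{p^m}^+$, then every $B \subseteq \field{p^m}^+$ already satisfies $B \subseteq A$, and the conclusion follows directly from edge-monotonicity without needing the isomorphism step. (Note that the bound $|B| < |\field{p^m}^+|/(|\field{p^m}^+|-|A|)$ is vacuous when $|A|=|\field{p^m}^+|$, so this degenerate case is actually excluded by the stated hypothesis anyway.)
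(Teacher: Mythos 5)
Your proof is correct and follows exactly the same route as the paper: apply \cref{lm:10.22-3} to get $B' \scalesim B$ with $B' \subseteq A$, then use \cref{lem:circ_iso} and edge-monotonicity to transfer $\Phi$ from $\cgr{p^m}{A}$ to $\cgr{p^m}{B}$. The remark about the degenerate case $A = \field{p^m}^+$ being vacuous is a nice sanity check but not needed given the stated hypothesis $A \subsetneq \field{p^m}^+$.
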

\begin{proof}
    Let \(B \subseteq \field{p^m}^+\) denote a set with
    \(|B| < |\field{p^m}^+| / (|\field{p^m}^+| - |A|)\).

    From \cref{lm:10.22-3}, we obtain an isomorphic subset \(B' \scalesim B\) with \(B'
    \subseteq A\). From \cref{lem:circ_iso}, we obtain \(\cgr{p^m}{B'} \cong
    \cgr{p^m}{B}\). Combined, we thus obtain that \(\cgr{p^m}{B}\) is isomorphic to an edge-subgraph of
    \(\cgr{p^m}{A}\). Finally, as \(\Phi\) is edge-monotone, we obtain
    \(\Phi(\cgr{p^m}{B}) = 1\); which completes the proof.
\end{proof}

Next, we combine \cref{high_difference,remark:chi:all:children:true} to obtain another
criterion for \w-hardness of \(\NUM{}\indsubsprob(\Phi)\).

\lemlevelhighprime
\begin{proof}
    First, we recall that we have
    \(\hl{\cgr{p^m}{A}} = \hl{A} = |A|\).
    Further, we have
    \[
        c = \frac{cd}{d} \le \frac{ |\field{p^m}^+| }{|\field{p^m}^+| - (
        |\field{p^m}^+| - d )} \le
    \frac{ |\field{p^m}^+| }{ |\field{p^m}^+|  - |A|}.
\]

    Now, from \cref{high_difference}, we obtain that $\Phi(\cgr{p^m}{S}) = 1$ for all $S$ with
    \(|S| < c \le |\field{p^m}^+| / (|\field{p^m}^+| - |A|)\).

    As $\Phi$ is nontrivial on $p^m$, we have $\Phi(\cgr{p^m}{\field{p^m}^+})
    = \Phi(K_{p^m}) = 0$.
    Thus, there is a minimal level \(b \ge c\) on which not every fixed point satisfies \(\Phi\).
    Consider such a fixed point $\cgr{p^m}{B}$ with level \(\hl{B} = b\) that does not
    satisfy \(\Phi\).
    As \(b\) is minimal, all proper sub-points of \(\cgr{p^m}{B}\) do satisfy \(\Phi\).
    Hence, we may use \cref{remark:chi:all:children:true} to obtain
    $\ae{\Phi}{\cgr{p^m}{B}} \not \equiv_{p} 0$; thus completing the proof.
\end{proof}

On the one hand, we may use \cref{lem:level:high:prime_case} whenever we have a fixed point that
satisfies $\Phi$ and has a high level.
On the other hand, we may use \cref{lem:chi:basic} if there is no fixed point with a high
level that satisfies $\Phi$.
Hence, if we combine \cref{lem:level:high:prime_case,lem:chi:basic}, then we can show that
there is a fixed point with a nonvanishing alternating enumerator and a treewidth of
roughly ${p^{m/2}}$ if $\Phi$ is nontrivial on $p^m$.

\begin{corollary}\label{corollary:fp:sqrt:tw}
    Let $p$ denote a prime, let $m > 0$ denote an integer,
    and let \(\Phi\) denote an edge-monotone graph property that is nontrivial on \(p^m\).

    Then, there is a fixed point $\cgr{p^m}{A} \in \fpb{\rotgr{p^m}}{K_{p^m}}$ that has a
    nonvanishing alternating enumerator and a treewidth of at least $p^{m/2}/2 - 2$.
\end{corollary}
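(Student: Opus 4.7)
The plan is a win/win argument combining \cref{lem:level:high:prime_case} and \cref{lem:chi:basic}, balanced around the threshold $c \coloneqq \lfloor \sqrt{|\field{p^m}^+|} \rfloor$. Before starting, I would make two preparatory observations. First, because $\Phi$ is edge-monotone and nontrivial on $p^m$, we must have $\Phi(\IS_{p^m}) = 1$ and $\Phi(K_{p^m}) = 0$: some $p^m$-vertex graph satisfies $\Phi$, which by edge-monotonicity forces the edgeless graph to do so as well; dually, $\Phi(K_{p^m}) = 1$ would force $\Phi$ to be constantly $1$ on all $p^m$-vertex graphs, contradicting nontriviality. In particular, $\Phi(\emptyset) = 1$ inside $K_{p^m}$, so \cref{lem:chi:basic} is applicable. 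Second, by \cref{cor:fix:Hs}, the level of $K_{p^m}$ is exactly $|\field{p^m}^+|$, and the required inequality $c < |\field{p^m}^+|$ holds whenever $|\field{p^m}^+| \geq 2$, i.e., for $p^m \geq 4$; the residual cases $p^m \in \{2, 3\}$ are handled trivially since the target bound $p^{m/2}/2 - 2$ is negative and \cref{lem:chi:comp} applied to $K_{p^m}$ already produces a nonvanishing fixed point.

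For $p^m \geq 4$, I would then perform a case distinction based on whether some high-level fixed point satisfies $\Phi$. In Case~1, suppose there is a fixed point $\cgr{p^m}{A}$ with $\hl{\cgr{p^m}{A}} \geq |\field{p^m}^+| - c$ and $\Phi(\cgr{p^m}{A}) = 1$. Since $c^2 \leq |\field{p^m}^+|$ by construction, I apply \cref{lem:level:high:prime_case} with parameters $c$ and $d \coloneqq c$ to obtain a sub-point $\cgr{p^m}{B}$ with $\hl{\cgr{p^m}{B}} \geq c$ and $\ae{\Phi}{\cgr{p^m}{B}} \not\equiv_p 0$. In Case~2, every fixed point $F$ with $\hl{F} \geq |\field{p^m}^+| - c$ satisfies $\Phi(F) = 0$, so in particular $\Phi(F) = 0$ holds for all $F$ with $\hl{F} > |\field{p^m}^+| - c$; then \cref{lem:chi:basic} with threshold $c$ yields a fixed point $S \in \fpb{\rotgr{p^m}}{K_{p^m}}$ with $\hl{S} \geq c$ and $\ae{\Phi}{S} \not\equiv_p 0$. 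In either case, I obtain a nonvanishing fixed point of $\rotgr{p^m}$ in $K_{p^m}$ of level at least $c$.

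Finally, I would apply \cref{lm:10.25-3} to convert the level bound into a treewidth bound of at least $c$, and close with a short arithmetic check: using $|\field{p^m}^+| \geq (p^m - 1)/2 \geq p^m/4$ (valid for all $p^m \geq 2$) and $\lfloor x \rfloor \geq x - 1$, we obtain $c \geq \sqrt{p^m/4} - 1 = p^{m/2}/2 - 1 \geq p^{m/2}/2 - 2$, which is the desired bound. The main obstacle---such as it is---lies in choosing the balanced thresholds $c = d$ so that Case~1 and Case~2 deliver matching level bounds on $\hl{\cdot}$; everything else, including the bookkeeping for $p^m \in \{2,3\}$ and the translation from levels to treewidth, is essentially mechanical.
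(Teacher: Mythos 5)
Your proposal follows the same win/win dichotomy as the paper: balance $c = d = \lfloor\sqrt{|\field{p^m}^+|}\rfloor$, invoke \cref{lem:level:high:prime_case} when some high-level fixed point satisfies $\Phi$ and \cref{lem:chi:basic} otherwise, then pass from level to treewidth via \cref{lm:10.25-3}, with essentially the same closing arithmetic. One small remark: for the residual cases $p^m \in \{2,3\}$ (which the paper does not treat explicitly, and where its choice $c = d = 1$ actually violates the hypothesis $c < \hl{K_{p^m}}$ of \cref{lem:chi:basic}), the lemma you cite, \cref{lem:chi:comp}, does not by itself produce a nonvanishing fixed point --- the cleaner tool is \cref{remark:chi:all:children:true} applied to a minimal-level fixed point violating $\Phi$, or simply observing that $\ae{\Phi}{\IS_{p^m}} = 1$ and $\tw(\IS_{p^m}) = 0$ already meets the negative target bound.
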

\begin{proof}
    We show that there is a fixed point $\cgr{p^m}{A}$ with $\ae{\Phi}{\cgr{p^m}{A}} \not
    \equiv_{p} 0$ and $\hl{A} \geq p^{m/2}/2 - 2$.
    Then, the claim follows from \cref{lm:10.25-3}.

    First, suppose that a fixed point $\cgr{p^m}{S}$ with $\hl{S} \geq |\field{p^m}^+| - d$ satisfies \(\Phi\).
    As the property \(\Phi\) is nontrivial on \(p^m\), this means that we can use
    \cref{lem:level:high:prime_case} to obtain a fixed point
    $\cgr{p^m}{A}$ with $\ae{\Phi}{\cgr{p^m}{A}} \not \equiv_{p} 0$ and $\hl{A} \geq c$
    for any \(c\) with \(cd \le |\field{p^m}^+|\).

    Next, suppose that no fixed point $\cgr{p^m}{S}$ with $\hl{S} \geq n - d$ satisfies \(\Phi\).
    As the property \(\Phi\) is nontrivial on \(p^m\), this means that we can use \cref{lem:chi:basic}
    to obtain a fixed point $\cgr{p^m}{A}$ with
    $\ae{\Phi}{\cgr{p^m}{A}} \not \equiv_p 0$ and $\Hasselevel(A) \geq d$.

    In both cases we obtain a nonvanishing fixed point $\cgr{p^m}{A}$ with level at
    least $c$.
    Choosing \(c \coloneqq d\),
    in both cases, we obtain a nonvanishing fixed point $\cgr{p^m}{A}$ with level at
    least $d$.

    Finally, to ensure that \(cd = d^2\) is at most \(|\field{p^m}^+|\),
    set \(d \coloneqq \lfloor{{|\field{p^m}^+|}^{1/2}}\rfloor\).
    Recalling \cref{ft:10.25-1}, we observe
    \begin{align*}\label{eq:10.25-2}
        d &= \lfloor{{|\field{p^m}^+|}^{1/2}}\rfloor
        \ge |\field{p^m}^+|^{1/2} - 1
        \ge (p^m/2 - 1)^{1/2} - 1
        \ge p^{m/2}/2 - 2.
    \end{align*}

    Now, the claim follows from \cref{lm:10.25-3}.
\end{proof}

\subsection{\NUM{}W[1]-hardness and Quantitative Lower Bounds for Prime Powers}

\Cref{corollary:fp:sqrt:tw} directly implies our first result.

\thmedgemonprimecase
\begin{proof}
    We start with \w-hardness.
    By assumption on \(\Phi\), for every $k \in \nat$, there is a prime number~$p_k$ and
    a positive integer $m_k$ such that \(\Phi\) is nontrivial on $p_k^{m_k} \geq k$.

    From \cref{corollary:fp:sqrt:tw}, we obtain a graph $H_k \coloneqq \cgr{p_k^{m_k}}{A}$
    with \(p_k^{m_k}\) vertices,
    nonvanishing alternating enumerator, and $\tw(H_k) \geq p_k^{m_k/2} / 2 - 2 \ge \sqrt{k}/2 - 2$.
    Now, using the constructed sequence of graphs,
    \cref{lem:alpha:treewidth} yields \w-hardness of $\NUM{}\indsubsprob(\Phi)$.

    We proceed to the ETH-based lower bound.
    To that end,
    write $\alpha_{\indsubsprob} > 0$ for the constant from \cref{lem:alpha:treewidth}
    and set $\alpha' \coloneqq \alpha_{\indsubsprob}/3$.

    Now, fix a prime power \(k \ge 12^2\) such that \(\Phi\) is nontrivial on \(k\).
    From \cref{corollary:fp:sqrt:tw}, we obtain a graph $H_k \coloneqq \cgr{k}{A}$ with
    \(k\) vertices,
    nonvanishing alternating enumerator, and
    \[
        \tw(H_k) \geq \sqrt{k}/2 - 2 \geq \sqrt{12^2} / 2 - 2 \geq 2.
    \]
    Thus, by \cref{lem:alpha:treewidth} and assuming ETH,
    there is no algorithm that for each graph \(G\) computes the number $\NUM{}\indsubs{(\Phi,
    k)}{G}$ in
    time $O(|V(G)|^{\alpha_{\indsubsprob} \tw(H_k) / \log\tw(H_k)})$.
    We complete the proof by showing the following inequality.

    \begin{claim}\label{10.27-2}
        For \(12^2 \le k\), we have \[
        \alpha' \frac{\sqrt{k}}{\log k} = \alpha_{\indsubsprob} \frac{\sqrt{k}/3}{\log k } \le \alpha_{\indsubsprob} \frac{\tw(H_k)}{\log\tw(H_k)}.
        \]
    \end{claim}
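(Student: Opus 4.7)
The plan is to bound $\tw(H_k)$ from below by a clean quantity like $\sqrt{k}/3$ and then invoke monotonicity of $f(x) = x/\log x$. From the construction in the theorem we already have $\tw(H_k) \ge \sqrt{k}/2 - 2$. For $k \ge 144 = 12^2$, this simplifies to $\tw(H_k) \ge \sqrt{k}/3$, because $\sqrt{k}/2 - 2 \ge \sqrt{k}/3$ is equivalent to $\sqrt{k}/6 \ge 2$, i.e., $\sqrt{k} \ge 12$. Along the way I would note that this lower bound is at least $4$, hence comfortably above $e$.

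Next, since $f(x) = x/\log x$ is monotonically increasing for $x > e$ (its derivative $(\log x - 1)/(\log x)^2$ is positive there), applying it to $\tw(H_k) \ge \sqrt{k}/3$ yields
\[
    \frac{\tw(H_k)}{\log \tw(H_k)} \;\ge\; \frac{\sqrt{k}/3}{\log(\sqrt{k}/3)}.
\]
The argument then finishes by observing that $\log(\sqrt{k}/3) \le \log k$ (trivially, since $\sqrt{k}/3 \le k$ for $k \ge 1$), which turns the right-hand side into $(\sqrt{k}/3)/\log k = \sqrt{k}/(3\log k)$. Multiplying by $\alpha_{\indsubsprob}$ and recalling $\alpha' = \alpha_{\indsubsprob}/3$ then gives the claim.

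There is no genuine obstacle here: the threshold $12^2$ in the hypothesis is tuned precisely to convert the $\sqrt{k}/2 - 2$ construction bound into the cleaner $\sqrt{k}/3$, and the rest of the argument is one invocation of monotonicity of $x/\log x$ combined with a trivial logarithm comparison. The only minor care needed is to verify that both $\tw(H_k)$ and $\sqrt{k}/3$ are at least $e$ so that monotonicity applies without edge cases.
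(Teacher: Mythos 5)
Your proof is correct and follows essentially the same approach as the paper: both establish $e < \sqrt{k}/3 \le \tw(H_k)$ for $k \ge 12^2$, invoke monotonicity of $x/\log x$ on $[e,\infty)$, and finish with the trivial bound $\log(\sqrt{k}/3) \le \log k$.
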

    \begin{claimproof}
        For \(k \ge 12^2\), we have $\sqrt{k}/3 \leq \sqrt{k}/2 - 2$. Further, we define
        $h_1 \colon \mathbb{R}_{> 1} \to \mathbb{R}, x \mapsto x / \log(x)$. The
        derivative of this function is $(\log(x) - 1)/(\log^2(x))$, thus the function is
        monotonically increasing for $x \geq \mathrm{e}$. Since $\mathrm{e} < \sqrt{k}/3
        \leq \sqrt{k}/2 - 2 \leq \tw(H_k) $, we obtain
        \[
            \alpha_{\indsubsprob} \frac{\sqrt{k}/3}{\log(k)}
            \le
            \alpha_{\indsubsprob} \frac{\sqrt{k}/3}{\log(\sqrt{k}/3)}
            \le
            \alpha_{\indsubsprob} \frac{\sqrt{k}/2 - 2}{\log( \sqrt{k}/2 - 2)}
            \le
            \alpha_{\indsubsprob} \frac{\tw(H_k)}{\log \tw(H_k)}
        ;\]
        which completes the proof.
    \end{claimproof}

    Now, from \cref{10.27-2}, we obtain
    \[
       O(|V(G)|^{\alpha' \sqrt{k} / \log k}) \subseteq O(|V(G)|^{\alpha_{\indsubsprob} \tw(H_k) / \log \tw(H_k)}). \]
    Finally,
    to obtain the claim also for all \(3 \le k < 12^2\), we chose \(\alpha \coloneqq
    \min( \alpha', 1 / 12)\).
    Observe that for \(k < 12^2\), we obtain
    \[
         O(|V(G)|^{\alpha \sqrt{k} / \log k}) = o( |V(G)| ).
     \]
    Now, such a running time is unconditionally unachievable
    for any algorithm that reads
    the whole input.
    This completes the proof.
\end{proof}

\section{Main Result 1: \NUM{}W[1]-hardness for Edge-monotone Properties}
\label{sec:mainresult1}
In this section, we prove \cref{theo:edge_mono}.

\thmedgemon*
\medskip

We prove \cref{theo:edge_mono} by raising the techniques of \cref{sec:difference:graphs}
from prime powers $p^m$ to multiples $d\cdot p^m$ of prime powers.
In \cref{sec:product,sec:fixedpoint:kdpm}, we introduce group-theoretic and
graph-theoretic concepts that allow us to do this generalization.
While the notions and the arguments are fairly natural, the statements and the proofs come
with some notational overhead.
One important point to keep in mind is that the resulting fixed points on $d\cdot p^m$
vertices are not just the disjoint unions of $d$ fixed points on $p^m$ vertices, but may
contain complete bipartite graphs between some of the copies.

In \cref{sec:scattered,sec:main1prof}, we exploit our understanding of these fixed points
to prove \cref{theo:edge_mono}.
Compared to the hardness proofs in \cref{sec:difference:graphs}, there is an extra
Inclusion-Exclusion step to obtain a reduction from $p^m$ vertices to $d\cdot p^m$
vertices.

\subsection{Product Groups, Graphs Unions, and Graph Joins}\label{sec:product}
We start by defining the disjoint union of sets and product groups.

\begin{definition}\label{def:10.31-1}
    For (not necessarily disjoint) sets $X_1, \dots, X_m$, the disjoint union $X_1 \unionSet \cdots \unionSet X_m$
    is the set $\{(i, x) \mid i \in \setn{m}, x \in X_i\}$.

    For graphs $G_1, \dots, G_m$ and a graph $C \in \graphs{m}$,
    we define the \emph{inhabited graph} $\metaGraph{C}{G_1, \dots, G_m}$ via
    \begin{align*}
        V(\metaGraph{C}{G_1, \dots, G_m}) &\coloneqq V(G_1) \unionSet \dots \unionSet V(G_m) \\
        E(\metaGraph{C}{G_1, \dots, G_m}) &\coloneqq \{\{(i, v_i), (j, u_j)\} \mid \{i,
        j\} \in E(C) \text{ or } (i = j \text{ and } \{v_i, u_i\} \in E(G_i)) \}.
    \end{align*}

    If \(C\) consists in a single edge \(\{i,j\}\), then we also write
    \[
        \metaGraph{\oneEdge{i}{j} }{G_1, \dots, G_m} \coloneqq
    \metaGraph{(\setn{m}, \{i,j\})}{G_1, \dots, G_m}.
    \]

    If \(C = IS_m\), then we call the graph \(\metaGraph{IS_m}{G_1, \dots, G_m}\) the
    \emph{disjoint union} of \(G_1, \dots, G_m\) and also write
    \[
    G_1 \UnionGraph \dots \UnionGraph G_m \coloneqq
    \metaGraph{IS_m}{G_1, \dots, G_m}.
    \]

    If \(C = K_m\), then we call the graph \(\metaGraph{K_m}{G_1, \dots, G_m}\) the
    \emph{join} of \(G_1, \dots, G_m\) and also write
    \[
    G_1 \JoinGraph \dots \JoinGraph G_m \coloneqq
    \metaGraph{K_m}{G_1, \dots, G_m}.
    \tag*{\qedhere}
    \]
\end{definition}

\begin{figure}[pt]
    \centering
    \begin{subfigure}{\textwidth}
    \centering
        \includegraphics[width=\textwidth]{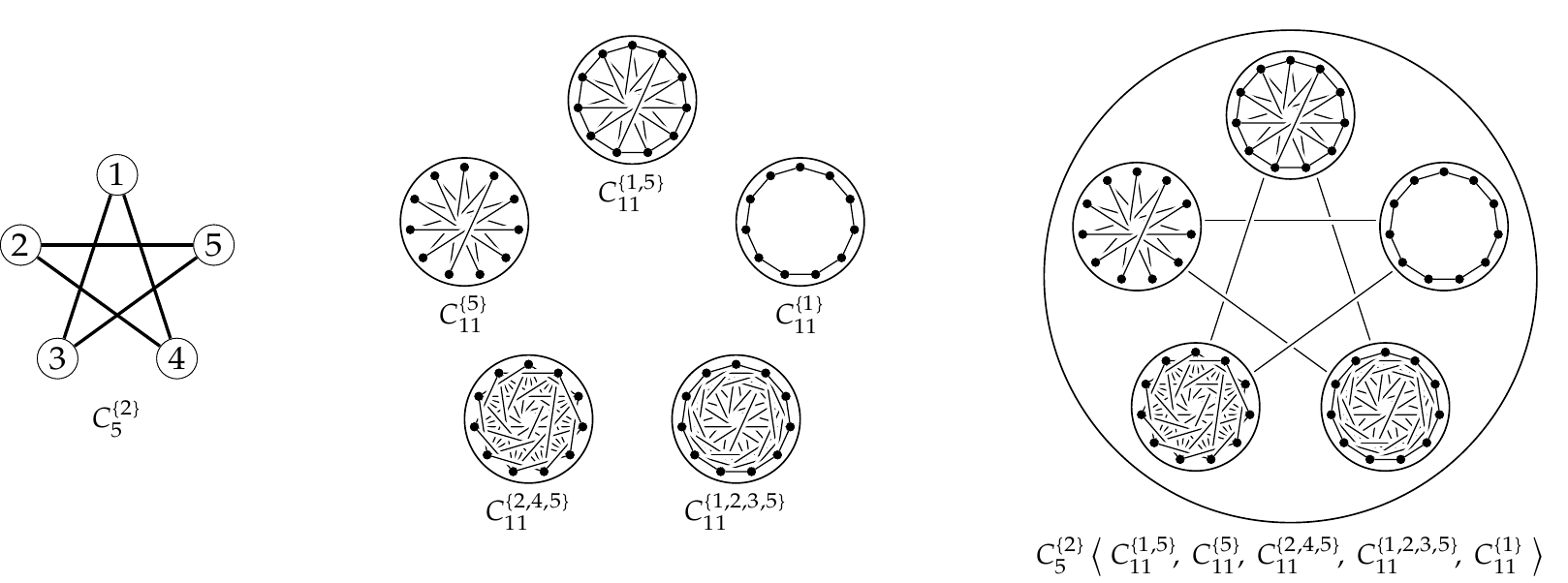}
        \caption{A 5-cycle that is inhabited by 5 difference graphs.}
    \end{subfigure}
    \begin{subfigure}{\textwidth}
    \centering
        \includegraphics[width=\textwidth]{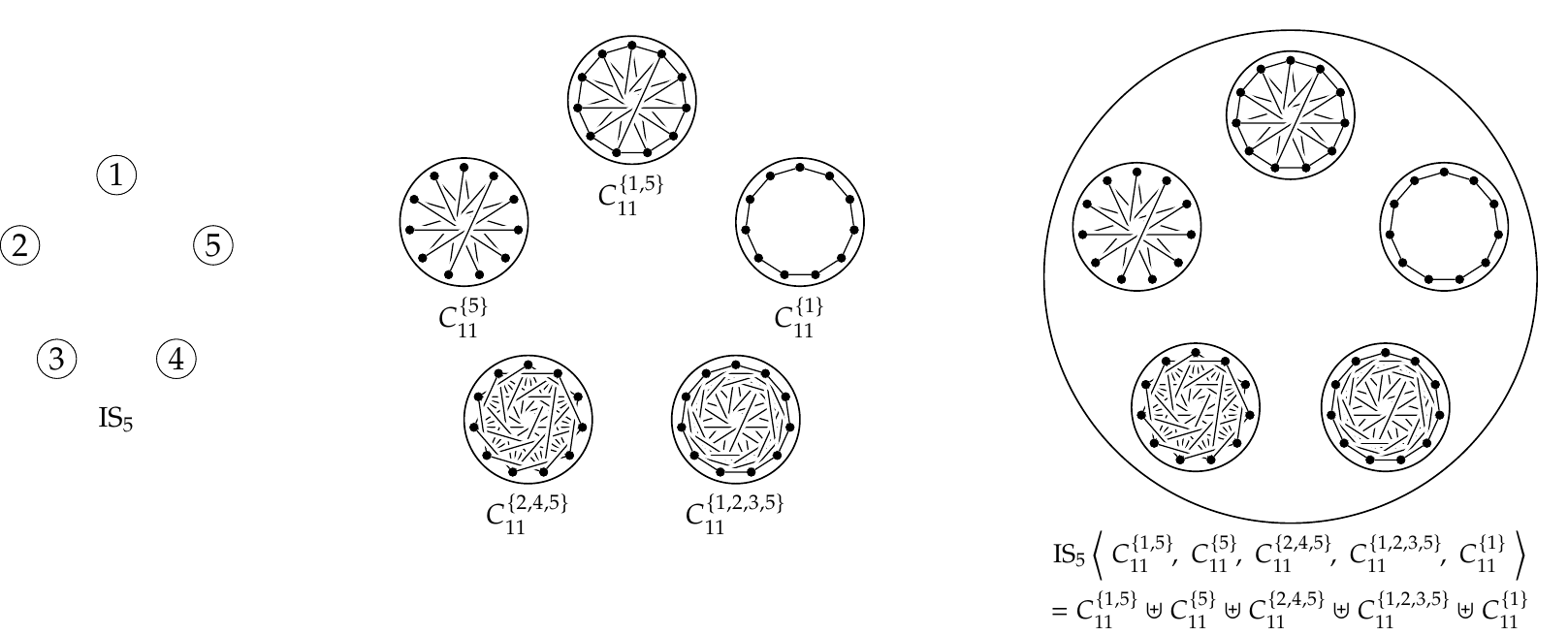}
        \caption{An independent set that is inhabited by 5 difference graphs.
        The resulting graph is also the disjoint union of the 5 graphs.}
    \end{subfigure}
    \begin{subfigure}{\textwidth}
    \centering
        \includegraphics[width=\textwidth]{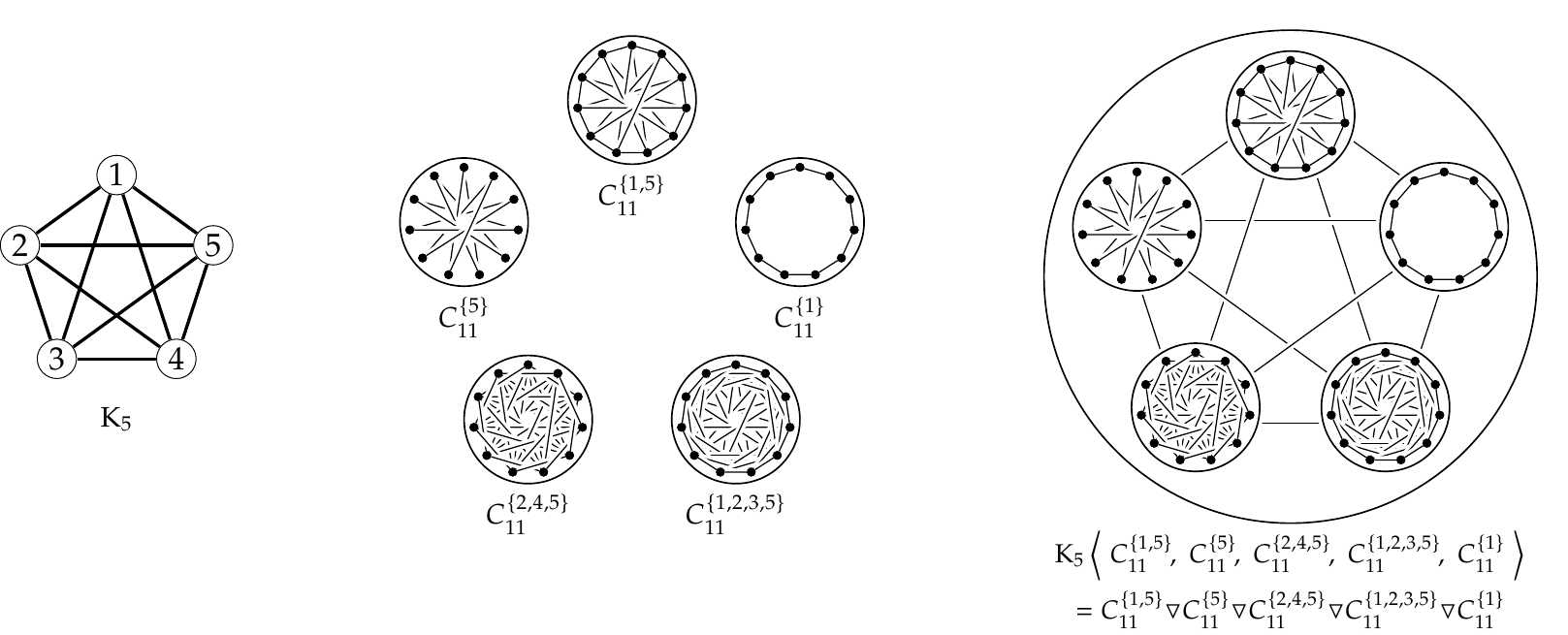}
        \caption{An clique that is inhabited by 5 difference graphs.
        The resulting graph is also the join of the 5 graphs.}
    \end{subfigure}
    \caption{Examples for fixed points in \(\fpb{\rotgr{5}^d}{K_{5}^d}\), which are also
    inhabited graphs.}
    \label{fig:example:fixed:points:prod}
\end{figure}

Consult \cref{fig:example:fixed:points:prod} for a visualization of an example for
\cref{def:10.31-1}.
It is instructive to briefly discuss unions of inhabited graphs.

\begin{lemma}\label{10.31-2}
    For two inhabited graphs \(F_1 = \metaGraph{C_1}{G_1, \dots, G_m}\) and
    \(F_2 = \metaGraph{C_2}{G'_1, \dots, G'_m}\) with a common vertex set, we
    have\footnote{Recall that we use \(G \cup H\) to denote the graph on \(V = V(G)
    =V(H)\) with edges \(E(G) \cup E(H)\).}
    \[
        F_1 \cup F_2 = \metaGraph{(C_1 \cup C_2)}{G_1 \cup G_1', \dots, G_m \cup G_m'}.
    \]
\end{lemma}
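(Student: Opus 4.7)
The plan is to prove this by unpacking the definition of an inhabited graph on both sides and verifying that the vertex sets and the edge sets agree. Since all inhabited graphs of the form $\metaGraph{C}{G_1, \dots, G_m}$ on the given vertex sets share the same vertex set $V(G_1) \unionSet \dots \unionSet V(G_m)$ by \cref{def:10.31-1}, the assumption that $F_1$ and $F_2$ have a common vertex set already forces $V(G_i) = V(G_i')$ for each $i$, so in particular $G_i \cup G_i'$ is well-defined and the vertex set of the right-hand side matches that of $F_1 \cup F_2$.

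For the edge sets, I will fix an arbitrary pair $(i, v_i), (j, u_j)$ in the common vertex set and chase equivalences. By definition of set union and of inhabited graphs, the edge $\{(i, v_i), (j, u_j)\}$ lies in $E(F_1 \cup F_2) = E(F_1) \cup E(F_2)$ if and only if one of the following holds: $\{i,j\} \in E(C_1)$, or $i = j$ with $\{v_i, u_i\} \in E(G_i)$, or $\{i,j\} \in E(C_2)$, or $i = j$ with $\{v_i, u_i\} \in E(G'_i)$. Regrouping the four disjuncts into ``$C$-cases'' and ``diagonal cases'' yields the equivalent condition that either $\{i,j\} \in E(C_1) \cup E(C_2) = E(C_1 \cup C_2)$, or $i = j$ and $\{v_i, u_i\} \in E(G_i) \cup E(G'_i) = E(G_i \cup G'_i)$. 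This is precisely the membership condition for the edge set of $\metaGraph{(C_1 \cup C_2)}{G_1 \cup G_1', \dots, G_m \cup G_m'}$ from \cref{def:10.31-1}, so the two edge sets coincide.

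There is essentially no obstacle here: the statement is a direct bookkeeping exercise, and the only thing to be careful about is that the ``diagonal case'' $i=j$ distributes correctly over the union (which it does, because the disjunction ``$\{v_i, u_i\} \in E(G_i)$ or $\{v_i, u_i\} \in E(G'_i)$'' is exactly membership in $E(G_i \cup G'_i)$). No properties of fixed points, $p$-groups, or edge-monotonicity are used; the lemma is purely a compatibility statement between the $\metaGraph{\cdot}{\cdots}$ constructor and the union operation on graphs sharing a vertex set, and it will be applied later together with \cref{lem:setopfp} to analyze how fixed points of product groups decompose.
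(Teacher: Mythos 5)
Your proof is correct and is just a fully spelled-out version of the paper's own one-line argument, which likewise appeals to unfolding the definitions of inhabited graphs and of graph union. The edge-chasing and regrouping of the four disjuncts is exactly the bookkeeping the paper leaves implicit, so there is no substantive difference in approach.
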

\begin{proof}
    The union of (edge) sets is an associative operation.
    Unfolding the definitions of inhabited graphs and their union yields the claim.
\end{proof}

Next, we define a standard operation on groups that mirrors the disjoint union of graphs.
Namely, the products of groups.

\begin{definition}
    For permutation groups $\Gamma_1 = (G_1, \circ), \dots, \Gamma_m = (G_m, \circ)$
    with $\Gamma_i \subseteq \sym{X_i}$,
    their product group $\Gamma \coloneqq \Gamma_1 \prodGroup \cdots \prodGroup \Gamma_m$ is
    the set $G \coloneqq G_1 \times \cdots \times G_m$ together with the component-wise
    function composition, that is, for \((g_1, \dots, g_m), (g_1', \dots,
    g_m') \in \Gamma\), we set
    \[
        (g_1, \dots, g_m) \circ (g_1', \dots, g_m')
        \coloneqq (g_1 \circ  g_1',\dots, g_m \circ g_m').
    \]

    We let \(\Gamma\) act on \(X \coloneqq X_1 \unionSet \cdots \unionSet X_m\) via
    \[
        g(j, x_j) \coloneqq (j, g_j(x_j)), \text{ for all } g = (g_1, \dots, g_m) \in G
        \text{ and } x = (j, x_j) \in X.
        \tag*{\qedhere}
    \]
\end{definition}

\begin{fact}
    The order of $G$ is $\prod_{i = 1}^m |G_i|$.
    In particular, the product group of $p$-groups is still a $p$-group.

    Further, observe that $G$ is a permutation group of $X_1 \unionSet \cdots \unionSet X_m$.
    Hence, we have $G \subseteq \sym{X_1 \unionSet \cdots \unionSet X_m}$.
\end{fact}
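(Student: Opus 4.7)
The plan is to verify each assertion by unfolding definitions carefully, since the statement bundles together four related claims about the product group $\Gamma = \Gamma_1 \prodGroup \cdots \prodGroup \Gamma_m$ and its action on $X \coloneqq X_1 \unionSet \cdots \unionSet X_m$.

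First I would confirm that $(G, \circ)$ is a well-defined group: associativity is inherited component-wise from the $\sym{X_i}$; the identity is $(\id_{X_1}, \dots, \id_{X_m})$; and $(g_1, \dots, g_m)^{-1} = (g_1^{-1}, \dots, g_m^{-1})$ lies in $G$ because each $G_i$ is closed under inverses. Once this is in place, the cardinality claim $|G| = \prod_{i=1}^m |G_i|$ is just the product formula for Cartesian products, since as a set $G$ is exactly $G_1 \times \cdots \times G_m$. The $p$-group claim then follows immediately: if $|G_i| = p^{e_i}$ for each $i$, then $|G| = p^{e_1 + \cdots + e_m}$ is a power of $p$.

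Next I would verify that the given action realises $G$ as a subgroup of $\sym{X}$. For each $g = (g_1, \dots, g_m) \in G$, the map $(j, x_j) \mapsto (j, g_j(x_j))$ is a bijection on $X$ because the disjoint union tags the copies, so $g$ acts as the bijection $g_j$ on the $j$-th block $\{j\} \times X_j$; a disjoint union of bijections is a bijection. Hence $g$ lies in $\sym{X}$. To see that the action is a group homomorphism $G \to \sym{X}$, I would compute
\[
  (g \circ g')(j, x_j) = g(j, g'_j(x_j)) = (j, g_j(g'_j(x_j))) = (j, (g_j \circ g'_j)(x_j)),
\]
which matches the component-wise definition of $\circ$ on $G$.

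The only genuinely non-trivial point, and thus the place to be careful, is faithfulness: the claim $G \subseteq \sym{X}$ is an identification of $G$ with its image in $\sym{X}$, so one must check that distinct tuples give distinct permutations. But if $g, g' \in G$ induce the same permutation on $X$, then $g_j(x_j) = g'_j(x_j)$ for every $j$ and every $x_j \in X_j$, forcing $g_j = g'_j$ for all $j$ and hence $g = g'$. This makes the embedding $G \hookrightarrow \sym{X}$ injective, justifying the notational inclusion in the last sentence of the fact.
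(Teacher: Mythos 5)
The paper states this as a \emph{fact} without proof, so there is no argument in the paper to compare against. Your verification is correct and complete: you check the group axioms for the component-wise product, the cardinality formula and the $p$-group closure, that the prescribed action is by bijections and is a group homomorphism into $\sym{X_1 \unionSet \cdots \unionSet X_m}$, and — the one genuinely non-obvious point justifying the notation $G \subseteq \sym{X}$ — that the action is faithful, so the homomorphism is an embedding. This is exactly what one needs to substantiate the claim the paper takes for granted.
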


Finally, we let product groups act on joins of graphs; in particular, we are interested in
the resulting fixed points.
To that end, we first study the orbits that appear.

\begin{lemma}\label{10.31-3}
    For \(i  \in \setn{m}\), let \(G_i\) denote a graph with \(n(i) \coloneqq
    |V(G_i)|\) vertices and let \(\Gamma_i
    \subseteq \auts{G_i}\) denote a transitive permutation group.
    Further, let \(\Gamma_i\) act on \(E(G_i)\) and write
    \(E(G_i)/\Gamma_i =  \{ O^{i}_1, \dots, O^{i}_{s_i}\} \) for the resulting orbits.
    Finally, let \( \BigProdGroup_{i=1}^{m} \Gamma_i \) act on
    \(E(\BigJoinGraph_{i = 1}^m G_i)\).

    Then, we have
    \begin{align*}
        E\left(\BigJoinGraph_{i = 1}^m G_i\right) / \BigProdGroup_{i=1}^{m} \Gamma_i
        &= \{
            \metaGraph{\oneEdge{i}{j}}{\IS_{n(1)}, \dots, \IS_{n(m)}}  \mid
            i\neq j \in \setn{m} \}
      \\&\cup
            \{
                \metaGraph{\IS_m}{\IS_{n(1)}, \dots, \IS_{n({i-1})}, \smash{O^i_j},
            \IS_{n({i+1})}, \dots,\IS_{n(m)}}  \mid
            i\in \setn{m}, j \in \setn{s_i}
        \}.
    \end{align*}
\end{lemma}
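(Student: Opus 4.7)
The plan is to partition the edges of $\BigJoinGraph_{i=1}^m G_i$ into \emph{intra-edges} (those with both endpoints in the same copy $V(G_i)$) and \emph{inter-edges} (those with endpoints in different copies $V(G_i)$ and $V(G_j)$). The first observation is that any $g = (g_1,\dots,g_m) \in \BigProdGroup_{i=1}^m \Gamma_i$ preserves the first coordinate of vertices, so intra-edges are mapped to intra-edges in the same part, and inter-edges between parts $i$ and $j$ are mapped to inter-edges between the same parts $i$ and $j$. Thus, the orbits of the product group split into intra-edge orbits (each within some single $V(G_i)$) and inter-edge orbits (each between some fixed pair $\{i,j\}$), and it suffices to analyze each type separately.

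For the intra-edge case in part $i$, I would observe that $g$ acts on an edge $\{(i,u),(i,v)\}$ only through its $i$-th coordinate, mapping it to $\{(i,g_i(u)),(i,g_i(v))\}$. Hence the orbit of an intra-edge under $\BigProdGroup_{i=1}^{m}\Gamma_i$ is exactly the $\Gamma_i$-orbit of the underlying edge, lifted to the disjoint union. Since $E(G_i)/\Gamma_i = \{O^i_1,\dots,O^i_{s_i}\}$, this yields precisely the orbits $\metaGraph{\IS_m}{\IS_{n(1)},\dots,O^i_j,\dots,\IS_{n(m)}}$ appearing in the statement.

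For the inter-edge case between parts $i \neq j$, I would use the transitivity of $\Gamma_i$ on $V(G_i)$ and of $\Gamma_j$ on $V(G_j)$. Given any two inter-edges $\{(i,u),(j,v)\}$ and $\{(i,u'),(j,v')\}$, transitivity gives some $g_i \in \Gamma_i$ with $g_i(u)=u'$ and some $g_j \in \Gamma_j$ with $g_j(v)=v'$; extending arbitrarily to a full tuple produces an element of the product group mapping one edge to the other. Thus all inter-edges between parts $i$ and $j$ lie in a single orbit, which is exactly the complete bipartite graph on $V(G_i) \unionSet V(G_j)$; this is the graph $\metaGraph{\oneEdge{i}{j}}{\IS_{n(1)},\dots,\IS_{n(m)}}$.

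Combining the two cases shows that every edge of $\BigJoinGraph_{i=1}^m G_i$ lies in one of the listed orbits, and that the two lists of orbits are pairwise distinct (intra- and inter-edges are in different orbits, and distinct $\Gamma_i$-orbits $O^i_j$ remain distinct after lifting). There is no real obstacle here; the proof is essentially bookkeeping, and the only subtlety is using transitivity at the right moment to collapse inter-edges into a single orbit per pair $\{i,j\}$.
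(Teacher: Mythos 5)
Your proposal is correct and follows essentially the same approach as the paper's proof: split edges into intra-part and inter-part cases, use the coordinate-wise action for intra-edges, and use transitivity of each $\Gamma_i$ to collapse the inter-edges between a fixed pair of parts into a single orbit. The only minor difference is that you make the preservation of this partition under the group action explicit upfront, whereas the paper verifies the two orbit types directly and then remarks that they cover everything.
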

\begin{proof}
    Recall that \(\Gamma \coloneqq \BigProdGroup_{i=1}^{m} \Gamma_i \) acts on
    \(E(\BigJoinGraph_{i = 1}^m G_i)\) by mapping
    the group element $(\alpha_1, \dots, \alpha_m)$ and the
    edge $\{(i, u), (j, v) \}$ to $\{(i, \alpha_i(u)), (j, \alpha_j(v)) \}$.

    We start by analyzing the orbits of the edges inside of the graphs \(G_i\).
    \begin{claim}\label{11.1-1}
        For every  \(i\in \setn{m}\) and \(O \in E(G_i)/\Gamma_i\), we have
        \[
                \metaGraph{\IS_m}{\IS_{n(1)}, \dots, \IS_{n({i-1})}, \smash{O},
            \IS_{n({i+1})}, \dots,\IS_{n(m)}}
            \in
            E\left(\BigJoinGraph_{i = 1}^m G_i\right) / \BigProdGroup_{i=1}^{m} \Gamma_i.
        \]
    \end{claim}
    \begin{claimproof}
        Fix an orbit \(O \in E(G_i)/\Gamma_i\) and write \(\{ u,v \} \in O\) for an edge
        of \(O\).
        We compute the orbit of the edge \(\{ (i,u),(i,v) \} \) under \(\Gamma\).

        Observe that whenever \(\Gamma\) acts on an edge \( \left\{ ( i, u ) ,( i, v
        )\right\}\) of
        \(\BigJoinGraph_{i = 1}^m G_i\), we recover the action of \(\Gamma_i\) on said
        edge: writing \(\star|_2\) for the restriction of pairs to their second component,
        we obtain
        \[
        ( g_1,\dots,g_i,\dots,g_m )\left\{ ( i, u ) ,( i, v )\right\}|_2
        =  \left\{ ( i, g_i(u) ), ( i, g_i(v) )\right\}|_2
        = \{ g_i(u), g_i(v) \}
        = g_i\{ u,v \}.
        \]
        In particular, we obtain \(\Gamma \{ ( i,u ) ,( i,v )  \}|_2 = \Gamma_i  \{ u,v\}
        = O\).

        Finally, we observe that adding isolated vertices to a graph has no effect on the
        orbits of the edges of the graph, which yields
        \[
            \Gamma \{ ( i,u ) ,( i,v )  \}
            =
                \metaGraph{\IS_m}{\IS_{n(1)}, \dots, \IS_{n({i-1})}, \smash{O},
            \IS_{n({i+1})}, \dots,\IS_{n(m)}};
        \] and thus the claim.
    \end{claimproof}

    Next, we analyze the orbits of edges between graphs \(G_i\).

    \begin{claim}\label{11.1-2}
        For every \(i \neq j \in \setn{m} \), we have
        \[
            \metaGraph{\oneEdge{i}{j}}{\IS_{n(1)}, \dots, \IS_{n(m)}}
            \in
            E\left(\BigJoinGraph_{i = 1}^m G_i\right) / \BigProdGroup_{i=1}^{m} \Gamma_i.
        \]
    \end{claim}
    \begin{claimproof}
        Fix $F \coloneqq
        \metaGraph{\oneEdge{i}{j}}{\IS_{n(1)}, \dots, \IS_{n(m)}}$ and observe that \(F\)
        consists in isolated vertices and a complete bipartite graph on \(n(i) + n(j)\)
        vertices.

        We consider vertices \(u \in V(G_i)\) and \(v \in V(G_j)\) and compute the orbit
        of the edge \(\{ (i,u), (j,v) \} \) under~\(\Gamma\).
        To that end, we observe that, as \(\Gamma_i\) and \(\Gamma_j\) are transitive, we
        have \(\Gamma_i u = V(G_i)\) and \(\Gamma_j v = V(G_j)\).
        In particular,
        we have \(\Gamma \{ (i,u), (j,v) \} = \{ ( i, a ) ,( j,b )  \mid a\in V(G_i), b\in
        V(G_j)\} \)---which are precisely the edges of \(F\), which in turn yields the claim.
    \end{claimproof}

    Finally, we readily see that \cref{11.1-1,11.1-2} cover all orbits in
    \(E\left(\BigJoinGraph_{i = 1}^m G_i\right) / \BigProdGroup_{i=1}^{m} \Gamma_i\).
    First, any edge  \(\{ (i,u),(i,v)\}\) is covered by an orbit that corresponds to an
    orbit in \(E(G_i)/\Gamma_i\).
    Second, any edge \(\{ (i,u),(j,v)\}\) is covered by the orbit
    \( \metaGraph{\oneEdge{i}{j}}{\IS_{n(1)}, \dots, \IS_{n(m)}}\).
    In total, this completes the proof.
\end{proof}

As before, we use orbits to build fixed points.

\begin{corollary}\label{theo:prod:fixed:points}
    For \(i  \in \setn{m}\), let \(G_i\) denote a graph
    and let \(\Gamma_i
    \subseteq \auts{G_i}\) denote a transitive permutation group.
    Further, let \(\Gamma_i\) act on \(E(G_i)\) and write
    \(\fpb{\Gamma_i}{G_i}\) for the resulting fixed points.
    Finally, let \( \BigProdGroup_{i=1}^{m} \Gamma_i \) act on
    \(\edgesub(\BigJoinGraph_{i = 1}^m G_i)\).

    Then, we have
    \begin{align*}
        \fpb{\BigProdGroup_{i = 1}^m \Gamma_i}{\BigJoinGraph_{i = 1}^m G_i} =
    \left\{ \metaGraph{C}{A^1, \dots, A^m} \mid C \in \graphs{m}  , A^i \in \fp(\Gamma_i,
    G_i)\right\},
    \end{align*}
\end{corollary}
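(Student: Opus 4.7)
The plan is to combine Lemma \ref{remark:fixed:point:union} (fixed points are unique disjoint unions of orbits) with Lemma \ref{10.31-3} (which enumerates exactly the orbits of $\BigProdGroup_i \Gamma_i$ acting on the edges of $\BigJoinGraph_i G_i$), and then package the resulting decomposition using Lemma \ref{10.31-2} (unions of inhabited graphs are inhabited graphs).

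First I would establish the forward inclusion. Let $F \in \fpb{\BigProdGroup_{i} \Gamma_i}{\BigJoinGraph_{i} G_i}$. By Lemma \ref{remark:fixed:point:union}, $F$ is the disjoint union of its orbit factorization $\od{F} \subseteq E(\BigJoinGraph_i G_i)/\BigProdGroup_i \Gamma_i$. Lemma \ref{10.31-3} tells us that each orbit in $\od{F}$ is either (a) a \emph{bipartite} orbit $\metaGraph{\oneEdge{i}{j}}{\IS_{n(1)}, \dots, \IS_{n(m)}}$ for some $i\neq j$, or (b) an \emph{internal} orbit $\metaGraph{IS_m}{\IS_{n(1)}, \dots, O^i_k, \dots, \IS_{n(m)}}$ for some $i \in \setn{m}$ and some orbit $O^i_k \in E(G_i)/\Gamma_i$. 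Define $C \in \graphs{m}$ by letting $\{i,j\} \in E(C)$ iff the corresponding bipartite orbit lies in $\od{F}$, and for each $i$ let $A^i$ be the union of those $\Gamma_i$-orbits $O^i_k$ whose associated internal orbit lies in $\od{F}$. By Lemma \ref{remark:fixed:point:union} applied within $G_i$, each such $A^i$ is itself a fixed point in $\fpb{\Gamma_i}{G_i}$. Iterating Lemma \ref{10.31-2} over these orbit-by-orbit contributions yields $F = \metaGraph{C}{A^1, \dots, A^m}$.

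For the reverse inclusion, let $C \in \graphs{m}$ and $A^i \in \fpb{\Gamma_i}{G_i}$ be arbitrary, and consider $F \coloneqq \metaGraph{C}{A^1, \dots, A^m}$. By Lemma \ref{remark:fixed:point:union} applied inside each $G_i$, each $A^i$ decomposes as a disjoint union of orbits in $E(G_i)/\Gamma_i$, each of which contributes an internal orbit to $F$; meanwhile, every edge $\{i,j\} \in E(C)$ contributes the corresponding bipartite orbit. Using Lemma \ref{10.31-2} to assemble these pieces, $F$ is the disjoint union of orbits from $E(\BigJoinGraph_i G_i)/\BigProdGroup_i \Gamma_i$. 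Invoking the ``union of orbits is a fixed point'' direction of Lemma \ref{remark:fixed:point:union} one more time places $F$ in $\fpb{\BigProdGroup_i \Gamma_i}{\BigJoinGraph_i G_i}$, completing the proof.

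The only nontrivial step is matching up the two ways of describing $F$ at the end of the forward direction; this is really just bookkeeping, but it does require the transitivity of each $\Gamma_i$ (used in Lemma \ref{10.31-3} to ensure that every cross-edge orbit is the entire complete bipartite graph, so that the $C$ part is uniquely recoverable and disjoint from the internal contributions). Once that is in hand, both inclusions reduce to unfolding the definition of $\metaGraph{\cdot}{\cdot}$ and applying Lemma \ref{10.31-2}, so I do not anticipate any substantive mathematical obstacle beyond careful notation.
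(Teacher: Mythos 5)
Your proposal is correct and follows essentially the same approach as the paper's own proof: both decompose any fixed point into its orbit factorization via \cref{remark:fixed:point:union}, classify the orbits via \cref{10.31-3} into cross-block bipartite orbits and within-block orbits, and reassemble using \cref{10.31-2}. You merely make the two inclusions and the extraction of $C$ and the $A^i$ slightly more explicit than the paper's terse writeup, but no new idea is introduced and no step differs in substance.
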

\Cref{theo:prod:fixed:points} allows us to interpret the fixed points
$\fpb{\BigProdGroup_{i = 1}^m \Gamma_i}{\BigJoinGraph_{i = 1}^m G_i}$
as combinations of the fixed points of $\fpb{\Gamma_i}{G_i}$.
In particular, each fixed point $\metaGraph{C}{A^1, \dots, A^m}$ of
consists in $m$ blocks, where the $i$-th block is a fixed point $A^i \in
\fpb{\Gamma_i}{G_i}$ and we fully connect two blocks $i$ and $j$ with each other if
$\{i, j\} \in E(C)$.
\begin{proof}
    Recall that by \cref{remark:fixed:point:union} (the edge set of) each fixed point
    $F \in \fpb{\BigProdGroup_{i = 1}^m \Gamma_i}{\BigJoinGraph_{i = 1}^m G_i}$
    is a union of orbits from $E\left(\BigJoinGraph_{i = 1}^m G_i\right) /
    \BigProdGroup_{i=1}^{m} \Gamma_i$.

    From \cref{10.31-3},
    we understand the orbits $E\left(\BigJoinGraph_{i = 1}^m G_i\right) /
    \BigProdGroup_{i=1}^{m} \Gamma_i$ as inhabited graphs that are
    pairwise compatible.
    Further, from \cref{10.31-2}, we obtain that we may compute the union of compatible
    inhabited graphs in a block-wise fashion.

    Next, we apply \cref{remark:fixed:point:union} to each of the blocks, that is,
    each fixed point \(F_i \in \fpb{\Gamma_i}{G_i}\) is a union of orbits from
    \(E(G_i)/\Gamma_i\) (which are precisely the inner blocks of the orbits
    $E\left(\BigJoinGraph_{i = 1}^m G_i\right) / \BigProdGroup_{i=1}^{m} \Gamma_i$).
    Finally, we observe that we may obtain any \(m\)-vertex graph as the (edge-)union of
    \(m\)-vertex graphs that have a single edge.

    In total, this completes the proof.
\end{proof}

\subsection{The Fixed Points of Rotations of \texorpdfstring{$K^d_{p^m}$}{K p m d}}
\label{sec:fixedpoint:kdpm}

For a positive integer \(d\) and a prime power \(p^m\),
write $K^d_{p^m} \coloneqq K_{p^m} \JoinGraph \cdots \JoinGraph K_{p^m}$ for the join of
$d$ copies of the graph $K_{p^m}$---observe that $K^d_{p^m}$ is the complete graph on the
vertex set $\setn{d} \times \field{p^m}$.
Further, write $\rotgr{p^m}^d \coloneqq \rotgr{p^m} \prodGroup \dots \prodGroup
\rotgr{p^m}$ for the product of $d$ copies of the group $\rotgr{p^m}$.

Next, we use \cref{cor:fix:Hs,theo:prod:fixed:points} to understand
$\fpb{\rotgr{p^m}^d}{K^d_{p^m}}$.

\begin{lemma}\label{10.31-4}
    Let \(p\) denote a prime and let \(m\) and \(d\) denote positive integers.
    Further, let \(\rotgr{p^m}^d =  K_{p^m} \JoinGraph \cdots \JoinGraph K_{p^m}\)
    act on \(\edgesub(K^d_{p^m}) = \rotgr{p^m} \prodGroup \dots \prodGroup \rotgr{p^m}\).
    Then, we have
    \begin{align*}
        \fpb{\rotgr{p^m}^d}{K^d_{p^m}} &=
        \{\metaGraph{G}{\cgr{p^m}{A^1}, \dots, \cgr{p^m}{A^d}}
            \mid G \in \graphs{d}, A^i \subseteq \field{p^m}^+ \}.
    \end{align*}
\end{lemma}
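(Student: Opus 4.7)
The proof will be a direct application of the general machinery developed just above, namely \cref{theo:prod:fixed:points}, specialized to the setting where all $m$ factor groups equal $\rotgr{p^m}$ acting on $K_{p^m}$. The plan is to verify the hypotheses of \cref{theo:prod:fixed:points}, invoke it, and then substitute in the description of $\fpb{\rotgr{p^m}}{K_{p^m}}$ coming from \cref{cor:fix:Hs}.

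First I would check that each factor $\rotgr{p^m} \subseteq \auts{K_{p^m}}$ is a transitive permutation group on $V(K_{p^m}) = \field{p^m}$. This is immediate: for any $u, v \in \field{p^m}$, the rotation $\varphi_{v-u} \in \rotgr{p^m}$ sends $u$ to $v$. Thus the hypothesis of \cref{theo:prod:fixed:points} is satisfied. Applying that corollary with $G_i = K_{p^m}$ and $\Gamma_i = \rotgr{p^m}$ for every $i \in \setn{d}$, and recalling that by definition $\BigJoinGraph_{i=1}^d K_{p^m} = K^d_{p^m}$ and $\BigProdGroup_{i=1}^d \rotgr{p^m} = \rotgr{p^m}^d$, we obtain
\[
    \fpb{\rotgr{p^m}^d}{K^d_{p^m}} = \bigl\{ \metaGraph{G}{A^1, \dots, A^d} \bigm| G \in \graphs{d},\ A^i \in \fpb{\rotgr{p^m}}{K_{p^m}} \bigr\}.
\]

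Next I would substitute the characterization of the per-factor fixed points. By \cref{cor:fix:Hs}, we have $\fpb{\rotgr{p^m}}{K_{p^m}} = \{\cgr{p^m}{A} \mid A \subseteq \field{p^m}^+\}$. Plugging this into the previous display yields exactly the claimed equality
\[
    \fpb{\rotgr{p^m}^d}{K^d_{p^m}} = \bigl\{ \metaGraph{G}{\cgr{p^m}{A^1}, \dots, \cgr{p^m}{A^d}} \bigm| G \in \graphs{d},\ A^i \subseteq \field{p^m}^+ \bigr\}.
\]

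Since the lemma is essentially a bookkeeping specialization of \cref{theo:prod:fixed:points,cor:fix:Hs}, there is no real obstacle; the only thing to be mildly careful about is making sure the notational identifications line up---in particular that $K^d_{p^m}$ and $\rotgr{p^m}^d$ really are the join and product instances handled by the general corollary, and that transitivity of $\rotgr{p^m}$ is explicitly justified so that \cref{theo:prod:fixed:points} applies.
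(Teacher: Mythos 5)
Your proposal is correct and follows exactly the same route as the paper's own proof: observe that $\rotgr{p^m}$ is transitive (via the rotation $\varphi_{v-u}$), invoke \cref{theo:prod:fixed:points} for the product group acting on the join, and then substitute the characterization $\fpb{\rotgr{p^m}}{K_{p^m}} = \{\cgr{p^m}{A} \mid A \subseteq \field{p^m}^+\}$ from \cref{cor:fix:Hs}. There is nothing missing; the paper's version is merely terser.
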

\begin{proof}
    Recall that by \cref{cor:fix:Hs}, we have
    \[\fpb{\rotgr{p^m}}{K_{p^m}} = \{ \cgr{p^m}{A}  \mid A \subseteq \field{p^m}^+\}.\]
    Observe that $\rotgr{p^m}$ is transitive, since for all $x, y \in \field{p^m}$ we can
    find a element $\varphi_{y - x} \in \rotgr{p^m}$ with $\varphi_{y - x}(x) = x + (y -
    x) = y$.
    Hence, we use \cref{theo:prod:fixed:points} to obtain the claim.
\end{proof}

Consult again \cref{fig:example:fixed:points:prod} for visualizations of examples for
fixed points \(\fpb{\rotgr{p^m}^d}{K^d_{p^m}}\).

Next, for our ETH-based lower bounds, it is instructive to observe that every fixed point
\(\metaGraph{G}{\cdots} \in \fpb{\rotgr{p^m}^d}{K^d_{p^m}}\) contains a large biclique as
a subgraph as long as \(G\) contains at least one edge.

\begin{lemma}\label{rem:treewidth:biclique}
    Let \(p\) denote a prime and let \(m\) and \(d\) denote positive integers.

    Every fixed point $\metaGraph{G}{\cgr{p^m}{A^1}, \dots, \cgr{p^m}{A^d}} \in \fpb{\rotgr{p^m}^d}{K^d_{p^m}}$
    with $G \neq \IS_m$ contains the graph $K_{p^m, p^m}$ as a subgraph and has
    treewidth of at least \(p^m\).
\end{lemma}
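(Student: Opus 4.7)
The plan is to extract a biclique directly from the inhabited-graph structure and then invoke monotonicity of treewidth under subgraphs. Since $G$ is not the empty graph on $d$ vertices, there exist distinct $i, j \in \setn{d}$ with $\{i,j\} \in E(G)$. By \cref{def:10.31-1}, whenever $\{i,j\} \in E(G)$, every pair $(i, v_i), (j, u_j)$ with $v_i, u_i \in \field{p^m}$ forms an edge of $\metaGraph{G}{\cgr{p^m}{A^1}, \dots, \cgr{p^m}{A^d}}$. Hence the bipartite graph on the parts $\{i\} \times \field{p^m}$ and $\{j\} \times \field{p^m}$ is complete, which is an isomorphic copy of $K_{p^m, p^m}$ sitting inside our fixed point as a subgraph.

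For the treewidth bound, I would cite the standard fact that $\tw(K_{n,n}) = n$ (this follows, e.g., from the balanced separator characterization, since every balanced separator of $K_{n,n}$ has size at least $n$) together with the fact that treewidth is monotone under taking subgraphs. Applied with $n = p^m$, this gives $\tw(\metaGraph{G}{\cgr{p^m}{A^1}, \dots, \cgr{p^m}{A^d}}) \geq \tw(K_{p^m, p^m}) = p^m$, as claimed.

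The argument is essentially routine once the definitions are unfolded; the only minor point worth checking carefully is that the two vertex sets $\{i\} \times \field{p^m}$ and $\{j\} \times \field{p^m}$ are disjoint (which is immediate from $i \neq j$ and the disjoint-union definition of $V(\metaGraph{G}{\cdots})$), so the induced biclique really is a $K_{p^m,p^m}$ and not a smaller graph. No obstacle arises: the combinatorial content is in \cref{def:10.31-1} itself, and this lemma is just reading off the promised large biclique.
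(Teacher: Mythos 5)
Your proof is correct and takes essentially the same route as the paper: pick an edge $\{i,j\} \in E(G)$, read off from the definition of the inhabited graph that the two corresponding blocks $\{i\}\times\field{p^m}$ and $\{j\}\times\field{p^m}$ are fully connected (yielding a $K_{p^m,p^m}$ subgraph), and then use $\tw(K_{n,n}) = n$ together with monotonicity of treewidth under subgraphs. The paper cites a reference for the treewidth lower bound where you sketch a balanced-separator argument, but this is a presentational difference, not a different approach.
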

\begin{proof}
    Consider an edge \(\{i , j\} \in E(G)\). Both graphs \(\cgr{p^m}{A^i}\) and
    \(\cgr{p^m}{A^j}\) have \(p^m\) vertices; by definition of an inhabited graph, said
    vertices are connected with a complete bipartite graph.

    Finally, a graph with \(K_{p^m,p^m}\) as a subgraph has a treewidth of at least \(p^m\).
\end{proof}

Finally, let us compute the level of a fixed point \(F \in
\fpb{\rotgr{p^m}^d}{K^d_{p^m}}\).

\begin{lemma}
    Let \(p\) denote a prime and let \(m\) and \(d\) denote positive integers.

    For every fixed point $F = \metaGraph{G}{\cgr{p^m}{A^1}, \dots, \cgr{p^m}{A^d}} \in
    \fpb{\rotgr{p^m}^d}{K^d_{p^m})}$, we have
    \[
    \hl{F} = \hl{\metaGraph{G}{\cgr{p^m}{A^1}, \dots, \cgr{p^m}{A^d}}} = |E(G)| + \sum_{i = 1}^d |A^i|.
    \]
\end{lemma}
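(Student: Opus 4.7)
The plan is to directly invoke the uniqueness of the orbit factorization from \cref{remark:fixed:point:union} together with the explicit description of the orbits of $\rotgr{p^m}^d$ on $E(K^d_{p^m})$ given by \cref{10.31-3}. By \cref{def:10.4-1}, the level $\hl{F}$ is exactly $|\od{F}|$, so it suffices to count the number of orbits appearing in the unique decomposition of $F$ as a disjoint union of orbits in $E(K^d_{p^m})/\rotgr{p^m}^d$.

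First, I would recall that \cref{10.31-3} (specialized to $G_i = K_{p^m}$ and $\Gamma_i = \rotgr{p^m}$, using \cref{lem:orbits:Zpa} to identify the inner orbits $E(K_{p^m})/\rotgr{p^m}$ with the singleton difference graphs $\cgr{p^m}{\{x\}}$ for $x \in \field{p^m}^+$) lists the orbits of $\rotgr{p^m}^d$ acting on $E(K^d_{p^m})$ as precisely two kinds:
\begin{itemize}
    \item cross orbits $\metaGraph{\oneEdge{i}{j}}{\IS_{p^m}, \dots, \IS_{p^m}}$, one for each unordered pair $\{i,j\}$ with $i\neq j \in \setn{d}$;
    \item inner orbits $\metaGraph{\IS_d}{\IS_{p^m}, \dots, \cgr{p^m}{\{x\}}, \dots, \IS_{p^m}}$, one for each $i \in \setn{d}$ and each $x \in \field{p^m}^+$ (with $\cgr{p^m}{\{x\}}$ in the $i$-th slot).
\end{itemize}

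Next, I would unfold the definition of the inhabited graph $F = \metaGraph{G}{\cgr{p^m}{A^1},\dots,\cgr{p^m}{A^d}}$ and observe that its edge set splits canonically into (i) the cross edges corresponding to each $\{i,j\} \in E(G)$ and (ii) for each $i$, the edges of $\cgr{p^m}{A^i}$ sitting in the $i$-th block. Using \cref{10.31-2} repeatedly (or directly from the definition), $F$ is the edge-disjoint union of exactly one cross orbit per edge of $G$ together with, for each $i \in \setn{d}$ and each $x \in A^i$, the inner orbit indexed by $(i,x)$; this follows since $\cgr{p^m}{A^i} = \bigcup_{x \in A^i}\cgr{p^m}{\{x\}}$ on the $i$-th block.

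By \cref{remark:fixed:point:union}, the partition of $E(F)$ into orbits is unique, so the count just obtained is $\hl{F}$. Adding up the contributions gives $\hl{F} = |E(G)| + \sum_{i=1}^d |A^i|$, as required. No step here should present a real obstacle: the only thing to be careful about is to check that no orbit is double-counted, which is immediate because cross orbits and inner orbits live on disjoint edge sets (between distinct blocks versus within a single block), and within a single block the orbits $\cgr{p^m}{\{x\}}$ are pairwise edge-disjoint by \cref{lem:orbits:Zpa}.
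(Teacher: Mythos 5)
Your proof is correct and follows essentially the same route as the paper's: both rely on \cref{remark:fixed:point:union} (uniqueness of the orbit factorization) and on the explicit description of the orbits of $\rotgr{p^m}^d$ acting on $E(K^d_{p^m})$ coming from \cref{10.31-3} (with \cref{lem:orbits:Zpa} supplying the inner orbits), counting one cross orbit per edge of $G$ and one inner orbit per pair $(i,x)$ with $x \in A^i$. You simply spell out the orbit bookkeeping in more detail than the paper's rather terse version does; there is no substantive difference in approach.
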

\begin{proof}
    \Cref{theo:prod:fixed:points,10.31-3} allow us to compute the level of \(F\) as the
    sum of the level of \(G\) and the levels of the fixed points \(\cgr{p^m}{A}\).
    To that end, we observe that every edge of \(G\) contributes one orbit to the level of
    \(G\); each fixed point  \(\cgr{p^m}{A}\) has a level of \(|A|\).
    In total, this yields the claim.
\end{proof}

\subsection{The Property \texorpdfstring{$(\Phi-H)$}{(Φ-H)}}\label{sec:inclusion-exclusion-reduction}
We often use the following definition: given a property $\Phi$ and a graph $H$, we define
$(\Phi-H)$ to be the property that contains a graph only if it satisfies $\Phi$ when we
extend the graph with $H$ as a disjoint union. Using standard techniques, we show that
$\NUM{}\indsubsprob(\Phi-H)$ is not harder than $\NUM{}\indsubsprob(\Phi)$: we simply add a
copy of $H$ to the input graph and use the Inclusion-Exclusion Principle to count only
those induced subgraphs that fully contain this copy.

\begin{lemma}\label{lem:inc:exc}
    Let \(\Phi\) denote a graph property and suppose that there is an algorithm
    \(\mathbb{A}\) that
    computes for each graph \(G\) and integer \(k\) the value \(\NUM{}\indsubs{(\Phi,k)}{G}\) in time \(g(k,
    |V(G)|)\) for some computable function \(g\) that is monotonically increasing.
    Finally, for a graph \(H\), write
    \(
        (\Phi - {H}) \coloneqq \{ G \mid G \UnionGraph H  \in \Phi \}
    \)
    for the graph property of all graphs that is extended by $H$ to a graph in \(\Phi\).

    Then, there is an algorithm \(\mathbb{B}\) with oracle access to \(\mathbb{A}\) that computes for
    each graph \(G\) and positive integer \(k\) the value $\NUM{}\indsubs{((\Phi - H), k)}{G}$
    in time
    \[O(2^{|V(H)|} \cdot (|V(G)| + |V(H)|)^2 \cdot g(|V(H)| + k, |V(H)| + |V(G)|)).\]
    The algorithm \(\mathbb{B}\) queries \(\mathbb{A}\) on instances with a parameter of
    \(|V(H)| + k\).
\end{lemma}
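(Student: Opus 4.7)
The plan is to reduce to the oracle $\mathbb{A}$ by forming a single auxiliary graph, a disjoint union with a fresh copy of $H$, and then isolating the contribution that uses \emph{all} of $H$ via an Inclusion-Exclusion argument over subsets of $V(H)$. Write $h \coloneqq |V(H)|$ and $N \coloneqq |V(G)|$, and set $G' \coloneqq G \UnionGraph H$. The target quantity $\NUM{}\indsubs{((\Phi-H), k)}{G}$ counts exactly the $k$-subsets $S \subseteq V(G)$ with $G[S] \UnionGraph H \in \Phi$, and the map $S \mapsto S \cup V(H)$ bijects these with the $(k+h)$-subsets $S^\ast \subseteq V(G')$ that contain all of $V(H)$ and satisfy $G'[S^\ast] \in \Phi$.

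For each $U \subseteq V(H)$, let $a_U$ denote the number of $(k+h)$-subsets $S^\ast \subseteq V(G')$ with $G'[S^\ast] \in \Phi$ and $V(H) \setminus S^\ast = U$; the target is $a_{\emptyset}$. Let
\[
b_U \;\coloneqq\; \sum_{U' \supseteq U} a_{U'}
\]
be the analogous count under the relaxed condition $S^\ast \cap U = \emptyset$. The crucial observation is that every $(k+h)$-subset $S^\ast$ of $V(G')$ that avoids $U$ lies inside $V(G) \cup (V(H) \setminus U)$, and its induced subgraph in $G'$ coincides with its induced subgraph in $G \UnionGraph (H \setminus U)$, because the edge set of $G'$ incident to $S^\ast$ has no endpoint in $U$. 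Consequently,
\[
b_U \;=\; \NUM{}\indsubs{(\Phi, k+h)}{G \UnionGraph (H \setminus U)},
\]
which $\mathbb{B}$ obtains with a single call to $\mathbb{A}$ with parameter $k + h = k + |V(H)|$ on a graph of at most $N + h$ vertices. Standard Möbius inversion on the Boolean lattice then yields
\[
\NUM{}\indsubs{((\Phi-H), k)}{G} \;=\; a_{\emptyset} \;=\; \sum_{U \subseteq V(H)} (-1)^{|U|} \, \NUM{}\indsubs{(\Phi, k+h)}{G \UnionGraph (H \setminus U)},
\]
which $\mathbb{B}$ evaluates by iterating over all $2^h$ subsets $U$, building the auxiliary graph, issuing the oracle call, and summing the signed results.

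For the running time, constructing each $G \UnionGraph (H \setminus U)$ takes $O((N+h)^2)$ time, and every oracle call costs at most $g(k+h, N+h)$ since $g$ is monotonically increasing and $|V(G \UnionGraph (H \setminus U))| \le N + h$. Summing over the $2^h$ choices of $U$ matches the claimed bound $O(2^{|V(H)|} \cdot (|V(G)| + |V(H)|)^2 \cdot g(|V(H)| + k, |V(H)| + |V(G)|))$, and every oracle query is at parameter $|V(H)| + k$ as required. There is no substantial obstacle here: the only point worth double-checking is the agreement $G'[S^\ast] = (G \UnionGraph (H \setminus U))[S^\ast]$ on subsets $S^\ast$ disjoint from $U$, which is immediate from the disjoint-union structure.
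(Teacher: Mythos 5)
Your proof is correct and takes essentially the same route as the paper: you relate the target count to $(k+|V(H)|)$-subsets of $G \UnionGraph H$ that contain all of $V(H)$, observe that passing to $G \UnionGraph (H \setminus U)$ exactly counts the subsets avoiding $U$, and recover $a_\emptyset$ by Inclusion-Exclusion over $U \subseteq V(H)$. Incidentally, your sign $(-1)^{|U|}$ is the correct one; the paper's final displayed identity uses $(-1)^{|X|+1}$, which appears to be a sign typo (the $|V(H)|=1$ sanity check fails with $(-1)^{|X|+1}$ but passes with $(-1)^{|X|}$).
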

\begin{proof}
    Fix a graph \(G\) and a positive integer \(k\).
    Observe that for any induced subgraph \(G\position{X} \in \Phi-H\),
    we have \(G\position{X} \UnionGraph H \cong (G \UnionGraph H)\position{X} \in \Phi\).
    In particular, once we extend \(G\) with
    \(H\), we may use the algorithm \(\mathbb{A}\) to compute the number
    \( \NUM{}\indsubs{(\Phi, k + |V(H)|)}{G \UnionGraph H}\).
    We then use the Inclusion-Exclusion Principle to recover the number
    \(\NUM{}\indsubs{((\Phi - H), k)}{G}\).

    For a formal proof, write \((\mathcal{G} \UnionGraph H)_{k + |V(H)|}\) for all size-\((k + |V(H)|)\) induced
    subgraphs of \(G \UnionGraph H\) that satisfy  \(\Phi\), that is,
    \begin{align*}
        (\mathcal{G} \UnionGraph H)_{k + |V(H)|}  &\coloneqq
        \{(G \UnionGraph H)\position{X} \mid |X| = |V(H)| + k \}
        \cap \Phi.
    \end{align*}
    Next, for each vertex \(x \in V(H)\), write \(\mathcal{G}_x\) for the set of all
    graphs in \((\mathcal{G} \UnionGraph H)_{k + |V(H)|}\) that contain the
    vertex~\(x\),\footnote{Technically, the operation \(\JoinGraph\) renames the vertex
    \(x\) to \((i, x)\) for some \(i\). We may safely ignore this detail, as it is not
    relevant for our proof.}
    that is,
    \begin{align*}
        \mathcal{G}_x &\coloneqq
        \{F \in (\mathcal{G} \UnionGraph H)_{k + |V(H)|} \mid x \in V(F)\}.
    \end{align*}

    Now, we first show that we can rewrite \(\NUM{}\indsubs{((\Phi - H)_k, k)}{G}\) as the number of
    graphs of \( (\mathcal{G} \UnionGraph H)_{k + |V(H)|}  \) that contain all vertices of \(H\).
    \begin{claim}\label{11.3-1}
        We have \[
            \NUM{}\indsubs{((\Phi - H), k)}{G} =
            \big|\!\bigcap_{x \in V(H)}\!\! \mathcal{G}_x\;\big|.
        \]
    \end{claim}
    \begin{claimproof}
        We prove both inclusions separately.

        First, fix a graph \(G\position{X} \in \indsubs{((\Phi - H), k)}{G}\).
        By definition, the graph \(G\position{X} \UnionGraph H\) has \((k + |V(H)|)\)
        vertices and the property~\(\Phi\).
        Further, as \(X\) and \(V(H)\) are in disjoint subgraphs of \(G\position{X} \UnionGraph H\),
        the graph $G\position{X} \UnionGraph H$ is isomorphic to the graph
        $(G \UnionGraph H)\position{X}$; in particular, we have \(G\position{X} \UnionGraph H \in
        (\mathcal{G} \UnionGraph H)_{k + |V(H)|} \).
        Finally, we observe that the graph \(G\position{X} \UnionGraph H\) contains every
        vertex \(v \in V(H)\). Hence, we have \[
            G\position{X} \UnionGraph H \in \bigcap_{x \in V(H)}\!\! \mathcal{G}_x.
        \]
        For the other direction, fix a graph $F \coloneqq (G \UnionGraph H)\position{X}$
        with $F \in \bigcap_{x \in V(H)} \mathcal{G}_x$.
        In particular, the graph \(F\) has \(k + |V(H)|\) vertices and we have \(F \in \Phi\).
        As \(F\) contains every vertex of \(H\), we also have that \(F\) contains \(H\) as
        an induced subgraph. This in turn means that we have
        \(F \cong G\position{X} \UnionGraph H\).
        Hence, we also have \(G\position{X} \in \indsubs{((\Phi - H), k)}{G}\), which
        completes the proof.
    \end{claimproof}

    Ultimately, we wish to remove the graph \(H\) from our oracle calls.
    Toward an application of the Inclusion-Exclusion Principle, we need to understand how
    to partially remove \(H\) from our oracle calls.
    To that end,  write \(\overline{\mathcal{G}_x}\) for the complement of
    \(\mathcal{G}_x\), that is, set \(\overline{\mathcal{G}_x} \coloneqq (\mathcal{G} \UnionGraph H)_{k + |V(H)|}
    \setminus \mathcal{G}_x\).
    Now, we show that for any \(X \subseteq V(H)\), we can rewrite \(  \NUM{}\indsubs{(\Phi,
    |V(H)| + k)}{G \UnionGraph (H \setminus  X)}\) as the number of
    graphs of \((\mathcal{G} \UnionGraph H)_{k + |V(H)|} \) that contain no vertex in
    \(X\).
    \begin{claim}\label{11.3-2}
    For every $X \subseteq V(H)$, we have
    \[
        \big|\bigcap_{x \in X} \overline{\mathcal{G}_x} \big|
        = \NUM{}\indsubs{(\Phi, |V(H)| + k)}{G \UnionGraph (H \setminus  X)}.
    \]
    \end{claim}
    \begin{claimproof}
        Unfolding the definition of \(\overline{\mathcal{G}_x}\), we obtain
        \[
            \bigcap_{x \in X} \overline{\mathcal{G}_x} =
            (\mathcal{G} \UnionGraph H)_{k + |V(H)|} \setminus \bigcup_{x \in X} \mathcal{G}_x.
        \] Now,
        $(\mathcal{G} \UnionGraph H)_{k + |V(H)|} \setminus \bigcup_{x \in X}
        \mathcal{G}_x$
        is the set of all induced
        subgraphs of $G \UnionGraph H$ of size $k + |V(H)|$ that satisfy $\Phi$
        and do not contain any vertex in $X$;
        which is precisely
        $\NUM{}\indsubs{(\Phi, k + |V(H)|)}{G \UnionGraph (H
        \setminus X)}$.
    \end{claimproof}

    Finally we express \(|\bigcap_{x \in V(H)} \mathcal{G}_x|\) in terms of
    \(|\bigcap_{x \in X} \overline{\mathcal{G}_x}|\).

    \begin{claim}\label{11.3-3}
        We have
        \[
        \big|\bigcap_{x \in V(H)} G_x\big|
            = \sum_{X \subseteq V(H)} (-1)^{|X| + 1} \big|\bigcap_{x \in X} \overline{G_x}
            \big|.
        \]
    \end{claim}
    \begin{claimproof}
        After unfolding the definition of \(\overline{\mathcal{G}_x}\), we see that the
        claim is equivalent to the classical Inclusion-Exclusion Principle for set
        intersection; which yields the claim.
    \end{claimproof}

    Now, the combination of \cref{11.3-1,11.3-2,11.3-3} yields
    \begin{align*}
            \NUM{}\indsubs{((\Phi - H), k)}{G}
            &= \sum_{X \subseteq V(H)} (-1)^{|X| + 1} \NUM{}\indsubs{(\Phi, k + |V(H)|)}{G
            \UnionGraph (H \setminus X)}.
    \end{align*}

    In particular, we can compute $\NUM{}\indsubs{((\Phi - H), k)}{G}$ by calling the algorithm
    \(\mathbb{A}\) with parameter \(k + |V(H)|\) on \(2^{|V(H)|}\) graphs \(G'\)
    with $|V(G')| \leq |V(G)| + |V(H)|$.
    Further, we can
    construct each graph $G'$ in time $O((|V(G)| + |V(H)|)^2)$.
    In total, we obtain a running time of
    $O(2^{|V(H)|} \cdot (|V(G)| + |V(H)|)^2 \cdot g(k + |V(H)|, |V(G)| + |V(H)|))$; which
    completes the proof.
\end{proof}

\subsection{Scattered Properties and Reducing to the Prime-Power Case}\label{sec:scattered}

For a positive integer \(n\), let us write $q(n)$ for the largest divisor of \(n\) that is a
prime power; we set $q(1) \coloneqq 1$.
It is instructive to discuss some useful properties of the function \(q(n)\).
\begin{lemma}\label{lem:upper:bound:q}
    \begin{enumerate}[(1)]
        \item For every positive integer $n$, we have $n \leq q(n)^{q(n)}$.
        \item Write $n \coloneqq p_1^{a_1} \cdots p_c^{a_c}$ for a positive
            integer and its corresponding prime factorization.
            Then, we have $\sqrt[c]{n} \le q(n)$.
        \item There is a universal constant $c > 0$ such that for every positive integer
            \(n\), we have $c \log(n) \le q(n)$.
    \end{enumerate}
\end{lemma}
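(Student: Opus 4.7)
The plan is to handle the three parts in order, with parts (1) and (2) sharing a single elementary ingredient and part (3) requiring a bit of input from prime counting.

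For parts (1) and (2), I would start from the canonical factorization $n = p_1^{a_1}\cdots p_c^{a_c}$. By the very definition of $q(n)$, each prime-power factor satisfies $p_i^{a_i} \leq q(n)$; multiplying these inequalities yields $n \leq q(n)^c$, which is exactly (2) after extracting a $c$-th root. For (1), it only remains to show $c \leq q(n)$, and I plan to do this by observing that each individual prime $p_i$ satisfies $p_i \leq p_i^{a_i} \leq q(n)$, so the $c$ distinct primes $p_1,\dots,p_c$ are distinct integers in $\{2,\dots,q(n)\}$, forcing $c \leq q(n)-1 < q(n)$. Combining gives $n \leq q(n)^c \leq q(n)^{q(n)}$.

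Part (3) is the main obstacle, because part (1) alone only delivers $\log n \leq q(n)\log q(n)$, which rearranges to $q(n) \gtrsim \log(n)/\log\log(n)$, a factor of $\log\log n$ off from what the statement demands. The fix is to replace the crude bound $c \leq q(n)$ by the sharper prime-counting bound $c \leq \pi(q(n))$, valid since every $p_i$ is a prime and $p_i \leq q(n)$. This rewrites my earlier estimate as
\[
    \log n \;\leq\; c \log q(n) \;\leq\; \pi(q(n))\log q(n).
\]

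To close the argument I would invoke Chebyshev's classical estimate $\pi(x)\log x \leq C_0 \, x$, valid for all $x \geq 2$ with some absolute constant $C_0$ (the prime number theorem even gives the stronger statement $\pi(x)\log x / x \to 1$, but only the crude Chebyshev bound is needed). Applying this with $x = q(n)$ yields $\log n \leq C_0\, q(n)$, which rearranges to $q(n) \geq (1/C_0)\log n$. Setting the universal constant to $c \coloneqq 1/C_0$ handles every $n \geq 2$; the edge case $n=1$ is vacuous since $\log 1 = 0$. I do not expect any technical subtlety beyond locating a form of Chebyshev's bound valid down to $x=2$; any standard reference provides this.
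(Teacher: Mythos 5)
Your proofs of parts (1) and (2) match the paper's: both bound $n \le q(n)^c$ by replacing each prime-power factor with $q(n)$, then (for (1)) observe that the $c$ distinct primes are all at most $q(n)$ to get $c \le q(n)$. For part (3), however, you take a genuinely different and arguably cleaner route. The paper proceeds from $q(n) \ge n^{1/\omega(n)}$ and invokes an explicit upper bound on $\omega(n)$ (the number of distinct prime divisors) due to Robin, namely $\omega(n) \le \log n / (\log\log n - 1.1714)$ for $n \ge 26$, and then massages the exponent to extract $q(n) \ge e^{-1.1714}\log n$. You instead upgrade the trivial bound $c \le q(n)$ to $c \le \pi(q(n))$ — which is just as trivial, since every prime dividing $n$ is at most $q(n)$ — and then close via Chebyshev's estimate $\pi(x)\log x \le C_0 x$. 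The net effect is the same asymptotic conclusion, but your argument needs only the classical Chebyshev bound on the prime-counting function rather than a specialized explicit estimate on $\omega(n)$; this is a more elementary and self-contained finish. The key insight you supply that the paper's route does not need — but which makes yours work — is that the primes dividing $n$ are confined to $[2, q(n)]$, so $\pi(q(n))$ rather than any function of $n$ suffices to control $c$. Both proofs are correct.
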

\begin{proof}
    For (1), write $n = p_1^{a_1} \cdots p_c^{a_c}$ for the prime factorization of \(n\).
        We bound each prime power factor of \(n\) with \(q(n)\), which yields \(n \le
        q(n)^c\). Finally, we use $c \leq \max(p_1, \dots, p_c) \leq q(n)$ to obtain the claim.

    For (2), observe that we have $q(n) = p_i^{a_i}$ for some $i \in \setn{c}$.
        Since $p_i^{a_i}$ is the largest prime power factor of \(n\), we obtain $q(n)^c = (p_i^{a_i})^c
        \geq n$; which yields the claim.

    For (3), write $\omega(n)$ for the number of prime divisors of $n$.
        We show $q(n) \geq c \log(n)$ for all $n \geq 26$; sufficiently decreasing $c$
        then yields the claim.

        We use the following result due to Robin~\cite[Theorem 13]{Robin1983}. For every
        \(n \ge 26\), we have
        \begin{align}\label{eq:10.11-1}
            \omega(n) \leq \frac{\log n }{ \log(\log(n)) - 1.1714}.
        \end{align}
        Combining \cref{eq:10.11-1} with Claim~(2), for every $n \geq 26$,
        we obtain
        \[q(n) \geq n^{\omega(n)^{-1}} \geq n^{\frac{\log(\log(n)) - 1.1714}{\log(n)}} =
        \log(n)\;\mathrm{e}^{-1.1714}; \]
        which yields the claim.
\end{proof}

Let us also recall the definition of concentrated and scattered integers for \(\Phi\).

\defconsca

Let us observe that if $\Phi$ is computable, then we can decide which case holds for a given $k$.

\begin{lemma}\label{lem:case:decidability}
    Let $\Phi$ denote a computable edge-monotone property. For every integer $k$, we can
    decide if $\Phi$ is trivial, concentrated, or scattered on $k$.
\end{lemma}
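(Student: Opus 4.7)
The plan is to observe that for a fixed $k$, all three predicates only require finitely many evaluations of $\Phi$ on $k$-vertex graphs, so the computability of $\Phi$ suffices. First, I would decide triviality by enumerating all $2^{\binom{k}{2}}$ labeled graphs on the vertex set $\setn{k}$, evaluating $\Phi$ on each (by assumption a finite computation), and checking whether the output is constant. If $\Phi$ is trivial on $k$ we are done; otherwise we know $k \in \nt{\Phi}$ and must distinguish concentrated from scattered.

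Next I would compute $q(k)$, the largest prime power divisor of $k$, by factoring $k$; this is obviously effective. Recall that by the definition in the excerpt, $\Phi$ is concentrated on $k \in \nt{\Phi}$ exactly when there exists a graph $H$ on $k$ vertices such that $\ae{\Phi}{H} \neq 0$ and $H$ contains $K_{q(k),q(k)}$ as a subgraph. Both of these conditions are decidable for any \emph{given} $H$: the alternating enumerator
\[
    \ae{\Phi}{H} = \sum_{S \subseteq E(H)} \Phi(\ess{H}{S})\,(-1)^{\NUM{}S}
\]
is a finite sum with at most $2^{\binom{k}{2}}$ terms, each of which we can compute by invoking the algorithm for $\Phi$ on a $k$-vertex graph; and containment of $K_{q(k),q(k)}$ as a subgraph can be checked by iterating over all $\binom{k}{q(k)}\binom{k-q(k)}{q(k)}$ pairs of disjoint vertex subsets of size $q(k)$.

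Therefore the decision procedure loops through all labeled $k$-vertex graphs $H$ and, for each, checks whether $\ae{\Phi}{H}\neq 0$ and whether $H$ contains $K_{q(k),q(k)}$ as a subgraph. If any $H$ satisfies both, we output ``concentrated''; if none does, we output ``scattered''. Correctness is immediate from the definition, and termination follows from the fact that there are only $2^{\binom{k}{2}}$ candidates $H$. There is no real obstacle here: the lemma is a pure finiteness observation and is included only to certify that the trichotomy used in the subsequent proofs is effective. The only mild point worth noting is that we are free to enumerate labeled rather than isomorphism classes of graphs, since $\Phi$ being invariant under relabeling means the value of $\ae{\Phi}{H}$, and of the biclique containment, depends only on the isomorphism type.
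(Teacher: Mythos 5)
Your proof is correct and takes essentially the same approach as the paper: decide triviality by exhaustively evaluating $\Phi$ on all $k$-vertex graphs, then decide concentrated vs.\ scattered by computing the alternating enumerator of every $k$-vertex graph that contains $K_{q(k),q(k)}$ as a subgraph. You simply spell out the finiteness bounds and the remark about labeled graphs versus isomorphism classes in more detail than the paper does.
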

\begin{proof}
    By evaluating $\Phi$ on every $k$-vertex graph, we can decide if $\Phi$ is trivial. If
    \(\Phi\) is not trivial on \(k\),
    then computing the alternating enumerator of every $k$-vertex graph that
    contains $K_{q(k),q(k)}$ as a subgraph tells us if $k$ is concentrated.
\end{proof}

In the next step, we show the key result toward our win-win approach: for an edge-monotone
property \(\Phi\) and any \(k\) on which \(\Phi\) is nontrivial, we either win by
obtaining a high-treewidth graph with nonvanishing alternating enumerator (if \(\Phi\) is
concentrated on \(k\)), or we win by
reducing to the prime power case (if \(\Phi\) is scattered on \(k\)).
Formally, we prove \cref{lem:reduction}.

\lemmareduction
\begin{proof}
    Fix an integer \(k \in \nt{\Phi}\).
    We prove that either \(k\) is concentrated or the claim holds.

    To that end, we let $\rotgr{q(k)}^{k/q(k)}$ act on $K^{k/q(k)}_{q(k)}$ and consider the resulting
    fixed points \(\fpb{\rotgr{q(k)}^{k/q(k)}}{K^{k/q(k)}_{q(k)}}\).
    As \(\Phi\) is nontrivial on \(k = q(k) \cdot k/q(k)\), there is a minimal integer \(i\)
    such that

    \begin{itemize}
        \item there is a fixed point with a level of $i$ that does not satisfy $\Phi$ but
        \item all fixed points with a level of less than $i$ satisfy $\Phi$.
    \end{itemize}

    As \(\Phi\) is nontrivial on \(k\), we have \(\Phi(\IS^{q(k)/d}_{q(k)}) = 1\) and
    \(\Phi(K^{q(k)/d}_{q(k)}) = 0\). In particular, as \(\IS^{q(k)/d}_{q(k)}\) is the (only)
    fixed point with the minimum level of \(0\) and as
    \(\smash{K^{q(k)/d}_{q(k)}}\) is the fixed point with the maximum level, we obtain that \(i\) satisfies
    \[
        0 = \hl{\IS^{q(k)/d}_{q(k)}} < i \le  \hl{K^{q(k)/d}_{q(k)}} = \binom{k/q(k)}{2} +
        \NUM{}\field{q(k)}^+ \cdot k/q(k).
    \]

    Write \(F \coloneqq \metaGraph{C}{\cgr{q(k)}{A^1}, \dots, \cgr{q(k)}{A^{k/q(k)}}}\) for a fixed
    point with \(\hl{F} = i\) and \(\Phi(F) = 0\).
    Next, we distinguish two cases based on whether \(C\) contains an edge.

    First, consider the case that \(C\) contains some edge.
    Now, \cref{remark:chi:all:children:true} yields $\ae{\Phi}{F} \neq 0$.
    Further, \cref{rem:treewidth:biclique} yields that \(F\) has $K_{q(k), q(k)}$ as a subgraph.
    Thus, \(\Phi\) is concentrated on \(k\).

    Second, consider the remaining case that $C = \IS_{k/q(k)}$.
    This in turn means that \(F\) is the disjoint union of \(k/q(k)\) fixed points of
    \(\fpb{\rotgr{q(k)}}{K_{q(k)}}\); we write
    \[
    F = \cgr{q(k)}{A^1} \UnionGraph \cdots \UnionGraph \cgr{q(k)}{A^{k/q(k)}}.
    \]
    As $\hl{F} = i \geq 1$, there is at least one block $x \in \setn{k/q(k)}$  with $A^x \neq \emptyset$.
    Without loss of generality, we may assume $x = 1$ (otherwise, pick the isomorphic
    graph with renamed blocks); and in particular \(\cgr{q(k)}{A^1} \neq \IS_{q(k)}\).
    Now, we set
    \[H \coloneqq \cgr{q(k)}{A^2} \UnionGraph \cdots \UnionGraph \cgr{q(k)}{A^{k/q(k)}}. \]
    Further, we define the  graph
    property \((\Phi - H)\) via
    \begin{align*}
        (\Phi - {H}) \coloneqq \{ G  \mid G \UnionGraph H \in \Phi  \}.
    \end{align*}
    As \(\Phi\) is edge-monotone and computable, so is \((\Phi - H)\).
    Finally, we show that $(\Phi - H)$ is nontrivial on $q(k)$.

    \begin{claim}\label{11.4-1}
        We have \(\IS_{q(k)} \in  (\Phi - H)\)
        and
        \(\cgr{q(k)}{A^1} \not\in  (\Phi - H)\).
    \end{claim}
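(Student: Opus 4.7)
The plan is to exploit the minimality of the level $i$ chosen earlier in the proof, combined with the explicit decomposition of $F$.

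First I would handle the easier direction $\cgr{q(k)}{A^1} \not\in (\Phi - H)$. By the definition of $H$ and the fact that we are in the case $C = \IS_{k/q(k)}$, we have
\[
\cgr{q(k)}{A^1} \UnionGraph H = \cgr{q(k)}{A^1} \UnionGraph \cgr{q(k)}{A^2} \UnionGraph \cdots \UnionGraph \cgr{q(k)}{A^{k/q(k)}} = F.
\]
Since $\Phi(F) = 0$ by choice of $F$, the definition of $(\Phi - H)$ immediately gives $\cgr{q(k)}{A^1} \not\in (\Phi - H)$.

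For the other direction $\IS_{q(k)} \in (\Phi - H)$, I would first observe that $\IS_{q(k)} = \cgr{q(k)}{\emptyset}$, so
\[
\IS_{q(k)} \UnionGraph H = \metaGraph{\IS_{k/q(k)}}{\cgr{q(k)}{\emptyset}, \cgr{q(k)}{A^2}, \dots, \cgr{q(k)}{A^{k/q(k)}}}
\]
is a fixed point of $\rotgr{q(k)}^{k/q(k)}$ in $K^{k/q(k)}_{q(k)}$ by the description of such fixed points (i.e., \cref{10.31-4}). Its level equals $|E(\IS_{k/q(k)})| + 0 + \sum_{j=2}^{k/q(k)} |A^j| = \hl{F} - |A^1| = i - |A^1|$. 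Since $A^1 \neq \emptyset$, this level is strictly smaller than $i$. By the minimality of $i$, every fixed point of level less than $i$ satisfies $\Phi$, hence $\IS_{q(k)} \UnionGraph H \in \Phi$ and so $\IS_{q(k)} \in (\Phi - H)$.

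There is no real obstacle here; the lemma is essentially a bookkeeping exercise that extracts the payoff of the minimality argument, and the main care is just to correctly identify $\IS_{q(k)} \UnionGraph H$ as a proper sub-point of $F$ in the fixed-point lattice so that the minimality of $i$ applies.
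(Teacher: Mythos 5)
Your proof is correct and follows essentially the same approach as the paper: for the first part you identify $\IS_{q(k)} \UnionGraph H$ as a fixed point of strictly smaller level than $F$ and invoke minimality of $i$, and for the second part you observe $\cgr{q(k)}{A^1} \UnionGraph H \cong F$ directly. The only cosmetic difference is that you compute the level $i - |A^1|$ explicitly where the paper just notes the drop in level; both are the same bookkeeping.
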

    \begin{claimproof}
        First, the graph $\IS_{q(k)} \UnionGraph\; H$ is isomorphic to the fixed point $F'
        = \IS_{q(k)} \UnionGraph\; \cgr{q(k)}{A^2} \UnionGraph \cdots \UnionGraph
        \cgr{q(k)}{A^{k/q(k)}}$.
        Further, we have \(\hl{F'} < \hl{F} = i\) as \(\cgr{q(k)}{A^1}\) is not the empty
        graph.
        Thus, \(F'\) satisfies $\Phi$ as, by construction, all fixed points with a level
        less than $i$ satisfy \(\Phi\).

        Second, observe that we have that $\cgr{q(k)}{A^1} \UnionGraph H$ is isomorphic to $F$, hence
        \[
             (\Phi - H)(\cgr{q(k)}{A^1}) = \Phi(\cgr{q(k)}{A^1} \UnionGraph H) = \Phi(F) = 0.
             \tag*{\claimqedhere}
        \]
    \end{claimproof}
    \Cref{11.4-1} shows that, indeed, \(\Phi\) is scattered on \(k\).
    This completes the proof.
\end{proof}

Next, we show that scattered integers indeed yield a reduction to the prime power-power
case.
To that end, we first use the scattered part of a graph property define another graph
property on prime powers.

\begin{definition}\label{11.5-2}
    Let \(\Phi\) denote a computable, edge-monotone graph property
    and write \(\scat{\Phi}\) for the set of all integers on which \(\Phi\) is scattered.
    Further, write \(q(\scat{\Phi}) \coloneqq \{ q(k)  \mid k \in \scat{\Phi}\}\) for the set
    of all maximal prime powers corresponding to \(\scat{\Phi}\).
    Finally, for each \(m \in
    q(\scat{\Phi})\) write \(q^{-1}(m)\) for the minimum \(k \in \scat{\Phi}\) with \(q(k) =
    m\).

    Now, for each \(m \in q(\scat{\Phi})\), let \(H_m\) denote the lexicographically first graph on \(q^{-1}(m) - m\)
    vertices such that the graph property
    \((\Phi - H_{m}) \coloneqq \{ G  \mid G \UnionGraph H_{m} \in \Phi\}\)
    is nontrivial on $m$.

    We define the scattered property \(\Phi_{\scat{}}\) corresponding to $\Phi$ as
    \[
        \Phi_{\scat{}}(G)=1 \iff |V(G)|\in q(\scat{\Phi}) \text{ and } \Phi(G\UnionGraph H_{|V(G)|})=1
        \tag*{\qedhere}
    \]
\end{definition}

Observe that \cref{lem:reduction} implies that $H_m$ is well-defined in \cref{11.5-2}, as
there is at least one such graph.
Let us verify the main properties of the defined function $\Phi_{\scat{}}$.

\begin{lemma}\label{11.5-1}
    Let \(\Phi\) denote a computable, edge-monotone graph property
    and write \(\scat{\Phi}\) for the set of all integers on which \(\Phi\) is scattered.
    If \(\scat{\Phi}\) is infinite, then
     the scattered property
    \(\Phi_{\scat{}}\) corresponding to \(\Phi\)
    is computable, edge-monotone, and nontrivial on
    infinitely many prime powers.
\end{lemma}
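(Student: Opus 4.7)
The plan is to verify the three stated properties of $\Phi_{\scat{}}$ in turn: computability, edge-monotonicity, and nontriviality on infinitely many prime powers. All three will follow by combining \cref{lem:reduction}, \cref{lem:case:decidability}, and \cref{lem:upper:bound:q} with the definition of $\Phi_{\scat{}}$ in \cref{11.5-2}.

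For computability, given a graph $G$ with $m \coloneqq |V(G)|$, I first need to decide whether $m \in q(\scat{\Phi})$, and, if so, to compute the index $q^{-1}(m)$ and the graph $H_m$. The key point is that by \cref{lem:upper:bound:q}(1), any $k$ with $q(k)=m$ satisfies $k \le q(k)^{q(k)} = m^m$, so the preimage of $m$ under $q$ is a finite, explicitly enumerable set. For each such $k$, \cref{lem:case:decidability} lets us decide whether $\Phi$ is scattered on $k$, so both the membership $m \in q(\scat{\Phi})$ and the value $q^{-1}(m)$ (when it exists) can be determined in finite time. Given $q^{-1}(m)$, the graph $H_m$ is obtained by enumerating all graphs on $q^{-1}(m) - m$ vertices in lexicographic order and checking, for each candidate $H$, whether $(\Phi - H)$ takes both values on some pair of $m$-vertex graphs; \cref{lem:reduction} guarantees that at least one such $H$ exists. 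Finally, $\Phi_{\scat{}}(G) = \Phi(G \UnionGraph H_m)$ can be evaluated because $\Phi$ itself is computable.

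For edge-monotonicity, suppose $\Phi_{\scat{}}(G) = 1$ and let $G'$ be obtained from $G$ by deleting edges. Since $V(G) = V(G')$, the condition $|V(G)| \in q(\scat{\Phi})$ is unchanged and the same graph $H_{|V(G)|}$ is used in both cases. Because $G' \UnionGraph H_{|V(G)|}$ is an edge-subgraph of $G \UnionGraph H_{|V(G)|}$ and $\Phi$ is edge-monotone, $\Phi(G' \UnionGraph H_{|V(G)|}) = 1$, so $\Phi_{\scat{}}(G') = 1$.

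For the nontriviality claim, I will argue that (i) $q(\scat{\Phi})$ is infinite, (ii) every element of $q(\scat{\Phi})$ is a prime power, and (iii) $\Phi_{\scat{}}$ is nontrivial on every $m \in q(\scat{\Phi})$. Point (ii) is immediate from the definition of $q$. For (i), \cref{lem:upper:bound:q}(1) again bounds the preimage of each $m$ under $q$ by $m^m$, hence finite; an infinite $\scat{\Phi}$ cannot map into a finite set under $q$, so $q(\scat{\Phi})$ is infinite. For (iii), fix $m \in q(\scat{\Phi})$. By construction of $H_m$ in \cref{11.5-2}, the property $(\Phi - H_m)$ is nontrivial on $m$, so there exist $m$-vertex graphs $G_0, G_1$ with $\Phi(G_0 \UnionGraph H_m) = 0$ and $\Phi(G_1 \UnionGraph H_m) = 1$; unfolding the definition of $\Phi_{\scat{}}$ yields $\Phi_{\scat{}}(G_0) = 0 \ne 1 = \Phi_{\scat{}}(G_1)$, as required.

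The only potentially delicate point is the termination of the bounded search in the computability argument when $m \notin q(\scat{\Phi})$; this is handled uniformly by the cap $m^m$ on the preimage of $m$ under $q$ from \cref{lem:upper:bound:q}(1). Everything else is bookkeeping.
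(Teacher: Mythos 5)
Your proof is correct and follows essentially the same structure as the paper's: decidability of the auxiliary objects via \cref{lem:case:decidability} and the bound in \cref{lem:upper:bound:q}(1), infinitude of $q(\scat{\Phi})$, and nontriviality of $(\Phi - H_m)$ on $m$ by the choice of $H_m$ in \cref{11.5-2}. The one small variation is that you derive infinitude of $q(\scat{\Phi})$ from the finiteness of each preimage $q^{-1}(m)$ (via \cref{lem:upper:bound:q}(1)), whereas the paper invokes \cref{lem:upper:bound:q}(3) ($q(k) \ge c\log k$) to argue $q$ is unbounded; both are valid and equally short, and since you already use part (1) for computability, your choice is economical. You also spell out the edge-monotonicity and the per-$m$ nontriviality in more detail than the paper's terse statements, which is fine and indeed confirms they hold.
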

\begin{proof}
    As in \cref{11.5-2}, write \(q(\scat{\Phi}) \coloneqq \{ q(k)  \mid k \in \scat{\Phi}\}\) for the set
    of all maximal prime powers corresponding to \(\scat{\Phi}\).
    Next, for each \(m \in
    q(\scat{\Phi})\) write \(q^{-1}(m)\) for the minimum \(k \in \scat{\Phi}\) with \(q(k) =
    m\).

    \begin{claim}
        The sets \(\scat{\Phi}\) and \(q(\scat{\Phi})\), the function \(q^{-1}\), and the
        graphs $H_m$ for every $m\in q(\scat{\Phi})$ are computable.
    \end{claim}
    \begin{claimproof}
        Fix an integer \(k\). As \(\Phi\) is computable and by using
        \cref{lem:case:decidability}, we can compute whether \(k \in \scat{\Phi}\).

        Next, we wish to decide \(k \in q(\scat{\Phi})\).
        To that end, we iterate through the integers $i$ starting from $k$ and ending
        with $k^k$.
        For each such \(j\), we first check if $j \in \scat{\Phi}$.
        If this is the case, we compute \(q(j)\) and check if \(k = q(j)\).
        If indeed \(k = q(j)\), we return \(k \in q(\scat{\Phi})\).
        Otherwise, if we find that for all integers \(k \le j \le k^k\), we have \(j
        \not\in \scat{\Phi}\), we return \(k \not\in q(\scat{\Phi})\).

        This algorithm is correct, as by \cref{lem:upper:bound:q}(1), any integer \(k\) may
        appear as the largest prime power only for integers that are at most \(k^k\).
        Finally, said algorithm also yields the smallest value \(j\) with \(q(j) = k\) ;
        which proves that the function \(q^{-1}\) is computable as well.

        Recall that, for $m\in q(\scat{\Phi})$, the graph $H_m$ has $q^{-1}(m)-m$ vertices (which
        is a number we can compute). Let us enumerate in lexicographic order the graphs $H_m$ on
        \(q^{-1}(m)-m\) vertices and check if $(\Phi-H_m)$ is a nontrivial property on $m$ vertices. As
        $\Phi$ is computable, this can be done by enumerating every graph on $m$ vertices.
    \end{claimproof}

    Finally, we observe that as \(\scat{\Phi}\) is infinite, so is \(q(\scat{\Phi})\)
    (as by \cref{lem:upper:bound:q}(3), we have \(q(k) \ge c \log(k)\) for all integer
    \(k\)). Thus \(\Phi_{\scat{}}\) is nontrivial on infinitely many prime powers. The edge-monotonicity of
    \(\Phi_{\scat{}}\) follows from its definition and from the edge-monotonicity of \(\Phi\).
\end{proof}

Next, we show that if
the scattered property \(\Phi_{\scat{}}\) corresponding to \(\Phi\)
is computable, edge-monotone, and nontrivial on infinitely many prime powers,
then we can reduce \(\NUM{}\indsubsprob(\Phi)\) to  \(\NUM{}\indsubsprob(\Phi_{\scat{}})\).

\begin{corollary}\label{11.5-20}
    Let \(\Phi\) denote a computable, edge-monotone graph property
    and write
    \(\Phi_{\scat{}}\) for the scattered property corresponding to \(\Phi\).
    Then, there is a parameterized Turing reduction from $\NUM{}\indsubsprob(\Phi_{\scat{}})$
    to $\NUM{}\indsubsprob(\Phi)$.
\end{corollary}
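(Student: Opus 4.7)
The plan is to combine \cref{lem:inc:exc} with the definition of \(\Phi_{\scat{}}\) in \cref{11.5-2} to directly construct the reduction. By construction, for any graph \(G\) with \(|V(G)| = m\), we have
\[
\Phi_{\scat{}}(G) = \begin{cases} \Phi(G \UnionGraph H_m) = (\Phi - H_m)(G) & \text{if } m \in q(\scat{\Phi}), \\ 0 & \text{otherwise,} \end{cases}
\]
where \(H_m\) is the lexicographically first graph on \(q^{-1}(m) - m\) vertices such that \((\Phi - H_m)\) is nontrivial on \(m\). The strategy is therefore to first identify the correct \(H_m\) from the parameter \(m\) alone, and then invoke \cref{lem:inc:exc} to translate the oracle for \(\NUM{}\indsubsprob(\Phi)\) into an oracle for \(\NUM{}\indsubsprob((\Phi - H_m))\), whose output agrees with \(\NUM{}\indsubs{(\Phi_{\scat{}}, m)}{G}\).

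Concretely, on input \((G, m)\) the reduction proceeds in three stages. First, the algorithm decides whether \(m \in q(\scat{\Phi})\); this is possible in time depending only on \(m\) by reusing the computability arguments inside the proof of \cref{11.5-1} (iterating \(j\) in the range \([m, m^m]\) by \cref{lem:upper:bound:q}(1) and applying \cref{lem:case:decidability} to each candidate). If \(m \notin q(\scat{\Phi})\), the algorithm outputs \(0\). Second, if \(m \in q(\scat{\Phi})\), the algorithm computes \(q^{-1}(m)\) and then constructs \(H_m\) by enumerating graphs on \(q^{-1}(m) - m\) vertices in lexicographic order and evaluating \((\Phi - H)\) on every \(m\)-vertex graph to test nontriviality; \cref{lem:reduction} guarantees that such a graph exists, so the enumeration terminates. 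Third, the algorithm runs the procedure of \cref{lem:inc:exc} on \((\Phi, H_m, G, m)\) with access to the oracle for \(\NUM{}\indsubsprob(\Phi)\), and returns its output.

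For this to be a valid parameterized Turing reduction I must check the FPT-bound on the running time and the parameter bound on each oracle query. The size \(|V(H_m)| = q^{-1}(m) - m\) is a computable function of \(m\), hence the factor \(2^{|V(H_m)|}\) in \cref{lem:inc:exc} is absorbed into the FPT-dependence on the parameter, and the \((|V(G)| + |V(H_m)|)^2\) factor is polynomial in the input size. Every oracle query uses parameter \(m + |V(H_m)| = q^{-1}(m)\), which is also computable from \(m\), as required by the definition of parameterized Turing reduction.

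No real obstacle appears in this proof; it is a direct composition of \cref{lem:reduction,lem:inc:exc} together with the computability discussion of \cref{11.5-1}. The only small subtlety to keep track of is that \(H_m\) must be uniquely and computably determined by \(m\), which is exactly why \cref{11.5-2} fixes \(H_m\) to be the lexicographically first qualifying graph — this ensures the reduction is deterministic, and the same \(H_m\) is used for all oracle calls on a given input.
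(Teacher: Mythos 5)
Your proposal is correct and follows essentially the same three-stage structure as the paper's proof: check membership of the parameter in \(q(\scat{\Phi})\), compute \(H_m\) using the computability arguments of \cref{11.5-1}, and then invoke \cref{lem:inc:exc}, with the same running-time and parameter-bound analysis. The one place you are slightly more explicit than the paper is in spelling out the enumeration that produces \(H_m\), but this is implicit in the paper's appeal to \cref{11.5-1} and does not change the argument.
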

\begin{proof}
    Write \(\scat{\Phi}\) for the set of all integers on which \(\Phi\) is scattered.
    As in \cref{11.5-2}, write \(q(\scat{\Phi}) \coloneqq \{ q(k)  \mid k \in \scat{\Phi}\}\) for the set
    of all maximal prime powers corresponding to \(\scat{\Phi}\).
    Next, for each \(m \in
    q(\scat{\Phi})\) write \(q^{-1}(m)\) for the minimum \(k \in \scat{\Phi}\) with \(q(k) =
    m\). For $m\in q(\scat{\Phi})$, define $H_m$ as in \cref{11.5-2}.

    Fix a graph \(G\) and an integer \(k\).
    We wish to compute \(\NUM{}\indsubs{(\Phi_{\scat{}}, k)}{G}\).
    First, we check if \(k\) is a prime power that is contained in \(q(\scat{\Phi})\).
    If we observe \(k \not\in q(\scat{\Phi})\), we return \(0\), as no graph with $k$
    vertices is in $\Phi_{\scat{}}^{\mathcal{H}}$. Otherwise, a $k$-vertex graph is in
    $\Phi_{\scat{}}$ if and only if it is in $(\Phi - H_k)$. Thus, we can return
    \(\NUM{}\indsubs{((\Phi-H_k), k)}{G}\)
    using \cref{lem:inc:exc} (which we
    supply with our oracle for \(\NUM{}\indsubsprob(\Phi)\)).

    Finally, we analyze the running time of the reduction. To that end, we assume that we have
    access to a \(\NUM{}\indsubsprob(\Phi)\) oracle, that is, we assume that the running time $g(k,
    |V(G)|)$ of the oracle is constant.
    Now, first observe that all computations in the reduction (other than the call
    to \cref{lem:inc:exc}) have a running time that depends only on the parameter~\(k\).
    Next,
    we observe that \(|V(H_k)| = q^{-1}(k) -
    k\). Hence, the call to \cref{lem:inc:exc} takes time
    \[
        O(2^{q^{-1}(k) - k} \cdot (q^{-1}(k) - k + |V(G)|)^2 \cdot g( q^{-1}(k), q^{-1}(k) - k + |V(G)| ))
        = O( g'(k) \cdot |V(G)|^2 ),
    \]
    for some computable function \(g'\).
    Lastly, we observe that the reduction of \cref{lem:inc:exc} uses only oracle calls where the
    parameter has size $(q^{-1}(k) - k) + k  = q^{-1}(k)$.

    Hence, there is a parameterized
    Turing reduction from $\NUM{}\indsubsprob( \Phi_{\scat{}} )$ to
    $\NUM{}\indsubsprob(\Phi)$.
\end{proof}

\begin{remark}\label{11.5-12}
    If the graph property $\Phi_{\scat{}}$ is edge-monotone
    and nontrivial infinitely often (that is, nontrivial on infinitely many prime
    powers), then \cref{theo:edge_mono:prime} applies and the problem
    $\NUM{}\indsubsprob(\Phi_{\scat{}})$ is \w-hard (and we also obtain an
    ETH-based lower bound).
\end{remark}

\subsection{\NUM{}W[1]-hardness and Quantitative Lower Bounds for Edge-monotone Properties}
\label{sec:main1prof}

We are finally ready to prove \cref{theo:edge_mono}.

\thmedgemon
\begin{proof}
    Write \(\scat{\Phi}\) for the set of all integers on which \(\Phi\) is scattered
    and write \(\con{\Phi}\) for the set of all integers on which \(\Phi\) is
    concentrated.

    For the \w-hardness part, let us consider two cases: either $\scat{\Phi}$ or
    $\con{\Phi}$ is infinite. Suppose first that $\scat{\Phi}$ is infinite. Then, the
    scattered property $\Phi_{\scat{}}$ corresponding to $\Phi$ is computable and
    nontrivial on infinitely many prime powers (\cref{11.5-1}). Hence,
    $\NUM{}\indsubsprob(\Phi_{\scat{}}$ is \w-hard (\cref{theo:edge_mono:prime}) and can be
    reduced to $\NUM{}\indsubsprob(\Phi)$ (\cref{lem:inc:exc}), showing that
    $\NUM{}\indsubsprob(\Phi)$ is also \w-hard.

    Assume now that $\con{\Phi}$ is infinite. Let $\mathcal{H}$ contain every graph $H$ whose
    alternating enumerator is nonvanishing for $\Phi$ and contains $K_{q(|V(H)|),q(|V(H)|)}$
    as subgraph. The set $\mathcal{H}$ is clearly infinite and computable.
    Observe that every $H\in \mathcal{H}$ has treewidth at least \(q(k)\).\footnote{Consult for
        instance \cite[Corollary~9~and~Lemma~4]{treewidth} for a proof of this folklore
    fact.}
    In particular, by \cref{lem:upper:bound:q}(3), as the family \(\mathcal{H}\)
    is infinite, it has unbounded treewidth. Thus \cref{lem:alpha:treewidth} yields
    \w-hardness for \(\NUM{}\indsubsprob(\Phi)\).

    We turn to the ETH-based lower bound next.
    To that end, fix a \(k\) for which \(\Phi\) is nontrivial.
    Now, \(\Phi\) is either scattered or concentrated on \(k\). We provide a proof for both cases.

    First, we consider the case that \(\Phi\) is scattered on \(k\). Define $m \coloneqq q(k)\in
    q(\scat{\Phi})$. Let $H_m$ denote the graph defined in \cref{11.5-2}.
    Then, \((\Phi - {H_m}) \coloneqq \{ G \mid G \UnionGraph H_m \in \Phi\}
    \) is edge-monotone and nontrivial on \(m=q(k) \ge c \log(k)
    \ge 3\) vertices (for some constant \(c\) and sufficiently large \(k\)).

    Now, by \cref{theo:sqrt:lower:bound} and assuming ETH, there is an \(\alpha\) such
    that
    there is no algorithm that for each graph \(G\) computes the number
    $\NUM{}\indsubs{((\Phi - {H_m}), m)}{G}$ in
    time \(O(|V(G)|^{ \alpha \sqrt{q(k)} / \ \log(q(k)) } )\).
    Using the reduction of
    \cref{lem:inc:exc} (which has a quadratic overhead),
    we obtain that there is no algorithm that computes for every graph \(G\) the
    number \(\NUM{}\indsubs{(\Phi, k)}{G}\) in time
    \[
        O(|V(G)|^{ (\alpha \sqrt{q(k)} / \ \log(q(k))) - 2 } ),
    \]
    We conclude this case by proving the following inequalities.
    \begin{claim}\label{claim:ineq:lower:bound:proof}
        Choosing \(\gamma' \coloneqq \alpha \sqrt{c} / 2\),
        for any sufficiently large $k$, we have
        \begin{align*}
            \gamma' \frac{\sqrt{\log k }}{ \log \log k}
            &\leq \gamma' \frac{\sqrt{q(k)/c}}{ \log(q(k) / c)}
            \leq \gamma' \frac{\sqrt{q(k)/c}}{\log(q(k))}
            \leq \alpha \frac{ \sqrt{q(k)}}{ \log(q(k))} - 2.
        \end{align*}
    \end{claim}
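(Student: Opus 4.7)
The plan is to handle the three inequalities in the chain separately and then combine them for all sufficiently large $k$. The first inequality is a monotonicity argument, the second is a simple observation about the constant $c$, and the third is the substitution that explains the choice of $\gamma'$.

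For the \textbf{first inequality}, I would exploit that the function $f(y)=\sqrt{y}/\log y$ is monotonically increasing for $y\ge e^2$ (its derivative equals $(\log y-2)/(2\sqrt{y}\log^2 y)$, which is positive there). By \cref{lem:upper:bound:q}(3) we have $q(k)\ge c\log k$, hence $q(k)/c\ge \log k$, and for all sufficiently large $k$ both quantities exceed $e^2$. Monotonicity of $f$ then yields $f(q(k)/c)\ge f(\log k)$, and multiplying by $\gamma'$ gives the first inequality.

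For the \textbf{second inequality}, I would note that the constant $c$ from \cref{lem:upper:bound:q}(3) is strictly less than $1$: inspecting the proof one can take $c=e^{-1.1714}$. Hence $\log(q(k)/c)=\log q(k)+\log(1/c)\ge \log q(k)$, so dividing the same positive numerator by the larger denominator on the right preserves the inequality.

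For the \textbf{third inequality}, substituting $\gamma'=\alpha\sqrt{c}/2$ gives
\[
\gamma'\frac{\sqrt{q(k)/c}}{\log q(k)}=\frac{\alpha}{2}\cdot\frac{\sqrt{q(k)}}{\log q(k)},
\]
so the required inequality becomes $(\alpha/2)\sqrt{q(k)}/\log q(k)\ge 2$, equivalently $\sqrt{q(k)}/\log q(k)\ge 4/\alpha$. Since $y\mapsto\sqrt{y}/\log y\to\infty$ and $q(k)\to\infty$ as $k\to\infty$ (again using $q(k)\ge c\log k$), this holds for all sufficiently large $k$. The main subtlety is simply making sure that ``sufficiently large'' is uniform across the three steps, but since each of the three conditions ($q(k)/c\ge e^2$, $c<1$, and $\sqrt{q(k)}/\log q(k)\ge 4/\alpha$) is eventually satisfied, we can pick a single threshold beyond which all three inequalities hold simultaneously, completing the proof of the claim.
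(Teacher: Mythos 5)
Your proof is correct and follows essentially the same route as the paper: monotonicity of $x \mapsto \sqrt{x}/\log x$ for $x \ge e^2$ combined with $q(k)/c \ge \log k$ for the first inequality, the assumption $c < 1$ for the second, and the algebraic rearrangement $\gamma'\sqrt{q(k)/c} = (\alpha/2)\sqrt{q(k)}$ reducing the third inequality to $\alpha\sqrt{q(k)}/\log q(k) \ge 4$. Your treatment of the second inequality is slightly more explicit than the paper's terse "immediate", but the underlying justification ($c<1$) is the same one the paper invokes a few lines later when it writes "we use $0 < c < 1$ (which we can assume without loss of generality)".
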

    \begin{claimproof}
        \begin{enumerate}
            \item For the first inequality
                we consider the function $h_1 \colon \mathbb{R}_{> \mathrm{1}} \to
                \mathbb{R}; x \mapsto \sqrt{x} / \log(x)$.
                Observe that \(h_1\) is monotonically
                increasing for $x \geq \mathrm{e^2}$.\footnote{%
                    First, one may verify that the derivative of \(h_1\) is
                    $h_1'(x) = (\log(x) - 2)/(2 \sqrt{x} \log^2(x))$.
                    Next, one verifies that $h_1'(x)
                    \geq 0$ for all $x \geq \mathrm{e}^2$.
                    Thus, $h_1$ is monotonically increasing after \(\mathrm{e}^2\).}
                Finally, we choose \(k\) large enough to satisfy
                \(\mathrm{e}^2  \leq \log k  \leq q(k)/c.\)
            \item The second inequality is immediate.
            \item The last inequality is equivalent to $4 \leq \alpha
                \sqrt{q(k)} / \log(q(k))$,
                which holds for sufficiently large $q(k)$ since $h_1$ is unbounded and monotonically increasing.
                \claimqedhere
        \end{enumerate}
    \end{claimproof}

    Now, by \cref{claim:ineq:lower:bound:proof}, for a \(k \ge N_1\) (where \(N_1\) is a
    sufficiently large constant) and
    assuming ETH,
    there is no algorithm that for each graph \(G\) computes the number
    $\NUM{}\indsubs{(\Phi, k)}{G}$ in
    time
    \[
        O(|V(G)|^{ (\alpha \sqrt{q(k)} / \ \log(q(k))) - 2 } )
        \supseteq
        O(|V(G)|^{ (\gamma' \sqrt{q(k)} / \ \log(q(k))) } ).
    \]

    Next, we consider the case that \(\Phi\) is concentrated on \(k\).
    By definition, there exists a graph $H_k$ on $k$ vertices with a nonvanishing
    alternating enumerator that contains $K_{q(k), q(k)}$ as a subgraph.
    As before, this implies that $H_k$
    has a treewidth of at least $q(k)$.

    Now, from \cref{lem:upper:bound:q}(3), for each integer \(N_2\),
    we obtain that for some constant \(c\) and
    for \(k \ge \beta\) (for some constant
    \(\beta \coloneqq \beta(c, N_2)\) that depends on \(c\) and \(N_2\)), we have \[
        \tw(H_k) \ge q(k) \geq c \log(k) \ge N_2.
    \]

    Now, by \cref{lem:alpha:treewidth}, assuming ETH, and choosing \(N_2 \coloneqq 3\),
    there is no algorithm that for each graph \(G\) computes the number
    $\NUM{}\indsubs{(\Phi, k)}{G}$ in
    time \[
        O(|V(G)|^{\alpha_{\indsubsprob} \tw(H_k) / \log\tw(H_k)})
        \supseteq O(|V(G)|^{\alpha_{\indsubsprob} c \log(k) / \log (c \log(k))})
        \supseteq O(|V(G)|^{\alpha_{\indsubsprob} c  \log k / \log\log k })
    .\]
    For the first step, we use that the function $h_2 \colon \mathbb{R}_{>1} \to
    \mathbb{R}, x \mapsto x / \log(x)$ is monotonically increasing for $x > \mathrm{e}$.
    For the second step, we use $0 < c < 1$ (which we can assume without loss of
    generality).
    Finally, to obtain the claim also for all values \(k\) that are less than
    \(N_0 \coloneqq \max(N_1, \beta(c, N_2))\), we choose
    \(\gamma \coloneqq \min( 1/\sqrt{\log(N_0)}, \gamma', c \alpha_{\indsubsprob})\).
    Observe that now, for \(k < N_0\), we obtain that
    \[
        O(|V(G)|^{\gamma \sqrt{\log k} / \log\log k }) = o( |V(G)| ).
     \]
    Now, such a running time is unconditionally unachievable for any algorithm that reads
    the whole input.
    This completes the proof.
\end{proof}

\section{Main Result 2: Tight Bounds for Edge-monotone Properties on Prime Powers}\label{sec:lower:bounds}

In this section, we prove \cref{theo:tight:lower:bound}.
While we build on our structural understanding of fixed points from \cref{sec:alt:en:hasse},
the proof is independent of \cref{sec:difference:graphs,sec:mainresult1}.

\thetightlowerbound*

Recall that by \cref{sec:difference:graphs}, the rotation subgroup $\rotgr{p^m}\subseteq
\aut(K_{p^m})$ gives rise to the difference graphs over $\field{p^m}$ as fixed points.
In this section, we consider a much larger subgroup of $\aut(K_{p^m})$, namely the Sylow
$p$-subgroup.
Thereby, we obtain fixed points of a different type, the lexicographic product of
difference graphs over $\field{p}$.
To analyze such fixed points, for the rest of the section, we view the vertex set of $K_{p^m}$ as
$\fragmentco{0}{p}^m$.

\subsection{The Fixed Points of Sylow Groups on \texorpdfstring{$K_{p^m}$}{K p m}}\label{sec:fixed:point:p^k}

We intend to construct special subgroups of \(\auts{K_{p^m}}\).
To that end, we first define special types of bijections on \(\fragmentco{0}{p}^{m}\).

\begin{definition}\label{11.12-3}
    Consider a prime $p$ and a positive integer $m$.
    For each \(j \in \fragmentco{0}{m}\), write \(\varphi_i\) for a function
    \(\fragmentco{0}{p}^{j} \to \fragmentco{0}{p}\) and set \(\varphi \coloneqq
    (\varphi_0,\dots,\varphi_{m-1})\).
    We define the function $\sylelm : \fragmentco{0}{p}^{m} \to \fragmentco{0}{p}^{m}$ via\footnote{We write $\varphi_0$ for
    $\varphi_0(())$ since $\varphi_0$ is a function that is defined on a single element}
    \[\sylelm(x_1, \dots, x_m) \coloneqq (x_1 + \varphi_0, x_2 + \varphi_1(x_1), x_3 + \varphi_2(x_1,
    x_2), \dots, x_m + \varphi_{m-1}(x_1, \dots, x_{m-1})), \] where all computations are done
    modulo $p$.

    We write \(\overline{p}^m\) for the set of all functions $\sylelm :
    \fragmentco{0}{p}^{m} \to \fragmentco{0}{p}^{m}$ that are obtained in this fashion,
    that is,
    \[
        \overline{p}^m \coloneqq \{ \sylelm \mid \varphi_i \in \fragmentco{0}{p}^{j} \to
        \fragmentco{0}{p} \}.
        \qedhere
    \]
\end{definition}

\begin{lemma}\label{11.12-2}
    For each \(j \in \fragmentco{0}{m}\), write \(\varphi_i\) for a function
    \(\fragmentco{0}{p}^{j} \to \fragmentco{0}{p}\) and set \(\sylelm \coloneqq
    (\varphi_0,\dots,\varphi_{m-1})\).
    Then, the function $\sylelm$ is a bijection on $\fragmentco{0}{p}^{m}$.
\end{lemma}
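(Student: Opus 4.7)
The plan is to show that $\sylelm$ is injective; since $\fragmentco{0}{p}^m$ is finite and $\sylelm$ maps it to itself, injectivity immediately yields bijectivity. The key observation is that the $i$-th coordinate of $\sylelm(x_1,\dots,x_m)$ depends only on $x_1,\dots,x_i$, with $x_i$ appearing linearly (as $x_i + \varphi_{i-1}(x_1,\dots,x_{i-1})$). This triangular structure lets us recover the $x_i$ one at a time from $\sylelm(x)$.

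Concretely, suppose $\sylelm(x_1,\dots,x_m) = \sylelm(y_1,\dots,y_m)$. I would argue by induction on $i \in \setn{m}$ that $x_i = y_i$. For $i = 1$, equality of the first coordinates reads $x_1 + \varphi_0 \equiv_p y_1 + \varphi_0$, hence $x_1 = y_1$ in $\fragmentco{0}{p}$. For the inductive step, assume $x_j = y_j$ for all $j < i$. Then $\varphi_{i-1}(x_1,\dots,x_{i-1}) = \varphi_{i-1}(y_1,\dots,y_{i-1})$, and equality of the $i$-th coordinates gives $x_i + \varphi_{i-1}(x_1,\dots,x_{i-1}) \equiv_p y_i + \varphi_{i-1}(y_1,\dots,y_{i-1})$, whence $x_i = y_i$.

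This is essentially the only thing to verify; no real obstacle is expected since the structure is explicitly triangular. An equivalent way to phrase it, which I might use as an alternative write-up, is to construct the inverse function directly: given $(z_1,\dots,z_m) = \sylelm(x_1,\dots,x_m)$, set $x_1 \coloneqq z_1 - \varphi_0$ and then recursively $x_i \coloneqq z_i - \varphi_{i-1}(x_1,\dots,x_{i-1})$ for $i \geq 2$, noting that the arguments to $\varphi_{i-1}$ have already been defined at this stage. One checks by the same triangular induction that this indeed inverts $\sylelm$, which also confirms bijectivity.
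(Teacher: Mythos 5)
Your proof is correct and takes essentially the same route as the paper: both arguments exploit the triangular structure (the $i$-th output coordinate depends only on $x_1,\dots,x_i$, with $x_i$ entering linearly) to establish injectivity and then invoke finiteness for bijectivity. The paper phrases it contrapositively by looking at the first coordinate where two distinct inputs differ, while you phrase it as a forward induction assuming equal outputs; these are trivially equivalent.
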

\begin{proof}
    Consider two different tuples \(x \coloneqq (x_1,\dots, x_m)\) and \(x' \coloneqq (x'_1,\dots, x'_m)\) and
    write \(i\) for the minimal position with \(x_i \neq x'_i\).
    We claim that \( \sylelm(x)_i \neq  \sylelm(x')_i\).
    To that end, observe that \(\varphi_{i - 1}\) depends only on the values \(x_1 = x'_1,
    \dots, x_{i-1} = x'_{i-1}\). Thus, we have \(\varphi_i(x_1,\dots,x_{i-1}) =
    \varphi_i(x'_1,\dots,x'_{i-1})\).
    Hence, we have
    \begin{align*}
        \sylelm(x)_i &= x_i + \varphi_i(x_1,\dots,x_{i-1}) \neq x'_i + \varphi_i(x_1,\dots,x_{i-1})
        = x'_i + \varphi_i(x'_1,\dots,x'_{i-1})
        = \sylelm(x')_i.
    \end{align*}
    Hence, \(\sylelm\) is injective, and as a mapping between equal-sized sets thus also
    bijective.
\end{proof}

Recall that $\fragmentco{0}{p}^{m}$ is also the vertex set of $K_{p^m}$.
Thus, \cref{11.12-2} implies $\sylelm \in \aut(K_{p^m}) \cong \sym{p^m}$.

\begin{remark}\label{11.13-4}
    Let us discuss and interpret bijections $\sylelm \in \overline{p}^m$.
    To that end, consider an ordered, complete \(p\)-tree \(T_{p^m}\) of height \(m\).
    Label the root with the empty tuple \(()\). Further, for each vertex with label
    \((v_1,\dots,v_j)\), label its \(i\)-th child with \((v_1,\dots,v_j, i)\) (where we
    number children 0-indexed).
    Observe that \(T_{p^m}\) has \(p^m\) leaves that correspond to the elements of
    $\fragmentco{0}{p}^{m}$.
    Consult  \cref{fig:sylow:a} for a visualization of an example.

    Now, $\sylelm$ acts on $T_{p^m}$ in the following way.
    Starting with the root, each level \(i\) of $T_{p^m}$ is rotated
    by $\varphi_i(v_1, \dots, v_i)$ nodes,
    that is, the $j$-th child of a vertex \((v_1,\dots,v_i)\)
    becomes the $((i + \varphi_i(v_1, \dots, v_i)) \bmod p)$-th child of $(v_1,\dots,v_i)$.
    Consult \cref{fig:sylow:b} for a visualization of an example.
\end{remark}

\begin{figure}[tp]
    \begin{subfigure}[t]{.48\textwidth}
    \centering
    \includegraphics[scale = 1.5]{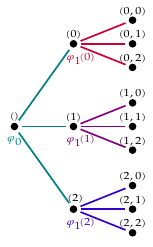}
    \caption{The tree \(T_{3^2}\) from \cref{11.13-4}.
        Each vertex is label with a label that is depicted above the corresponding vertex.
        Further, we depict below each vertex \(v\) the function (value) in \(\sylelm\)
        that is responsible for rotating the subtree rooted at \(v\).
    }
    \label{fig:sylow:a}
    \end{subfigure}\quad
    \begin{subfigure}[t]{.48\textwidth}
    \centering
    \includegraphics[scale = 1.5]{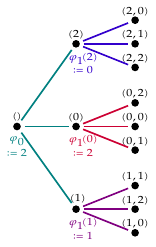}
    \caption{The image \(\sylelm T_{3^2}\), where \(\varphi_0 \coloneqq 2\) and
        \(\varphi_1 \coloneqq \{ 0 \mapsto 2, 1 \mapsto 1, 2 \mapsto 0 \} \).
        Reading of the labels of the leaves from top to bottom,
        we obtain
        the corresponding permutation of $\fragmentco{0}{p}^{m}$.}
    \label{fig:sylow:b}
    \end{subfigure}
    \caption{\Cref{11.12-3,11.13-4} visualized. }
    \label{fig:sylow}
\end{figure}

Next, we show that \(\overline{p}^m\) forms a group with the function
composition as the group operation.

\begin{lemma}\label{11.13-1}
    The pair $\sylow_{p^m} \coloneqq (\overline{p}^m, \circ)$ forms a $p$-group.
\end{lemma}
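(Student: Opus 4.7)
The plan is to verify the group axioms for $(\overline{p}^m, \circ)$ and then count the number of elements. By \cref{11.12-2}, each $\sylelm \in \overline{p}^m$ is a bijection on $\fragmentco{0}{p}^m$, so $\overline{p}^m$ is identified with a subset of $\sym{\fragmentco{0}{p}^m}$ (the identification is injective: one recovers each $\varphi_j(v_1,\dots,v_j)$ by evaluating $\sylelm$ on the tuple $(v_1,\dots,v_j,0,\dots,0)$ and reading off the $(j{+}1)$-th coordinate). Since composition is associative on $\sym{p^m}$, it only remains to check closure, the existence of an identity, and the existence of inverses inside $\overline{p}^m$. The identity is the tuple of constant-zero functions, which obviously belongs to $\overline{p}^m$ and maps to the identity permutation.

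The main technical step is closure. Take $\sylelm = (\varphi_0,\dots,\varphi_{m-1})$ and $\overline{\psi} = (\psi_0,\dots,\psi_{m-1})$ in $\overline{p}^m$, and write $y \coloneqq \overline{\psi}(x_1,\dots,x_m)$. By definition, $y_j = x_j + \psi_{j-1}(x_1,\dots,x_{j-1})$, so $y_j$ depends only on $x_1,\dots,x_j$. Applying $\sylelm$ gives
\[
(\sylelm\circ\overline{\psi})(x_1,\dots,x_m)_i
= y_i + \varphi_{i-1}(y_1,\dots,y_{i-1})
= x_i + \underbrace{\bigl(\psi_{i-1}(x_1,\dots,x_{i-1}) + \varphi_{i-1}(y_1,\dots,y_{i-1})\bigr)}_{=:\,\xi_{i-1}(x_1,\dots,x_{i-1})},
\]
where $\xi_{i-1}$ is a well-defined function $\fragmentco{0}{p}^{i-1}\to\fragmentco{0}{p}$ because $y_1,\dots,y_{i-1}$ only depend on $x_1,\dots,x_{i-1}$. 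Hence $\sylelm\circ\overline{\psi}=\overline{\xi}$ for $\overline{\xi}=(\xi_0,\dots,\xi_{m-1})\in \overline{p}^m$, proving closure.

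For inverses, we use the standard finite-group argument: $\overline{p}^m$ is finite and every element is a bijection on a finite set, hence has finite multiplicative order $n$; by closure, $\sylelm^{-1} = \sylelm^{n-1}$ already lies in $\overline{p}^m$. It remains to show that $|\overline{p}^m|$ is a power of $p$. Each $\varphi_j$ is an arbitrary function $\fragmentco{0}{p}^j \to \fragmentco{0}{p}$, of which there are $p^{p^j}$; since the tuple $(\varphi_0,\dots,\varphi_{m-1})$ is freely chosen (and the encoding is injective), we obtain
\[
|\overline{p}^m| = \prod_{j=0}^{m-1} p^{p^j} = p^{\,1 + p + p^2 + \dots + p^{m-1}} = p^{(p^m-1)/(p-1)},
\]
which is a power of $p$, completing the proof.

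The only nontrivial step is the closure computation; the rest is bookkeeping. The subtlety there is the dependency pattern: one must observe that the $i$-th coordinate of $\overline{\psi}(x)$ depends only on $x_1,\dots,x_i$, which is exactly what makes the ``lower-triangular'' structure of $\overline{p}^m$ stable under composition. Everything else (identity, inverses via finiteness, counting) is routine.
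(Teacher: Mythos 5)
Your proof is correct and follows essentially the same approach as the paper: you verify closure by the same explicit ``lower-triangular'' composition computation, and you count $|\overline{p}^m|=\prod_{j=0}^{m-1}p^{p^j}$ to conclude it is a $p$-group. The one deviation is for inverses: the paper constructs $\sylelm^{-1}$ explicitly via a recursive formula, whereas you invoke the standard fact that a nonempty finite subset of a group closed under the operation is already a subgroup (so $\sylelm^{-1}=\sylelm^{n-1}$ lies inside); both arguments are valid, and your route is slightly shorter.
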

\begin{proof}
    We first show that $\sylow_{p^m}$ defines a group.

    \begin{claim}\label{11.12-5}
        The composition of elements of \(\overline{p}^m\) is an element of
        \(\overline{p}^m\).
    \end{claim}
    \begin{claimproof}
        Fix $\sylelm, \overline{\psi} \in \overline{p}^m$.
        Expanding the definition yields
        \begin{align*}
            \overline{\psi}(\sylelm(x_1, \dots, x_m))
            &= (x_1 + \varphi_0 + \psi_0,
                x_2 + \varphi_1(x_1) + \psi_1((\varphi_0)(x_1)),
                \dots, \\
            &\quad\;\; x_m + \varphi_{m-1}(x_1, \dots, x_{m-1})
            + \psi_m((\varphi_0, \dots, \varphi_{m-1})(x_1, \dots, x_{m-1}))) \\
            &= \overline{\lambda}(x_1, \dots, x_m),
        \end{align*}
        where
        $\lambda_0 \coloneqq  \varphi_0 + \psi_0$ and
        $\lambda_j(x_1, \dots x_{j}) \coloneqq
        \varphi_{j}(x_1, \dots, x_{j})
        + \psi_j((\varphi_0, \dots, \varphi_{j-1})(x_1, \dots, x_{j-1}))$.
        Thus, indeed $\overline{\lambda} \in \sylow_{p^m}$, which yields the claim.
    \end{claimproof}
    \begin{claim}\label{11.12-6}
        Every element of \(\sylelm \in \overline{p}^m\) has an inverse \(\sylelm^{\,-1} \in \overline{p}^m\).
    \end{claim}
    \begin{claimproof}
        We define $\sylelm^{\,-1}$ via $\varphi^{-1}_0 \coloneqq - \varphi_0$ and \[
            \varphi^{-1}_j(x_1, \dots, x_{j})
            \coloneqq - \varphi_j((\varphi^{-1}_0,
            \dots, \varphi^{-1}_{j-1})(x_1, \dots, x_{j-1}))
            \text{ for } j \in \fragmentco{1}{m}.
    \]
        An induction on $j \in \fragmentco{0}{m}$ readily yields
        $\sylelm^{\,-1} \circ \sylelm = \sylelm \circ \sylelm^{\,-1} = \id$; which
        completes the proof.
    \end{claimproof}

    By \cref{11.12-5,11.12-6}, $\sylow_{p^m}$ is indeed a group.

    Finally, we show that $\sylow_{p^m}$ is a $p$-group.
    To that end, observe that each $m$-tuple $(\varphi_0, \dots, \varphi_{m-1})$
    with $\varphi_j \colon \fragmentco{0}{p}^{j} \to \fragmentco{0}{p}$ defines a
    different group element \(\sylelm\).
    The number of functions $\varphi_j$ from $\fragmentco{0}{p}^{j}$ to $\fragmentco{0}{p}$ is
    equal to $p^{p^j}$, thus we obtain $|\sylow_{p^m}| = |\overline{p}^m|
    = p^1 \cdot p^{p^1} \cdot \dots \cdot p^{p^{m-1}}$; which completes the proof.
\end{proof}

\begin{remark}
    Let us briefly discuss how $\sylow_{p^m}$ relates to Sylow $p$-subgroups that appear
    in the literature.
    Classically, a {\em Sylow $p$-subgroup} of $\Gamma$ is a $p$-subgroup that is
    not a proper subgroup of any other $p$-subgroup of $\Gamma$.
    Due to the Sylow~Theorems (consult for instance \cite[Theorem
    4.12]{group_theory_rotman}), there is only one Sylow
    $p$-subgroup in $\sym{p^m}$ (up to isomorphism).
    Usually, the Sylow \(p\)-subgroup is
    constructed by taking $m$ copies of the cyclic group $\Z_p$ and using the wreath
    product \(\Z_p \wr \dots \wr \Z_p\)~\cite[Theorem 7.27]{group_theory_rotman}.
    Indeed, one may prove that the group $\sylow_{p^m}$ defined via \cref{11.13-1} is
     isomorphic to the Sylow $p$-subgroup of $\sym{p^m}$, that is, we have
    $\sylow_{p^m} \cong \Z_p \wr \dots \wr \Z_p$.
\end{remark}

Next, we analyze the fixed point structure of $\fp(\sylow_{p^m}, K_{p^m})$.
To that end, we introduce the lexicographic product of graphs.

\deflexgraph
As an easy example, observe that we have  $G_1 \wrGraph G_2 \cong \metaGraph{G_1}{G_2,
\dots, G_2}$ and $G_1 \wrGraph G_2    \wrGraph G_3 \cong \metaGraph{G_1}{G_2 \wrGraph G_3,
\dots, G_2 \wrGraph G_3}$.

\begin{remark}
    Another name for the lexicographic product is \emph{wreath product of
    graphs}~\cite{Automorphism_Groups_Ciculant_semigroup_theory,Automorphism_Groups_Wreath_Product}.
    The name wreath product emphasizes the close connection
    between the group theoretical wreath products of automorphism subgroups $A
    \subseteq \aut(G)$ and $B \subseteq \aut(H)$, and the wreath products of \(G\) and~\(H\).
    Further, there is a close connection between the group-theoretical wreath product and
    the wreath product of groups when considering fixed points.
    However, as our proofs do not use the terminology of wreath product of groups,
    we chose the more common name \emph{lexicographic product}.
\end{remark}

Let us take a closer look at the lexicographic product of difference graphs.

\begin{lemma}\label{rem:sylow:graphs}
    Write $p$ for a prime and \(m\) for a positive integer.
    Further, for each \(i \in \setn{m}\), write \(A_i \subseteq \field{p}^+\) for a
    subset.
    Then, $E(\cgr{p}{A_1} \wrGraph \cdots \wrGraph \cgr{p}{A_m})$ is equal to
    \begin{align*}
        \{ \{(a_1, \dots a_m), (b_1, \dots b_m)\}
             \mid \exists i \in \setn{m} \colon (\forall j < i
        \colon a_j = b_j) \,\text{ and }\, a_i - b_i \in A_i\cup (-A_i)  \}
    \end{align*}
\end{lemma}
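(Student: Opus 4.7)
The plan is to prove \cref{rem:sylow:graphs} by a direct unfolding of two nested definitions: the lexicographic product of graphs and the difference graph. The statement is essentially a definitional check, so no clever argument should be needed.

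First, I would apply the definition of the lexicographic product (as given in the paper just before the lemma) to the sequence of graphs $\cgr{p}{A_1}, \dots, \cgr{p}{A_m}$. This immediately tells us that a pair $\{(a_1, \dots, a_m), (b_1, \dots, b_m)\}$ is an edge of $\cgr{p}{A_1} \wrGraph \cdots \wrGraph \cgr{p}{A_m}$ if and only if there exists an index $i \in \setn{m}$ such that $a_j = b_j$ for all $j < i$ and $\{a_i, b_i\} \in E(\cgr{p}{A_i})$.

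Next, I would substitute the definition of the difference graph (given earlier in \cref{sec:difference:graphs}): we have $\{a_i, b_i\} \in E(\cgr{p}{A_i})$ if and only if $a_i - b_i \in A_i \cup (-A_i)$. Plugging this equivalence into the condition obtained from the lexicographic product yields exactly the edge set claimed in the lemma.

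I do not expect any real obstacle here, since both steps are direct consequences of the definitions and no case analysis is required. The only care to take is bookkeeping: making sure that the ``first differing coordinate'' language of the lexicographic product correctly matches the quantifier structure ``$\exists i \colon \forall j < i \colon a_j = b_j$ and $a_i - b_i \in A_i \cup (-A_i)$'' of the claimed edge set, which is immediate.
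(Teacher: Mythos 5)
Your proposal is correct and matches the paper's proof essentially verbatim: both unfold the definition of the lexicographic product and then substitute the definition of the difference graph for $\{a_i,b_i\}\in E(\cgr{p}{A_i})$. No differences worth noting.
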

\begin{proof}
    Unfolding the definition of the lexicographic product, we observe that
    $\{(a_1,
    \dots, a_m), (b_1, \dots, b_m)\}$ is an edge of $\cgr{p}{A_1} \wrGraph \dots \wrGraph
    \cgr{p}{A_m}$ if we can find a $i$ such that $a_{j} = b_{j}$ for $j < i$ and
    $\{a_i, b_i\} \in \cgr{p}{A_i}$.
    Finally, we recall that $\{a_i, b_i\} \in \cgr{p}{A_i}$ is equivalent to $a_i - b_i \in A_i \cup (-A_i)$.
\end{proof}
Next, we use the lexicographic product of difference graphs
to understand the fixed points $\fpb{\sylow_{p^m}}{K_{p^m}}$.

\lemfixedpointslexprod
\begin{proof}
    Write $C \coloneqq  \{\cgr{p}{A_1} \wrGraph \cdots \wrGraph
    \cgr{p}{A_m} : A_i \subseteq \field{p}^+\}$.

    \subparagraph*{\normalfont $C \subseteq \fpb{\sylow_{p^m}}{K_{p^m}}$.}
    Fix sets \(A_i\) and the corresponding graph
    $H \coloneqq \cgr{p}{A_1} \wrGraph \cdots \wrGraph \cgr{p}{A_m}$.
    We prove that \(H\) is a fixed point of $\sylow_{p^m}$
    by verifying that $\sylelm(H) = H$ holds for
    every function $\sylelm \in \sylow_{p^m}$.

    To that end, fix a $\sylelm \in \sylow_{p^m}$.
    Further, fix an edge  $\{(a_1, \dots, a_m), (b_1, \dots, b_m)\} \in E(H)$. We show
    that $\{\sylelm(a_1, \dots, a_m), \sylelm(b_1, \dots, b_m)\} \in E(H)$.
    From \cref{rem:sylow:graphs} we obtain an $i \in \setn{m}$ such that $a_j =
    b_j$ for all $j < i$ and $a_i - b_i \in A_i \cup (-A_i)$.
    Without loss of generality, we assume that $a_i - b_i \in A_i$ (otherwise switch the
    roles of $a$ and $b$). Observe that we need only the values
    $a_1, \dots, a_j$ to compute the $j$-th coordinate of $\sylelm(a)$ (the same is true
    for the $j$-th coordinate of $\sylelm(b)$).
    This implies $\sylelm(a)_j = \sylelm(b)_j$
    for all $j < i$. Further, we obtain
    \begin{align*}
        \sylelm(a)_i -  \sylelm(b)_i &= a_i + (\varphi_0, \dots, \varphi_{i-1})(a_1,
        \dots, a_{i-1}) - (b_i + (\varphi_0, \dots, \varphi_{i-1})(b_1, \dots, b_{i-1}))
        \\
        &= a_i + (\varphi_0, \dots, \varphi_{i-1})(a_1, \dots, a_{i-1}) - (b_i +
        (\varphi_0, \dots, \varphi_{i-1})(a_1, \dots, a_{i-1})) \\
        &= a_i - b_i \in A_i,
    \end{align*}
    which implies that $\{\sylelm(a_1, \dots, a_m), \sylelm(b_1, \dots, b_m)\}$ is an edge in $H$.
    Thus, $\sylelm$ maps edges of $H$ to edges of $H$ and is therefore an automorphism.

    \subparagraph*{\normalfont $\fpb{\sylow_{p^m}}{K_{p^m}} \subseteq C$.}
    We prove the contrapositive, that is, we prove that each graph $H$ with vertex
    set $V(K_{p^m})$ that is not in $C$ is also not in $\fp(\sylow_{p^m}, K_{p^m})$.

    To that end, we construct a $\sylelm \in \sylow_{p^m}$ with $\sylelm(H) \neq H$.
    If $H \notin C$, then, due to \cref{rem:sylow:graphs}, there are an
    $i \in \setn{m}$ and an $x \neq 0$ such that
    \begin{itemize}
        \item $u \coloneqq \{(a_1, \dots, a_{i-1}, a_i, b_{i+1}, \dots, b_m), (a_1, \dots,
            a_{i-1}, a_i + x, c_{i+1}, \dots, c_{m})\}$ is an edge in $H$ but
        \item $v \coloneqq \{(a_1, \dots, a_{i-1}, d_i, \beta_{i+1}, \dots, \beta_m),
                (a_1, \dots, a_{i-1}, d_i + x, \gamma_{i+1}, \dots, \gamma_{m})\}$ is not an edge in $H$.
    \end{itemize}
    Next, we consider the group element $\sylelm \coloneqq (\varphi_0, \dots \varphi_{m-1})$ with
    \[
        \varphi_{m-1}(x_1, \dots x_{m-1}) \coloneqq
        \begin{cases}
            \beta_m - b_m  &\text{if } (x_1, \dots x_{m-1}) =  (a_1, \dots, a_i, b_{i+1}, \dots, b_{m-1}) \\
            \gamma_m - c_m &\text{if }  (x_1, \dots x_  {m-1}) = (a_1, \dots, a_i + x, c_{i+1}, \dots, c_{m-1})  \\
            0 &\text{otherwise}
        \end{cases}
    \]
    and $\varphi_j \coloneqq \id$ for $j \neq m-1$.
    We obtain
    \[
        \sylelm(u) = \{(a_1, \dots, a_i, b_{i+1}, \dots, b_{m-1}, \beta_m),
            (a_1, \dots, a_i + x, c_{i+1}, \dots, c_{m-1}, \gamma_m).
    \]
    In particular, the last coordinate of the
    edge $\sylelm(u)$ is equal to the last coordinate of the edge $v$.
    By iterating this construction, we obtain an $\overline{\psi} \in \sylow_{p^m}$
    with
    \[
        \overline{\psi}(u) = \{(a_1, \dots, a_i, \beta_{i+1}, \dots, \beta_m),
        (a_1, \dots, a_i + x, \gamma_{i+1}, \dots, \gamma_{m})\}.
    \] Lastly, we define
    $\Tilde{\psi} \coloneqq (\Tilde{\psi}_0, \dots, \Tilde{\psi}_{m-1})$ with
    \[
        \Tilde{\psi}_{i}(x_1, \dots x_{i-1}) \coloneqq
        \begin{cases}
            d_i - a_i &\text{if } (x_1, \dots x_{i-1}) =  (a_1, \dots, a_{i-1}) \\
            0 &\text{otherwise}
        \end{cases}
    \]
    and $\Tilde{\psi}_j \coloneqq \id$ for $j \neq i$.
    Observe that $\Tilde{\psi} (\overline{\psi}(u)) = v$.
    Thus there is an element in $\sylow_{p^m}$ that maps the edge $u$ to the non-edge $v$
    which shows that $H$ is not in $\fpb{\sylow_{p^m}}{K_{p^m}}$.
\end{proof}

Lastly, we compute the level of fixed points in \(\fpb{\sylow_{p^m}}{K_{p^m}}\).
\begin{lemma}\label{lem:level:sylow:graphs}
    For any prime \(p\) and any positive integer \(m\),
    the level of $\cgr{p}{A_1} \wrGraph \dots
    \wrGraph \cgr{p}{A_m} \in \fpb{\sylow_{p^m}}{K_{p^m}}$ is
    \[\Hasselevel(\cgr{p}{A_1} \wrGraph \dots
    \wrGraph \cgr{p}{A_m}) = \sum_{i = 1}^m |A_i|.\]
\end{lemma}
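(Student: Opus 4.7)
The plan is to compute the orbit factorization of $H \coloneqq \cgr{p}{A_1} \wrGraph \cdots \wrGraph \cgr{p}{A_m}$ explicitly. Since \(\hl{H} = |\od{H}|\) by \cref{def:10.4-1}, it suffices to identify \(\od{H}\) as a set of exactly \(\sum_{i=1}^m |A_i|\) orbits.

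First, I would identify candidate orbits. For each \(i \in \setn{m}\) and \(x \in \field{p}^+\), define the ``atomic'' graph
\[
H_{i,x} \coloneqq \cgr{p}{\emptyset} \wrGraph \cdots \wrGraph \cgr{p}{\emptyset} \wrGraph \cgr{p}{\{x\}} \wrGraph \cgr{p}{\emptyset} \wrGraph \cdots \wrGraph \cgr{p}{\emptyset},
\]
where \(\cgr{p}{\{x\}}\) sits in position \(i\). By \cref{lemma:fixed:points:p^m}, each \(H_{i,x}\) is a fixed point of \(\sylow_{p^m}\) in \(K_{p^m}\). Using the explicit edge description from \cref{rem:sylow:graphs}, an edge \(\{(a_1, \dots, a_m), (b_1, \dots, b_m)\}\) lies in \(H_{i,x}\) if and only if \(a_j = b_j\) for every \(j < i\) and \(a_i - b_i \in \{x, -x\}\). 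For any edge of \(K_{p^m}\), the position \(i\) is then uniquely determined as the smallest index where the two tuples differ, and the corresponding representative of \(a_i - b_i\) in \(\field{p}^+\) determines \(x\). Hence the \(H_{i,x}\) are pairwise edge-disjoint, and invoking \cref{rem:sylow:graphs} again for \(H\) itself,
\[
E(H) = \bigsqcup_{i \in \setn{m}} \bigsqcup_{x \in A_i} E(H_{i,x}).
\]

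Next, I would argue that each \(H_{i,x}\) is a single orbit in \(E(K_{p^m})/\sylow_{p^m}\) rather than a union of several. By \cref{lemma:fixed:points:p^m}, any sub-point of \(H_{i,x}\) has the form \(\cgr{p}{B_1} \wrGraph \cdots \wrGraph \cgr{p}{B_m}\) with \(B_j \subseteq \{x\}\) if \(j = i\) and \(B_j = \emptyset\) otherwise; the only choices are the empty fixed point or \(H_{i,x}\) itself. Thus \(H_{i,x}\) has no proper non-empty sub-point, and by the uniqueness of the orbit factorization in \cref{remark:fixed:point:union,def:10.4-1} it must be a single orbit.

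Combining these steps, \(\od{H} = \{H_{i,x} : i \in \setn{m},\, x \in A_i\}\) as a disjoint union of distinct orbits, so \(\hl{H} = |\od{H}| = \sum_{i=1}^m |A_i|\). The only real subtlety is verifying minimality in the previous paragraph; once one has \cref{lemma:fixed:points:p^m} and the edge characterization of \cref{rem:sylow:graphs}, the edge-disjoint decomposition and the identification of each \(H_{i,x}\) as an orbit are both short consequences.
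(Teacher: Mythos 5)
Your proof is correct and follows essentially the same approach as the paper's: show that each ``atomic'' graph \(H_{i,x}\) (a single \(\cgr{p}{\{x\}}\) at position \(i\), empty elsewhere) has no proper non-empty sub-point and is therefore a single orbit, then decompose the edge set of the general fixed point as the disjoint union of such atoms. The only slightly compressed step is the claim that a sub-point \(\cgr{p}{B_1}\wrGraph\cdots\wrGraph\cgr{p}{B_m}\) of \(H_{i,x}\) must satisfy \(B_j\subseteq A_j\); that needs the (easy) observation that edge-containment of lexicographic products forces containment of the corresponding coordinate sets, which the paper is similarly terse about.
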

\begin{proof}
    For an \(x  \in \field{p}^+\), consider the fixed point
    $F \coloneqq \cgr{p} {A_1} \wrGraph \cdots \wrGraph \cgr{p}{A_j}
    \wrGraph \cdots \wrGraph \cgr{p}{A_m}$,
    where $A_j \coloneqq \{x\}$ and $A_i \coloneqq \emptyset$ otherwise.
    \begin{claim}
        We have \(\hl{F} = 1\).
    \end{claim}
    \begin{claimproof}
        Recall from \cref{def:10.4-1} that the level of $F$ is defined as the size of
        the orbit factorization of $F$.
        As $F$ has edges, the orbit factorization of \(F\) is not empty.
        Thus, the level of $F$ is at least $1$.

        Next, suppose that $\hl{F} > 1$.
        Then, the orbit factorization of \(F\) contains at least two orbits.
        This implies that there is a proper sub-point $F' \subsetneq F$
        that contains at least one orbit and is therefore not the empty graph.
        However, if $F'$ is a proper sub-point of $F$
        then
        \[
            F' = \cgr{p} {B_1} \wrGraph \cdots \wrGraph \cgr{p}{B_j} \wrGraph \cdots
            \wrGraph \cgr{p}{B_m},
        \] with $\emptyset \subsetneq  B_j \subsetneq A_j = \{x\}$ and $B_i \subseteq A_i
        = \emptyset$; which is a contradiction.
        Hence, we have \(\hl{F} \le 1\) which yields the claim.
    \end{claimproof}

    Next, observe that for all $x \neq y \in \field{p}^+$,
    the conditions
    \begin{center}
        $a_j - b_j \in \{x, -x\} \qquad a_j - b_j \in \{y, - y\}$
    \end{center}
    are mutually exclusive.
    Hence,
    for a fixed point $F' \coloneqq \cgr{p} {\emptyset} \wrGraph \cdots \wrGraph \cgr{p}{\emptyset} \wrGraph
    \cgr{p}{A_j} \wrGraph \cgr{p}{\emptyset}
    \wrGraph \cdots \wrGraph \cgr{p}{\emptyset}$ with $A_j \subseteq \field{p^m}^+$, we
    obtain the following disjoint union
    \begin{align*}
        E(F) &= \{\{(a_1, \dots a_m), (b_1, \dots b_m)\}  \mid  (\forall i < j
            \colon a_i = b_i) \text{ and } a_j - b_j \in A_j \cup (-A_j)  \} \\
             &= \bigcup_{x \in A_j} \{\{(a_1, \dots a_m),
             (b_1, \dots b_m)\} :  \forall i < j
             \colon a_i = b_i) \text{ and }  a_j - b_j \in \{x, -x\} \} \\
             &=  \bigcup_{x \in A_j}  E(\cgr{p}
             {\emptyset} \wrGraph \cdots \wrGraph \cgr{p}{\emptyset} \wrGraph
             \cgr{p}{\{x\}} \wrGraph \cgr{p}{\emptyset}
             \wrGraph \cdots \wrGraph \cgr{p}{\emptyset}).
    \end{align*}
    Hence, we have $\Hasselevel(F') = |A_j|$.

    Lastly, for a general fixed point we obtain
    \[
        E(\cgr{p}{A_1} \wrGraph \cdots \wrGraph \cgr{p}{A_j}
        \wrGraph \cdots \wrGraph \cgr{p}{A_m}) = \bigcup_{j \in \setn{m}} E(\cgr{p}
            {\emptyset} \wrGraph \cdots \wrGraph \cgr{p}{\emptyset} \wrGraph
            \cgr{p}{A_j} \wrGraph \cgr{p}{\emptyset}
        \wrGraph \cdots \wrGraph \cgr{p}{\emptyset}).
    \]
    Thus, we have $\Hasselevel(\cgr{p}{A_1} \wrGraph \cdots \wrGraph \cgr{p}{A_j}
    \wrGraph \cdots \wrGraph \cgr{p}{A_m}) = \sum_{i = 1}^m |A_i|$; which completes
    the proof.
\end{proof}

Let us briefly discuss why the fixed points \(\fpb{\sylow_{p^m}}{K_{p^m}}\) are useful to
us.
Recall that our goal is the find nonvanishing fixed points with large treewidth.
Achieving this goal is easier if many fixed points have a large treewidth.
Fixed points have a large treewidth if they consist of orbits that have a large treewidth.
Typically, orbits (and graphs in general) have a large treewidth if they have many edges.
Since each orbit has the form
$\sylow_{p^m} \cdot \{i, j\} \coloneqq \{\{\sylelm(i), \sylelm(j)\} \mid \sylelm \in \sylow_{p^m}\}$, for
some edge $\{i, j\} \in E(K_{p^m})$, we observe that larger groups lead to larger orbits.
Now, we additionally need that our group is a $p$-group.
As the $p$-Sylow group $\sylow_{p^m}$ is by definition the largest $p$ group in
$\sym{p^m}$, the choice of the group $\sylow_{p^m}$ is in some sense optimal.

More concretely, we may compare the number of orbits $E(K_{p^m}) / \rotgr{p^m}$ and the
number of orbits $E(K_{p^m}) / \sylow_{p^m}$.
As orbits form a partition of the edge set of $K_{p^m}$, having fewer orbits
leads to orbits that have more edges on average.
Now, recall that by \cref{lem:orbits:Zpa}, $E(K_{p^m}) / \rotgr{p^m}$ contains
$|\field{p^m}^+|$ orbits, which is equal to $(p^m - 1) / 2$ if $p$ is odd.
Next, \cref{lem:level:sylow:graphs} shows that $E(K_{p^m}) / \sylow_{p^m}$ contains
$m \cdot |\field{p}^+|$ orbits,
which is equal to $m \cdot (p -
1)/2$ if $p$ is odd. This is much smaller than  $(p^m - 1) / 2$.

\subsection{Prime Powers Contain Large Bicliques}\label{sec:edge-monotone:prime:powers}

Next, we prove \cref{theo:edge_mono:prime:power:bicliques} by analyzing the
fixed points \(\fpb{\sylow_{p^m}}{K_{p^m}}\).
To that end, one important observation is that a fixed point
$\cgr{p}{A_1} \wrGraph \dots \wrGraph \cgr{p}{A_m}$ has a large biclique if there is a
small position $i$ with $A_i \neq \emptyset$.
This implies that there is an edge between all vertices
$(a_1, \dots, a_{i-1}, a_i, a_{i+1}, \dots a_m)$ and $(a_1, \dots, a_{i-1}, b_i,
b_{i+1}, \dots, b_m)$ as long as $a_i - b_i \in A_i \cup (-A_i)$.
As we are free in our choice of
$a_{i+1}, \dots, a_m, b_{i+1}, \dots, b_m \in \fragmentco{0}{p}$, this leads to a large biclique.
Thus, it is useful to keep track of this value $i$.

\begin{definition}
    Write $H \coloneqq \cgr{p}{A_1} \wrGraph \dots \wrGraph \cgr{p}{A_m}$ for a fixed
    point of \(\fpb{\sylow_{p^m}}{K_{p^m}}\).
    The \emph{empty-prefix} of \(H\)
    is the smallest index \(i\) with $A_i \neq \emptyset$, minus one;
    we write $\wrLevel(A_1, \dots, A_m) \coloneqq i-1$.
\end{definition}

As discussed earlier, we use the empty-prefix of a fixed point to find large a biclique as a
subgraph.

\lemwreathlevel
\begin{proof}
    Set $t \coloneqq \wrLevel(A) + 1$. We have $A_{t} \neq \emptyset$.
    Hence, there are $x \in A_{t}$ and $\alpha, \beta \in
    \fragmentco{0}{p}$ with $\alpha - \beta = x$.
    Now, we obtain that all vertices of
    the form $(a_1, \dots a_{{t}-1}, \alpha, b_{t+1}, \dots, b_{m})$ are
    connected to the vertices $(a_1, \dots a_{t-1}, \beta, c_{t+1}, \dots,
    c_{m})$ since they coincide on the first $t-1$ elements and the $t$-th
    element differs by $x$.
    In particular, as we can freely choose
    $b_{t+1}, \dots, b_{m}$ and $c_{t+1}, \dots, c_{m}$,
    we obtain a complete bipartite subgraph where one side
    contains all vertices of the from $(a_1, \dots a_{t-1}, \alpha, b_{t+1},
    \dots, b_{m})$ and the other side contains all vertices of the from $(a_1,
    \dots a_{t-1}, \beta, c_{t+1}, \dots, c_{m})$.
    Both sides contain
    $p^{{m} - (t + 1)  + 1} = p^{m - t} = p^{m - 1 - \wrLevel(A)}$ vertices.
    Thus,
    $K_{p^{m - 1 - \wrLevel(A)}, p^{m - 1 - \wrLevel(A)}}$
    is a subgraph of $\cgr{p}{A_1} \wrGraph \cdots \wrGraph \cgr{p}{A_m}$.
\end{proof}

Imagine that our graph property is nontrivial on $p^m$ for $m \geq 2$.
If we find a fixed point $H = \cgr{p}{A_1} \wrGraph \cdots \wrGraph
\cgr{p}{A_m}$ with a minimal empty-prefix of $\wrLevel(A_1, \dots, A_m) =
0$, then we know that the treewidth of $H$ is at least $p^{m-1}$.
Thus, our goal is to find a fixed point $H$ with $\wrLevel(A_1, \dots, A_m) = 0$
and $\ae{\Phi}{H} \not \equiv_p 0$.

To this end, we prove that a fixed point with a high empty-prefix is
always isomorphic to an edge-subgraph of a fixed point with a low
empty-prefix.
This in turn allows us to consider only fixed points with low
empty-prefix.
To be more precise, we show that each fixed point
$\cgr{p}{\emptyset} \wrGraph \cdots \cgr{p}{\emptyset} \wrGraph
\cgr{p}{A_1} \wrGraph \cdots \wrGraph \cgr{p}{A_{m-j}}$ is isomorphic to an
edge-subgraph of $\cgr{p}{A_1} \wrGraph \cdots \wrGraph \cgr{p}{A_m}$.

For a formal proof, we define the {forward revolution}.
\begin{definition}\label{pushforward}
    For a prime \(p\) and a positive integer \(m\), we define the \emph{forward revolution} of
    \(K_{p^m}\) as
    \begin{align}
        \push_{p^m} \colon V(K_{p^m}) \to V(K_{p^m});
        \quad (a_1 \dots, a_m) \mapsto (a_m, a_1, \dots, a_{m-1}).
        \tag*{\qedhere}
    \end{align}
\end{definition}

Using \cref{pushforward} allows us to prove the following lemma.

\lemwreathsubiso
\begin{proof}
    We only show that $\widetilde{H} \coloneqq \cgr{p}{\emptyset} \wrGraph \cgr{p}{A_1}
    \wrGraph \cdots \wrGraph \cgr{p}{A_{m-1}}$ is isomorphic to an edge-subgraph of
    $H \coloneqq \cgr{p}{A_1} \wrGraph \cdots \wrGraph \cgr{p}{A_{m}}$.
    Iterating then yields the general result.

    We show that $\widetilde{H}$ is an edge-subgraph of $\push_{p^m}(H)$ by showing
    that $\{a, b\} \in E(\widetilde{H})$ implies $\{a, b\} \in E(\push_{p^m}(H))$.
    This proves the claim
    as $\push_{p^m}$ is a bijective function and hence,
    $\push_{p^m}(H)$ is isomorphic to $H$.

    For all $\{a, b\} \in E(\widetilde{H})$, we obtain that
    there is an $i \in \{2, \dots, m\}$ with $a_j = b_j$ for all $j < i$ and
    $a_{i} - b_{i} \in A_{i-1} \cup (-A_{i-1})$.
    Without loss of generality, we assume that $a_{i} - b_{i} \in A_{i-1}$ (otherwise we
    switch the roles of $a$ and $b$).

    Next, we show that $\{\widetilde{a}, \widetilde{b}\} \in E(H)$, where
    $\widetilde{a} = (a_2, \dots, a_m, a_1)$ and $\widetilde{b} = (b_2, \dots, b_m,
    b_1)$.
    Observe that for $s \coloneqq i-1 \in \setn{m}$, we obtain $\widetilde{a}_j = a_{j+1} =  b_{j +
    1} = \widetilde{b}_j$ for all $j +1 < i$ which is equivalent to $j < s$.
    Further, we obtain $\widetilde{a}_s - \widetilde{b}_s \equiv a_{i} - b_{i} \in A_s$ which
    proves that $\{\widetilde{a}, \widetilde{b}\} \in E(H)$.
    This implies
    $\{\push_{p^m}(\widetilde{a}), \push_{p^m}(\widetilde{b})\} \in E(\push_{p^m}(H))$.
    Finally,
    we observe that $\push_{p^m}(\widetilde{a}) = a$ and $\push_{p^m}(\widetilde{b}) =
    b$ which shows that $\{a, b\}$ is an edge of $\push_{p^m}(H)$. This completes the proof.
\end{proof}

We are now ready to prove \cref{theo:edge_mono:prime:power:bicliques}.

\thmedgemonprimepowerbicliques
\begin{proof}
    Our goal is to show that there is a fixed point $H$ with the following
    properties

    \begin{itemize}
        \item $H = \cgr{p}{A_1} \wrGraph \cdots \wrGraph  \cgr{p}{A_{m}}$ for
            $A_i \subseteq \field{p}^+$,
        \item $\Phi(H) = 0$,
        \item $\wrLevel(A_1, \dots, A_{m}) = 0$, and
        \item $\Phi(\Tilde{H}) = 1$ for all proper sub-points $\Tilde{H}$ of $H$.
    \end{itemize}

    As \(\Phi\) is nontrivial on \(p^m\), we have $\Phi(K_{p^m}) = 0$
    and \(\Phi(\IS_{p^m}) = \Phi(\cgr{p}{\emptyset} \wrGraph \cdots \wrGraph \cgr{p}{\emptyset}) = 1\).
    Now, let $i$ denote the smallest value such that
    there is a fixed point $H$ of level $i$ that does not satisfy \(\Phi\),
    but all fixed points of level smaller than $i$ satisfy \(\Phi\).
    As \(\Phi(\IS_{p^m}) = 1\), we have $i > 0$.

    \begin{claim}\label{11.13-10}
        There is a fixed point $H$ of
        level $i$ with $\Phi(H) = 0$ and whose empty-prefix is zero.
    \end{claim}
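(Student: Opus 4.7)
The plan is to pick any fixed point $H^\ast = \cgr{p}{A_1^\ast} \wrGraph \cdots \wrGraph \cgr{p}{A_m^\ast}$ of level $i$ with $\Phi(H^\ast) = 0$ (which exists by the very definition of $i$) and to transform it by ``rotating'' its empty prefix to the suffix. This will produce a fixed point $H$ of the same level $i$, with $\Phi(H) = 0$, and with empty-prefix zero. The only nontrivial ingredient will be \cref{lem:wreath:sub:iso}, which tells us that inserting empty factors at the front of a lexicographic product can only shrink the graph, up to isomorphism and edge-subgraph inclusion.

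Concretely, I would let $j \coloneqq \wrLevel(A_1^\ast,\ldots,A_m^\ast)$ be the empty-prefix of $H^\ast$. If $j = 0$, then $H \coloneqq H^\ast$ already works. Otherwise, I would define $B_k \coloneqq A_{j+k}^\ast$ for $k \in \setn{m-j}$, so that $B_1 \ne \emptyset$ and
\[
    H^\ast = \underbrace{\cgr{p}{\emptyset} \wrGraph \cdots \wrGraph \cgr{p}{\emptyset}}_{j} \wrGraph \cgr{p}{B_1} \wrGraph \cdots \wrGraph \cgr{p}{B_{m-j}},
\]
and then set
\[
    H \coloneqq \cgr{p}{B_1} \wrGraph \cdots \wrGraph \cgr{p}{B_{m-j}} \wrGraph \underbrace{\cgr{p}{\emptyset} \wrGraph \cdots \wrGraph \cgr{p}{\emptyset}}_{j},
\]
which, since $B_1 \ne \emptyset$, visibly has empty-prefix zero.

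To conclude I would verify the remaining three properties of $H$. First, by \cref{lemma:fixed:points:p^m}, the graph $H$ is a fixed point of $\sylow_{p^m}$ in $K_{p^m}$. Second, by \cref{lem:level:sylow:graphs}, its level equals $\sum_{k=1}^{m-j} |B_k| = \sum_{\ell=1}^m |A_\ell^\ast| = \hl{H^\ast} = i$. Third, I would apply \cref{lem:wreath:sub:iso} with the sets $(B_1,\ldots,B_{m-j},\emptyset,\ldots,\emptyset)$ (padded by $j$ empty sets on the right), which directly yields that $H^\ast$ is isomorphic to an edge-subgraph of $H$; since $\Phi$ is edge-monotone and isomorphism-invariant, $\Phi(H) = 1$ would force $\Phi(H^\ast) = 1$, so $\Phi(H^\ast) = 0$ forces $\Phi(H) = 0$.

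The only potential obstacle here is bookkeeping: I need to invoke \cref{lem:wreath:sub:iso} with exactly the right assignment of its slots so that the graph appearing as the edge-subgraph matches $H^\ast$ on the nose, and I have to apply edge-monotonicity in the contrapositive direction, transferring ``false in $\Phi$'' from the smaller $H^\ast$ to the larger $H$.
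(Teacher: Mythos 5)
Your proposal is correct and uses the same key ingredients as the paper's own proof: \cref{lem:wreath:sub:iso} to rotate the empty prefix to a suffix, \cref{lem:level:sylow:graphs} to preserve the level, and edge-monotonicity to transfer falsity of \(\Phi\) from the edge-subgraph back to the rotated graph. The only difference is presentational: you argue directly and construct the witness \(H\) from a chosen \(H^\ast\), whereas the paper argues by contradiction (assuming all level-\(i\) fixed points with empty-prefix zero satisfy \(\Phi\) and deriving that all level-\(i\) fixed points do), but the two are contrapositives of each other.
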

    \begin{claimproof}
        Toward an indirect proof,
        assume that all fixed points $\cgr{p}{A_1} \wrGraph \cdots \wrGraph \cgr{p}{A_{m}}$
        of level $i$ with $\wrLevel(A_1, \dots, A_{m}) = 0$ satisfy $\Phi$.
        We show that now, all fixed
        points of level $i$ satisfy $\Phi$, which is a contradiction to our choice of $i$.

        To that end, fix an $F \coloneqq
        \cgr{p}{A_1} \wrGraph \dots \wrGraph \cgr{p}{A_{m}}$.
        If $\wrLevel(A_1, \dots, A_m) = 0$, then $\Phi(F) = 1$ due to our assumption.
        Otherwise, $\wrLevel(A_1, \dots, A_m) > 0$, which means that $F$ has the
        form $F = \cgr{p}{\emptyset} \wrGraph \cdots \wrGraph \cgr{p}{\emptyset} \wrGraph
        \cgr{p}{A_1} \wrGraph \dots \wrGraph \cgr{p}{A_{m - j}}$ and is
        thus, by \cref{lem:wreath:sub:iso},
        isomorphic to an edge-subgraph of
        \[
            F' \coloneqq \cgr{p}{A_1} \wrGraph \cdots \wrGraph F_p^{A_{m - j}} \wrGraph \cgr{p}{\emptyset}
            \wrGraph \cdots \wrGraph \cgr{p}{\emptyset}.
        \]
        By \cref{lem:level:sylow:graphs},
        the level of $F$ is equal to the level of $F'$.
        Thus by assumption, $F'$ satisfies $\Phi$ and hence $F$ also satisfies $\Phi$ as $\Phi$ is edge-monotone.
    \end{claimproof}

    By \cref{11.13-10},
    we may assume that there is a fixed point $H \coloneqq \cgr{p}{A_1} \wrGraph \cdots \wrGraph \cgr{p}{A_{m}}$ of
    level $i$ with $\wrLevel(A_1, \dots, A_{m}) = 0$ that does not satisfy $\Phi$.
    Now, as the empty-prefix of \(H\) is zero, \cref{lem:treewidth:wreath:level} yields that
    $H$ contains $K_{p^{m-1}, p^{m-1}}$ as a subgraph.
    Finally, we apply
    \cref{remark:chi:all:children:true} to show that $\ae{\Phi}{H}$ does not vanish; which
    completes the proof.
\end{proof}

\subsection{Quantitative Lower Bounds}

Next, we prove \cref{theo:tight:lower:bound}.
Observe that  \cref{theo:tight:lower:bound} differs from \cref{theo:sqrt:lower:bound}
since the constant $\alpha$ of \cref{theo:sqrt:lower:bound} does not depend on a specific prime.

\thetightlowerbound
\begin{proof}
    First, we show that assuming ETH, there are a constant $\gamma'_p
    > 0$ and a constant $N_p$ such that for all fixed $k = p^m \geq N_p$ and each
    edge-monotone graph property $\Phi$ that is nontrivial on $k$,  no
    algorithm computes for each graph \(G\) the number $\NUM{}\indsubs{(\Phi, k)}{G}$ in time
    $O(|V(G)|^{\gamma_p k})$.

    We intend to use \cref{theo:lower:bound:bicliques}, which shows that nonvanishing
    graphs with large bicliques are sufficient to prove the result.
    Write $\beta > 0$ and $N'$ for the constants from \cref{theo:lower:bound:bicliques}.
    We define $h(k) \coloneqq k / p$ and $\gamma'_p \coloneqq
    \beta / p$ and $N_p \coloneqq N' \cdot p$.

    Consider a $k = p^m \geq N_p$. Clearly, $h(k) \geq N'$.
    Further, write $\Phi$ for an edge-monotone graph property that is nontrivial on
    $k$.
    \Cref{theo:edge_mono:prime:power:bicliques} yields a graph $F$ with
    $k$ vertices, $\ae{\Phi}{F} \neq 0$, and that contains $K_{h(k), h(k)}$ as a subgraph.
    Now, if for every graph \(G\), we could compute the number $\NUM{}\indsubs{(\Phi,
    k)}{\star}$ in time $O(|V(G)|^{\gamma_p k}) = O(|V(G)|^{\beta
    h(k)})$, then \cref{theo:lower:bound:bicliques} would show that ETH fails.

    To obtain our lower bound also for \(k < N_p\), we set $\gamma_p \coloneqq \min(\gamma'_p, 1/(N_p+1))$.
    Observe that for \(k = p^m < N_p\), we obtain
    \[
         O(|V(G)|^{\gamma_p k}) = o( |V(G)| ).
    \]
    Now, such a running time is unconditionally unachievable
    for any algorithm that reads
    the whole input.
    This completes the proof.
\end{proof}

Finally, we extend \cref{theo:tight:lower:bound}
from prime powers to products of $c$ prime powers (times a constant~$d$).
However, this comes at the cost of a weaker lower bound in the exponent.

\begin{theorem}
    Let $p_1, \dots, p_c$ denote primes and let $d$ denote a positive integer.
    Then, there is a constant $\gamma > 0$ (that depends on $p_1,
    \dots, p_c$ and $d$) such that for all fixed $k = d \cdot p_1^{m_1} \cdots
    p_c^{m_c} \geq 3$ and all edge-monotone graph properties $\Phi$ that are nontrivial on $k$, no
    algorithm (that reads the whole input) computes for every graph $G$ the number
    $\NUM{}\indsubs{(\Phi, k)}{G}$ in time $O(|V(G)|^{\gamma \sqrt[c]{k}})$, unless ETH fails.
\end{theorem}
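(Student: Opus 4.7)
The plan is to mimic the concentrated-vs.-scattered dichotomy used in the proof of \cref{theo:edge_mono}, applied at a single fixed $k$ rather than to an infinite family, and to plug in \cref{theo:tight:lower:bound} on the prime-power side. The essential quantitative input is the elementary bound $q(k)\ge \sqrt[c]{k}/\sqrt[c]{d}$: each $p_i^{m_i}$ divides $k = d\cdot p_1^{m_1}\cdots p_c^{m_c}$, so $q(k)\ge \max_i p_i^{m_i}$, and by the geometric-mean inequality $\max_i p_i^{m_i}\ge (p_1^{m_1}\cdots p_c^{m_c})^{1/c} = (k/d)^{1/c}$. Moreover, if $q(k)=p^N$, then $p$ divides $d\cdot p_1\cdots p_c$, so $p$ always lies in the finite set $\mathcal{P}$ of primes dividing $d\cdot p_1\cdots p_c$ (independently of the $m_i$).

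First, I would fix such a $k$ on which $\Phi$ is nontrivial and split into cases according to whether $\Phi$ is concentrated or scattered on $k$. If $\Phi$ is concentrated on $k$, the definition directly provides a graph $H$ on $k$ vertices with $\ae{\Phi}{H}\neq 0$ that contains $K_{q(k),q(k)}$ as a subgraph. Applying \cref{theo:lower:bound:bicliques} with $h(k)\coloneqq q(k)$ then yields that, assuming ETH, no algorithm computes $\NUM\indsubs{(\Phi,k)}{G}$ in time $O(|V(G)|^{\beta q(k)})$, and hence none in time $O(|V(G)|^{\gamma\sqrt[c]{k}})$ for any $\gamma<\beta/\sqrt[c]{d}$ and $k$ sufficiently large.

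Second, if $\Phi$ is scattered on $k$, then \cref{lem:reduction} supplies a graph $H$ on $k-q(k)$ vertices such that the edge-monotone property $(\Phi-H)$ is nontrivial on $q(k)=p^N$ for some $p\in\mathcal{P}$. Applying \cref{theo:tight:lower:bound} to $(\Phi-H)$ and the prime $p$ yields a constant $\gamma_p>0$ such that, under ETH, no algorithm computes $\NUM\indsubs{((\Phi-H),q(k))}{G'}$ in time $O(|V(G')|^{\gamma_p q(k)})$. On the other hand, by \cref{lem:inc:exc}, any algorithm for $\NUM\indsubs{(\Phi,k)}{\star}$ of running time $O(n^{\gamma\sqrt[c]{k}})$ can be turned into one for $\NUM\indsubs{((\Phi-H),q(k))}{\star}$ of running time $O\bigl(2^{|V(H)|}\cdot (n+|V(H)|)^{\gamma\sqrt[c]{k}+2}\bigr)$, since $|V(H)|+q(k)=k$. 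Setting $\gamma_{\min}\coloneqq\min_{p\in\mathcal{P}}\gamma_p$ and choosing $\gamma<\gamma_{\min}/\sqrt[c]{d}$, one then has $\gamma\sqrt[c]{k}+O(1)<\gamma_p q(k)$ for all sufficiently large $k$, contradicting the ETH-based lower bound for $(\Phi-H)$.

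Finally, I would set $\gamma$ to be (say, half of) the minimum of the two thresholds obtained in the two cases, and handle the finitely many remaining small $k$ by shrinking $\gamma$ further so that the ruled-out running time $O(|V(G)|^{\gamma\sqrt[c]{k}})$ becomes $o(|V(G)|)$, which is unconditionally unachievable for any algorithm that reads the whole input. The only nontrivial bookkeeping is in the scattered case: one must verify that the inclusion-exclusion overhead remains polynomial in $n$ (the factors $2^{|V(H)|}$ and $|V(H)|^{O(1)}$ are $n$-independent and depend only on $k$), and it is crucial that $\mathcal{P}$ is finite and determined entirely by $p_1,\dots,p_c$ and $d$, so that $\gamma_{\min}$ is a legitimate uniform constant in the final statement.
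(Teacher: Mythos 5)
Your proof is correct and follows essentially the same route as the paper: the concentrated/scattered dichotomy at the fixed $k$, with \cref{theo:lower:bound:bicliques} handling the concentrated case and the combination of \cref{lem:reduction}, \cref{theo:tight:lower:bound}, and \cref{lem:inc:exc} handling the scattered case. The only cosmetic difference is how you ensure that finitely many prime-specific constants $\gamma_p$ are involved — you take the minimum over all primes dividing $d\cdot p_1\cdots p_c$, whereas the paper forces $q(k) > d$ (for $k$ large) so that the prime in $q(k)$ is among $p_1,\dots,p_c$; both work and yield the same kind of uniform constant.
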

\begin{proof}
    First, we show that assuming ETH,
    there are a constant $\gamma' > 0$ and a constant $N$ such that for all fixed $k = d
    \cdot p_1^{m_1} \cdot \dots \cdot p_c^{m_c} \geq N$ and each edge-monotone
    graph property $\Phi$ that is nontrivial on $k$, no algorithm computes for each graph
    \(G\) the number $\NUM{}\indsubs{(\Phi, k)}{G}$ in time
    $|V(G)|^{\gamma \sqrt[c]{k}}$.

    To that end, we set $h(k) \coloneqq \lceil \sqrt[c]{k/d} \rceil$.
    Observe that from \cref{lem:upper:bound:q}, we obtain
    $q(k) \geq h(k)$ for all $k$ with $k = d \cdot p_1^{m_1} \cdot \dots \cdot p_c^{m_c}$.
    As \(q(k)\) is a positive integer,
    we may safely round up $\sqrt[c]{k/d}$. Further, write $\hat{c} > 0$ for the constant of \cref{lem:upper:bound:q} (3).

    Next, write $\gamma_0 > 0$ and $N'$ for the universal constants from
    \cref{theo:lower:bound:bicliques}.
    Further, for each prime number $p_i$ we use
    \cref{theo:tight:lower:bound} to obtain a constant $\gamma_i$.
    Now, set $\gamma' \coloneqq
    \min(\gamma_0, \dots, \gamma_p) / (2\sqrt[c]{d})$
    and $N \coloneqq \max(d \cdot N'^c, 4^c / (d \gamma^c), \exp(d/{\hat{c}}) + 1)$.

    Fix a $k = d \cdot p_1^{m_1} \cdot \dots \cdot p_c^{m_c} \geq N$ and
    an edge-monotone graph property $\Phi$ that is nontrivial on $k$.
    Now, \(\Phi\) is either scattered or concentrated on \(k\); we provide a proof for both cases.

    First, we consider the case that \(\Phi\) is scattered on \(k\).
    Define $m=q(k)$. Let $H_m$ be the graph defined in \cref{11.5-2} such that \(
        (\Phi - {H}) \coloneqq \{ G \mid G \UnionGraph H  \in \Phi\}
    \) is edge-monotone and nontrivial on \(q(k)\).
    Now, we use \cref{lem:upper:bound:q} to obtain $q(k) \geq \hat{c} \log(k) > \hat{c}
    \log(\exp(d/{\hat{c}})) = d$, which implies  $q(k) = p_i^{\hat{m}_i}$ for some $i \in
    \setn{c}$ and $\hat{m}_i \geq m_i$. Further, we obtain $q(k) \geq \sqrt[c]{k/d} \geq
    N'$ due to \cref{lem:upper:bound:q}.
    Similarly to the proof of \cref{theo:edge_mono}, assuming ETH,
    \cref{theo:tight:lower:bound} shows that there is no algorithm that
    for each graph \(G\) computes the number
    $\NUM{}\indsubs{((\Phi - {H}), q(k))}{G}$ in time $O(|V(G)|^{\gamma_i q(k)})$.

    Now, assume that for each graph \(G\), we could compute $\NUM{}\indsubs{(\Phi, k)}{G}$ in
    time $O(|V(G)|^{\gamma'
    \sqrt[c]{k}})$. Then, using \cref{lem:inc:exc} to compute
    $\NUM{}\indsubs{((\Phi - {H}), q(k))}{G}$ in time
    \begin{align*}
        O(2^{k - q(k)} (|V(G)| + k - q(k))^2 \cdot (|V(G)| +
        k - q(k))^{\gamma' \sqrt[c]{k}})
        &= O(|V(G)|^2 \cdot |V(G)|^{\gamma' \sqrt[c]{k}}) \\
        &\stackrel{\ast}{\subseteq} O(|V(G)|^{\gamma_i \sqrt[c]{k/d} })
        \subseteq  O(|V(G)|^{\gamma_i
    q(k)})
    \end{align*}
    would show that ETH fails.
    The last step is justified by $\sqrt[c]{k/d}
    \leq q(k)$. The step $(\ast)$ is justified by \[2 + \gamma \sqrt[c]{k} \leq  2 +
        \frac{\gamma_i}{2 \sqrt[c]{d}} \sqrt[c]{k} \leq \frac{\gamma_i}{\sqrt[c]{d}} \sqrt[c]{k} =
    \gamma_i \sqrt[c]{k/d}.\] The second inequality is equivalent to $4^c / (d \gamma_i^c) \leq
    k$, which is true since $N \leq k$.

    Next, we consider the case that \(\Phi\) is concentrated on \(k\).
    Then by definition there is a \(k\)-vertex graph $H$ with a nonvanishing
    alternating enumerator that contains $K_{q(k), q(k)}$ as a subgraph.
    Now, since
    $k \geq d \cdot N'^c$ and since $h$ is monotonically increasing, we obtain $h(k) \geq h(d \cdot
    N'^c) \geq N'$.
    Further, we have $q(k) \geq h(k) \geq \sqrt[c]{k/d}$, thus
    \cref{theo:lower:bound:bicliques} yields that there is no algorithm that
    computes for each graph \(G\) the number
    $\NUM{}\indsubs{(\Phi, k)}{G}$ in time $O(|V(G)|^{\gamma_0 \sqrt[c]{k/d}}) \supseteq
    O(|V(G)|^{\gamma' \sqrt[c]{k}})$, where this is justified by $\gamma' \leq
    \gamma_0 / \sqrt[c]{d}$.

    To obtain our lower bound also for \(k < N\), we set $\gamma \coloneqq \min(\gamma', 1/(N+1))$.
    Now, for \(k < N\) we obtain
    \[
        O(|V(G)|^{\gamma \sqrt[c]{k}}) = o( |V(G)| ).
    \]
    Now, such a running time is unconditionally unachievable
    for any algorithm that reads
    the whole input.
    This completes the proof.
\end{proof}

{
    \bibliographystyle{alphaurl}
    \bibliography{refs}

\newcommand{\etalchar}[1]{$^{#1}$}
\begin{thebibliography}{CCMdM21}

\bibitem[ABD{\etalchar{+}}18]{Automorphism_Groups_Ciculant_semigroup_theory}
Jo{\~a}o Ara{\'u}jo, Wolfram Bentz, Edward Dobson, Janusz Konieczny, and Joy
  Morris.
\newblock {Automorphism Groups of Circulant Digraphs With Applications to
  Semigroup Theory}.
\newblock {\em Combinatorica}, 38:1--28, 2018.
\newblock \href {https://doi.org/10.1007/s00493-016-3403-0}
  {\path{doi:10.1007/s00493-016-3403-0}}.

\bibitem[AP79]{circulant_graph2}
B.~Alspach and T.~D. Parsons.
\newblock {Isomorphism of circulant graphs and digraphs}.
\newblock {\em Discrete Mathematics}, 25:97--108, 1979.

\bibitem[BK11]{treewidth}
Hans~L. Bodlaender and Arie M. C.~A. Koster.
\newblock {Treewidth computations II. Lower bounds}.
\newblock {\em Information and Computation}, 209(7):1103--1119, 2011.
\newblock \href {https://doi.org/10.1016/j.ic.2011.04.003}
  {\path{doi:10.1016/j.ic.2011.04.003}}.

\bibitem[CCF{\etalchar{+}}05]{CHEN2005216}
Jianer Chen, Benny Chor, Mike Fellows, Xiuzhen Huang, David Juedes, Iyad~A.
  Kanj, and Ge~Xia.
\newblock {Tight lower bounds for certain parameterized NP-hard problems}.
\newblock {\em Information and Computation}, 201(2):216--231, 2005.
\newblock \href {https://doi.org/10.1016/j.ic.2005.05.001}
  {\path{doi:10.1016/j.ic.2005.05.001}}.

\bibitem[CCMdM21]{cohenaddad2021tight}
Vincent Cohen\discretionary{-}{}{-}Addad, {{\'{E}}}ric
  Colin\space{}de\space{}Verdi{\`{e}}re, D{\'{a}}niel Marx, and Arnaud
  de~Mesmay.
\newblock {Almost Tight Lower Bounds for Hard Cutting Problems in Embedded
  Graphs}.
\newblock {\em J. ACM}, 68(4), jul 2021.
\newblock \href {https://doi.org/10.1145/3450704} {\path{doi:10.1145/3450704}}.

\bibitem[CDM17]{hom:basis}
Radu Curticapean, Holger Dell, and D{\'{a}}niel Marx.
\newblock Homomorphisms are a good basis for counting small subgraphs.
\newblock In {\em Proceedings of the 49th Annual {ACM} {SIGACT} Symposium on
  Theory of Computing, {STOC} 2017, Montreal, QC, Canada, June 19-23, 2017},
  pages 210--223. {ACM}, 2017.
\newblock \href {https://doi.org/10.1145/3055399.3055502}
  {\path{doi:10.1145/3055399.3055502}}.

\bibitem[CFK{\etalchar{+}}15]{param_algo}
Marek Cygan, Fedor~V. Fomin, {{\L}}ukasz Kowalik, Daniel Lokshtanov,
  D{\'{a}}niel Marx, Marcin Pilipczuk, Micha{{\l}} Pilipczuk, and Saket
  Saurabh.
\newblock {\em {Parameterized Algorithms}}.
\newblock Springer Cham, 1. edition, 2015.

\bibitem[CM14]{6978997}
Radu Curticapean and Dániel Marx.
\newblock {Complexity of Counting Subgraphs: Only the Boundedness of the
  Vertex-Cover Number Counts}.
\newblock In {\em 2014 IEEE 55th Annual Symposium on Foundations of Computer
  Science}, pages 130--139, 2014.
\newblock \href {https://doi.org/10.1109/FOCS.2014.22}
  {\path{doi:10.1109/FOCS.2014.22}}.

\bibitem[Cur13]{10.1007/978-3-642-39206-1_30}
Radu Curticapean.
\newblock {Counting Matchings of Size $k$ Is $\#W[1]$-Hard}.
\newblock In Fedor~V. Fomin, R{\={u}}si{\c{n}}{\v{s}} Freivalds, Marta
  Kwiatkowska, and David Peleg, editors, {\em Automata, Languages, and
  Programming}, pages 352--363, Berlin, Heidelberg, 2013. Springer Berlin
  Heidelberg.

\bibitem[DJ04]{DALMAU2004315}
Víctor Dalmau and Peter Jonsson.
\newblock The complexity of counting homomorphisms seen from the other side.
\newblock {\em Theoretical Computer Science}, 329(1):315--323, 2004.
\newblock \href {https://doi.org/10.1016/j.tcs.2004.08.008}
  {\path{doi:10.1016/j.tcs.2004.08.008}}.

\bibitem[DM09]{Automorphism_Groups_Wreath_Product}
Edward Dobson and Joy Morris.
\newblock {Automorphism Groups of Wreath Product Digraphs}.
\newblock {\em Electr. J. Comb.}, 16, January 2009.
\newblock \href {https://doi.org/10.37236/106} {\path{doi:10.37236/106}}.

\bibitem[DRSW22]{alge}
Julian D{\"{o}}rfler, Marc Roth, Johannes Schmitt, and Philip Wellnitz.
\newblock {Counting Induced Subgraphs: An Algebraic Approach to
  {\#}W[1]-hardness}.
\newblock {\em Algorithmica}, 84:379--404, 2022.
\newblock \href {https://doi.org/doi.org/10.1007/s00453-021-00894-9}
  {\path{doi:doi.org/10.1007/s00453-021-00894-9}}.

\bibitem[FG04]{path_cycle}
J\"{o}rg Flum and Martin Grohe.
\newblock {The Parameterized Complexity of Counting Problems}.
\newblock {\em SIAM Journal on Computing}, 33(4):892--922, 2004.
\newblock \href {https://doi.org/10.1137/S0097539703427203}
  {\path{doi:10.1137/S0097539703427203}}.

\bibitem[FR22]{hereditary}
Jacob Focke and Marc Roth.
\newblock {Counting Small Induced Subgraphs with Hereditary Properties}.
\newblock In {\em Proceedings of the 54th Annual ACM SIGACT Symposium on Theory
  of Computing}, STOC 2022, pages 1543--1551, New York, NY, USA, 2022.
  Association for Computing Machinery.
\newblock \href {https://doi.org/10.1145/3519935.3520008}
  {\path{doi:10.1145/3519935.3520008}}.

\bibitem[GJS76]{GAREY1976237}
M.R. Garey, D.S. Johnson, and L.~Stockmeyer.
\newblock {Some simplified NP-complete graph problems}.
\newblock {\em Theoretical Computer Science}, 1(3):237--267, 1976.
\newblock \href {https://doi.org/10.1016/0304-3975(76)90059-1}
  {\path{doi:10.1016/0304-3975(76)90059-1}}.

\bibitem[GSS01]{10.1145/380752.380867}
Martin Grohe, Thomas Schwentick, and Luc Segoufin.
\newblock When is the evaluation of conjunctive queries tractable?
\newblock In {\em Proceedings of the Thirty-Third Annual ACM Symposium on
  Theory of Computing}, STOC '01, page 657–666, New York, NY, USA, 2001.
  Association for Computing Machinery.
\newblock \href {https://doi.org/10.1145/380752.380867}
  {\path{doi:10.1145/380752.380867}}.

\bibitem[Har94]{graph_theory_harary}
Frank Harary.
\newblock {\em Graph theory}.
\newblock Addison-Wesley, 3. print. edition, 1994.

\bibitem[IPZ01]{IMPAGLIAZZO2001512}
Russell Impagliazzo, Ramamohan Paturi, and Francis Zane.
\newblock Which problems have strongly exponential complexity?
\newblock {\em Journal of Computer and System Sciences}, 63(4):512--530, 2001.
\newblock \href {https://doi.org/10.1006/jcss.2001.1774}
  {\path{doi:10.1006/jcss.2001.1774}}.

\bibitem[JM15a]{connected}
Mark Jerrum and Kitty Meeks.
\newblock The parameterised complexity of counting connected subgraphs and
  graph motifs.
\newblock {\em Journal of Computer and System Sciences}, 81(4):702--716, 2015.
\newblock \href {https://doi.org/10.1016/j.jcss.2014.11.015}
  {\path{doi:10.1016/j.jcss.2014.11.015}}.

\bibitem[JM15b]{hard_families}
Mark Jerrum and Kitty Meeks.
\newblock Some hard families of parameterized counting problems.
\newblock {\em ACM Trans. Comput. Theory}, 7(3), jul 2015.
\newblock \href {https://doi.org/10.1145/2786017} {\path{doi:10.1145/2786017}}.

\bibitem[JM17]{even_odd}
Mark Jerrum and Kitty Meeks.
\newblock The parameterised complexity of counting even and odd induced
  subgraphs.
\newblock {\em Combinatorica}, 37:965–990, 2017.
\newblock \href {https://doi.org/10.1007/s00493-016-3338-5}
  {\path{doi:10.1007/s00493-016-3338-5}}.

\bibitem[KSS84]{evasiveness}
Jeff Kahn, Michael Saks, and Dean Sturtevant.
\newblock A topological approach to evasiveness.
\newblock {\em Combinatorica}, 4:297--306, 1984.
\newblock \href {https://doi.org/10.1007/BF02579140}
  {\path{doi:10.1007/BF02579140}}.

\bibitem[Mar10]{DBLP:journals/toc/Marx10}
D{\'{a}}niel Marx.
\newblock Can you beat treewidth?
\newblock {\em Theory Comput.}, 6(1):85--112, 2010.
\newblock \href {https://doi.org/10.4086/TOC.2010.V006A005}
  {\path{doi:10.4086/TOC.2010.V006A005}}.

\bibitem[Mee16]{MEEKS2016170}
Kitty Meeks.
\newblock The challenges of unbounded treewidth in parameterised subgraph
  counting problems.
\newblock {\em Discrete Applied Mathematics}, 198:170--194, 2016.
\newblock \href {https://doi.org/10.1016/j.dam.2015.06.019}
  {\path{doi:10.1016/j.dam.2015.06.019}}.

\bibitem[MSOI{\etalchar{+}}02]{doi:10.1126/science.298.5594.824}
R.~Milo, S.~Shen-Orr, S.~Itzkovitz, N.~Kashtan, D.~Chklovskii, and U.~Alon.
\newblock {Network Motifs: Simple Building Blocks of Complex Networks}.
\newblock {\em Science}, 298(5594):824--827, 2002.
\newblock \href {https://doi.org/10.1126/science.298.5594.824}
  {\path{doi:10.1126/science.298.5594.824}}.

\bibitem[Rob83]{Robin1983}
Guy Robin.
\newblock Estimation de la fonction de tchebychef $\theta$ sur le
  $k$-i{\`{e}}me nombre premier et grandes valeurs de la fonction $\omega(n)$
  nombre de diviseurs premiers de $n$.
\newblock {\em Acta Arithmetica}, 42(4):367--389, 1983.
\newblock URL: \url{http://eudml.org/doc/205883}.

\bibitem[Rot94]{group_theory_rotman}
Joseph~J. Rotman.
\newblock {\em An Introduction to the Theory of Groups}.
\newblock Springer New York, 4 edition, 1994.
\newblock \href {https://doi.org/10.1007/978-1-4612-4176-8}
  {\path{doi:10.1007/978-1-4612-4176-8}}.

\bibitem[RS20]{topo}
Marc Roth and Johannes Schmitt.
\newblock {Counting Induced Subgraphs: {A} Topological Approach To
  {\#}W[1]-Hardness}.
\newblock {\em Algorithmica}, 82:2267--2291, 2020.
\newblock \href {https://doi.org/10.1007/s00453-020-00676-9}
  {\path{doi:10.1007/s00453-020-00676-9}}.

\bibitem[RSW23]{hamming}
Marc Roth, Johannes Schmitt, and Philip Wellnitz.
\newblock {Counting Small Induced Subgraphs Satisfying Monotone Properties}.
\newblock {\em SIAM Journal on Computing}, pages FOCS20--139--FOCS20--174,
  2023+.
\newblock accepted.
\newblock \href {https://doi.org/10.1137/20M1365624}
  {\path{doi:10.1137/20M1365624}}.

\bibitem[SJHS15]{10.1007/978-3-319-21233-3_5}
Benjamin Schiller, Sven Jager, Kay Hamacher, and Thorsten Strufe.
\newblock Stream - a stream-based algorithm for counting motifs in dynamic
  graphs.
\newblock In Adrian-Horia Dediu, Francisco Hern{\'a}ndez-Quiroz, Carlos
  Mart{\'i}n-Vide, and David~A. Rosenblueth, editors, {\em Algorithms for
  Computational Biology}, pages 53--67, Cham, 2015. Springer International
  Publishing.

\bibitem[SS05]{10.1007/11599128_7}
Falk Schreiber and Henning Schw{\"o}bbermeyer.
\newblock Frequency concepts and pattern detection for the analysis of motifs
  in networks.
\newblock In Corrado Priami, Emanuela Merelli, Pablo Gonzalez, and Andrea
  Omicini, editors, {\em Transactions on Computational Systems Biology III},
  pages 89--104, Berlin, Heidelberg, 2005. Springer Berlin Heidelberg.

\end{thebibliography}
}

\appendix
\clearpage
\normalsize
\section{Graph Properties and the Alternating Enumerator} \label{sec:appendix:A}

As described in the Technical Overview, using a \(p\)-subgroup of the automorphism group,
we can simplify the computation of the alternating enumerator (modulo \(p\)).

\lemchicomp
\begin{proof}
    We follow the proof of \cite[Lemma~1]{alge}.

    First, we rewrite the definition of the alternating enumerator to use edge-subgraphs
    of \(H\) instead of subsets of the edges of \(H\).
    To that end, we readily see that
    for each edge-subgraph $A\in\edgesub(H)$, we can find a subset $S \subseteq E(G)$ with
    $A = \ess{H}{S}$ and vice versa. Hence, we obtain \[
    \ae{\Phi}{H}
    = \sum_{S \subseteq E(H)} \Phi(\ess{H}{S}) (-1)^{\NUM{}E(S)}
    = \sum_{A \in \edgesub(H)} \Phi(A) (-1)^{\NUM{}E(A)}
    .\]

    Now, recall that $\Gamma$ acts on $\edgesub(G)$
    and consider the orbits of this group action $\cdot : \Gamma \times \edgesub(G) \to
    \edgesub(G)$. We choose a representative for each orbit, which means that each orbit has the form
    $\Gamma A_0 \coloneqq \{g A_0 \mid g \in \Gamma \}$ for an edge-subgraph $A_0 \in \edgesub(G)$.
    We write \(\mathcal{A}\) to denote the set of all representatives.

    The orbits of $\cdot$ partition the set $\edgesub(G)$, which
    allows us the rewrite the alternating enumerator as
    \[\ae{\Phi}{H}
        = \sum_{A \in \edgesub(H)} \Phi(A) (-1)^{\NUM{}E(A)}
        = \sum_{A_0 \in \mathcal{A}} \sum_{A \in \Gamma A_0} \Phi(A) (-1)^{\NUM{}E(A)}.  \]
    Now, fix an $A \in \Gamma A_0$.
    By construction, there is a graph automorphism $g$ with $g(A) = A_0$.
    Thus, we have $\Phi(A) = \Phi(A_0)$ and $\NUM{}E(A)= \NUM{}E(A_0)$ and hence
    \[\ae{\Phi}{H}
        = \sum_{A_0 \in \mathcal{A}} \sum_{A \in \Gamma A_0} \Phi(A_0) (-1)^{\NUM{}E(A_0)}
        = \sum_{A_0 \in \mathcal{A}} (\NUM{} \Gamma A_0) \Phi(A_0) (-1)^{\NUM{}E(A_0)}.
    \]
    Now, we use the Orbit Stabilizer Theorem to see that the size $\NUM{} \Gamma A_0$
    of any orbit of $\cdot$ is a divider of the group order of $\Gamma$.
    As \(\Gamma\) is a \(p\)-group, its order is equal to $p^k$ for some $k \in \nat$.
    Hence, $\NUM{} \Gamma A_0 \mod p$ is either equal to \(0\) (if $\NUM{} \Gamma A_0 > 1$);
    or equal to $1$ (if $\NUM{} \Gamma A_0 = 1$).
    Finally, observe that $\NUM{} \Gamma A_0 = 1$ if and only if $A_0$ is a
    fixed point of $\Gamma$. This in turn means that only the
    fixed points remain when computing $\ae{\Phi}{H} \mod p$. Hence, we obtain the claimed equation
    \[
        \ae{\Phi}{H}
        \equiv_p \sum_{A \in \fp(\Gamma, H)} \Phi(A) (-1)^{\NUM{}E(A)}.
        \qedhere
    \]
\end{proof}

Using \cref{remark:subgroup:fixpoint}, we obtain a useful strengthening of \cref{lem:chi:comp}.

\begin{corollary}\label{cor:chi:compute:fixed:points}
    For a graph $G$, a \(p\)-group $\Gamma \subseteq \auts{G}$, and a fixed point $H \in
    \fps{G}$, we have
    \begin{align*}
        \ae{\Phi}{H} \equiv_p
        \sum_{\substack{A \in \fps{G} \\ E(A) \subseteq E(H) }} \Phi(A) (-1)^{\NUM{}E(A)}.
    \end{align*}
\end{corollary}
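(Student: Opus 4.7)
The plan is to apply \cref{lem:chi:comp} directly to $H$, viewed as a graph in its own right, with the group $\Gamma$ acting on it. The only thing to check is that the hypotheses of \cref{lem:chi:comp} transfer from $G$ to $H$. The hypothesis that $\Gamma$ is a $p$-group is intrinsic to $\Gamma$ and so carries over for free. The hypothesis that $\Gamma$ is a subgroup of $\aut(H)$ is exactly \cref{remark:subgroup:fixpoint}(1), which is available because $H$ is a fixed point of $\Gamma$ in $G$. Applying \cref{lem:chi:comp} then gives
\[
    \ae{\Phi}{H} \equiv_p \sum_{A \in \fps{H}} \Phi(A)(-1)^{\NUM{}E(A)}.
\]

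To finish, I would rewrite the index set on the right-hand side. By \cref{remark:subgroup:fixpoint}(2), we have $\fps{H} = \{A \in \fps{G} \mid E(A) \subseteq E(H)\}$; substituting this equality into the sum yields exactly the claimed expression. The proof is thus a direct two-step composition of earlier results.

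Since both steps are essentially verifications, there is no real obstacle here; the corollary is best viewed as a convenient repackaging of \cref{lem:chi:comp} and \cref{remark:subgroup:fixpoint}. The only subtle point worth stating explicitly in the write-up is why $\Gamma$ still acts on $H$ in the required sense (namely, as a group of automorphisms). This is precisely the content of \cref{remark:subgroup:fixpoint}(1), so no new argument is needed.
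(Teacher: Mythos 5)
Your proposal is correct and follows essentially the same route as the paper's own (very terse) proof: apply \cref{lem:chi:comp} to $H$ and then use \cref{remark:subgroup:fixpoint}(2) to identify $\fps{H}$ with the subset of $\fps{G}$ contained in $H$. You additionally make explicit the use of \cref{remark:subgroup:fixpoint}(1) to justify that $\Gamma \subseteq \auts{H}$, which the paper leaves implicit; that is a nice bit of rigor but does not change the argument.
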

\begin{proof}
    We use \cref{remark:subgroup:fixpoint}(2), and in particular the characterization of
    \(\fp{H}\) in terms of \(\fp{G}\). Now, \cref{lem:chi:comp} yields
    the claim.
\end{proof}

\subsection{Lower Bounds for Counting Induced Subgraphs via the Alternating Enumerator}

In this section, we prove \cref{lem:alpha:treewidth}; which was implicitly proved in \cite{alge} (for the
\w-hardness).
As advertised, we augment \cref{lem:alpha:treewidth} with a slightly stronger quantitative
lower bound by lifting a similar result for \(\NUM{}\homsprob(\mathcal{H})\) due to \cite{cohenaddad2021tight}.

\lemchitw*
\medskip
To that end, we need to work with colored versions of $\NUM{}\homsprob(\mathcal{H})$ and
$\NUM{}\indsubsprob(\Phi)$, which we define next. Informally, in a colored problem, the
vertices of the input graph $G$ are partitioned into classes and we have to select exactly
one vertex from each class. In the colored version of the homomorphism problem, we are
given two graphs $G$ and $H$, with a coloring $c:V(G)\to V(H)$ (that is, a partitioning
$V(G)$ into $|V(H)|$ classes). A~color-prescribed homomorphism $h$ from $H$ to \(G\)
is a homomorphism from $H$ to $G$ such that $c(h(v)) = v$ for
all $v \in V(H)$. Observe that if $u$ and $v$ are not adjacent in $H$, then the existence of
the edges in $G$ between $c^{-1}(u)$ and $c^{-1}(v)$ does not play any role whatsoever in
the problem. Hence we might as well assume that there are no such edges in $G$ at all,
which formally means that $c$ is a homomorphism from $G$ to $H$. Therefore, we assume
that $c$ is indeed such a homomorphism, or in other words, $G$ is {\em $H$-colored via
$c$.}

We write \(\cphoms{H}{G}\) for the set of all color-prescribed homomorphism from $H$ to
a graph \(G\) that is \(H\)-colored via \(c\).
For a recursively enumerable class of graphs $\mathcal{H}$,
in the problem $\NUM{}\cphomsprob(\mathcal{H})$ we are given a
graph $H \in \mathcal{H}$ and a graph $G$ that is \(H\)-colored via \(c\), the task is
to compute the value $\NUM{}\cphoms{H}{G}$.
We parameterize $\NUM{}\cphomsprob(\mathcal{H})$ by $\kappa(H, G) \coloneqq |V(H)|$.

In the colored variant of \(\NUM{}\indsubsprob(\Phi)\), the vertices of the input graph $G$
are partitioned into $k$ classes and we are counting the number of $k$-vertex induced
subgraphs satisfying $\Phi$ that contains exactly one vertex from each class. However, we
need to define the problem in a way that allows a closer connection to
$\NUM{}\cphomsprob(\mathcal{H})$.
For a recursively enumerable class of graphs $\mathcal{H}$, the input of
\(\NUM{}\cpindsubsprob(\Phi,\mathcal{H})\) consists   of a graph $G$, a graph
$H\in\mathcal{H}$, and a $H$-coloring $c$ of $G$. The task is to compute the number of
$|V(H)|$-vertex induced subgraphs of $G$ that satisfies $\Phi$ and has exactly one vertex
with each of the $|V(H)|$ colors. We denote this number by
$\NUM{}\cpindsubs{(\Phi, H)}{G}$.
We parameterize \(\NUM{}\cpindsubsprob(\Phi,\mathcal{H})\) by $\kappa(H, G) \coloneqq |V(H)|$.

Write $\Phi$ for a property and let $\mathcal{H}$ contain every graph with nonvanishing
alternating enumerator.
The proof of \cref{lem:alpha:treewidth} use the hardness of   $\NUM{}\homsprob(\mathcal{H})$
to
obtain hardness for \(\NUM{}\indsubsprob(\Phi)\). It relies on the following chain of reductions of~\cite{alge}.

\begin{align}\label{reduction:chain}
    \NUM{}\homsprob(\mathcal{H})
    &\overset{\text{\cite[Lemma 4]{alge}}}{\fpt}
    \NUM{}\cphomsprob(\mathcal{H})\\\nonumber
    &\overset{\text{\cite[Lemmas 7 and 8]{alge}}}{\fpt}
    \NUM{}\cpindsubsprob(\Phi,\mathcal{H})
    \overset{\text{\cite[Lemma 10]{alge}}}{\fpt}
    \NUM{}\indsubsprob(\Phi);
\end{align}

The first reduction is very simple (essentially, making $|V(H)|$ copies of the vertex set
of $G$) and the last reduction is a standard application of the Inclusion-Exclusion
principle. Thus, let us focus on  the reduction \(\NUM{}\cphomsprob(\mathcal{H}) \fpt
\NUM{}\cpindsubsprob(\Phi)\) and the corresponding key lemma from \cite{alge}.

\begin{lemma}[{\cite[Lemma 8]{alge}}]\label{lem:chi:cpindub}
    Let $H$ denote a graph, let $\Phi$ denote a graph property,
    and let $G$ denote an $H$-colored graph.
    Then, we have
    \[
        \NUM{}\cpindsubs{(\Phi, H)}{G}
        = \sum_{S \subseteq E(H)} \Phi(\ess{H}{S}) \sum_{J \subseteq E(H) \setminus S}
        (-1)^{\NUM{}J} \cdot \NUM{}\cphoms{\ess{H}{S \cup J}}{G}.
    \]
    Further, the absolute values of $\ae{\Phi}{H}$ and of the coefficient of
    $\NUM{}\cphoms{H}{G}$ are equal.
    \lipicsEnd
\end{lemma}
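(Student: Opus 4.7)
The plan is to expand \(\NUM{}\cpindsubs{(\Phi, H)}{G}\) according to which edges the chosen induced subgraph realizes, and then convert an ``exactly these edges'' count into an ``at least these edges'' count via Möbius inversion; the latter will be precisely a color-prescribed homomorphism count.

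First I would observe that, because \(G\) is \(H\)-colored via~\(c\) (a homomorphism \(V(G)\to V(H)\)), any edge of \(G\) between a vertex of color \(i\) and a vertex of color \(j\) forces \(\{i,j\}\in E(H)\). Hence every color-prescribed induced subgraph \(G[\{v_1,\dots,v_{|V(H)|}\}]\) (one \(v_i\) per color class) realizes some edge set \(S \subseteq E(H)\), and as a labeled graph it is exactly \(\ess{H}{S}\). Writing \(N(S)\) for the number of such color-prescribed tuples realizing exactly the edge set \(S\), we thus have
\[
\NUM{}\cpindsubs{(\Phi, H)}{G} \;=\; \sum_{S\subseteq E(H)} \Phi(\ess{H}{S}) \cdot N(S).
\]
Next I would introduce the ``at least~\(S\)'' count \(M(S)\) -- the number of color-prescribed tuples \((v_1,\dots,v_{|V(H)|})\) with \(\{v_i,v_j\}\in E(G)\) for every \(\{i,j\}\in S\). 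By the very definition of color-prescribed homomorphisms, \(M(S) = \NUM{}\cphoms{\ess{H}{S}}{G}\), and by grouping tuples by their realized edge set we get \(M(S) = \sum_{T\supseteq S,\, T\subseteq E(H)} N(T)\).

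The main step is then standard Möbius inversion over the Boolean lattice of subsets of \(E(H)\), which yields
\[
N(S) \;=\; \sum_{T\supseteq S,\, T\subseteq E(H)} (-1)^{|T\setminus S|}\, M(T).
\]
Substituting into the first display and reindexing \(J \coloneqq T\setminus S\) (so that \(J\) ranges over subsets of \(E(H)\setminus S\) and \(T = S\cup J\)) delivers the claimed identity
\[
\NUM{}\cpindsubs{(\Phi,H)}{G} = \sum_{S\subseteq E(H)} \Phi(\ess{H}{S}) \sum_{J\subseteq E(H)\setminus S} (-1)^{|J|}\, \NUM{}\cphoms{\ess{H}{S\cup J}}{G}.
\]

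For the second assertion, I would identify the contribution of the term \(\NUM{}\cphoms{H}{G} = \NUM{}\cphoms{\ess{H}{E(H)}}{G}\): this occurs precisely when \(S\cup J = E(H)\), i.e.\ when \(J = E(H)\setminus S\). Its total coefficient is therefore
\[
\sum_{S\subseteq E(H)} \Phi(\ess{H}{S})\,(-1)^{|E(H)\setminus S|}
= (-1)^{|E(H)|} \sum_{S\subseteq E(H)} \Phi(\ess{H}{S})\,(-1)^{|S|}
= (-1)^{|E(H)|}\,\ae{\Phi}{H},
\]
so its absolute value equals \(|\ae{\Phi}{H}|\). The only real subtlety is making sure to account for the signs carefully in the last step; the rest is a clean bookkeeping argument and no substantial obstacle is expected.
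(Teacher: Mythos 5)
Your proof is correct. The paper itself does not reprove this lemma (it cites \cite[Lemma 8]{alge}), but your argument — partitioning color-prescribed tuples by the edge set $S$ they realize, identifying the ``at least $S$'' count with $\NUM{}\cphoms{\ess{H}{S}}{G}$, and applying M\"obius inversion on the Boolean lattice of $E(H)$, followed by the reindexing $J = T\setminus S$ and the coefficient computation $(-1)^{|E(H)|}\ae{\Phi}{H}$ — is exactly the expected inclusion-exclusion argument behind this identity.
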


In particular, the second part of \cref{lem:chi:cpindub} yields
that a term $\NUM{}\cphoms{H}{G}$ appears in the sum if and only if $\ae{\Phi}{H} \neq 0$,
which is part of our assumption in \cref{lem:alpha:treewidth}.

Observe that \cref{lem:chi:cpindub} in itself does not suffice to obtain the claimed reduction
\(\NUM{}\cphomsprob(\mathcal{H}) \fpt \NUM{}\cpindsubsprob(\Phi)\):
the oracle for $\NUM{}\cpindsubsprob(\Phi)$ computes only a sum in which $\NUM{}\cphoms{H}{G}$ occurs
as some term---we still need to extract the value $\NUM{}\cphoms{H}{G}$ out of the result of
the oracle.
Fortunately for us, \cite[Lemma~7]{alge} does exactly that by showing a generalization of
the \emph{Complexity Monotonicity} of~\cite{hom:basis}.
The other reductions of (\ref{reduction:chain}) can be used without modifications.
In total, we obtain the \w-hardness part of \cref{lem:alpha:treewidth}.

Next, we turn to ETH-based lower bounds.
We start from the binary constraint satisfaction problem (CSP).

\begin{definition} A (binary) CSP instance is a triple $(V, D, C)$ where
    \begin{itemize}
        \item $V$ is a set of variables
        \item $D$ is a domain of values,
        \item $C$ is a set of constraints. Each constraint is a triple $(u, v, R)$ where
            $(u, v) \in V^2$, and $R \subseteq D^2$.
    \end{itemize}
    A \emph{solution} to $(V, D, C)$ is a function $f \colon V \to D$ such that for all constraints
    $(u, v, R)$, the pair $(f(u), f(v))$ is~in~$R$.
    The \emph{primal graph} of a
    CSP insurance $(V, D, C)$ is a graph $H$ with vertex set $V$ such that $u, v \in V(H)$ are
    adjacent if and only if there is a constraint in $C$ of the form $(u, v, R)$.
\end{definition}

In particular, we use the following result of Cohen-Addad, Colin de Verdière, Marx, and de
Mesmay, which we can easily modify for our purposes.

\begin{theoremq}[{\cite[Theorem 2.7]{cohenaddad2021tight}}]
    Assuming ETH, there is a universal constant $\alpha_{\text{CSP}} > 0$ such that for any
    fixed graph $H$ with $\tw(H) \geq 2$, there is no algorithm that decides the binary CSP
    instances whose primal graph is \(H\) in time
    $O(|D|^{\alpha_{\text{CSP}} \cdot \tw(H) / \log\tw(H)})$.
\end{theoremq}

\begin{corollary}\label{lem:hom:lower:bound}
    Assuming ETH, there is a universal constant $\alpha_{\homsprob} > 0$ such that for any
    fixed graph $H$ with $\tw(H) \geq 2$, there is no algorithm that computes
    $\NUM{}\homs{H}{\star}$ on input $G$ in time $O(|V(G)|^{\alpha_{\homsprob} \cdot \tw(H) / \log \tw(H)})$.
\end{corollary}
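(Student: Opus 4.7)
}

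The plan is to reduce the decision version of binary CSP (whose lower bound is given in \cite[Theorem 2.7]{cohenaddad2021tight}) to the counting homomorphism problem, picking up a constant factor in the exponent along the way. Since the CSP theorem we start from gives a conditional lower bound of the form $|D|^{\alpha_{\mathrm{CSP}} \cdot \tw(H)/\log \tw(H)}$, our target $|V(G)|^{\alpha_{\homsprob} \cdot \tw(H)/\log \tw(H)}$ will follow once we exhibit a reduction where $|V(G)|$ is polynomially (and essentially linearly, for fixed $H$) bounded in $|D|$.

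First I would perform the standard textbook reduction from a binary CSP $(V,D,C)$ with primal graph $H$ to a color-prescribed homomorphism problem: build the $H$-colored graph $G$ with vertex set $V(H)\times D$, coloring $c((u,a))\coloneqq u$, and for each constraint $(u,v,R)\in C$ put an edge $\{(u,a),(v,b)\}$ in $E(G)$ precisely when $(a,b)\in R$. Under this construction, color-prescribed homomorphisms $h\in\cphoms{H}{G}$ are in bijection with assignments $f\colon V\to D$ satisfying every constraint of $C$; in particular, the CSP has a solution if and only if $\NUM{}\cphoms{H}{G}>0$. Note that $|V(G)|=|V(H)|\cdot |D|$, so for a fixed graph $H$ the sizes $|V(G)|$ and $|D|$ are linearly related.

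Next I would translate from color-prescribed to plain homomorphism counting via \cite[Lemma~4]{alge}: the value $\NUM{}\cphoms{H}{G}$ can be computed from a constant (depending only on $|V(H)|$) number of queries of the form $\NUM{}\homs{H}{G'}$, where each $G'$ is obtained from $G$ by standard colour-merging manipulations and satisfies $|V(G')|=O(|V(G)|)$. A purported algorithm computing $\NUM{}\homs{H}{\star}$ in time $O(|V(G')|^{\alpha_{\homsprob}\tw(H)/\log\tw(H)})$ therefore yields an algorithm for $\NUM{}\cphoms{H}{G}$, and hence for deciding the CSP (since the CSP is solvable iff the count is nonzero), in time
\[
O\bigl((|V(H)|\cdot|D|)^{\alpha_{\homsprob}\tw(H)/\log\tw(H)}\bigr).
\]
Choosing $\alpha_{\homsprob}\coloneqq \alpha_{\mathrm{CSP}}/2$ and absorbing the constant factor $|V(H)|^{\alpha_{\homsprob}\tw(H)/\log\tw(H)}$ (which depends only on the fixed graph $H$) into a standard asymptotic argument, this running time is strictly dominated by $O(|D|^{\alpha_{\mathrm{CSP}}\tw(H)/\log\tw(H)})$, contradicting \cite[Theorem~2.7]{cohenaddad2021tight} under ETH.

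The only subtle point I anticipate is the bookkeeping around the constant $|V(H)|^{\alpha_{\homsprob}\tw(H)/\log\tw(H)}$ introduced by expanding $|V(G)|$; strictly speaking, one should pick $\alpha_{\homsprob}$ small enough that this hidden constant is swallowed by an asymptotic comparison for all sufficiently large $|D|$, while handling small $|D|$ trivially (the CSP being decidable in constant time there). Beyond that, every step is a routine adaptation of standard homomorphism-counting reductions, and the exponent $\tw(H)/\log\tw(H)$ carries through unchanged since the reductions preserve the treewidth of $H$.
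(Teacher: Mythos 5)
You take a genuinely different route than the paper. The paper's proof identifies $\homsprob(\{H\})$ with a subfamily of binary CSP instances with primal graph $H$---namely those where every constraint carries the same symmetric relation $E(G)$ for some graph $G$---and reads off the lower bound from \cite[Theorem~2.7]{cohenaddad2021tight} directly, with $\alpha_{\homsprob}=\alpha_{\text{CSP}}$ and no overhead. You instead reduce in the direction one would usually expect for a hardness transfer: from an \emph{arbitrary} binary CSP instance $(V,D,C)$ with primal graph $H$ you construct an $H$-colored graph on vertex set $V(H)\times D$ so that CSP solutions become color-prescribed homomorphisms, and then reduce further to plain $\NUM{}\homsprob(\{H\})$. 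Your chain is longer and picks up a constant factor in the exponent (harmless for fixed $H$), but it handles constraint families with arbitrary and possibly distinct relations on different edges, which the paper's one-line identification does not visibly address.

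There is one concrete flaw to fix. You cite \cite[Lemma~4]{alge} for the step ``compute $\NUM{}\cphoms{H}{G}$ via a constant number of queries to $\NUM{}\homs{H}{\star}$,'' but---as the reduction chain \eqref{reduction:chain} in the appendix makes explicit---that lemma provides the \emph{opposite} direction $\NUM{}\homsprob(\mathcal{H}) \fpt \NUM{}\cphomsprob(\mathcal{H})$, namely computing plain homomorphism counts from color-prescribed ones. The reduction you actually need, from color-prescribed homomorphisms to uncolored homomorphism counting, is a distinct and less elementary fact; the paper's body attributes it to \cite{DBLP:journals/toc/Marx10,cohenaddad2021tight}, and you should cite those instead. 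Once that is corrected your bookkeeping is sound; in fact, since the blow-up $|V(G)|=|V(H)|\cdot|D|$ depends only on the fixed graph $H$, the paper's choice $\alpha_{\homsprob}=\alpha_{\text{CSP}}$ already works for you and halving to $\alpha_{\text{CSP}}/2$ is unnecessary (though not wrong).
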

\begin{proof}
     We model the decision problem $\homsprob(\{H\})$ as a CSP.
     For a given graph $G$, we define the CSP instance $I = (V(H), V(G), C)$ with $C \coloneqq
     \{( u, v, E(G)) : \{u, v\} \in E(H)\}$.
     Each solution $h \colon V(H) \to V(G)$ to $I$ is also a homomorphism from $H$ to $G$
     and each  homomorphism from $H$ to $G$ is also a solution to \(I\) (observe that
     whenever the sole relation in \(I\) is the edge relation, the definition of a
     solution to \(I\) coincides with the definition of a graph homomorphism).
     Thus, $\homs{H}{G}$ is non-empty if and only if $I$ has a solution.

     As $\homsprob(\{H\})$ can be solved via its counting version
     $\NUM{}\homsprob(\{H\})$, we obtain that $\NUM{}\homs{H}{G}$ cannot be computed in time
     $O(|V(G)|^{\alpha_{\text{CSP}} \cdot \tw(H) / \log \tw(H)})$.
     Choosing \(\alpha_{\homsprob} \coloneqq \alpha_{\text{CSP}}\) yields the claim.
\end{proof}

Next, we again use the reductions of \eqref{reduction:chain}.
In particular, we observe that said reductions preserve the exponent in the running time
(up to some constant additive term).
It is useful to have a separate (sub-)claim for a reduction that starts from
color-prescribed homomorphisms.

\begin{lemma}\label{lem:cphom:to:indsub}\label{claim:hom:tw}
    Let $\Phi$ denote a graph property and
    let $H$ denote a graph such that $\ae{\Phi}{H} \neq 0$.

    Any algorithm that computes for each graph \(G'\) the number $\NUM{}\indsubs{(\Phi,
    |V(H)|)}{G'}$ in time $O(|V(G')|^{\beta})$ implies
    \begin{itemize}
        \item an algorithm that for each graph
            \(G\) computes the number $\NUM{}\cphoms{H}{G}$ in time $O(|V(G)|^{\beta + 2})$.
        \item an algorithm that for each graph
            \(G\) computes the number $\NUM{}\homs{H}{G}$ in time $O(|V(G)|^{\beta + 3})$.
    \end{itemize}
\end{lemma}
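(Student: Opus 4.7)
The plan is to follow the chain of reductions from \eqref{reduction:chain}, instantiating it with quantitative running-time bookkeeping. For the first bullet, I would start from \cref{lem:chi:cpindub} applied to the fixed graph $H$: this writes $\NUM{}\cpindsubs{(\Phi,H)}{G}$ as a linear combination of terms $\NUM{}\cphoms{\ess{H}{S\cup J}}{G}$, and crucially the coefficient attached to $\NUM{}\cphoms{H}{G}$ has absolute value $|\ae{\Phi}{H}|\neq 0$. The relation on its own does not yet yield $\NUM{}\cphoms{H}{G}$; to isolate this term I would invoke the Complexity Monotonicity principle of \cite[Lemma 7]{alge}, which inverts such triangular systems of homomorphism counts using a constant number (depending only on $|V(H)|$) of oracle calls to $\NUM{}\cpindsubs{(\Phi,H)}{\star}$ on graphs derived from $G$ of size $O(|V(G)|)$.

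Next, I would dispatch each such $\NUM{}\cpindsubs{(\Phi,H)}{\star}$ query by the Inclusion--Exclusion reduction of \cite[Lemma 10]{alge} (in spirit the same kind of argument as our own \cref{lem:inc:exc}): this expresses a color-prescribed induced-subgraph count as a linear combination of $\NUM{}\indsubs{(\Phi,|V(H)|)}{\star}$ values on graphs obtained by keeping only certain color classes, again of size $O(|V(G)|)$ and with a number of terms bounded by a function of $|V(H)|$. Composing the two steps, $\NUM{}\cphoms{H}{G}$ is obtained from $f(|V(H)|)$ many oracle calls on graphs of size $O(|V(G)|)$, where $|V(H)|$ is a fixed constant. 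Each oracle call costs $O(|V(G)|^\beta)$, and assembling each oracle input takes $O(|V(G)|^2)$ time (since we need to write out edge lists). Since $f(|V(H)|)$ can be absorbed into the $O$ and $|V(H)|$ is a constant, the total is $O(|V(G)|^{\beta+2})$, giving the first bullet.

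For the second bullet, I would prepend the standard reduction of \cite[Lemma 4]{alge} from $\NUM{}\homsprob(\{H\})$ to $\NUM{}\cphomsprob(\{H\})$: build the $H$-colored graph $G'$ on vertex set $V(G)\times V(H)$ whose color classes are the sets $V(G)\times\{v\}$ and whose edges are exactly those pairs $\{(u,v),(u',v')\}$ with $\{v,v'\}\in E(H)$ and $\{u,u'\}\in E(G)$. A direct bijection argument gives $\NUM{}\cphoms{H}{G'} = \NUM{}\homs{H}{G}$. Since $|V(H)|$ is fixed and $|V(G')| = |V(H)|\cdot |V(G)| = O(|V(G)|)$, applying the algorithm from the first bullet to $G'$ runs in time $O(|V(G')|^{\beta+2})$; once we add the $O(|V(G)|^2)$ construction time of $G'$ and pad the exponent by one extra factor of $|V(G)|$ to absorb polynomial overheads in the construction and the $|V(H)|$-dependent constants, we land at $O(|V(G)|^{\beta+3})$.

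The only step that requires real care is Step~1: specifically, verifying that the coefficient vector appearing in \cref{lem:chi:cpindub} is such that the Complexity Monotonicity machinery of \cite[Lemma 7]{alge} actually applies to our setting and extracts $\NUM{}\cphoms{H}{G}$ with only polynomially many oracle calls. Everything else is bookkeeping of exponents along a known chain of reductions, where the $+2$ (respectively $+3$) simply reflect the unavoidable polynomial overhead of constructing the inputs to the oracle and composing the two (respectively three) reductions.
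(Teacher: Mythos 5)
Your proposal follows exactly the same chain of reductions that the paper uses, namely Lemmas~4, 7, 8, and 10 of~\cite{alge} composed with the same exponent bookkeeping: a constant (in $|V(H)|$) number of oracle calls on graphs of size $O(|V(G)|)$ at each layer, with the extra $+2$ and $+3$ in the exponent absorbing the polynomial construction overhead. The approach and its quantitative analysis match the paper's proof; this is not an alternative route but the intended one.
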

\begin{proof}
    Write \(k \coloneqq |V(H)|\) and observe that for our purposes, \(k\) is a
    constant.
    We first construct an algorithm that solves $\NUM{}\cphomsprob(\{H\})$ using an oracle for
    $\NUM{}\indsubs{(\Phi, k)}{\star}$.
    To that end, suppose that we are given a graph \(G\) that is $H$-colored via \(c\).
    We wish to compute the number \(\NUM{}\homs{H}{G}\).
    \begin{enumerate}[(1)]
        \item First, we use the reduction from \(\NUM{}\cphomsprob(\mathcal{H})\)
            to \(\NUM{}\cpindsubsprob(\Phi)\)~\cite[Lemmas~7~and~8]{alge} to compute for
            the graph $G$ the number
            $\NUM{}\cphoms{H}{G}$  in time $O(f(k) \cdot |V(G)|)$
            using an oracle for
            $\NUM{}\cpindsubs{(\Phi, H)}{\star}$
            for a computable function $f$.
            In said reduction, each graph $G'$ that is used inside an oracle call satisfies
            $|V(G')| \leq f(k) \cdot |V(G)|$.

        \item Next, we use the reduction from \(\NUM{}\cpindsubsprob(\Phi)\)
            to \(\NUM{}\indsubsprob(\Phi)\)~\cite[Lemma~10]{alge} to compute for
            each \(G'\) the number
            $\NUM{}\cpindsubs{(\Phi, H)}{G'}$  in time $O(g(k) \cdot |V(G')|)$ using an oracle for
            $\NUM{}\indsubs{(\Phi, |V(H)|)}{\star}$ for a computable function $g$.
            In said reduction, each graph $G''$ that is used in oracle calls satisfy
            $|V(G'')| \leq |V(G')|$.

        \item Lastly, we compute for each \(G''\) the number $\NUM{}\indsubs{(\Phi, |V(H)|)}{G''}$
            in time  $O(|V(G'')|^{\beta})$ using the algorithm which we
            assumed to exist.
    \end{enumerate}
    We combine the above steps to obtain an algorithm that computes $\NUM{}\cphoms{H}{G}$
    in time
    \[
        O\big(\underbrace{f(k) \cdot |V(G)|}_{(1)}
            \quad\cdot\quad \underbrace{g(k) \cdot (f(k) \cdot |V(G)|)}_{(2)}
        \quad\cdot\quad \underbrace{(f(k)\cdot |V(G)|)^{\beta}}_{(3)}  \;\big),
    \]
    which can be rewritten into $O(|V(G)|^2 \cdot |V(G)|^{\beta})$, as \(k = |V(H)|\) is a constant.

    Next, we construct an algorithm that solves $\NUM{}\homsprob(\{H\})$ using an oracle for
    $\NUM{}\indsubs{(\Phi, k)}{\star}$.
    To that end, suppose that we are given a graph \(G\).
    We wish to compute the number \(\NUM{}\homs{H}{G}\).

    First, we use the reduction from \(\NUM{}\homsprob(\mathcal{H})\)
    to \(\NUM{}\cphomsprob(\mathcal{H})\)~\cite[Lemma~4]{alge} to compute the value
    $\NUM{}\homs{H}{G}$  in time $O(h(k) \cdot |V(G)|)$ using an oracle for
    $\NUM{}\cphoms{H}{\star}$ for a computable function $h$.
    In said reduction, each graph $G'$ that is used in oracle calls satisfy $|V(G')|
    \leq h(k) \cdot |V(G)|$.

    Next, our algorithm from the first part of the proof allows us to compute each value
    $\NUM{}\cphoms{H}{G'}$ in time $O((h(k) |V(G')| )^{\beta + 2})$ using a oracle for $\NUM{}\indsubs{(\Phi,
    |V(H)|)}{\star}$.
    Hence, we obtain a total running time of $O(h(k) \cdot |V(G)| \cdot (h(k) \cdot |V(G)|)^{\beta + 2} )$
    which can be rewritten into $O(|V(G)|^3 \cdot |V(G)|^{\beta})$ as $k$ is a constant.
\end{proof}

Putting everything together, we obtain \cref{lem:alpha:treewidth}.

\lemchitw
\begin{proof}
    We write $\mathcal{H}$ for the set $\{H \in \mathcal{G} : \ae{\Phi}{H} \neq 0 \}$.
    The set \(\mathcal{H}\) is recursively enumerable since the set of all graphs is recursively
    enumerable and $\Phi$ is computable. This in turn implies that the alternating enumerator is
    computable.

    Observe that the treewidth of the elements of $\mathcal{H}$ is
    unbounded by the assumption of the lemma.
    This means that $\NUM{}\homsprob(\mathcal{H})$ is \w-hard when parameterized by
    the pattern size $|V(H)|$.
    Finally, we use the parameterized reductions
    \eqref{reduction:chain} from $\NUM{}\homsprob(\mathcal{H})$ to $\NUM{}\indsubsprob(\Phi)$, which proves
    that $\NUM{}\indsubsprob(\Phi)$ is also \w-hard.

    Next, we turn to the ETH-based lower bounds.
    Set $\alpha'_{\indsubsprob} \coloneqq \alpha_{\homsprob}/2$, $N \coloneqq  \max(2, (6
    / \alpha_{\homsprob})^2)$. Let \(k\) denote a fixed integer such that there is a
    nonvanishing $k$-vertex graph $H_k$ with $\tw(H_k) \geq N$.
    We show that any algorithm \(\mathbb{A}\) that computes for every graph \(G\) the
    value  $\NUM{}\indsubs{(\Phi, k)}{G}$ in time
    $O(|V(G)|^{\alpha'_{\indsubsprob} \tw(H_k) / \log\tw(H_k) })$ implies that we can compute for
    any graph \(G\) the value
    $\NUM{}\homs{H_k}{G}$ in time $O(|V(G)|^{\alpha_{\homsprob} \tw(H_k) / \log\tw(H_k)})$ (which
    would violate ETH due to \cref{lem:hom:lower:bound}).

    To that end, first observe that by \cref{claim:hom:tw}, the algorithm \(\mathbb{A}\)
    yields an algorithm \(\mathbb{B}\) to compute for every graph \(G\) the value \(\NUM{}\homs{H_k}{G}\) in
    time \(O(|V(G)|^{\alpha'_{\indsubsprob} \tw(H_k) / \log\tw(H_k) + 3})\).

    \begin{claim}\label{cl:10.27-1}
        For $\tw(H_k) \geq \max( 2, (6 / \alpha_{\homsprob} )^2 )$, we have
        \[\alpha'_{\indsubsprob}  \tw(H) / \log\tw(H_k) + 3 \le \alpha_{\homsprob}
        \tw(H_k) / \log\tw(H_k).\]
    \end{claim}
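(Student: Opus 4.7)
The plan is to prove the claim by a direct algebraic manipulation, followed by a standard monotonicity argument about the function $x/\log x$. Setting $t \coloneqq \tw(H_k)$ and recalling that $\alpha'_{\indsubsprob} = \alpha_{\homsprob}/2$, the asserted inequality
\[
\frac{\alpha_{\homsprob}}{2}\cdot\frac{t}{\log t} + 3 \;\le\; \alpha_{\homsprob}\cdot\frac{t}{\log t}
\]
is equivalent (after moving the $\alpha'_{\indsubsprob}$-term to the right-hand side) to
\[
3 \;\le\; \frac{\alpha_{\homsprob}}{2}\cdot\frac{t}{\log t},
\quad\text{i.e.,}\quad
\frac{t}{\log t} \;\ge\; \frac{6}{\alpha_{\homsprob}}.
\]
So the whole claim reduces to verifying this single lower bound on $t/\log t$ under the hypothesis $t \ge \max(2,(6/\alpha_{\homsprob})^2)$.

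Write $c \coloneqq 6/\alpha_{\homsprob}$, so that we assume $t \ge c^2$ and want $t/\log t \ge c$. The plan is to use that $h_1(x) = x/\log x$ is monotonically increasing on $x \ge \mathrm{e}$ (a fact already invoked in the proof of \cref{10.27-2}); because $\alpha_{\homsprob} \le 1$ without loss of generality, we have $c \ge 6$, so $c^2 \ge \mathrm{e}$ and thus $h_1(t) \ge h_1(c^2)$. It therefore suffices to check the inequality at the endpoint, namely
\[
\frac{c^2}{\log(c^2)} \;=\; \frac{c^2}{2\log c} \;\ge\; c,
\]
which is equivalent to $c \ge 2\log c$. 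The latter clearly holds for $c \ge 6$ (indeed, it already holds for all $c \ge 1$), completing the verification.

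I do not expect any real obstacle here: the whole claim is just a calibration argument showing that the chosen constants $\alpha'_{\indsubsprob} = \alpha_{\homsprob}/2$ and $N = \max(2,(6/\alpha_{\homsprob})^2)$ are large enough to absorb the additive $+3$ overhead from the reduction of \cref{lem:cphom:to:indsub}. The only mild care needed is to confirm that $c^2 \ge \mathrm{e}$ so that monotonicity of $x/\log x$ applies, which is immediate from $\alpha_{\homsprob}\le 1$; apart from that, the proof is a one-line algebraic check.
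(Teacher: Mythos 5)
Your proof is correct and, after the algebraic simplification, it reduces the claim to the same core inequality $\tw(H_k)/\log\tw(H_k) \ge 6/\alpha_{\homsprob}$ that the paper targets; you just verify that inequality by a different elementary route. The paper simply observes that $t/\log t > \sqrt{t}$ for all $t > 1$ (since $\sqrt{t} > \log t$ there), so the hypothesis $t \ge (6/\alpha_{\homsprob})^2$ immediately gives $t/\log t > \sqrt t \ge 6/\alpha_{\homsprob}$. You instead appeal to monotonicity of $x/\log x$ on $[\mathrm e,\infty)$ and evaluate at the endpoint $t = c^2$; that works, but it forces you to assume $\alpha_{\homsprob} \le 1$ without loss of generality so that $c^2 \ge \mathrm e$ — an extra normalization that the paper's $\sqrt t$ bound avoids entirely. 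Your WLOG is legitimate (shrinking $\alpha_{\homsprob}$ only weakens the assumed homomorphism lower bound, and the proof still goes through with a correspondingly smaller $\alpha_{\indsubsprob}$), so both verifications are sound; the paper's is just one step shorter and hypothesis-free.
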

    \begin{claimproof}
        For \(\tw(H_k) > 1\), we have $\tw(H_k) / \log\tw(H_k) > \sqrt{\tw(H_k)}$.
        Hence, for $\tw(H_k) \geq (6 / \alpha_{\homsprob} )^2$, we have
        \[
            \tw(H) / \log\tw(H_k) \geq 6 / \alpha_{\homsprob} =
            3 / ( \alpha_{\homsprob} - \alpha_{\homsprob} / 2 ) =
            3 / ( \alpha_{\homsprob} - \alpha'_{\indsubsprob} ).
        \]
        Now, rearranging yields the claim.
    \end{claimproof}

    From  \cref{cl:10.27-1}, we conclude that \(\mathbb{B}\) computes
    $\NUM{}\homs{H_k}{G}$ in time $O(|V(G)|^{\alpha_{\homsprob} \tw(H_k) / \log\tw(H_k)})$.
    From \cref{lem:hom:lower:bound} (and assuming \(\tw(H_k) \ge 2\)), we conclude that the algorithm \(\mathbb{B}\)
    violates ETH.

    Define $\alpha_{\indsubsprob} \coloneqq \min(\alpha'_{\indsubsprob}, 1/N)$. For $2 \leq
    \tw(H_k) < N$, we obtain an algorithm that computes $\NUM{}\indsubs{(\Phi,
    |V(H_k)|)}{G}$ in time $|V(G)|^{\alpha_{\indsubsprob} \tw(H_k) / \log \tw(H_k)}
    = o(|V(G)|)$.
    Now, such a running time is unconditionally unachievable
    for any algorithm that reads
    the whole input.
    This completes the proof.
\end{proof}

\section{ETH-based Lower Bounds for \texorpdfstring{\(k\)}{k}-Clique} \label{sec:tight:bounds:bicliques}

We discuss the following useful result on finding \(k\) cliques.

\begin{theoremq}[Theorem 14.21 in \cite{param_algo}]\label{theo:k-clique:ETH:original}
    Assuming ETH, there is no $f(k)|V(G)|^{o(k)}$-time algorithm for ${\clique}$ or
    $\textsc{Independent Set}$ for any computable function $f$.
\end{theoremq}

Next, we modify \cref{theo:k-clique:ETH:original}.
The decision problem $k$-$\clique$ gets as input a graph $G$ and checks if $G$
contains a $k$-clique as a subgraph.

\begin{lemma}[{Modification of 14.21 in \cite{param_algo}}]\label{theo:k-clique:ETH}
    Assuming ETH, there is a constant $\alpha > 0$ such that for $k \geq 3$, no
    algorithm (that reads the whole input) solves $k$-$\clique$ on graph $G$ in time $O(|V(G)|^{\alpha k})$.
\end{lemma}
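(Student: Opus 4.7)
The plan is to argue by contrapositive, combining the standard reduction from 3-SAT to $k$-\clique{} (as in Theorem~14.21 of \cite{param_algo}) with careful accounting of the exponents in the running time.

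First I would fix the ETH constant $\varepsilon > 0$ and set $\alpha \coloneqq \varepsilon/3$. Toward a contradiction, suppose there is some $k_0 \geq 3$ and an algorithm $\mathbb{A}$ solving $k_0$-\clique{} on every $N$-vertex graph $G$ in time $O(N^{\alpha k_0})$ while reading the whole input. I then split into two cases according to whether $\alpha k_0$ is below or above $2$.

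If $\alpha k_0 < 2$, I would observe that a dense input graph on $N$ vertices has representation size $\Theta(N^2)$, so any algorithm reading the whole input requires $\Omega(N^2)$ time. Since $O(N^{\alpha k_0}) = o(N^2)$, this immediately contradicts the assumed running time of $\mathbb{A}$ on such instances.

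In the remaining case, $\alpha k_0 \geq 2$, i.e., $k_0 \geq 2/\alpha = 6/\varepsilon$. Here I would apply the standard reduction: given a 3-SAT formula $\varphi$ on $n$ variables, partition its variables into $k_0$ groups of size $n/k_0$; for each group, create one vertex for every partial assignment to that group which satisfies all clauses local to the group, and add an edge between two vertices from different groups whenever their combined assignment satisfies all clauses between the two groups. The resulting graph is $k_0$-partite, has $N \leq k_0 \cdot 2^{n/k_0}$ vertices, can be built in time $O^*(2^{2n/k_0})$, and its $k_0$-cliques correspond bijectively to satisfying assignments of $\varphi$. Applying $\mathbb{A}$ to this graph takes time
\[
    O(N^{\alpha k_0}) = O\big( k_0^{\alpha k_0} \cdot 2^{\alpha n} \big) = O^*(2^{\alpha n}),
\]
since $k_0$ is a fixed constant. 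In total, 3-SAT can then be decided in time $O^*\!\left(2^{\max(2/k_0,\,\alpha)\,n}\right)$. Using $k_0 \geq 6/\varepsilon$, we have $2/k_0 \leq \varepsilon/3 = \alpha$, so the overall running time is $O^*(2^{(\varepsilon/3) n})$, which contradicts ETH.

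The main obstacle will be the small-$k_0$ regime. For $k_0 < 2/\alpha$, the $O^*(2^{2n/k_0})$ cost of merely building the $k_0$-\clique{} instance already exceeds $2^{\varepsilon n}$, so the reduction on its own does not yield a contradiction against ETH. This is exactly where the ``reads the whole input'' hypothesis is essential: for small $k_0$ the exponent $\alpha k_0$ drops below $2$, whereas dense input graphs require $\Omega(N^2)$ time just to read, which rules out the hypothetical algorithm directly and closes the gap left by the reduction.
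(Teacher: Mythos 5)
Your Case 1 (the regime $\alpha k_0 < 2$, ruled out because an algorithm that reads the whole input on a dense $N$-vertex graph already spends $\Omega(N^2)$ time) is correct and is essentially the same device the paper uses to handle small $k$. The gap is in Case 2. You reduce 3-SAT to $k_0$-\textsc{Clique} by partitioning the \emph{variables} into $k_0$ groups, putting one vertex per partial assignment, and adding an edge between two vertices from groups $i$ and $j$ iff the combined assignment ``satisfies all clauses between the two groups.'' This does not work: a 3-clause can have its three variables in three different groups $i, j, l$, and such a clause is neither ``local to'' a single group nor ``between'' any pair of groups, so no vertex and no edge ever checks it. Consequently, the graph can contain a $k_0$-clique whose merged assignment falsifies a clause that straddles three groups, and the claimed bijection between $k_0$-cliques and satisfying assignments fails. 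This is exactly why 3-SAT does not reduce directly to $k$-\textsc{Clique} by a variable partition, and why the paper (following Theorem 14.21 of Cygan et al.) first passes from 3-SAT to 3-\textsc{Coloring}: in 3-\textsc{Coloring} every constraint is an edge, i.e.\ binary, so partitioning the \emph{vertices} of the coloring instance into $k$ blocks does yield a legitimate $k$-\textsc{Clique} instance checked entirely by pairwise consistency, and the Sparsification Lemma (applied to 3-SAT before the reduction to 3-\textsc{Coloring}) is what keeps the block sizes $O(n/k)$.

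One standard way to repair your argument without going through 3-\textsc{Coloring} is to first apply the Sparsification Lemma to get $m = O(n)$ clauses, and then partition the \emph{clauses} (not the variables) into $k_0$ groups: a vertex in group $i$ is a partial assignment to the variables appearing in that group's clauses which satisfies all of them, and an edge between vertices of groups $i$ and $j$ records that the two partial assignments agree on the shared variables. Consistency is genuinely a pairwise condition, so a $k_0$-clique here does correspond to a satisfying assignment, and since each group touches $O(n/k_0)$ variables the vertex count and running time bounds you wrote go through. As written, though, the reduction in your Case 2 is unsound, and neither the Sparsification step nor the intermediate binary-constraint problem that makes the partitioning legitimate appears in your proposal.
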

\begin{proof}
    Write $n \coloneqq |V(G)|$ for the number of vertices of the input graph $G$.
    Assuming ETH, there is a $\delta > 0$ such that no algorithm
    solves 3-\textsc{SAT} in time $O(2^{\delta v})$, where $v$ is the number of variables.
    We show that a similar statement also holds for the 3-\textsc{Coloring} problem.
    \begin{claim}\label{claim:3-coloring}
        Assuming ETH, there is a $\delta' > 0$ such that no algorithm
        solves 3-\textsc{Coloring} in time $O(2^{\delta' n})$, where $n$ is the number of
        vertices.
    \end{claim}
    \begin{claimproof}
       For a 3-\textsc{SAT} formula $\phi$ with $v$ variables and $m$ clauses, the standard
       reduction \cite[Theorem 2.1]{GAREY1976237} from 3-\textsc{SAT} to 3-\textsc{Coloring} constructs a graph with
       $3 + 2v + 6m$ vertices that contains a 3-coloring if and only if $\phi$ is satisfiable. An
       algorithm that solves 3-\textsc{Coloring} in
       time $O(2^{\beta n})$ could be used to solve 3-\textsc{SAT} in time $O(2^{6\beta(v + m)})$.
       The {Sparsification Lemma} \cite[Theorem 14.4]{param_algo} yields the existence of
       a $\delta' > 0$ such that there is no algorithm that solves 3-\textsc{Coloring} in
       time $O(2^{\delta' n})$.
    \end{claimproof}
    Next, we show that we can solve 3-\textsc{Coloring} by using an algorithm for
    $k$-\textsc{Clique} that runs in $O(n^{\alpha k})$.
    \begin{claim}\label{claim:clique:to:coloring}
        If we can solve $k$-\textsc{Clique} in time $O(n^{\alpha
        k})$, then we can solve 3-\textsc{Coloring} in time $O(3^{\alpha n} + n^2 \cdot
        3^{2n/k})$.
    \end{claim}
    \begin{claimproof}
        First, we split the $n$ vertices of $G$
        into $k$ blocks $V_1, \dots, V_k$ of size at most $\lceil n / k \rceil$ each.
        Next, we construct a graph $H$ in the following way.
        For each proper 3-coloring of a block
        $V_i$, we create a vertex that represents this coloring and add said vertex to $H$.
        The number of vertices in $H$ is at most
        \[|V(H)| \leq k \cdot 3^{\lceil n / k \rceil} \leq k \cdot 3^{  n / k + 1}. \]
        Write $u$ for a vertex in $H$ that represents a coloring of $G\position{V_i}$ and
        write $v$ for a vertex in $H$ that represents a coloring of $G\position{V_j}$.
        We add an edge between $u$ and
        $v$ if and only if $i \neq j$ and the coloring of $u$ and $v$ is a proper 3-coloring of
        $G\position{V_i \cup V_j}$.
        Observe that we can construct this graph in time
        $O((k \cdot 3^{  n / k + 1})^2 \cdot n^2 ) = O(n^2 \cdot 3^{2n/k})$.

        It is easy to verify that $H$ contains a $k$-clique if and only if there is a
        proper 3-coloring of $G$. Thus, we can use our $O(n^{\alpha k})$ time algorithm
        for $k$-\textsc{Clique} to solve 3-\textsc{Coloring} in time
        $O((k \cdot 3^{  n / k + 1} )^{\alpha k} + n^2 \cdot 3^{2n/k}) = O(3^{\alpha n} + n^2 \cdot 3^{2n/k})$.
    \end{claimproof}
    Set $\alpha \coloneqq \min(\log_3(2) \delta', \alpha/3)$.
    If there is a $k > 2/\alpha$ such
    that we can solve $k$-\textsc{Clique} in time $O(n^{\alpha k})$, then we can use
    \cref{claim:clique:to:coloring} to solve 3-\textsc{Coloring} in time
    $O(3^{\alpha n} + n^2 \cdot 3^{2n/k}) \subseteq O(2^{\delta' n})$. According to
    \cref{claim:3-coloring}, this is only possible if ETH fails. Otherwise, $\alpha k <
    1$, and there is no sublinear algorithm that reads the whole input and solves
    $k$-\textsc{Clique} in sublinear time.
\end{proof}

\section{Tight Lower Bounds for Counting Induced Subgraphs}
\label{app:use:biclique}

In this section, we show how to obtain tight lower bounds under ETH for $\NUM{}\indsubs{(\Phi,
k)}{\star}$ by modifying a reduction from \cite{alge}. We start from the non-parameterized decision
problem $k$-\textsc{Clique} and solve it with an oracle for $\NUM{}\indsubs{(\Phi,k)}{\star}$.  Observe that
$k$-${\clique}$ cannot be solved in time $O(n^{\alpha k})$ for a fixed $\alpha > 0$ unless ETH
fails (see \cref{theo:k-clique:ETH}), which yields a lower bound for computing
$\NUM{}\indsubs{(\Phi,k)}{\star}$. The reduction is similar to the reduction shown in
\cref{sec:appendix:A} with the difference that we start from $k$-${\clique}$ instead of
$\homsprob(\{H\})$.  The main idea is that we can solve $k$-${\clique}$ using an oracle
for $\NUM{}\cphoms{F}{\star}$ as long as $F$ contains $K_{k, k}$ as a subgraph.

\begin{lemma}[Modification of {\cite[Lemma 11]{alge}}]\label{lem:tight:biclique:construction}
    There is an algorithm that
    given a positive integer $\ell > 1$, a graph $F$ (that contains $K_{\ell, \ell}$ as a
    subgraph), and a graph $G$; computes a $F$-colored graph $G'$ with $2 \ell |V(G)|
    + (|V(F)| - 2 \ell)$ vertices. Further, the number of cliques of size $\ell$ in $G$
    equals $\NUM{}\cphoms{F}{G'}$.
    The running time of the algorithm is $O(|V(F)|^{|V(F)| + 2} + |V(F)|^2 |V(G)|^2 )$
\end{lemma}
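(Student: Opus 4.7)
The plan is to construct $G'$ via a gadget-based reduction where each of the $2\ell$ vertices on the two sides of the $K_{\ell,\ell}$ inside $F$ becomes a color class of $|V(G)|$ ``guess vertices'' (encoding which vertex of a potential $\ell$-clique is picked at that position), while the remaining vertices of $F$ become singleton placeholders.

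First I would locate a copy of $K_{\ell,\ell}$ in $F$ by brute-force enumeration over all $2\ell$-subsets of $V(F)$, which fits within the $O(|V(F)|^{|V(F)|+2})$ budget. I then label the two sides as $L=(u_1,\dots,u_\ell)$ and $R=(w_1,\dots,w_\ell)$ and fix an arbitrary total order $<$ on $V(G)$. The graph $G'$ then has color classes $C_u=\{u\}\times V(G)$ for every $u\in L\cup R$ and singleton classes $C_v=\{\star_v\}$ for every $v\in V(F)\setminus(L\cup R)$, matching the required vertex count $2\ell|V(G)|+(|V(F)|-2\ell)$.

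For the edges of $G'$ I process each edge of $F$ separately. For $\{u,u'\}\in E(F)$ with both endpoints in $L$, or both in $R$, I add $\{(u,v),(u',v')\}\in E(G')$ iff $\{v,v'\}\in E(G)$. For a biclique edge $\{u_i,w_j\}$, if $i=j$ I add $\{(u_i,v),(w_i,v')\}$ iff $v=v'$ (a matching gadget), and if $i\ne j$ I add $\{(u_i,v),(w_j,v')\}$ iff both $\{v,v'\}\in E(G)$ and $v<v'\Leftrightarrow i<j$. For every other $F$-edge (at least one endpoint outside $L\cup R$), I add all corresponding edges between the placeholder(s) and the color class(es) unconditionally, since no constraint is needed. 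This takes $O(|V(F)|^2|V(G)|^2)$ time and the natural map $c((u,v))\coloneqq u$, $c(\star_v)\coloneqq v$ is by construction a homomorphism $G'\to F$, so $G'$ is indeed $F$-colored.

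The main verification step is a bijection between $\ell$-cliques of $G$ and color-prescribed homomorphisms $F\to G'$. Given such an $h$, write $h(u_i)=(u_i,v_i)$: the $i=j$ matching constraints force $h(w_i)=(w_i,v_i)$, while the $i\ne j$ cross constraints simultaneously enforce $\{v_i,v_j\}\in E(G)$ and $v_i<v_j\Leftrightarrow i<j$, so $v_1<v_2<\dots<v_\ell$ is the unique increasing listing of an $\ell$-clique in $G$. Conversely, any $\ell$-clique listed in increasing order yields a valid cphom, since the intra-$L$ and intra-$R$ edges of $F$ are satisfied by the clique property and placeholder edges are satisfied unconditionally. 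I expect the main obstacle to be the asymmetric treatment of the cross-biclique edges: the rule distinguishing $i<j$ from $i>j$ is precisely what breaks the $\ell!$-fold symmetry that would otherwise count each clique multiple times, and one needs to verify carefully that this asymmetric rule is well-defined on the undirected edge $\{u_i,w_j\}$ and interacts correctly with any additional $F$-edges beyond the chosen $K_{\ell,\ell}$.
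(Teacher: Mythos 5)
Your construction and proof are essentially the same as the paper's: brute-force locate the $K_{\ell,\ell}$, use $2\ell$ color classes of size $|V(G)|$ plus singleton placeholders, a matching gadget on diagonal biclique edges, and the ordered (asymmetric) cross-edges to select exactly the increasing clique tuples. The small deviations (adding the clique constraint redundantly on intra-side $F$-edges, and adding placeholder edges only for actual $F$-edges) are cosmetic and if anything make your $F$-coloring a cleaner homomorphism.
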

\begin{proof}
    Write $\Tilde{\ell}$ for $|V(F)| -
    2\ell$. For the proof,  we modify the construction of \cite[Lemma 11]{alge}. First,
    observe that we can locate the vertices of the subgraph $K_{\ell, \ell}$ in $F$ by
    looping over all vertex sets $A$ and $B$ of size $\ell$ and checking if the induced
    graph has the complete bipartite graph $K_{{\ell, \ell}}$ as a subgraph. This can be
    done in $O(|V(F)|^2 \cdot |V(F)|^{2 \ell}) \subseteq O(|V(F)|^{|V(F)| + 2})$ by using
    a brute force implementation.

    Next, we relabel the vertices of the graph $F$ by splitting them up into a left side,
    a right side and the remaining vertices. Formally, we use $V(F) = \{a_i, b_i \mid i \in
    \setn{\ell}\}  \cup \{x_i \mid i \in \setn{\Tilde{\ell}}\}$, and ensure that the induced subgraph
    of $\{a_i, b_i \mid i \in \setn{\ell}\}$ contains $K_{{\ell, \ell}}$, where $\{a_i \mid i \in
    \setn{\ell}\}$ is the left side and $\{b_i \mid i \in \setn{\ell}\}$ is the right side of the
    complete bipartite graph.\footnote{Note that they may be edges of the form $a_i$ to
    $a_j$ or form $b_i$ to $b_j$.}

    Now, let $G$ denote a graph with vertex set $\{v_i \mid i \in \setn{n}\}$. We construct the graph
    $G'$ on the vertex set $\{u_{i, j}, w_{i, j}, y_{k} : i \in \setn{\ell}, j \in \setn{n}, k \in
    \setn{\Tilde{\ell}}\}$ with the $F$-coloring given by $c(u_{i,j}) = a_i$, $c(w_{i, j}) =
    b_i$ and $c(y_k) = x_k$. For each $k$, we add an edge between $y_k$ and all other
    vertices in $G'$. Additionally, we add an edge between $u_{i, j}$ and $w_{i',
    j'}$ if and only if
    \begin{itemize}
        \item either $(i, j) = (i', j')$,
        \item or $i < i', j < j'$ and the vertices $v_j$ and $v_{j'}$ are adjacent,
        \item or $i > i', j > j'$ and the vertices $v_j$ and $v_{j'}$ are adjacent.
    \end{itemize}
    Further, we add the edges $\{u_{i, j}, u_{i', j'}\}$ and $\{w_{i, j}, w_{i', j'}\}$ to
    $G'$.

    Let $C = (v_{j_1}, \dots, v_{j_\ell})$ denote an ordered tuple (that is $j_{k} < j_{k'}$ for $k <
    k'$) such that $\{v_{j_1}, \dots, v_{j_\ell}\}$ is a $\ell$-clique in $G$. We construct
    the homomorphism $h_C : V(F) \to V(G')$  with $h_C(a_i) = u_{i, j_i}$, $h_C(b_i) = v_{i,
    j_i}$ and $h_C(x_k) = y_k$. Observe that this defines a color-prescribed homomorphism $h_C
    \in \cphoms{F}{G'}$.

    Next, consider an $h' \in \cphoms{F}{G'}$. Then, $h'(x_k) = y_k$ since $y_k$ is the
    only vertex in $G'$ with $c(y_k) = x_k$. Further, we obtain $h'(a_i) = u_{i, \alpha_i}$
    and $h'(b_j) = w_{j, \beta_j}$. Observe that for all $i \in \setn{\ell}$ the edge $\{a_i,
    b_i\}$ is in $F$. Thus $\{h'(a_i), h'(b_i)\} = \{u_{i, \alpha_i}, w_{i, \beta_i}\}$ is an
    edge in $G'$ which is only possible if $\alpha_i = \beta_i$.

    Next, we define the tuple $C
    \coloneqq (v_{\alpha_1}, \dots, v_{\alpha_\ell})$. For all $i < i'$, we know that the edge $\{a_i,
    b_{i'}\}$ is in $F$, hence the edge $\{u_{i, \alpha_i}, w_{i', \beta_{i'}}\}$ is
    also in $G'$ which implies that $\alpha_i < \beta_{i'} = \alpha_{i'}$. Observe that
    this also implies that there is an edge between $v_{\alpha_i}$ and $v_{\alpha_{i'}}$.
    So, the
    indices of the tuple $C$ are ordered and $\{v_{\alpha_1}, \dots,
    v_{\alpha_\ell}\}$ is a $\ell$-clique in $G$. Also, observe $h' = h_C$, where $h_C$ is
    the color-prescribed homomorphism from above. Thus each ordered $\ell$-cliques $C$
    yields a color-prescribed homomorphism $h_C$ and each color-prescribed homomorphism
    $h'$ yields an ordered $\ell$-clique $C$. This shows an one-to-one correspondence
    between color-prescribed homomorphisms in $\cphoms{F}{G'}$ and $\ell$-cliques in $G$.
\end{proof}

This means that we can solve the decision problem $k$-$\textsc{Clique}$ for a graph $G$
and a parameter $k$ by computing $\NUM{}\cphoms{F}{\star}$. Further, if we assume that $F$
is non-vanishing then we can use the reduction shown in \cref{lem:cphom:to:indsub} to
compute $\NUM{}\cphoms{F}{\star}$ using $\NUM{}\indsubs{(\Phi, |V(F)|)}{\star}$. Thus
we can use $\NUM{}\indsubs{(\Phi, |V(F)|)}{\star}$ to solve $k$-$\textsc{Clique}$.

\thmlowerboundbicliques
\begin{proof}
    We show how to
    use $\NUM{}\indsubs{(\Phi, k)}{\star}$ to solve $h(k)$-\textsc{Clique}.

    \begin{claim}\label{claim:clique:to:indsub}
        For a fixed $k$ such that there is a
        graph $F$ with $k$ vertices, $\ae{\Phi}{F} \neq 0$, and $F$ contains $K_{h(k),
        h(k)}$ as a subgraph; if there is an algorithm that computes for each graph $G'$
        the value $\NUM{}\indsubs{(\Phi,
        k)}{G'}$ in time $O(|V(G')|^{\gamma})$, then $h(k)$-\textsc{Clique} can be
        computed for each graph $G$ in time
        $O(|V(G)|^{\gamma + 2})$.
    \end{claim}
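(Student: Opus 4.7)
The plan is to chain together two reductions already established in the paper. First, given an input graph $G$ for the $h(k)$-\textsc{Clique} problem, I would apply \cref{lem:tight:biclique:construction} with $\ell \coloneqq h(k)$ and the graph $F$ from the hypothesis (which contains $K_{h(k),h(k)}$ as a subgraph) to construct an $F$-colored graph $G'$ with $|V(G')| = 2 h(k) |V(G)| + (k - 2h(k))$. Since $k$ (and thus also $h(k)$ and $|V(F)|=k$) is a fixed constant throughout this claim, the construction takes time $O(|V(G)|^2)$ and yields $|V(G')| = O(|V(G)|)$. By the guarantee of \cref{lem:tight:biclique:construction}, the number of $h(k)$-cliques in $G$ equals $\NUM{}\cphoms{F}{G'}$, so deciding $h(k)$-\textsc{Clique} reduces to computing $\NUM{}\cphoms{F}{G'}$ and checking whether this number is nonzero.

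Second, since $\ae{\Phi}{F} \neq 0$ by assumption, I would invoke \cref{lem:cphom:to:indsub} with $H \coloneqq F$. Given the hypothesized algorithm that computes $\NUM{}\indsubs{(\Phi,k)}{G''}$ in time $O(|V(G'')|^\gamma)$ for every graph $G''$, the lemma yields an algorithm that computes $\NUM{}\cphoms{F}{G'}$ in time $O(|V(G')|^{\gamma+2})$. Substituting $|V(G')| = O(|V(G)|)$ and adding the construction cost of $O(|V(G)|^2)$ (which is absorbed into the dominant term, since $\gamma \geq 0$ for any algorithm that reads its input), the overall running time to decide $h(k)$-\textsc{Clique} on $G$ is $O(|V(G)|^{\gamma+2})$, exactly as claimed.

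There is no serious obstacle here beyond bookkeeping the exponent arithmetic: the biclique construction inflates the instance by only a constant factor (depending on $k$) and contributes a polynomial overhead of degree at most~$2$, both of which are dominated by the $\gamma+2$ exponent produced by \cref{lem:cphom:to:indsub}. All the technical content has been done elsewhere---\cref{lem:tight:biclique:construction} encodes $h(k)$-cliques as color-prescribed homomorphisms into the $K_{h(k),h(k)}$-containing pattern $F$, while \cref{lem:cphom:to:indsub} inverts the alternating-enumerator decomposition to extract $\NUM{}\cphoms{F}{\star}$ from oracle calls for $\NUM{}\indsubs{(\Phi,k)}{\star}$.
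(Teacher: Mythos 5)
Your proof is correct and follows essentially the same route as the paper: first apply \cref{lem:tight:biclique:construction} (with $\ell = h(k)$) to encode $h(k)$-\textsc{Clique} as $\NUM{}\cphoms{F}{G'}$ on a linearly larger $F$-colored graph, then apply \cref{lem:cphom:to:indsub} to extract that quantity from the hypothesized $\NUM{}\indsubs{(\Phi,k)}{\star}$ oracle, and finally observe that since $k$ (hence $|V(F)|$ and $h(k)$) is constant the construction overhead and blow-up are absorbed into the $O(|V(G)|^{\gamma+2})$ bound. The bookkeeping matches the paper's proof of \cref{claim:clique:to:indsub}, including the size bound $|V(G')| = O(|V(G)|)$ and the $O(|V(G)|^2)$ construction cost.
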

    \begin{claimproof}
        We construct an algorithm
        that solves $h(k)$-\textsc{Clique} using an oracle for $\NUM{}\indsubs{(\Phi,
        j)}{\star}$. Since $k$ and $F$ are fixed, we can assume that our algorithm knows
        these elements. Fix a graph $G$. Then we use the algorithm from
        \cref{lem:tight:biclique:construction} to construct a graph $G'$ such that $G$
        contains an $h(k)$-clique if and only if $\NUM{}\cphoms{F}{G'}$ is not zero. The
        running time of
        this construction is in $O(|V(G)|^2)$ since $|V(F)| = k$ is constant. Further,
        we obtain $|V(G')| \leq 2 h(k) |V(G)| + k = O(|V(G)|)$.

        If we can compute $\NUM{}\indsubs{(\Phi, k)}{|V(G')|}$ in time $O(|V(G')|^{\gamma})$, then we
        can use \cref{lem:cphom:to:indsub} to compute $\NUM{}\cphoms{F}{G'}$ in time $O((2
        h(k) |V(G)| + k)^{\gamma + 2})$. Thus, we can solve $h(k)$-\textsc{Clique} in time
        $O(|V(G)|^2 + (2 h(k) |V(G)| + k)^{\gamma + 2})$ which is in $O(|V(G)|^{\gamma + 2})$ since $k$
        is fixed.
    \end{claimproof}
    According to \cref{theo:k-clique:ETH}, there is a constant $\alpha >
    0$ such that no algorithm solves $h(k)$-\textsc{Clique} in time
    $O(|V(G)|^{\alpha h(k)})$ for a fixed $h(k) \geq 3$ unless ETH fails. Set $\beta \coloneqq
    \alpha/2$ and $N \coloneqq \max(3, 4/\alpha)$.
    If there are a $k$ with $h(k) \geq N \geq 3$,
    a graph $F$ with $k$ vertices such that $\ae{\Phi}{F} \neq 0$,
    and $F$ contains $K_{h(k), h(k)}$ as a subgraph, and an algorithm that solves $\NUM{}\indsubs{(\Phi,
    k)}{G'}$ in time $O(|V(G')|^{\beta h(k)})$, then we can use \cref{claim:clique:to:indsub}
    to solve $h(k)$-\textsc{Clique} in time $O(|V(G)|^{\beta h(k) + 2})$. Observe that $\beta
    h(k) + 2 \leq \alpha h(k)$ for $h(k) \geq 4/\alpha$. Thus, we can use solve
    $h(k)$-\textsc{Clique} in time $O(n^{\alpha h(k)})$. Hence, ETH fails.
\end{proof}

\end{document}